\theoremstyle{definition}
\newtheorem{definition}{Definition}[section]
\newtheorem{theorem}{Theorem}[section]
\newtheorem{lemma}[theorem]{Lemma}
\newtheorem{fact}{Fact}[section]
\definecolor{mypurple}{RGB}{255,0,255}
\newcommand{\customlabel}[2]{%
	\protected@write \@auxout {}{\string \newlabel {#1}{{#2}{\thepage}{#2}{#1}{}} }%
	\hypertarget{#1}{#2}
}
\newcommand{\op}{\mathrm{op}}
\newcommand{\C}{\mathbb{C}}
\newcommand{\calC}{\mathcal{C}}
\newcommand{\calM}{\mathcal{M}}
\newcommand{\B}{\mathcal{B}}
\newcommand{\Hom}{\mathrm{Hom}}
\newcommand{\Rep}{\mathrm{Rep}}
\newcommand{\Tr}{\mathrm{Tr}}
\newcommand{\Hil}{\mathcal{H}}
\newcommand{\calS}{\mathcal{S}}
\newcommand{\oA}{o}
\newcommand{\oB}{s}
\newcommand{\calN}{\mathcal{N}}
\newcommand{\calD}{\mathcal{D}}
\newcommand{\Fun}{\mathrm{Fun}}
\newcommand{\Vect}{\mathrm{Vec}}
\newcommand{\Irr}{\mathrm{Irr}}
\newcommand{\GCT}{G}
\newcommand{\CTFG}{H}
\newcolumntype{H}{>{\setbox0=\hbox\bgroup}c<{\egroup}@{}}
\newcommand{\cmark}{\ding{51}}%
\newcommand{\xmark}{\ding{55}}%
\newcommand{\Yes}{{\color{blue}\cmark}}
\newcommand{\No}{{\color{red}\xmark}}
\begin{document}
\title{Secret communication games and a hierarchy of quasiparticle statistics in $3+1D$ topological phases}
	\author{Zhiyuan Wang}
	\affiliation{Max-Planck-Institut f{\"{u}}r Quantenoptik, Hans-Kopfermann-Str. 1, 85748 Garching, Germany}
	\affiliation{Perimeter Institute for Theoretical Physics, 31 Caroline St N, Waterloo, Ontario N2L-2Y5, Canada}
	\date{\today}

\begin{abstract}	
We show that a family of secret communication challenge games~(generalizing Ref.~\onlinecite{wang2024parastatistics})
naturally define a hierarchy of emergent quasiparticle statistics in three-dimensional~(3D) topological phases. %
The winning strategies exploit a special class of the recently proposed $R$-paraparticles~\cite{wang2023para} to allow nonlocal secret communication between the two participating players. 
We first give a  high-level, axiomatic description of %
emergent $R$-paraparticles, %
    and show that any physical system hosting such particles admits a winning strategy. %
We then analyze the games %
using the categorical description of topological phases~(where point-like excitations in 3D are described by symmetric fusion categories~\cite{doplicher1971local,*doplicher1974local,LanKongWen3DAB,LanWen3DEF}), 
and show that only $R$-paraparticles can win the 3D challenge in a noise-robust way, 
and the winning strategy is essentially unique. %
This analysis associates emergent $R$-paraparticles to deconfined %
gauge theories based on %
an exotic class of finite groups~\cite{Howlett1982}. %
Thus, even though this special class of $R$-paraparticles %
are fermions or bosons under the categorical classification~\cite{doplicher1971local,*doplicher1974local}, their exchange statistics can still have nontrivial physical consequences in the presence of appropriate defects, and the $R$-paraparticle language offers a more convenient  description of the winning strategies.  %
Finally, while a subclass of non-Abelian anyons can win the game in 2D, we introduce twisted variants that exclude anyons, thereby singling out $R$-paraparticles in 2D as well. 
Our results establish the secret communication challenge as a versatile diagnostic for both identifying and classifying exotic exchange statistics in topological quantum matter. 
\end{abstract}  

\maketitle
\tableofcontents

\section{Introduction}\label{sec:intro}
Understanding what types of identical particle statistics is possible in nature is a fundamental question in quantum mechanics that has been extensively discussed ever since the birth of the theory, and is still frequently revisited nowadays. A standard textbook argument shows that bosons and fermions are the only two possible types of identical particles. 
Anyons~\cite{Leinaas1977,Wilczek1982Magnetic,Wilczek1982Quantum,Wilczek1990book,Nayak2008NAAnyons,STERN2008204} in two-dimensional~(2D) systems provide an important exception to the boson/fermion dichotomy, %
where anyon braiding realize representations of the braid group instead of the symmetric group. Although anyons are unlikely relevant to elementary particle physics due to dimensionality considerations, they can emerge in topological  phases~\cite{TPorder1,wenTopologicalOrdersEdge1995,kitaev2003fault,levin2005string,kitaev2006anyons,Chen2010LUT,Wen2017Zoo,Barkeshli2019SETclassification} of 2D condensed matter systems, and have promising applications in topological quantum computation~\cite{kitaev2003fault,freedman2003topological,Dennis2002TQM,Nayak2008NAAnyons,wang2010topological,sternTopologicalQuantumComputation2013,lahtinenShortIntroductionTopological2017}. 

In three dimension~(3D), it has long been believed that fermions and bosons are the only two possibilities, corresponding to the trivial and the sign representations of the symmetric group, respectively. %
One may wonder if it is possible to have identical particles that transform under higher dimensional representations of the symmetric group, which, in a sense, generalize non-Abelian anyons to any spatial dimension. %
Indeed, this possibility, known as parastatistics~\cite{Green1952}, has been considered even before the proposal of anyons, and has been extensively investigated by the high energy and mathematical physics community~\cite{Araki1961,greenbergSpinUnitarySpinIndependence1964,Greenberg1965,LANDSHOFF196772,druhl1970parastatistics,Taylor1970b, doplicherFieldsStatisticsNonabelian1972, doplicher1990,tolstoyOnceMoreParastatistics2014,Stoilova2020}. %
Despite being mathematically consistent and physically reasonable, parastatistics was gradually forgotten by the physics community, due to the widespread belief that paraparticles are physically equivalent to ordinary fermions and bosons enriched with internal degrees of freedom, and therefore do not lead to new physics. 
The reason behind this belief~(also called the conventionality argument~\cite{Baker2015Conventionality}, see also the discussion in Refs.~\cite{freedmanProjectiveRibbonPermutation2011,simon2023topological}) is that ordinary fermions and bosons enriched with extra internal degrees of freedom--such as spin, color, or flavor--also realize higher dimensional representations of the symmetric group, and it appears hard to distinguish parastatistics from this trivial case.  %
This belief is further strengthened by the rigorous no-go theorems~\cite{doplicher1971local,*doplicher1974local,Buchholz1982} in algebraic quantum field theory~\cite{halvorson2006algebraic,haagLocalQuantumPhysics1996} and the classification results of 3D topological order~\cite{LanKongWen3DAB,LanWen3DEF}, which shows that topological quasiparticles in 3D gapped phases of matter are classified by symmetric fusion categories~(SFCs). 
All SFCs have the form of either $\mathrm{Rep}(G)$ or $\mathrm{sRep}(G,z)$ for some finite group $G$~\cite{Deligne2002CategoriesTensorielles,TenCat_EGNO}, describing the universal properties of
charged particles coupled to deconfined $G$-gauge fields. 
These theorems have since been widely cited by the physical community as the ``no-go'' theorems for paraparticles, 
whose actual claim is that paraparticles are ``trivial'' rather than impossible. 

Despite the mathematical rigor and generality of these ``no-go'' theorems, %
there is one subtle point that has mostly been overlooked: %
although all these particles in $\mathrm{Rep}(G)$ or $\mathrm{sRep}(G,z)$ are categorically classified as fermions and bosons, 
their seemingly ``trivial'' exchange statistics can still have very nontrivial physical consequences when certain types of defects are present. 
To demonstrate this, Ref.~\onlinecite{wang2024parastatistics} designed a secret communication challenge game, in which winning requires the two participating players to achieve nonlocal communication of a message, using only a sequence of  local operations on two far-separated regions that slowly exchange their positions, without leaving any trace of information outside the two regions. %
Despite being seemingly impossible, Ref.~\onlinecite{wang2024parastatistics} showed that there exists an exotic family of 3+1D topological phases that can pass the challenge in a noise robust way. 
The winning strategy exploits a special class of the recently proposed $R$-paraparticles~\cite{wang2023para} that can be realized in certain deconfined $G$-gauge theories %
based on an exotic class of finite groups~\cite{Howlett1982}. In presence of certain point-like defects, the exchange statistics of these $R$-paraparticles allow non-local secret communication between two players, demonstrating their dramatic difference from our conventional picture of fermions and bosons. 
This gives an explicit physical demonstration of nontrivial $R$-parastatistics~\cite{wang2023para} 
from a quantum information perspective, a 
smoking gun experiment for %
this special class of topological order, and can potentially lead to new quantum technology in secret communication.

Nevertheless, Ref.~\onlinecite{wang2024parastatistics} left several important questions open. First and foremost, %
although Ref.~\onlinecite{wang2024parastatistics} provided a simple physical picture that only emergent $R$-paraparticles can win the 3D version of this challenge game, a complete proof of necessity is still missing. 
Ref.~\onlinecite{wang2024parastatistics} illustrated the winning strategy through an exactly solvable model and briefly hinted a categorical description using fusion diagrams, %
but a systematic way to describe, construct, and classify $R$-parastatistics in this exotic class of deconfined $G$-gauge theories is still lacking. %
In addition, while the 3D version of the game has the best potential to unambiguously single out $R$-paraparticles, the 2D version of the game can also be interesting in detecting topological order and quasiparticle statistics. However, how to operationally distinguish between $R$-paraparticles and non-Abelian anyons using these challenge games is an open question.

This current paper is devoted to a comprehensive study of the secret communication game and its generalizations, answering all above open questions in the process. %
To give a simple physical description of the winning strategies in the most general setting, we introduce a high level, axiomatic description of the universal properties of emergent $R$-paraparticles and show that any physical system hosting such particles can win the challenge. 
We then give a~(physically rigorous) proof~\footnote{Since this paper is intended for physicists, we do not pursue the highest standard of mathematical rigor here. The main gap from being mathematically rigorous 
is discussed at the beginning of Sec.~\ref{sec:categorical_analysis} and Sec.~\ref{sec:categorical_framework}. 
} that only emergent $R$-paraparticles can win the full version of the 3D challenge, 
based on the assumption that point-like quasiparticles in 3D gapped phases are universally described by symmetric fusion categories~(SFCs)~\cite{doplicher1971local,*doplicher1974local,Buchholz1982,LanKongWen3DAB,LanWen3DEF}. 
Using this framework, we classify which systems can succeed in increasingly stringent versions of the challenge game, including generalizations and twisted variants introduced in this paper. %
This analysis eventually lead to a hierarchy of point particle exchange statistics in 3+1D topological phases, %
with the lowest hierarchy being ordinary bosons and the highest hierarchy being what we call ``full-fledged $R$-paraparticles'', defined by their ability to pass different levels of the game, as summarized in Tab.~\ref{tab:protocols_strategies}. 
We then introduce the anti-anyon twists for the 2D version of the challenge,  %
and illustrate how they isolate $R$-paraparticles from non-Abelian anyons in 2D.

Our paper is organized as follow. In Sec.~\ref{sec:parachallenge_variants_twists} we define the basic challenge game, and mention some different variants and extra winning conditions. %
In Sec.~\ref{sec:win} we introduce the axioms of emergent $R$-parastatistics, and describe winning strategies in this framework, where we illustrate  that different types of particle statistics have different capabilities in the game. This describes levels 1-4 in Tab.~\ref{tab:protocols_strategies}. %
Then in Sec.~\ref{sec:WHF} %
we introduce the ``who-entered-first'' challenge, and describe its winning strategy, which singles out ``full-fledged paraparticles'', the highest hierarchy in Tab.~\ref{tab:protocols_strategies}. 
In Secs.~\ref{sec:categorical_analysis} and \ref{sec:ModCatDefect} we present a detailed categorical analysis of possible winning strategies and show that
only emergent $R$-paraparticles~(as defined by our axioms) can win the full version of this challenge game in 3D, establishing the hierarchy on solid grounds. 
In Sec.~\ref{sec:anti-anyon} we present the anti-anyon twist for the 2D version of the game that distinguish $R$-parapartiles from anyons in 2D. 
Finally, in Sec.~\ref{sec:conclusions} %
we conclude our work with an extended discussion on the meaning of $R$-paraparticles in the context of topological order,  how the secret communication games provide new insights into topological phases and long range entanglement, and mention several potential generalizations and future directions. The appendices contain technical details omitted in the main text, where the most important section is App.~\ref{sec:RfromCentralType}, where we give a large number of examples of SFCs that contain $R$-paraparticles. 
	\begin{table*}[t]
		\centering
		\begin{tabular}{|c|c|c|c|c|c|c|H@{\hspace*{-\tabcolsep}}}
				\hline
				$R$-paraparticle & $R=X$& $R=-X$ &  $R^{b'a'}_{ab}=\pm\delta_{aa'}\delta_{bb'}\theta_a\theta^{*}_b$ & $R^{b'a'}_{ab}=\delta_{aa'}\delta_{bb'}\theta_{ab}$  & Eq.~\eqref{eqApp:seth-R} & Eq.~\eqref{eq:RfromUniversalR} & $R=\pm \mathds{1}, m>1$\\
				\hline
				Win in theory &\No & \Yes & \Yes & \Yes &\Yes & \Yes & \Yes \\
				\hline
				Noise robust &\No &\No & \Yes & \Yes & \Yes & \Yes &\Yes \\
				\hline
				Anti-eavesdropping &\No & \No  & \No & \Yes & \Yes & \Yes & \Yes\\ 
				\hline
				Who-entered-first challenge & \No & \No  & \No & \No & \Yes & \Yes &\Yes  \\ %
				\hline
				Identical particle test &\Yes & \Yes & \Yes & \Yes & \Yes   & \No &\Yes \\
				\hline
				Example $G$ & $S_3, D_{4n+2}$ &$Z_2,Q_{4n}$ & $D_8$ & $A_4, Z_m\ltimes Z^{\times m}_2$ &  $D_8\ltimes Z_2^{\times 3}$&  $D_8\ltimes Z_2^{\times 3}$ & -\\
				\hline
				Level & 1 &2  & 3 & 4 & 5 & 5 (mutual)& Beyond SFC \\ 
				\hline
			\end{tabular}
			\caption{\label{tab:protocols_strategies} 
				Hierarchy of particle exchange statistics of  in 3+1D topological phases, defined by the 
				ability of a certain type of $R$-paraparticle~(defined by the axioms in Sec.~\ref{sec:axioms_emergent_para}) to win increasingly stringent versions of the challenge game.  
				Here $X^{b'a'}_{ab}=\delta_{aa'}\delta_{bb'}$, and $R=X$~($R=-X$) describes ordinary bosonic~(fermionic) statistics~(potentially enriched with internal degrees of freedom), considered as trivial types of $R$-paraparticles.  
				We say that a certain type of topological quasiparticle $\psi$ can ``win in theory'' if there exists a strategy using $\psi$ %
				that wins the basic challenge in Sec.~\ref{sec:original_version}, without imposing the extra conditions in Sec.~\ref{sec:extra_win_cond}. The important condition of noise robustness distinguish nontrivial $R$-paraparticles from ordinary fermions and bosons. The who-entered-first challenge defined in Sec.~\ref{sec:WHF} singles out the most nontrivial class of $R$-paraparaticles, which we call ``full-fledged paraparticles''. The identical particle test in Sec.~\ref{sec:IPT} separates mutual parastatistics from self-parastatistics. The row ``Example $G$'' means an example of a group $G$ such that 
				a certain deconfined $G$-gauge theory realizes such a particle.
			}
		\end{table*}

\section{The parastatistics challenge and several generalizations}\label{sec:parachallenge_variants_twists}
In this section we define the protocol of the parastatistics challenge game. %
Specifically, in Sec.~\ref{sec:original_version} we define the basic challenge, including the original version introduced in Ref.~\onlinecite{wang2024parastatistics},  along with   a slightly different variant~(Sec.~\ref{sec:oneptversion}).  %
In Secs.~\ref{sec:IPT} we introduce  
the identical particle test that distinguishes mutual parastatistics from self parastatistics. 
Here we only define the game protocols;
winning strategies using emergent paraparticles will be given in Sec.~\ref{sec:win}, and a detailed categorical analysis will be presented in Sec.~\ref{sec:categorical_analysis}.
\subsection{The basic challenge}\label{sec:original_version}
In the basic version of the parastatistics challenge,
the participants involve two players,  Alice~(A) and Bob~(B), who work as a team against a group of Referees~(R). %
The goal of the players is to send a message to each other using a restricted class of local operations on a common quantum many body system, 
and the Referees' role is to initiate the challenge and monitor the game process to ensure that the players are obeying the rules. %

\subsubsection{Pregame preparations}
Prior to commencing the challenge, Alice and Bob may discuss to agree on an overall strategy. Once their strategy is set, they must:
\begin{enumerate}
	\item %
	Provide a locally-interacting Hamiltonian $\hat{H}$ defined on a two- or three-dimensional lattice with a unique, gapped, and frustration-free ground state $\lvert G\rangle$, and prove these properties rigorously.
	\item %
	Choose the radius $r_0$ of the circular regions (see  Fig.~\ref{fig:game}) and identify two well-separated lattice sites $o$ and $s$ (which may lie on the system’s boundary). Choose an integer $m_0\geq 2$. 
	\item %
	Realize the ground state $\lvert G\rangle$ experimentally on a system of linear size $L \gg r_0$, where $L$ will be chosen by the referees after reviewing items (1) and (2).
\end{enumerate}
By \emph{gapped}, we mean there is a uniform lower bound--independent of system size--on the energy difference between the first excited state and $\ket{G}$. \emph{Frustration-free} means
$\hat{H}=\sum_{i}\hat{h}_i$ 
with each $\hat{h}_i \ge 0$ and $\hat{h}_i \ket{G} = 0$; and the players must exhibit this decomposition in their proof.
We remark that the various requirements on the ground state $\ket{G}$~(unique, gapped, and frustration-free) are imposed mainly to simplify our discussions, and Sec.~\ref{sec:conclusions} we discussion possible ways to relax some  of these requirements. 
We also emphasize that we do not require $\hat{H}$ to be translationally invariant, and we allow defects to be present in the system, provided that the spatial arrangement of the defects satisfy some mild technical conditions that we detail in App.~\ref{app:condition_defect}. Indeed, any winning strategy to this game requires some special topological defects at the points $\oA$ and $\oB$, where one can locally create and measure a single topological quasiparticle.

Once these components are in hand, the Referees will rigorously check the proof and experimentally validate the prepared state $\ket{G}$ by verifying~(through quantum measurements) $\langle G|\hat{h}_i|G\rangle=0$ for all $i$. Alice and Bob are then placed in separate rooms, and the Referees randomly select two numbers $a,b\in\{1,2,\ldots,m_0\}$, and assign $a$ to Alice and $b$ to Bob. The challenge for Alice and Bob is to infer information about their partner's number solely through a restricted set of local operations on the quantum system they prepared, as we detail below.

[It is clear that the difficulty of the challenge game  increases with the integer $m_0$, so Alice and Bob can simply choose $m_0=2$ to make things easier. This $m_0$ is introduced to show that different $R$-paraparticles have a different information transfer capacity~(which is a topological invariant as we show later in Sec.~\ref{sec:reduction_to_TPO}), but we do not dive into that aspect in this paper.]

\subsubsection{General rules during the game}\label{sec:general_rules}
We now introduce some general rules and assumptions that not only apply to the basic challenge but also to all the additional challenges and generalizations that we introduce in later parts of this paper. The  rules are: \\
(1). Whenever a game begins, the state of the physical system is always initialized to be $\ket{\Psi(0)}=\ket{G}$, the ground state prepared by the players;\\
(2). Each player is assigned a circular area of radius $r_0$ in the physical system, within which he or she is allowed to perform arbitrary local unitary operations and measurements. 
(Note that in the 3D version of the game, the circles become spheres of radius $r_0$. In this paper, whenever we say a circle, we mean a circle in the 2D case and a sphere in the 3D case, by default); \\
(3). The circles moves slowly and continuously in time, and the motion is controlled by the Referees. At some initial and final stages of the game, a player's circle may not be present, which simply means that the player has no control over the physical system at the moment;\\
(4). During the game, any direct form of communication between players are forbidden by default~(except in the antiparticle test we introduce later in Sec.~\ref{sec:antiparticle}, where we will explicitly state who are allowed to communicate). Each player is confined in a separate room, with the only access to the outside world being the aforementioned local operations on the physical system, as shown in Fig.~\ref{fig:ABcontrol};\\
(5). Whenever a player's  circle moves, he or she is always obliged to move whatever excitations inside the circle to follow the circle movement, and is not allowed to leave any trace of information behind. The Referees enforce this condition by constantly checking the local ground state condition $\braket{\Psi(t)|\hat{h}_i|\Psi(t)}=0$ everywhere beyond the circle areas, where $\Psi(t)$ is the quantum state of the system at time $t$. If at any moment, the Referees detect an excitation beyond the circle areas, the challenge fails~\footnote{In some game protocols to be presented later, we introduce additional players who play against Alice and Bob, such as 
the scramblers in the anti-anyon twists. In these cases we assume that these opposing players are promised not to leave any excitations behind. Alternatively, we can say that if these opposing players violate any rule, it is their team that lose the game, rather than the team of Alice and Bob}. \\
(6). Right before a player's circle disappears~(he or she will be alerted about this in advance), the player is always obliged to clean up whatever excitation inside the circle, so that the Referee cannot detect an excitation after the circle disappears. 

In the rest of this paper, we will refer to a player's circle by the initial of his/her name, e.g., circle A refers to Alice's circle.

\subsubsection{The original game with two distinguished special points}\label{sec:twoptversion}
The protocol of the original game introduced in Ref.~\onlinecite{wang2024parastatistics} is illustrated in Fig.~\ref{fig:game}. 
When the game begins at $t=0$, circle A appears at the special point $\oA$, while circle B appears at $\oB$, as shown in Fig.~\ref{fig:gamestart}. 
Then the Referees randomly select two paths connecting $\oA$ and $\oB$, as shown in Fig.~\ref{fig:game}, and slowly move the two circles along their respective paths simultaneously, in such a way that the two circles always remain far apart. 
The game ends at $t=T$ in the configuration shown in Fig.~\ref{fig:gameend}, when the two circles complete an exchange of positions. After this, both circles disappear and the Referees perform one last check of the local ground state condition $\braket{\Psi(T)|\hat{h}_i|\Psi(T)}=0$ everywhere. 
If this final check is passed, the Referees will ask Alice about $b$, and ask Bob about $a$, and the players win if they both answer correctly.

The physical intuition behind the game design is the following: under all the restrictive conditions of the game, the only thing the players can do is playing with quasiparticle excitations inside their circle areas, including particle creation, annihilation, movement, and measurement. %
If the system hosts emergent paraparticles, %
the players can exploit their nontrivial exchange statistics to nonlocally transfer information to each other~\cite{wang2024parastatistics}, by holding a paraparticle inside each circle and manipulating its internal state, as we describe in detail in Sec.~\ref{sec:win_2pt}, and in Sec.~\ref{sec:TC_winning_condition} we argue that this is the only possible way to win the game. 

\begin{figure}
	\centering
	\begin{subfigure}[t]{.95\linewidth}
		\centering\includegraphics[width=\linewidth]{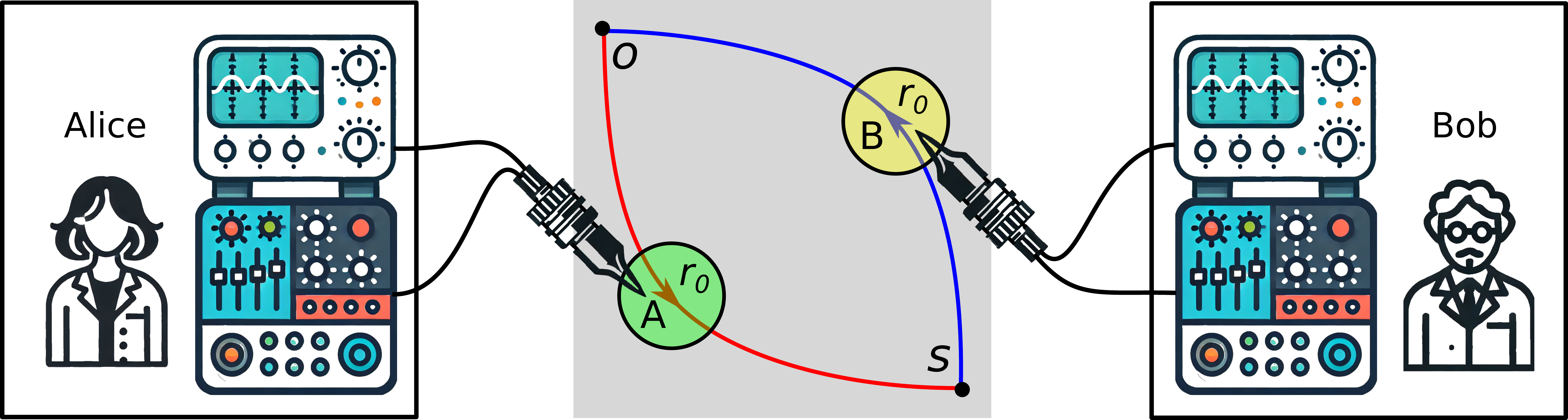}
		\caption{\label{fig:ABcontrol} Local operations controlled by the players}%
\end{subfigure}
\begin{subfigure}[t]{.324\linewidth}
	\centering\includegraphics[width=.96\linewidth]{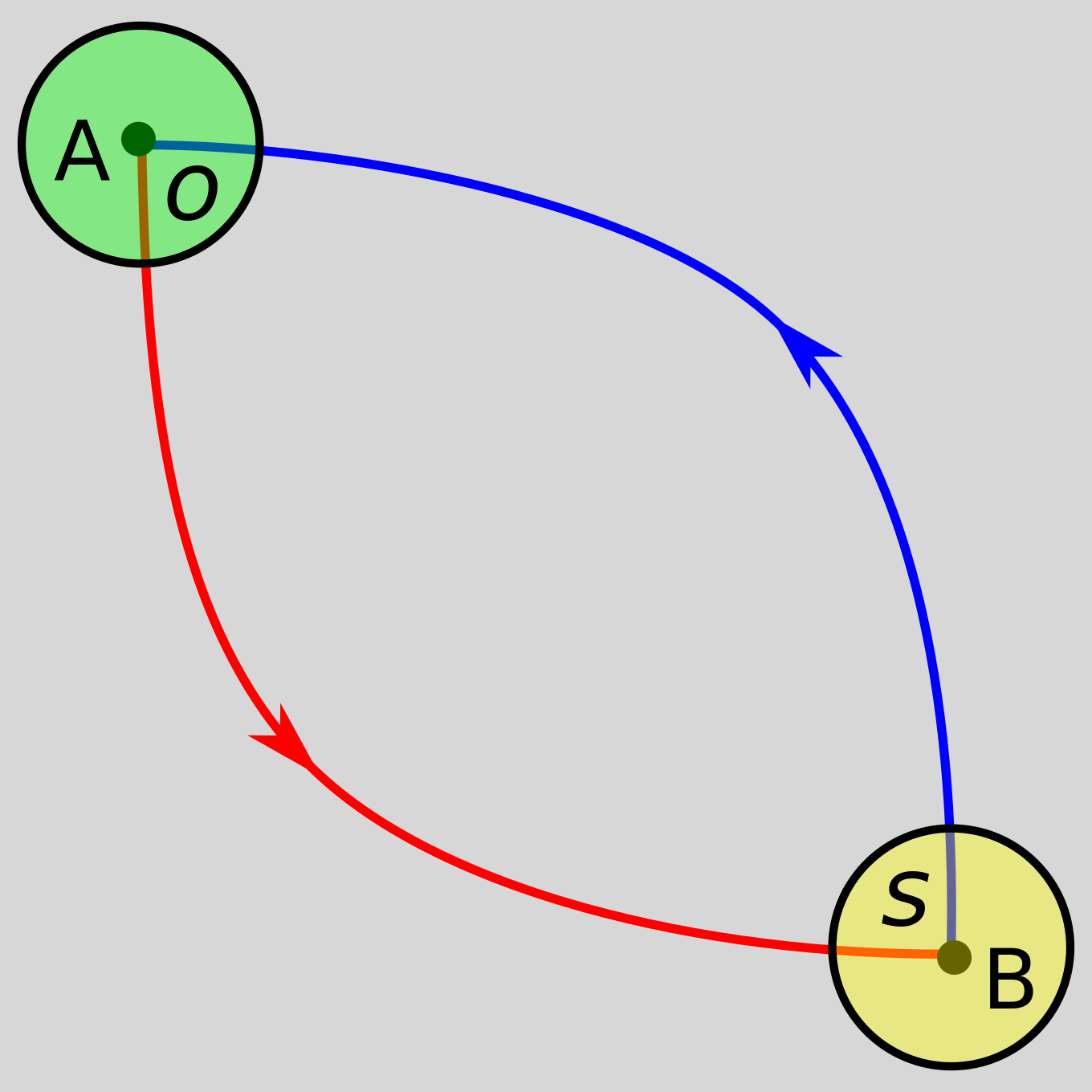} 
	\caption{\label{fig:gamestart} Start $t=0$}
\end{subfigure}
\begin{subfigure}[t]{.324\linewidth}
	\centering\includegraphics[width=.96\linewidth]{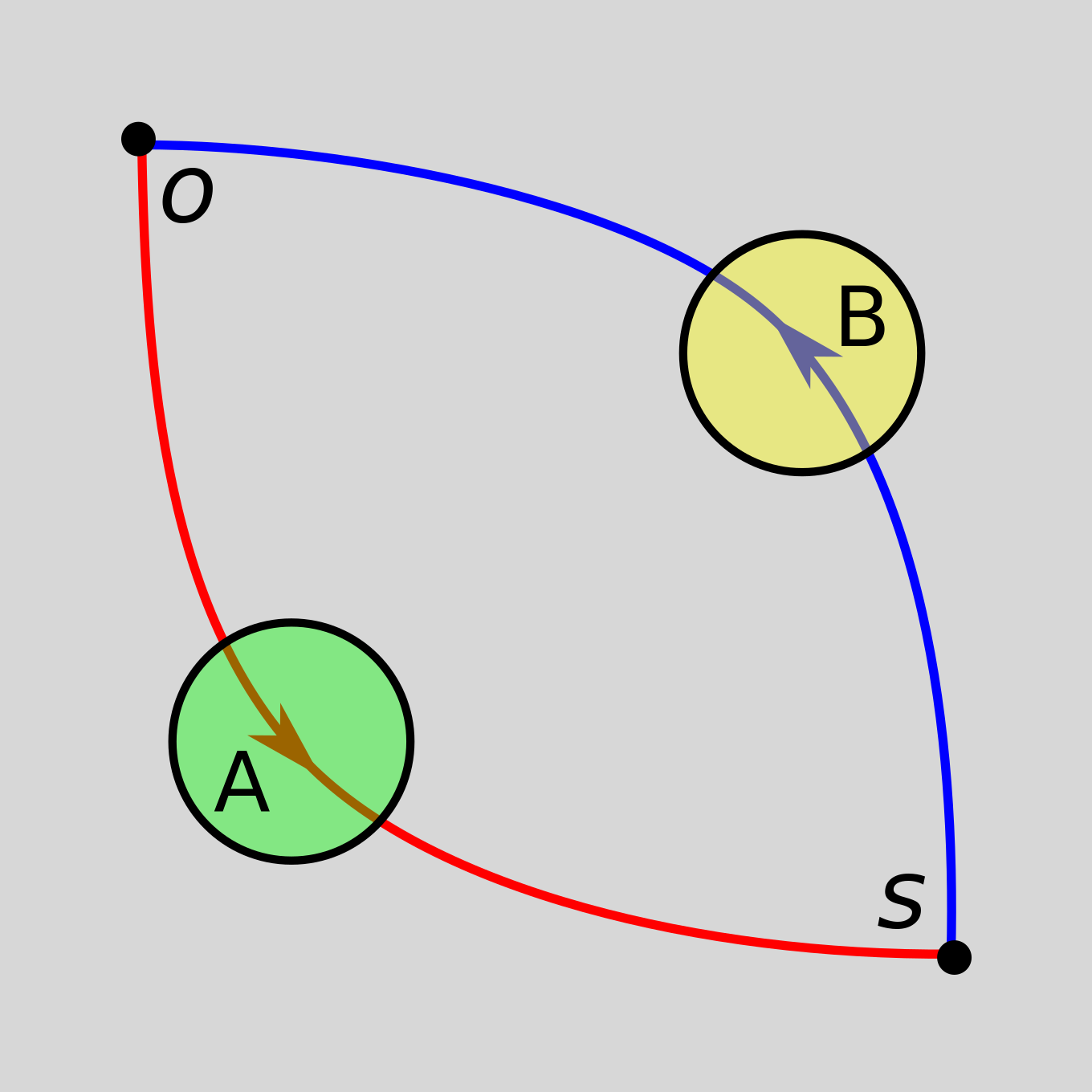} 
	\caption{\label{fig:duringgame} $0<t<T$}
\end{subfigure}
\begin{subfigure}[t]{.324\linewidth}
	\centering\includegraphics[width=.96\linewidth]{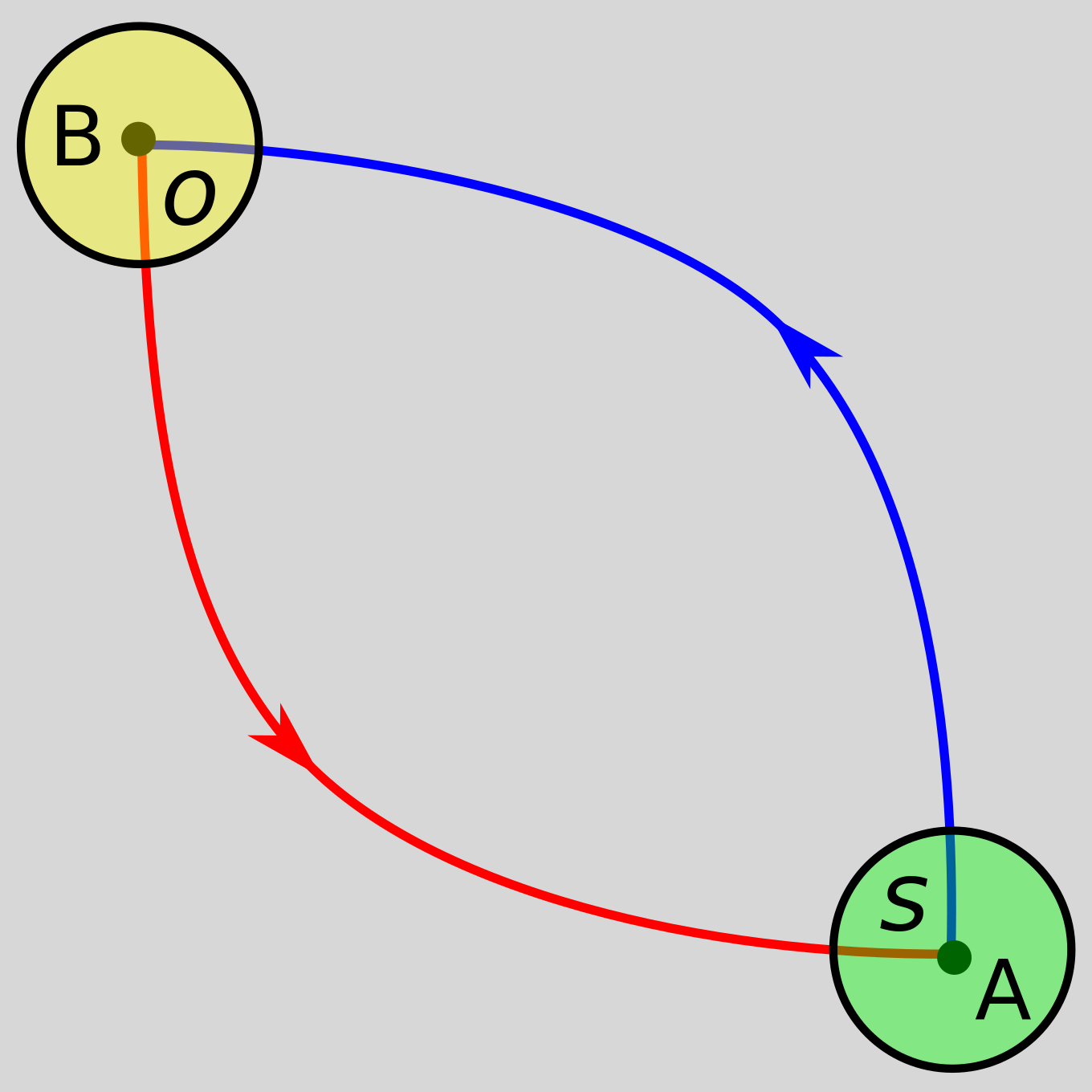} 
	\caption{\label{fig:gameend} End $t=T$}
\end{subfigure}
\caption{\label{fig:game} Illustration of the game process~(figure adapted from Ref.~\onlinecite{wang2024parastatistics}). 
	(a) The two circles have radius $r_0$ defined by the players. Each player is confined in an isolated room, and is only  allowed to do local unitary operations and  measurements within his/her own circle area; (b)-(d) During the game the Referees move the two circles along their respective paths to complete an exchange of positions. The two special points $\oA,\oB$ are chosen by the players, while the two paths are chosen by the Referees. 
} 
\end{figure}

\subsubsection{A variant with one distinguished special point}\label{sec:oneptversion}
We now define a slightly different version of this game in which both players start and end their journey at one common special point, but at different times. This version takes longer to describe, but has the advantage that the winning strategy using paraparticles is conceptually easier to understand. Furthermore, since it requires only one special point in the physical system where paraparticles can be locally created and measured, 
it may be easier to implement and control experimentally. 

In this version of the game, the pregame preparation steps are (almost) the same as the original version, except that the players are only required to choose one special point $\oA$ on the lattice~(allowed to be on the boundary). %
The game process is illustrated in Fig.~\ref{fig:bulk-one-pt}.
The rules and the goal are exactly the same as the original version, the only difference is that, in this version of the game, the spacetime trajectory of Alice's circle is exactly the same as that of Bob's, except being  delayed by $T/4$, where $T$ is the total duration of the game. 
When the game starts at $t=0$, %
circle B appears at $\oA$, and starts slowly moving along the designated path. At $t=T/4$, after Bob has moved a long distance away, circle A appears at $\oA$, and starts moving along the same path at the same speed, always keeping a large distance from circle B.  %
At $t=3T/4$, circle B returns to $\oA$ and disappears shortly after.  %
At $t=T$,  circle A returns to $\oA$ and then disappears. 
In the end, the Referees ask each player about the other player's number, and they win if they both answer correctly.  

\begin{figure}
\begin{subfigure}[t]{.32\linewidth}
	\centering\includegraphics[width=.95\linewidth]{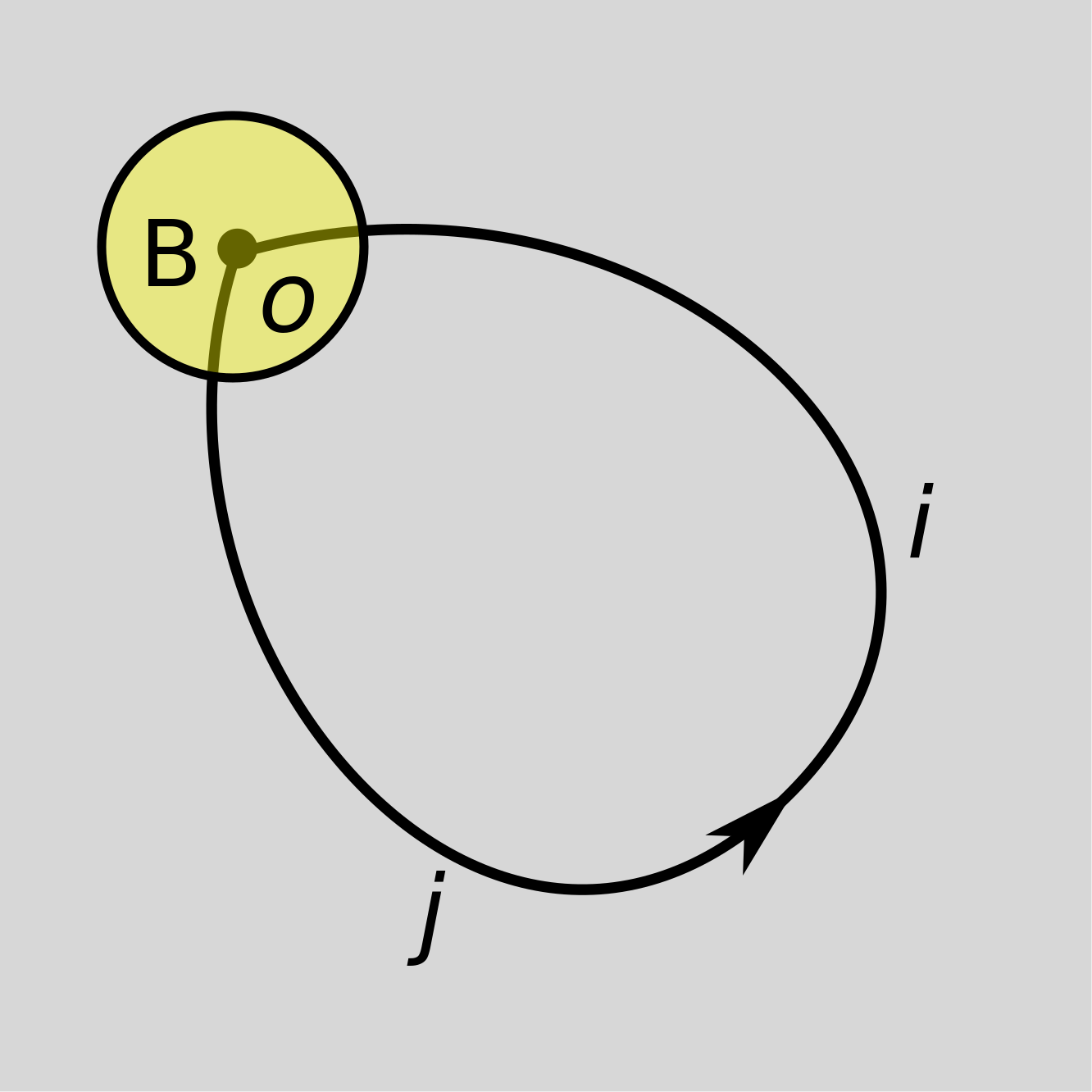} 
	\caption{\label{fig:bulk1pt-1} Start $t=0$}
\end{subfigure}
\begin{subfigure}[t]{.32\linewidth}
	\centering\includegraphics[width=.95\linewidth]{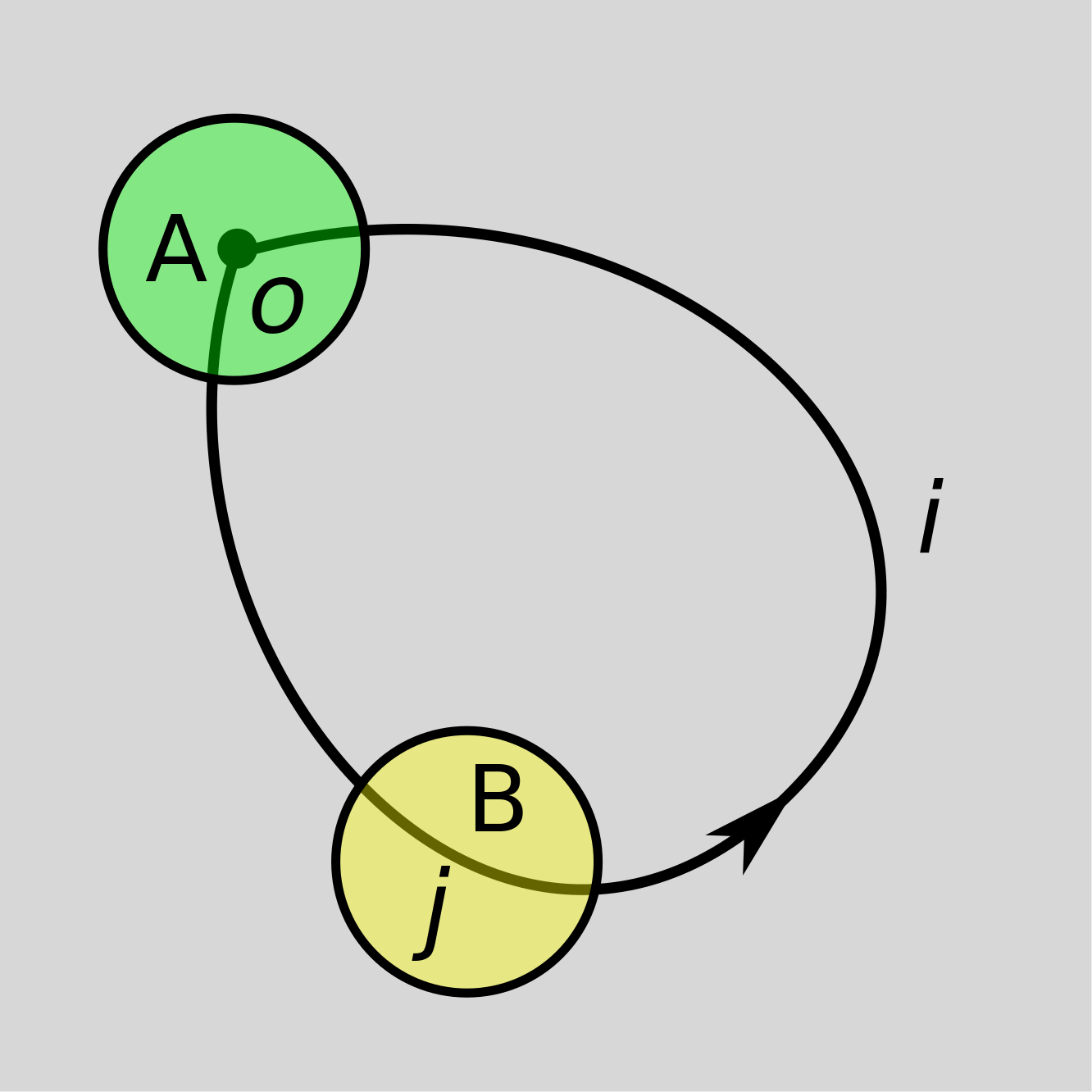} 
	\caption{\label{fig:bulk1pt-2} $t=T/4$}
\end{subfigure}
\begin{subfigure}[t]{.32\linewidth}
	\centering\includegraphics[width=.95\linewidth]{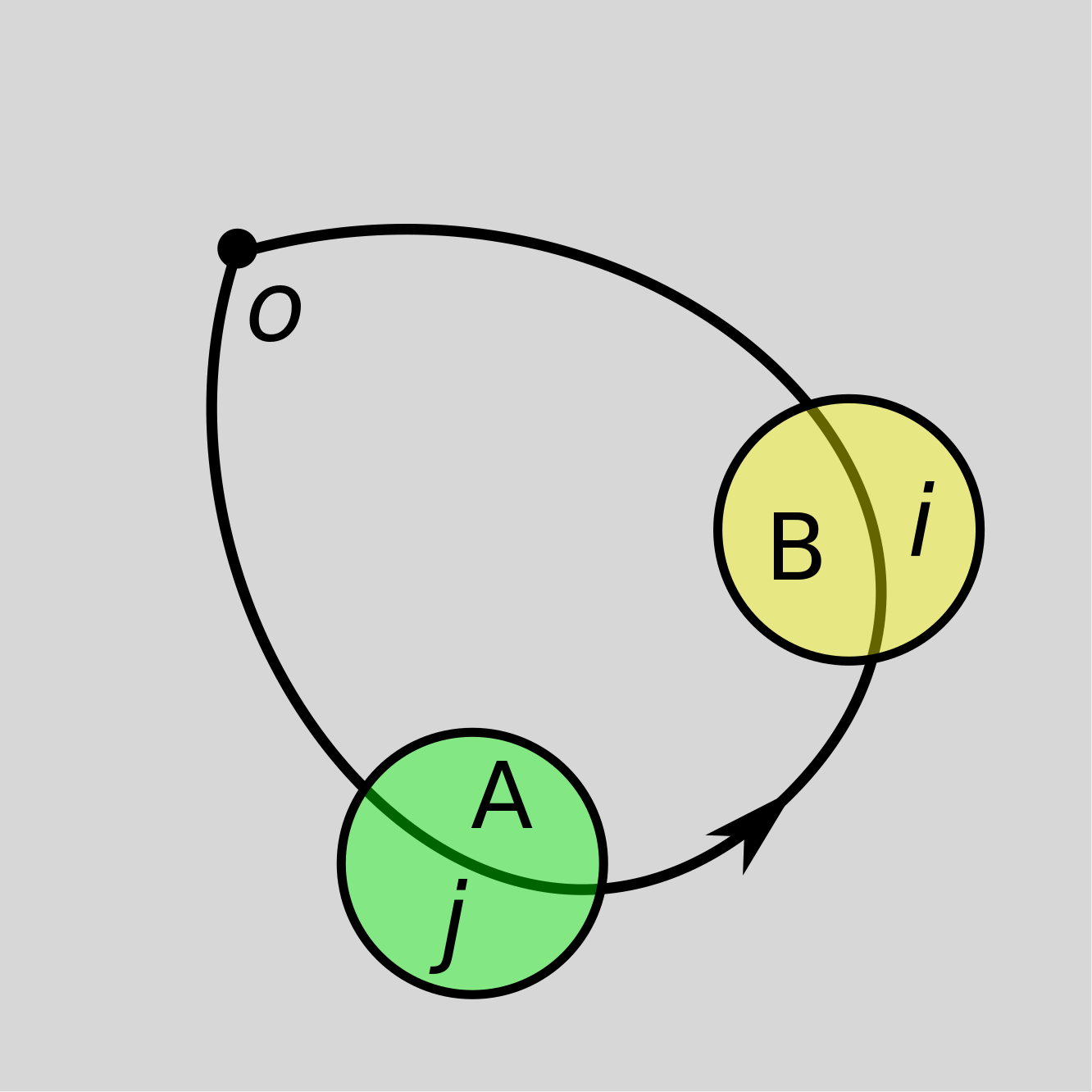} 
	\caption{\label{fig:bulk1pt-3} $t=T/2$}
\end{subfigure}
\begin{subfigure}[t]{.32\linewidth}
	\centering\includegraphics[width=.95\linewidth]{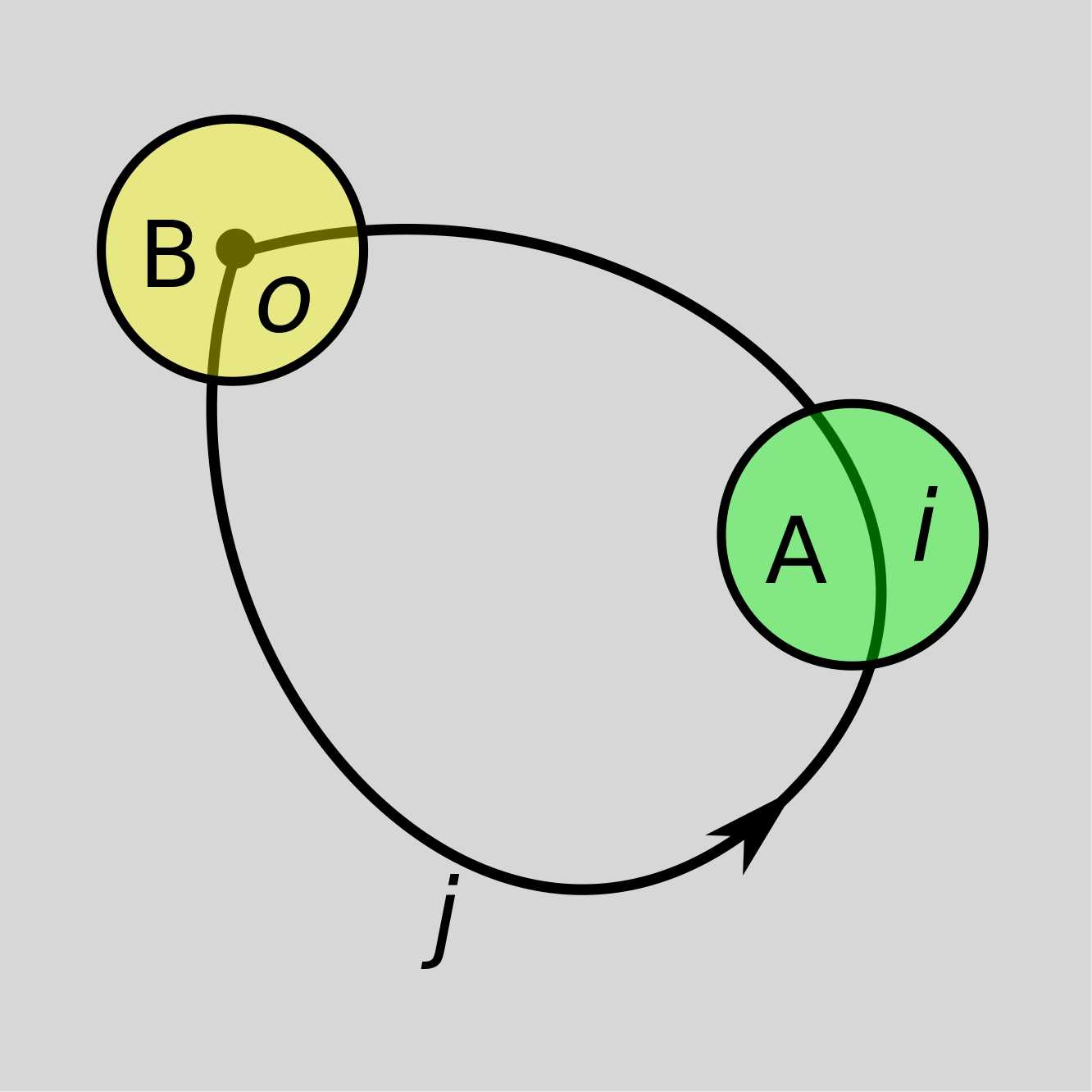} 
	\caption{\label{fig:bulk1pt-4} $t=3T/4$}
\end{subfigure}
\begin{subfigure}[t]{.32\linewidth}
	\centering\includegraphics[width=.95\linewidth]{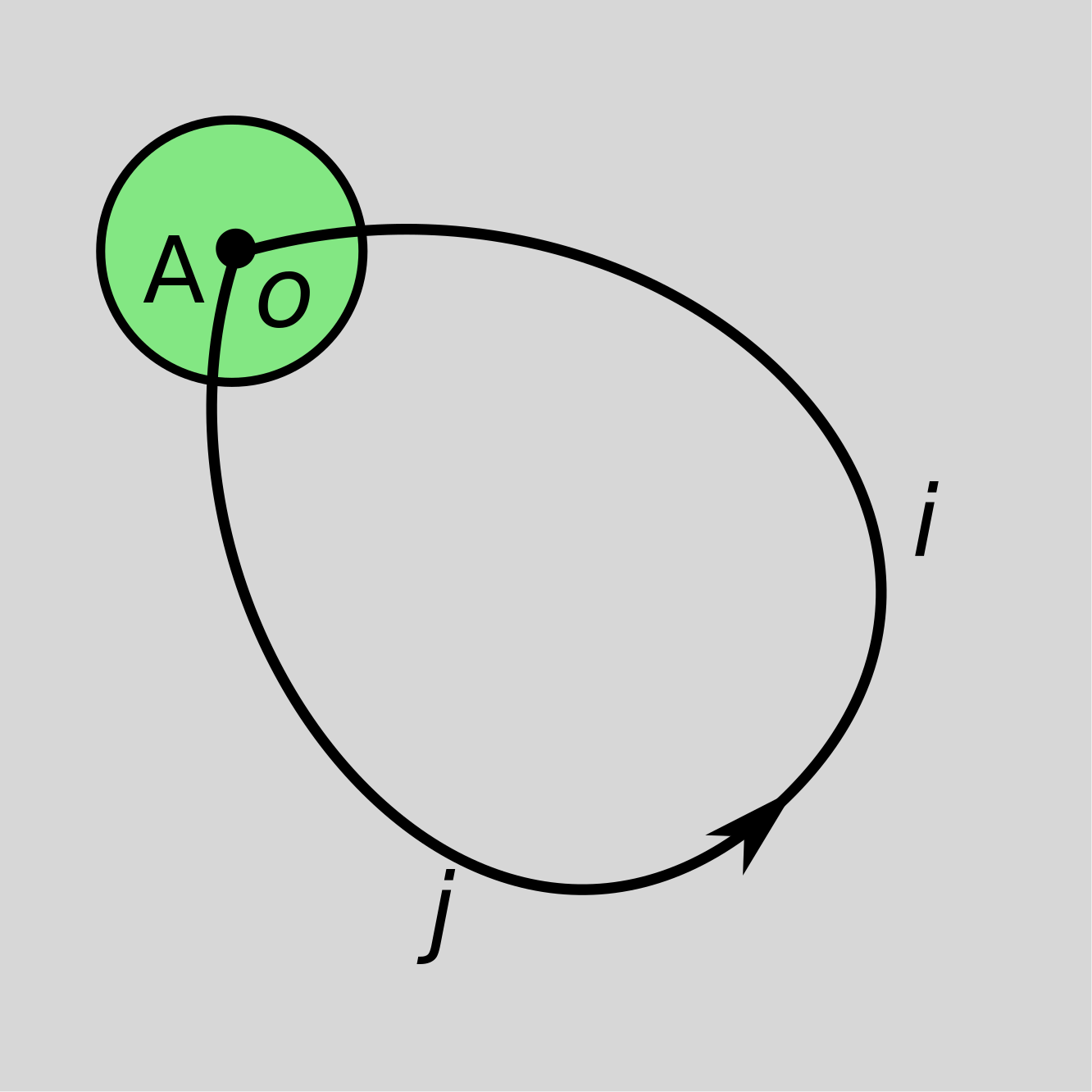} 
	\caption{\label{fig:bulk1pt-5} End $t=T$}
\end{subfigure}
\caption{\label{fig:bulk-one-pt} 
	A variant of the game protocol in which both players start at one distinguished special point, but at different time. Here $i$ and $j$ are two fixed points in the system introduced for illustration purpose. 
}
\end{figure}

\subsection{Extra winning conditions}\label{sec:extra_win_cond}
We now introduce some extra conditions that we impose on the winning strategy. There are two main motivations to introduce these extra constraints. On the theoretical side, these extra conditions establish the hierarchy of nontriviality in Tab.~\ref{tab:protocols_strategies}; on the practical side, the robustness against noise is crucial for actually carrying out the game in realistic experiments. 
\subsubsection{Robustness against noise and eavesdropping}\label{sec:robust_noise_eaves}
For simplicity, here we only consider local noise that happen inside the two circles, which is enough for our theoretical purpose of establishing the hierarchy in  Tab.~\ref{tab:protocols_strategies}, and in App.~\ref{app:noise_robust} we discuss how to treat local noise that happen in other places. 
We can model such local noise by some random local quantum channels inside the two circles, and we assume that such events happen at some constant rate that is not too high. %
We require that any realistic winning strategy should be robust against such local noise. 
This is crucial not only for carrying out the game in experiment, but also for distinguishing paraparticles from ordinary emergent fermions. Without the requirement of noise robustness, topological phases with emergent fermions, such as the toric code~\cite{kitaev2003fault}, admit an unphysical winning strategy, as we show in Sec.~\ref{sec:emergent_fermion}. %
Such a strategy involves creating a superposition of states with different fermion parity, and is therefore vulnerable to local phase noise of the form $e^{i\theta\hat{n}}$.  

A more stringent requirement we add is the robustness against eavesdropping. Specifically, suppose at some time during the game, when both players are in the bulk, an eavesdropper comes in and is allowed to make a finite number~(say, no more than four) of local measurements on the physical system, including in the circle areas. 
Here the eavesdropper is allowed to listen to the pregame discussion of the players so that he has complete knowledge about the physical system $\hat{H}$ and the players' winning strategy. 
In this case, if the eavesdropper cannot obtain any information about the numbers $a$ and $b$ no matter what local observables he measures, then we say that the strategy is robust against eavesdropping.  
Robustness against eavesdropping is needed to separate a rather trivial class of $R$-parastatistics from more non-trivial ones. This rather trivial class is characterized by 
$R^{b'a'}_{ab}=\pm\delta_{aa'}\delta_{bb'}\theta_a\theta^{*}_b$, 
and as we will see in Sec.~\ref{sec:phaseSWAP}, this rather trivial class can give a partial winning strategy that is robust against noise, but not against eavesdropping. Although this rather trivial type of $R$-matrix already shows a difference from ordinary fermions and bosons, they are still way from genuinely nontrivial $R$-matrices that describe full-fledged parastatistics, as the latter provide winning strategy robust against both local noise and eavesdropping.  %

\subsubsection{The identical particle test}\label{sec:IPT}
\begin{figure}
		\centering\includegraphics[width=.35\linewidth]{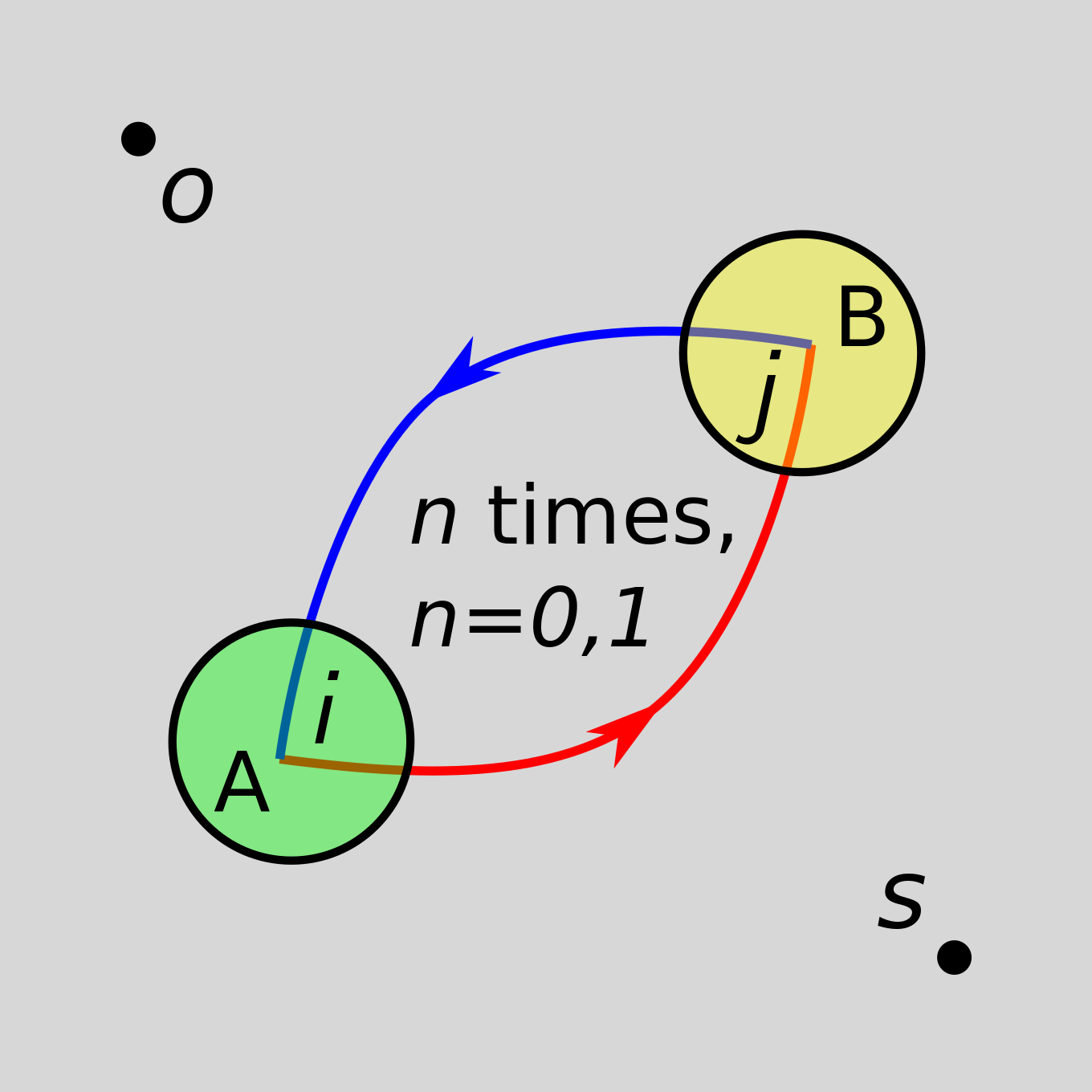} 
		\caption{\label{fig:IPT} The identical particle test 
			checks if the quantum states before and after the exchange are locally indistinguishable. 
		}
\end{figure}
The original version of the challenge game introduced in Sec.~\ref{sec:original_version} does not require the players to use identical particles--indeed, 
it is possible to pass this challenge if Alice and Bob use different types of quasiparticles with non-trivial mutual statistics, as shown in Sec.~\ref{sec:relative_para}.  
Mutual parastatistics is also consistently defined in any spatial dimension and display nontrivial physical behaviors %
compared to ordinary fermions and bosons. %
However, if one wants to single out paraparticles with non-trivial self-statistics, we can introduce an additional identical particle test to enforce the use of identical particles, as we describe below. 

The basic idea of the identical particle test is to spatially exchange the two particles and check if the physical states before and after the exchange are distinguishable by local measurements.  %
To perform this test, we introduce a third player, Charlie, who has knowledge of the Hamiltonian and the ground state.  
At $t=T/2$, when the circles of both Alice and Bob are far away from $\oA$ and $\oB$, as shown in Fig.~\ref{fig:IPT}, %
Charlie comes in, and is allowed to do any local measurements within the two circles~(similar to other players, Charlie sits in a separate room where she can perform the local measurements remotely). %
Then Charlie is asked to temporarily leave the game, and the Referees slowly move the two circles to exchange their positions $n$ times, where $n\in\{0,1\}$, as shown in Fig.~\ref{fig:IPT}. Of course, during this exchange process, Alice and Bob are still obliged to move whatever excitations in the circle areas to follow the circle movements. After this, Charlie comes back, and is allowed to do any local measurements within the two circles. %
Then the Referees ask Charlie the value of $n$. The test can be repeated multiple times, and if Charlie can answer the value of $n$ better than random guessing, then Alice and Bob fail the identical particle test; otherwise, they pass this test. %
We now briefly explain the intuition behind this test, detailed analysis will be given in Sec.~\ref{sec:passingIPT}. 
If Alice and Bob use different types of particles inside their circle areas, then Charlie can simply measure the particle type in each circle when she first comes in and after she comes back, to determine if an exchange has happened. %
By contrast, if Alice and Bob use the same type of particles, they can carefully position the particles in the circles~(e.g. always at the center) such that the local reduced density matrix in the circle at position $i$~(and similarly for $j$) is unchanged before and after the exchange. %
Consequently, Charlie cannot determine if an exchange has happened using any local measurements in the circles. %

In summary, passing the identical particle test %
requires Alice and Bob to use identical particles in their circles, thereby distinguishing between self parastatistics and mutual parastatistics. %

[In essence, this protocol tests the \textit{local indistinguishability} of the two quasiparticles, meaning that the quantum states before and after the exchange cannot be distinguished by any local measurements, but can potentially be distinguished by non-local measurements. The local indistinguishability of $R$-paraparticles is guaranteed by axioms~\ref{Axiom2}-\ref{Axiom4} we introduce later in Sec.~\ref{sec:axioms_emergent_para}.  We note that being locally indistinguishable is the way how $R$-paraparticles avoid a more recent no-go theorem on parastatistics~\cite{mekonnen2025invariance}, which assumes a more restrictive indistinguishability condition~(this will be discussed in more detail in a future work~\cite{wangOnRparaI}).]

\subsubsection{*The antiparticle test}\label{sec:antiparticle}
For some very technical reason we also introduce the antiparticle test, which is a variant of the original challenge in Sec.~\ref{sec:original_version}. 
It is an optional test introduced mainly to simplify our later analysis in Sec.~\ref{sec:TC_winning_condition}, as it turns out to be technically easier to classify the subclass of 3+1D topological phases that can also pass this antiparticle test compared to the most general case. We recommend the readers to skip this section at first reading, as it does not affect any main point of this paper.
\begin{figure}
	\begin{subfigure}[t]{.32\linewidth}
		\centering\includegraphics[width=.85\linewidth]{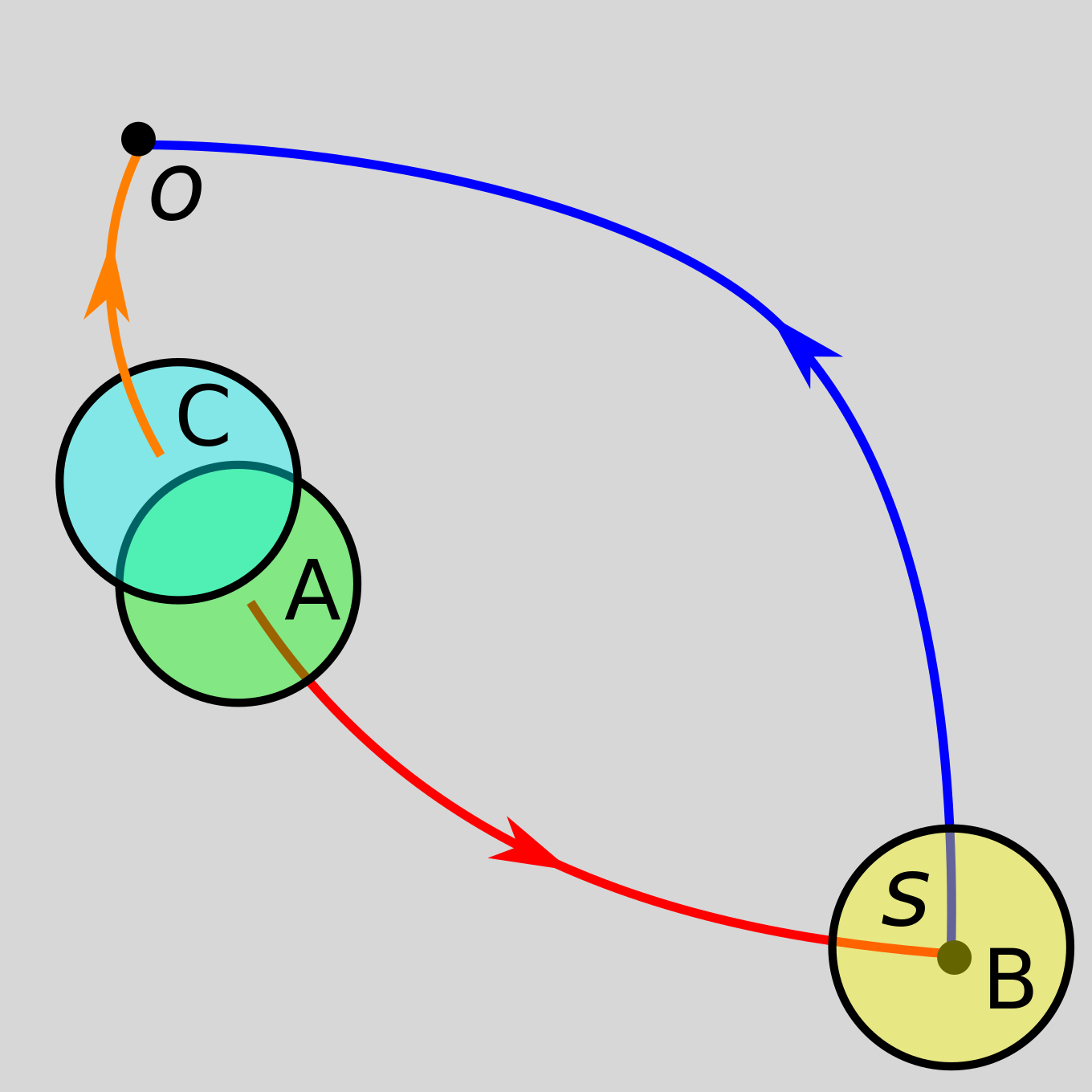} 
		\caption{\label{fig:Exchange-bulk-APtest-1} Start $t=0$}
	\end{subfigure}
	\begin{subfigure}[t]{.32\linewidth}
		\centering\includegraphics[width=.85\linewidth]{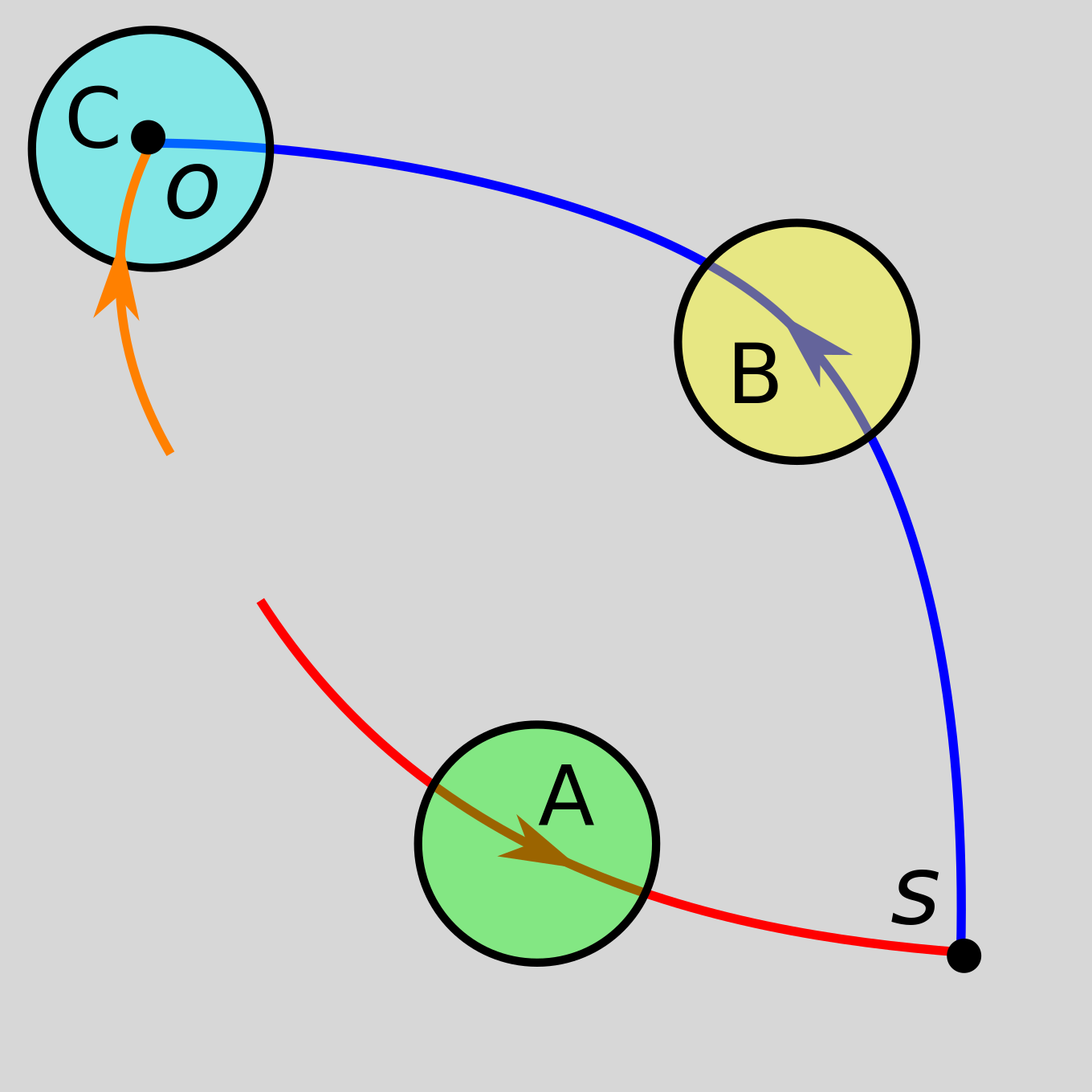} 
		\caption{\label{fig:Exchange-bulk-APtest-2} $0<t=t_1<T$}
	\end{subfigure}
	\begin{subfigure}[t]{.32\linewidth}
		\centering\includegraphics[width=.85\linewidth]{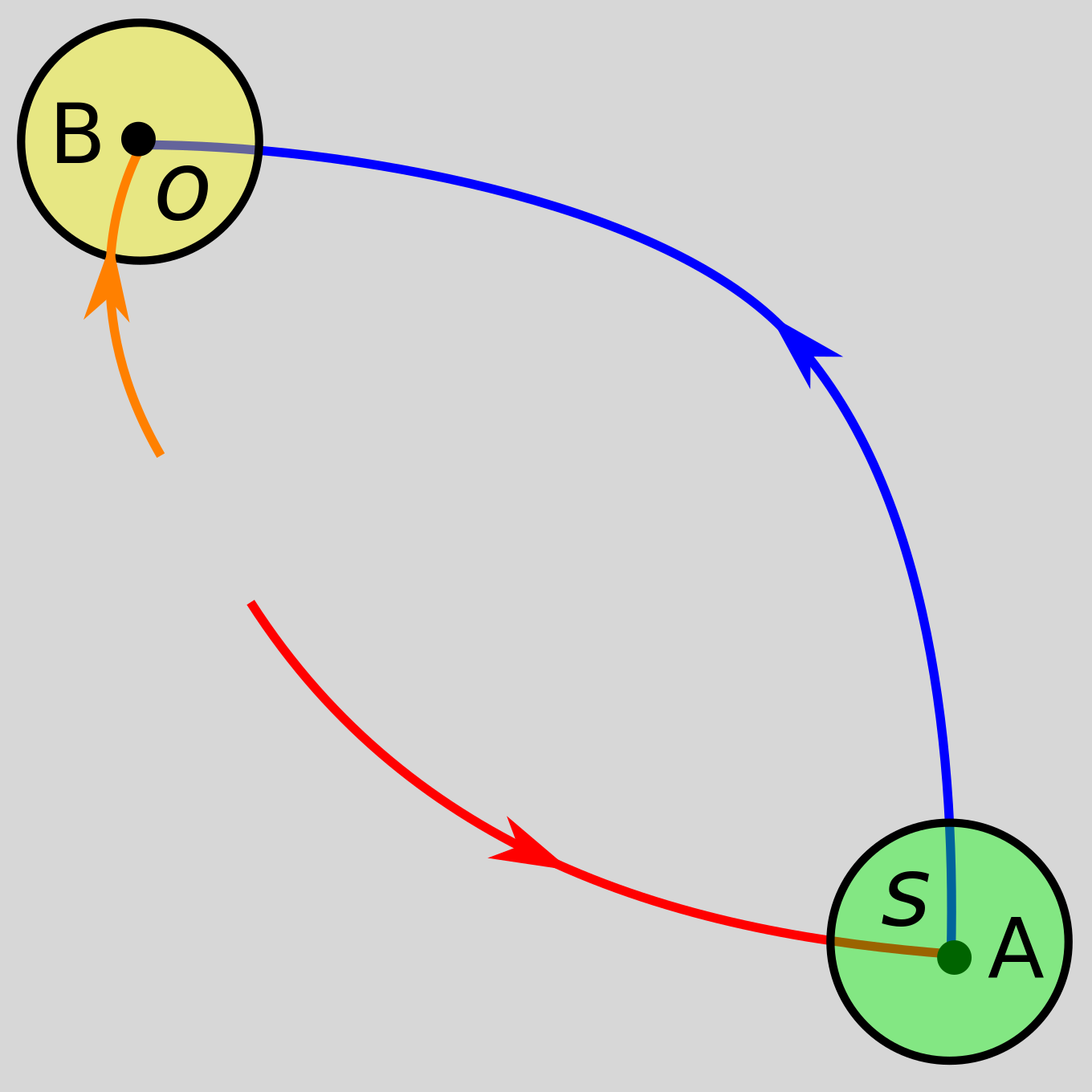} 
		\caption{\label{fig:Exchange-bulk-APtest-3} End $t=T$}
	\end{subfigure}
	\caption{\label{fig:Exchange-bulk-APtest} The antiparticle test.
	}
\end{figure}

The basic set up is very similar to the original challenge, except that:\\
(1). This test requires three players to participate: Alice, Bob, and Carol, who work as a team to win. Throughout the game, classical communication between Alice and Carol is allowed;\\
(2). Before the game begins,  only Bob receives a number $b\in\{1,2,\ldots,m_0\}$ from the Referees, and after the game ends, Alice is asked to report Bob's number, and they win if she answers correctly;\\
(3). At $t=0$, the game begins in the configuration shown in Fig.~\ref{fig:Exchange-bulk-APtest-1}, where the circles A and C have an overlap. Then the three circles start moving along their respective paths shown in the figure; \\
(4). After the circles A and C are fully separated, an identical particle test between circles A and B is performed; \\
(5). At some point $t=t_1$, the circle C arrives at the special point $\oA$, and disappears shortly after, while the circles A and B keeps moving towards their respective destination;\\
(6). At $t=T$, both circles A and B arrive at their respective destination and disappear, and the game ends.\\
The winning strategy to this test will only be presented in Sec.~\ref{sec:win-APtest}, and in Sec.~\ref{sec:win_APtest} we explain why we introduce this test. This test does not affect the hierarchy in Tab.~\ref{tab:protocols_strategies}. %

\section{Winning strategies}\label{sec:win}
In this section we describe winning strategies for the challenge game introduced in Sec.~\ref{sec:parachallenge_variants_twists}, using emergent $R$-paraparticles. %
Specifically, in Sec.~\ref{sec:axioms_emergent_para} we first define emergent $R$-paraparticles through a number of axioms that describe their universal properties. Then in Sec.~\ref{sec:win_original} we present the winning strategy and show that any nontrivial emergent paraparticle satisfying these axioms can win the game. 
In Sec.~\ref{sec:partial_win} we present some partial winning strategies that utilize relatively trivial types of exchange statistics, %
and explain how these strategies fail some of the winning conditions of the game. 

\subsection{Axioms of emergent $R$-parastatistics}\label{sec:axioms_emergent_para}%
We begin by giving a precise definition of emergent $R$-paraparticles in a gapped phase, using a number of axioms that summarize their universal topological properties. 
We call these the axioms of emergent $R$-parastatistics, which turn out to be extremely convenient for describing the winning strategies, as they allow us to forget about the complicated and irrelevant microscopic details and focus on universal properties. 
These axioms were shown~\cite{wang2024parastatistics} to be satisfied by the emergent paraparticles in the 2D exactly solvable model proposed in Ref.~\onlinecite{wang2023para}. More generally, in either 2D or 3D, one can derive these axioms directly from the second quantization formulation of $R$-paraparticles proposed in Ref.~\onlinecite{wang2023para} along with a few extra assumptions about the underlying topological phase that host the paraparticles. Later in Sec.~\ref{sec:FCanalysis_summary}, we will also derive these axioms from the SFC description of paraparticles.

Let $\ket{G}$ be the (unique, gapped) ground state of a 2D or 3D quantum many body system described by a locally interacting Hamiltonian $\hat{H}$. 
Let $\psi$ be a type of point-like quasiparticle above $\ket{G}$. We call $\psi$ an $R$-paraparticle if it satisfies the following axioms:\\ 
\textbf{Axiom \customlabel{Axiom1}{1}. The state space.} We denote an excited state of $\hat{H}$ with $n$ quasiparticles of type $\psi$ as $\ket{G;i_1^{a_1} i_2^{a_2} \ldots i_n^{a_n}}$, 
where $i_1,i_2,\ldots, i_n$ label the particle positions that are assumed to be mutually different~\footnote{
	Since these axioms aim to capture only the universal properties of paraparticles at long distance, they do not address
	what happens when two particles get close, which depends on the non-universal microscopic details of $\hat{H}$. 
}. %
The  mutually independent numbers $a_1,\ldots,a_n\in \{1,2,\ldots,m\}$ label the \textit{internal states} of the paraparticles, and $m\geq 2$ is an integer called the quantum dimension of $\psi$~(the trivial $m=1$ case correspond to ordinary fermions and bosons). 
We use $\Hil_{i_1i_2 \ldots i_n}$ to denote the $m^n$-dimensional subspace of excited states spanned by $\ket{G;i_1^{a_1} i_2^{a_2} \ldots i_n^{a_n}}$, for a given $i_1,\ldots, i_n$.
\\
\textbf{Axiom \customlabel{Axiom2}{2}. Topological degeneracy.} The internal states $a_1,\ldots,a_n$ of the paraparticles cannot be locally accessed or altered using any local unitary operations or measurements when every paraparticle is deep in the bulk, far away from other particles and defects. Formally, this requires that any local operator $\hat{Q}$ satisfies %
\begin{equation}\label{eq:TPdegenerate}
	\!\braket{G;i_1^{b_1} \ldots i_n^{b_n}|\hat{Q}|G;i_1^{a_1} \ldots i_n^{a_n}}=C^{Q}_{i_1\ldots i_n}\!\prod_{j=1}^n \delta_{a_j b_j}+O(e^{-\frac{l}{\xi}}), %
\end{equation}
where $C^O_{i_1\ldots i_n}$ is a constant independent of $a_1,\ldots,a_n$, $\xi$ is the correlation length of $\ket{G}$, and $l$ is the minimal distance between all particles and defects of the system.\\ %
\textbf{Axiom \customlabel{Axiom3}{3}. Particle movements.} %
For any  $k\in \{1,2,\ldots,n\}$, the particle at position $i_k$ can be moved to another position $j_k$ using a unitary operator $\hat{U}_{i_k j_k}$ supported on a path connecting $i_k$ and $j_k$:
\begin{equation}\label{eq:paraparticlemove}
	\hat{U}_{i_k j_k}\ket{G;i_1^{a_1}\ldots i_k^{a_k}\ldots i_n^{a_n}}=\ket{G;i_1^{a_1}\ldots j_k^{a_k} \ldots i_n^{a_n}}.
\end{equation}
 \textbf{Axiom \customlabel{Axiom4}{4}. Exchange statistics.} Consider two sets of position labels $i_1,\ldots,i_n$ and $j_1,\ldots,j_n$ related by a permutation $\tau\in S_n$, i.e., $i_k=j_{\tau(k)}$, for $1\leq k\leq n$. Then we must have $\Hil_{i_1 i_2 \ldots i_n}=\Hil_{j_1 j_2 \ldots j_n}$, since both describe the same subspace of excited states with $n$ identical paraparticles at positions  $\{i_1,\ldots,i_n\}=\{j_1,\ldots,j_n\}$. 
Therefore the two different basis $\{\ket{G;i_1^{a_1}  \ldots i_n^{a_n}}\}$ and $\{\ket{G;j_1^{a_1}  \ldots j_n^{a_n}}\}$ must be related by a unitary transformation. When 
$\tau=(k,k+1)\in S_n$ 
is a transposition that swaps $k$ and $k+1$ for some $k\in\{1,2,\ldots,n-1\}$, this basis transformation is given by 
\begin{equation}\label{eq:exchangestatR-npt}
	\ket{G;\ldots i_k^{a}i_{k+1}^{b}\ldots }=\sum_{a',b'}R^{b'a'}_{ab}\ket{G;\ldots i_{k+1}^{b'}i_k^{a'} \ldots},
\end{equation}
where $\ldots$ collectively denotes other labels that are unaffected by the exchange. %
The basis transformation between $\Hil_{i_1 i_2 \ldots i_n}$ and $\Hil_{j_1 j_2 \ldots j_n}$ for a general permutation $\tau\in S_n$ can be obtained by composing Eq.~\eqref{eq:exchangestatR-npt} for different values of $k$, since the symmetric group $S_n$ is generated by neighboring swaps of the form $(k,k+1)\in S_n$. %
For the consistency of Eq.~\eqref{eq:exchangestatR-npt}, the four-index 
tensor $R^{b'a'}_{ab}=\!\!\begin{tikzpicture}[baseline={([yshift=-.6ex]current bounding box.center)}, scale=0.45]
	\Rmatrix{0}{0}{R}
	\node  at (-1.5*\AL,\AL) {\footnotesize $b' $};
	\node  at (1.7*\AL,\AL) {\footnotesize $a'$};
	\node  at (-1.5*\AL,-\AL) {\footnotesize $a$};
	\node  at (1.5*\AL,-\AL) {\footnotesize $b$};
\end{tikzpicture}$ is required to satisfy the Yang-Baxter equation~\cite{Turaev1988,Majid1990,etingof1999set}
\begin{alignat}{3}\label{eq:YBE}
	\begin{tikzpicture}[baseline={([yshift=-.8ex]current bounding box.center)}, scale=0.5]
		\Rmatrix{0}{\AL}{R}
		\Rmatrix{0}{-\AL}{R}
		\node  at (-\AL,2.5*\AL) {\footnotesize $a$};
		\node  at (\AL,2.5*\AL) {\footnotesize $b$};
		\node  at (-\AL,-2.5*\AL) {\footnotesize $c$};
		\node  at (\AL,-2.5*\AL) {\footnotesize $d$};
	\end{tikzpicture}&=
	\begin{tikzpicture}[baseline={([yshift=-.8ex]current bounding box.center)}, scale=0.5]
		\draw[thick] (-\AL,-2*\AL) -- (-\AL,2*\AL);
		\draw[thick] (\AL,-2*\AL) -- (\AL,2*\AL);
		\node  at (-\AL,2.5*\AL) {\footnotesize $a$};
		\node  at (\AL,2.5*\AL) {\footnotesize $b$};
		\node  at (-\AL,-2.5*\AL) {\footnotesize $c$};
		\node  at (\AL,-2.5*\AL) {\footnotesize $d$};
		\node  at (-1.5*\AL,0*\AL) {\footnotesize $\delta$};
		\node  at (1.5*\AL,0*\AL) {\footnotesize $\delta$};
	\end{tikzpicture},&~~~
	\begin{tikzpicture}[baseline={([yshift=-.8ex]current bounding box.center)}, scale=0.5]
		\Rmatrix{-\AL}{2*\AL}{R}
		\Rmatrix{\AL}{0}{R}
		\Rmatrix{-\AL}{-2*\AL}{R}
		\draw[thick] (-2*\AL,-\AL) -- (-2*\AL,\AL);
		\draw[thick] (2*\AL,\AL) -- (2*\AL,3*\AL);
		\draw[thick] (2*\AL,-\AL) -- (2*\AL,-3*\AL);
		\node  at (-2*\AL,3.5*\AL) {\footnotesize $a$};
		\node  at (0*\AL,3.5*\AL) {\footnotesize $b$};
		\node  at (2*\AL,3.5*\AL) {\footnotesize $c$};
		\node  at (-2*\AL,-3.7*\AL) {\footnotesize $d$};
		\node  at (0*\AL,-3.7*\AL) {\footnotesize $e$};
		\node  at (2*\AL,-3.7*\AL) {\footnotesize $f$};
	\end{tikzpicture}
	&=\begin{tikzpicture}[baseline={([yshift=-.8ex]current bounding box.center)}, scale=0.5]
		\Rmatrix{\AL}{2*\AL}{R}
		\Rmatrix{-\AL}{0}{R}
		\Rmatrix{\AL}{-2*\AL}{R}
		\draw[thick] (2*\AL,-\AL) -- (2*\AL,\AL);
		\draw[thick] (-2*\AL,\AL) -- (-2*\AL,3*\AL);
		\draw[thick] (-2*\AL,-\AL) -- (-2*\AL,-3*\AL);
		\node  at (-2*\AL,3.5*\AL) {\footnotesize $a$};
		\node  at (0*\AL,3.5*\AL) {\footnotesize $b$};
		\node  at (2*\AL,3.5*\AL) {\footnotesize $c$};
		\node  at (-2*\AL,-3.7*\AL) {\footnotesize $d$};
		\node  at (0*\AL,-3.7*\AL) {\footnotesize $e$};
		\node  at (2*\AL,-3.7*\AL) {\footnotesize $f$};
	\end{tikzpicture},\\
	R^2&=\mathds{1},&\quad R_{12}R_{23}R_{12}&=R_{23}R_{12}R_{23},\nonumber
\end{alignat}
which is equivalent to the requirement that $R$ generates a representation of the symmetric group $S_n$ via Eq.~\eqref{eq:exchangestatR-npt}~\cite{kassel2008braid}. In this paper, we also require $R$ to be unitary as a matrix, %
so that Eq.~\eqref{eq:exchangestatR-npt} defines a unitary basis transformation. 
A solution to Eq.~\eqref{eq:YBE} is called an $R$-matrix, and each $R$-matrix defines a type of parastatistics~\cite{wang2023para}.  \\
\textbf{Axiom \customlabel{Axiom5}{5}. Creation and annihilation of paraparticles.}
A single paraparticle $\psi$ can be locally created and annihilated at one of the two special points $\oA$ and $\oB$ in the system.
More precisely, there exists unitary operators $\hat{U}_{\oA,a}$, $\hat{U}'_{\oB,a}$ localized around $\oA,\oB$, respectively, satisfying
\begin{eqnarray}\label{eq:localcreationatcorner}
	\hat{U}_{\oA,a}\ket{G;i_1^{a_1} \ldots i_n^{a_n}}&=&\ket{G;\oA^a i_1^{a_1} \ldots i_n^{a_n}},\nonumber\\
	\hat{U}'_{\oB,a}\ket{G;i_1^{a_1} \ldots i_n^{a_n}}&=&\ket{G;i_1^{a_1} \ldots i_n^{a_n}\oB^a}.
\end{eqnarray}
Since unitary processes are reversible, one can also annihilate paraparticles at $\oA,\oB$ using $\hat{U}^\dagger_{\oA,a}$, $\hat{U}^{\prime\dagger}_{\oB,a}$, respectively.
In general, to make Eq.~\eqref{eq:localcreationatcorner} possible, there must be some special kinds of topological defects located at $\oA$ and $\oB$, which we describe later in Sec.~\ref{sec:ModCatDefect} in the categorical framework. \\
\textbf{Axiom \customlabel{Axiom6}{6}. Measurement of the internal state.} %
The internal state of a paraparticle can be locally measured at $\oA$ or $\oB$:  
there exist observables $\hat{O}_{\oA}$, $\hat{O}'_{\oB}$ localized around $\oA,\oB$, respectively, satisfying
\begin{eqnarray}\label{eq:localmeasurementatcorner}
	\hat{O}_{\oA}\ket{G;i_1^{a_1} \ldots i_n^{a_n}}&=&a_1 \ket{G; i_1^{a_1} \ldots i_n^{a_n}},\text{ if } i_1=\oA,\nonumber\\
	\hat{O}'_{\oB}\ket{G;i_1^{a_1} \ldots i_n^{a_n}}&=&a_n \ket{G;i_1^{a_1} \ldots i_n^{a_n}},\text{ if } i_n=\oB.
\end{eqnarray}
	A caveat is that if we instead have $i_k=\oA$ for some $k>1$~(or $i_k=\oB$ for some $k<n$), then we need to first do a basis transformation in Eq.~\eqref{eq:exchangestatR-npt} %
	to move $i_k$ all the way to the front~(back) before applying Eq.~\eqref{eq:localmeasurementatcorner} to compute %
	$\braket{\hat{O}_{\oA}}$~(or $\braket{\hat{O}'_{\oB}}$). We will see an example soon in Eq.~\eqref{eq:Psi3}, and later in Eq.~\eqref{eq:Psi3-WHF}.  %

We now make a few remarks on these axioms. In principle, %
we also need an assumption that the paraparticles are point-like excitations up to exponentially decaying tails, that is, %
for any local observable $\hat{Q}_j$ at position $j\notin\{i_1,i_2,\ldots,i_n\}$, we have
\begin{equation}\label{eq:particlelocalization}
	\!\braket{G;i_1^{a_1} \ldots i_n^{a_n}|\hat{Q}_j|G;i_1^{a_1} \ldots i_n^{a_n}}=\braket{G|\hat{Q}_j|G}+O(e^{-\frac{l}{\xi}}), 
\end{equation}
where $l$ is the minimal distance between $j$ and $\{i_1,\ldots,i_n\}$. 
This localization property of particle states is crucial for the winning strategy, as the players need to confine the paraparticles within the circle areas, such that having a paraparticle in a circle does not change the value of $\braket{\hat{h}_j}$ outside. 
However, %
we expect that this  property follows from the existence of a spectral gap~\cite{Buchholz1982,hastings2006,nachtergaele2006,haegeman2013elementary}. Furthermore, it often happens in frustration-free Hamiltonians~(e.g., the frustration-free region of the 2D solvable model constructed in Ref.~\onlinecite{wang2023para}) that the exponentially decaying tail in Eq.~\eqref{eq:particlelocalization} vanishes exactly. %

In formulating the axioms above, we have implicitly chosen a basis for the internal space of the paraparticles. If we choose a new basis for the internal space, the $R$-matrix transforms according to 
\begin{equation}\label{eq:basistransformRmat}
R\to R'=(V^\dagger\otimes V^\dagger)R(V\otimes V),
\end{equation}
where $V$ is the unitary transformation between the old and the new basis. Two $R$-matrices that are related by a unitary basis transformation of the form in Eq.~\eqref{eq:basistransformRmat} are called equivalent, denoted by $R'\cong R$, and they describe the same type of parastatistics. [Indeed, it is possible to formulate the axioms in a basis-independent form.] We also emphasize that the local operators $\hat{U}_{\oA,a},\hat{U}'_{\oB,a},\hat{O}_{\oA}$, and $\hat{O}'_{\oB}$ in Axioms 5 and 6 exist for any choice of basis. For example, for any unit vector $\boldsymbol{\xi}$ with components $\{\xi_a\}_{a=1}^m$, there exists a local unitary operator $\hat{U}_{\oA,\boldsymbol{\xi}}$ localized around $\oA$ satisfying
\begin{eqnarray}\label{eq:localcreationatcorner-xi}
	\hat{U}_{\oA,\boldsymbol{\xi}}\ket{G;i_1^{a_1} \ldots i_n^{a_n}}&=&\ket{G;\oA^{\boldsymbol{\xi}} i_1^{a_1} \ldots i_n^{a_n}}\nonumber\\
	&\equiv&\sum_{a=1}^m \xi_a\ket{G;\oA^a i_1^{a_1} \ldots i_n^{a_n}},
\end{eqnarray}
and similarly for  $\hat{U}'_{\oB,\boldsymbol{\xi}}$ at the other special point $\oB$.

A particularly simple example of $R$-matrix that can pass the challenge has quantum dimension $m=4$, with nonzero elements given by
\begin{equation}\label{eqApp:seth-R}
	R^{b'a'}_{ab}=-1, \text{ if }	(b',a')=
	\left(
	\begin{array}{cccc}
		43 & 12 & 24 & 31 \\
		21 & 34 & 42 & 13 \\
		14 & 41 & 33 & 22 \\
		32 & 23 & 11 & 44 \\
	\end{array}
	\right)_{ab},%
\end{equation}
where $43$ is a shorthand for $(4,3)$ and similarly for others. This $R$-matrix will be our primary example for understanding the winning strategies, as it can pass all the challenges in this paper in a simple and perfect way. A few other examples are given in Tab.~\ref{tab:protocols_strategies} and also in App.~\ref{sec:RfromCentralType}.
\subsection{Strategy for the basic challenge}\label{sec:win_original}
Consider a 2D or 3D quantum system described by a Hamiltonian $\hat{H}$ with ground state $\ket{G}$ that satisfy the axioms of emergent parastatistics in Sec.~\ref{sec:axioms_emergent_para}. In the following we show that as long as the $R$-matrix is not of the trivial product form $R^{b'a'}_{ab}=p_{aa'}q_{bb'}$, the system can be used to win the game, and we detail the strategy below. 

\subsubsection{The original version with two special points}\label{sec:win_2pt}
\begin{figure}
	\begin{subfigure}[t]{.32\linewidth}
		\centering\includegraphics[width=.85\linewidth]{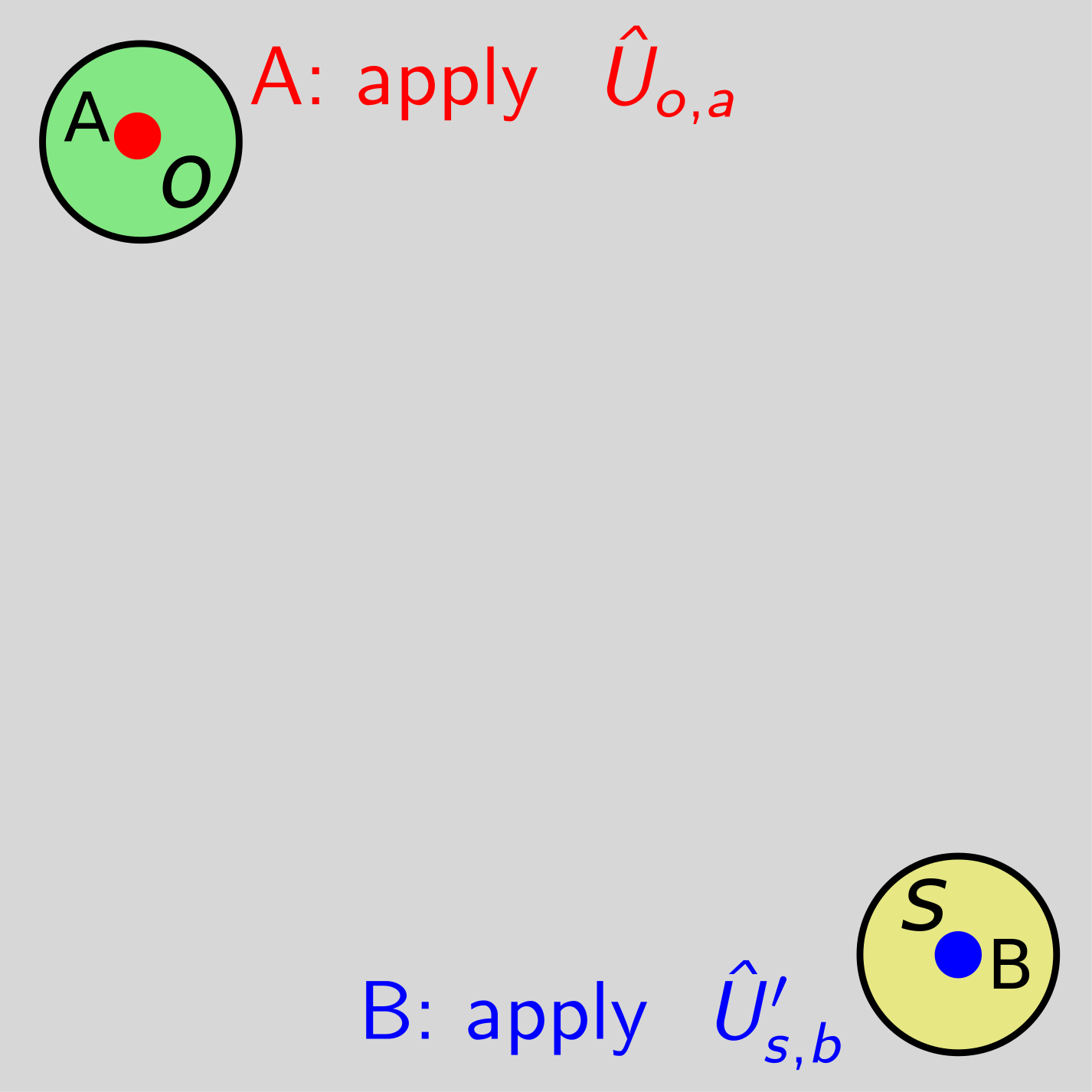} 
		\caption{\label{fig:WSstart} Start $t=0$}
	\end{subfigure}
	\begin{subfigure}[t]{.32\linewidth}
		\centering\includegraphics[width=.85\linewidth]{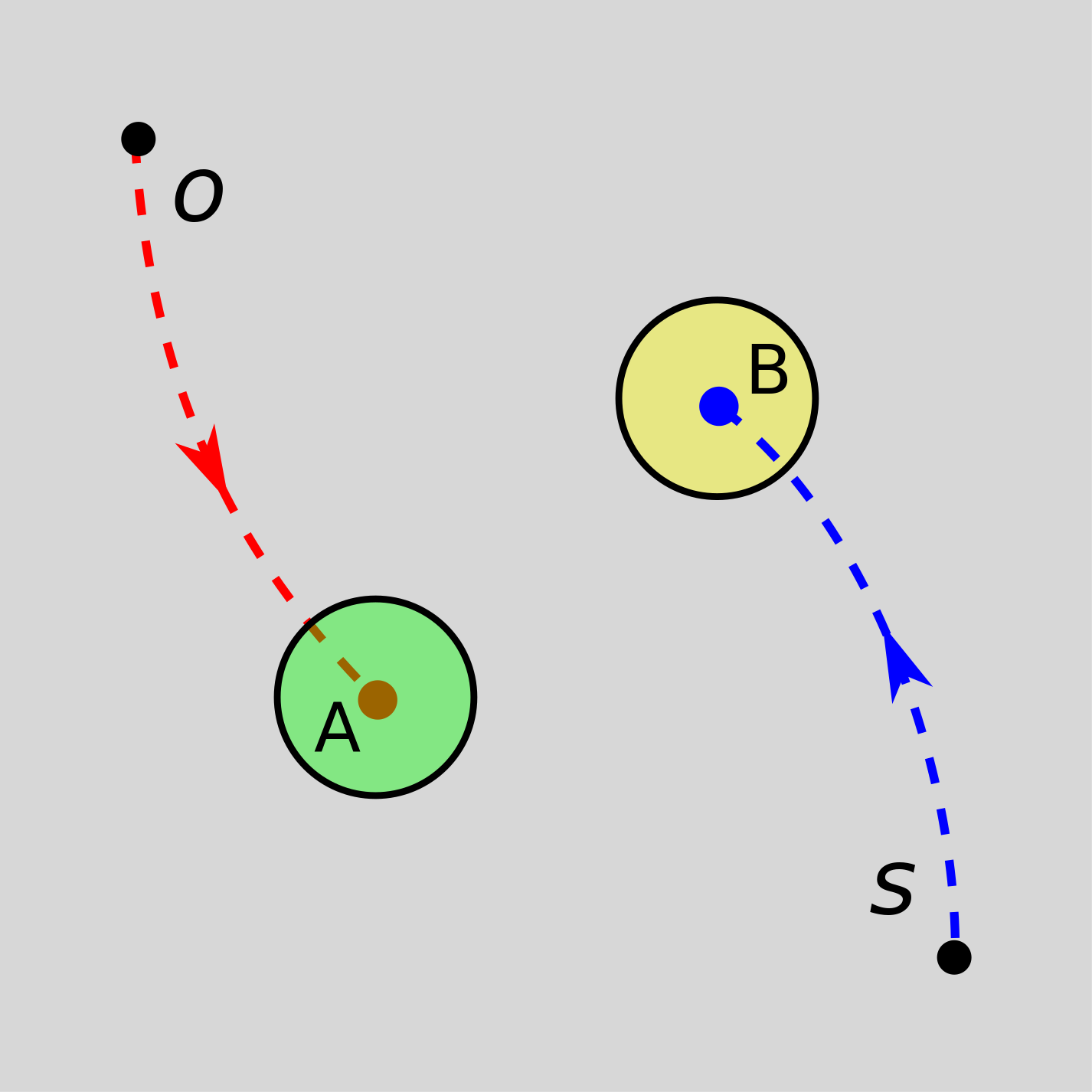} 
		\caption{\label{fig:WSduringgame} $0<t<T$}
	\end{subfigure}
	\begin{subfigure}[t]{.32\linewidth}
		\centering\includegraphics[width=.85\linewidth]{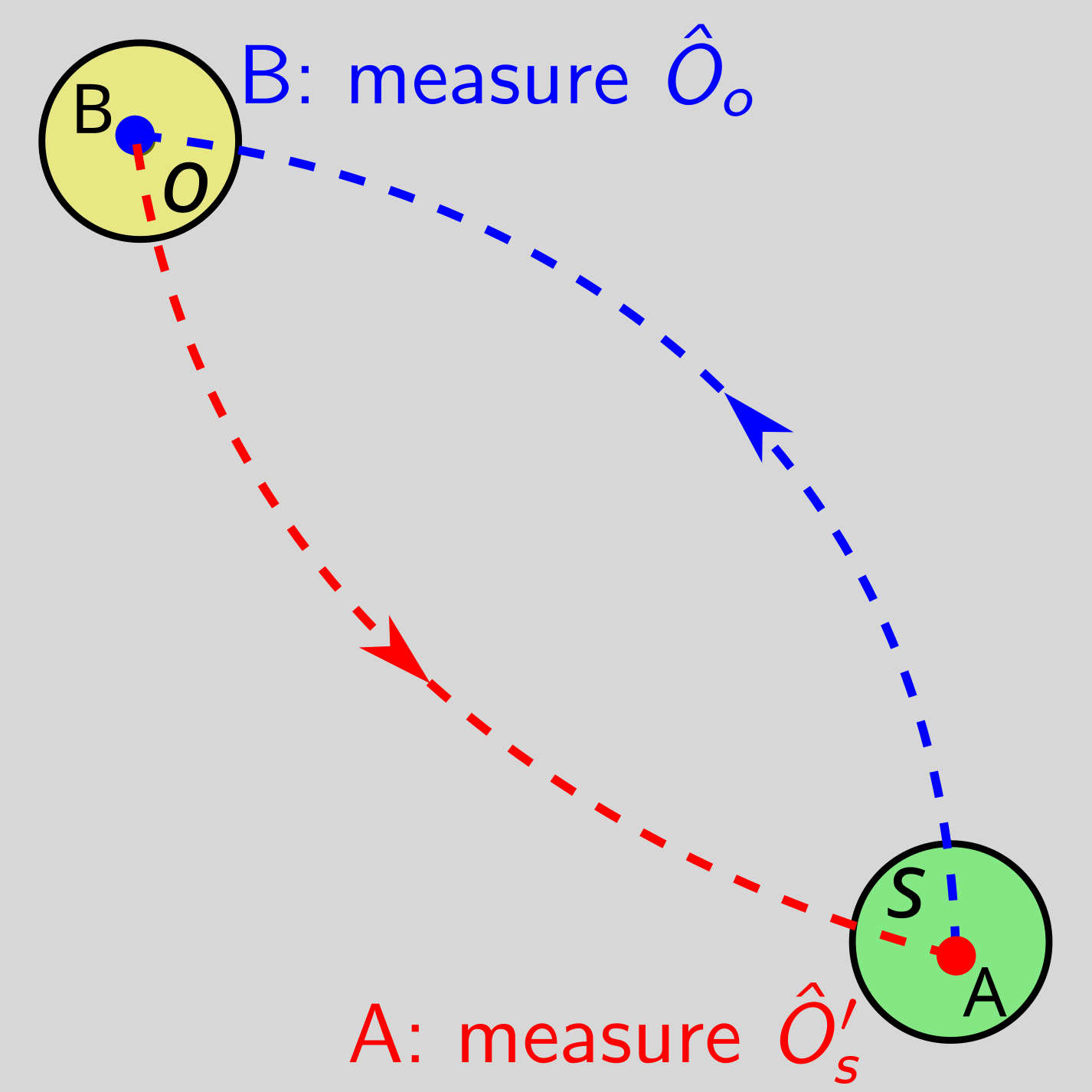} 
		\caption{\label{fig:WSgameend} End $t=T$}
	\end{subfigure}
	\caption{\label{fig:winstrategy} (Figure adapted from Ref.~\cite{wang2024parastatistics}) Sketch of the winning strategy using emergent paraparticles. $\oA$ and $\oB$ are chosen as the two special points in Axioms \ref{Axiom5} and \ref{Axiom6}, %
	where a paraparticle can be locally created and measured. Dashed curves indicate the trajectories traversed by the two circles. %
	}
\end{figure}
We begin with the original version with two special points. 
During the pregame preparation stage, the players submit the Hamiltonian $\hat{H}$, and choose $\oA,\oB$ to be the special points in  Axioms \ref{Axiom5} and \ref{Axiom6}. Then they prepare the ground state $\ket{G}$, and allow the Referees to verify. Once the game begins and the numbers $a,b$ are assigned, the players proceed as follows (see Fig.~\ref{fig:winstrategy}):\\
(a) At $t=0$, Alice applies $\hat{U}_{\oA,a}$ to create a paraparticle with internal state $a$ at $\oA$~(i.e. encodes her number $a$ into the internal state of the particle), and likewise Bob  applies $\hat{U}'_{\oB,b}$ at $\oB$. The global quantum state becomes
$\ket{\Psi(0)}=\ket{G;\oA^a \oB^b}$;\\
(b) For $0<t<T$, Alice and Bob transport their paraparticles along the indicated paths using $\hat{U}_{ij}$ in Eq.~\eqref{eq:paraparticlemove}, strictly following the motion of the circles;\\
(c) When the exchange is complete, the state evolves to
\begin{eqnarray}\label{eq:Psi3}
	\ket{\Psi(T)}=\ket{G;\oB^a \oA^b}=\sum_{a',b'}R^{b'a'}_{ab}\ket{G;\oA^{b'} \oB^{a'}},
\end{eqnarray}
where we use Eq.~\eqref{eq:exchangestatR-npt}. Then each player performs a suitable measurement on the internal state of his or her particle. For example, Alice can 
measure $\hat{O}'_{\oB}$ in Eq.~\eqref{eq:localmeasurementatcorner} to obtain $a'$, and Bob can measure $\hat{O}_{\oA}$ to obtain $b'$, collapsing the state to $\ket{G;\oA^{b'} \oB^{a'}}$. Finally, they annihilate their paraparticles with $\hat{U}^{\prime\dagger}_{\oB,a'}$ and $\hat{U}^{\dagger}_{\oA,b'}$, ensuring that no excitations remain after the game ends.

We now show that as long as $R$ is  nontrivial, the measurement result $a'$ contains information about $b$, and likewise $b'$ contains information about $a$. We begin by considering some concrete examples. First, consider the $R$-matrix in Eq.~\eqref{eqApp:seth-R}. In this case,  the outcomes $a',b'$ are definite, and moreover, knowledge of $b$ and $b'$~(along with the $R$-matrix) uniquely determines $a$, and likewise, knowing $a$ and $a'$ uniquely determines $b$~\footnote{The $R$-matrix here is a perfect tensor: grouping any two indices as inputs and the remaining two as outputs yields a unitary map—an invertible classical gate in our setting. Thus each player can recover the partner’s number by inverting this map.}. For instance, if Alice observes $(a,a')=(2,3)$, she scans the second row for the column where $a'=3$; this is the fourth column, so she infers $b=4$ and $b'=1$. Consequently, with this $R$-matrix, the players can transfer $4$ bits of information in one round, allowing them to win the game with $m_0=4$ with a 100\% success rate. 

For more general $R$-matrices, the players need to carefully decide how they encode and decode information stored in the internal states, including in which basis they apply $\hat{U}_{\oA,a},\hat{U}'_{\oB,a}$ in step (a) and measure $\hat{O}_{\oA}, \hat{O}'_{\oB}$ in step (c)~[see the remark around Eq.~\eqref{eq:localcreationatcorner-xi}].  
For example, consider the $R$-matrix $R^{b'a'}_{ab}=\delta_{aa'}\delta_{bb'}(-1)^{\delta_{ab}}$ with $m=4$. With this $R$-matrix, Bob can transfer $2$ bits of information to Alice in one round. To achieve this, in step (a) Alice always begins with the internal state $\ket{+}=(\ket{1}+\ket{2}+\ket{3}+\ket{4})/2$ by applying $\hat{U}_{\oA,\boldsymbol{\xi}}$ in Eq.~\eqref{eq:localcreationatcorner-xi}, while Bob encodes the 2 bits of information in the standard basis. After the exchange, the internal state evolution is 
\begin{equation}\label{eq:unidirectionalcontrolgate}
	\ket{+}_A\otimes \ket{b}_B\xrightarrow{R}  \ket{b}_B\otimes \ket{\varphi_b}_A,
\end{equation} 
where 
\begin{equation}
	\ket{\varphi_b}=\frac{1}{2}\sum_{b=1}^4 (-1)^{\delta_{ab}}\ket{a}.
\end{equation}
Since $\{\ket{\varphi_b}\}_{b=1}^4$ is an orthonormal basis, Alice can get the value of $b$ by measuring in this basis. 

	We now proceed to the most general case. We prove the contrapositive of our claim: if for a certain $R$-matrix, the above protocol cannot transfer any information between Alice and Bob no matter which input states they use, then $R$ must be of the trivial product form $R^{b'a'}_{ab}=p_{aa'}q_{bb'}$. First, consider any input basis, by requiring that each player's output state does not depend on his partner's input state, $R$ can only be of the form $R^{b'a'}_{ab}=p_{aa'}q_{bb'}\theta_{ab}$. Then, suppose Alice uses the input state $\ket{+}=\sum_a\ket{a}$. Requiring that Alice's output state is independent of $b$, $\theta_{ab}$ must factorize. This proves our main claim.

As long as the players can transfer a nonzero amount of information to each other using the particle $\psi$, their chance of winning is higher than pure guessing. 
In addition, they can use the \textit{multilayer trick} to arbitrarily enhance the chance of winning, by using a physical system obtained by stacking multiple layers of the same system described by $\hat{H}$. 
To be specific, consider the two layer case as an example. 
At the game preparation stage, they can submit the Hamiltonian $\hat{H}_2=\hat{H}\otimes \mathds{1}+\mathds{1}\otimes\hat{H}$, with a unique, gapped ground state $\ket{G}\otimes \ket{G}$. During the game, each player uses the particle $\Psi=\psi\boxtimes \psi$ in each circle area~(here we use $\boxtimes$ instead of $\otimes $ to distinguish from particle fusion), which simply means having a $\psi$ particle in each layer. 
Note that %
$\Psi$ also satisfies the axioms of emergent parastatistics with the $R$-matrix $R^{AB}_{CD}=R^{a_1b_1}_{c_1d_1}R^{a_2b_2}_{c_2d_2}$, where $A=(a_1,a_2)$ is a collective label for the internal states of the two $\psi$, and similarly for $B,C,D$. At the special points $\oA$ and $\oB$, the players can manipulate the internal states of $\psi$ in each layer independently, and the time evolution of different layers are also independent since there are no inter-layer interactions.
Consequently, if a strategy allows them to transfer $k$ bits of information in the single layer case, they can transfer $2k$ bits in the bilayer case %
by using the same strategy independently in each layer. 
Note that if a certain strategy only allows information transfer in a single direction~(e.g., in the example with $R^{b'a'}_{ab}=\delta_{aa'}\delta_{bb'}(-1)^{\delta_{ab}}$ given in  Eq.~\eqref{eq:unidirectionalcontrolgate} above, only Bob can transfer information to Alice), the players can use a bilayer system in which the second layer is obtained from the first layer 
by a $180\degree$ rotation that swaps $\oA$ and $\oB$~(see Fig.~\ref{fig:bilayerDS3} for an example), and in the second layer they use the same strategy but with the role of Alice and Bob swapped, which then allows bidirectional information transfer. 

The above winning strategy is robust against local noise and eavesdropping due to the topological protection of the internal space of emergent paraparticles stated in Axiom \ref{Axiom2}, Eq.~\eqref{eq:TPdegenerate}. In particular, since a paraparticle is a stable topological excitation, local noise that happen inside the circles can always be detected and corrected using local operations, similar to error correction in topological quantum codes~\cite{Dennis2002TQM}. 
The strategy can pass the identical particle test since the quasiparticles used in the two circles are identical~(locally indistinguishable), which follows from Axioms~\ref{Axiom2}-\ref{Axiom4}. 

\subsubsection{The variant with one special point}\label{sec:win_1pt}
The winning strategy for the variant of the game with one special point~(introduced in Sec.~\ref{sec:oneptversion}) is very similar. The pregame preparation is the same as before except that the players only need to choose the one special point to be the $\oA$ in Axioms \ref{Axiom5} and \ref{Axiom6}.  After the game begins, they proceed as follow:\\
(a). At $t=0$, after Bob enters the game, he applies %
$\hat{U}_{\oA,b}$ %
at $\oA$, and the state of the system becomes  $\ket{\Psi(0)}=\hat{U}_{\oA,b}\ket{G}=\ket{G;\oA^b}$;\\
(b). At $t=T/4$, Bob has moved his paraparticle to position $j$, the state of the system becomes $\ket{G;j^b}$.  %
When Alice enters the game, she applies $\hat{U}_{\oA,b}$, %
and the state of the system becomes 
\begin{equation}
	\ket{\Psi(T/4)}=\hat{U}_{\oA,a}\ket{G;j^b}=\ket{G;\oA^a j^b}.%
\end{equation}
(c). Throughout the game, %
the players use $\hat{U}_{ij}$ in Eq.~\eqref{eq:paraparticlemove} to keep the particles inside the circles;\\
(d). At $t=3T/4$, the state of the system is 
\begin{equation}\label{eq:Psi3-1pt}
	\ket{\Psi(3T/4)}=\ket{G;i^a\oA^b}=\sum_{a',b'}R^{b'a'}_{ab}\ket{G;\oA^{b'} i^{a'}},
\end{equation}
where we used Eq.~\eqref{eq:exchangestatR-npt}. 
Then Bob measures $\hat{O}_{\oA}$ to obtain $b'$, and applies $\hat{U}^{\dagger}_{\oA,b'}$ to annihilate his paraparticle.\\
(e). Finally at $t=T$, when Alice returns to $\oA$, she measures %
$\hat{O}_{\oA}$ to obtain $a'$, and applies $\hat{U}^{\dagger}_{\oA,a'}$ to annihilate her paraparticle.\\
The result of each measurement is completely the same as that obtained in the winning strategy for the two-special-point version in Sec.~\ref{sec:win_2pt}.  %
Therefore the players can use the same algorithm there to compute each other's number.

\subsection{Partial winning strategies}\label{sec:partial_win}
The purpose of this section is to demonstrate that the requirement of robustness against noise and eavesdropping and the identical particle test are necessary to separate paraparticles from relatively trivial types of emergent quasiparticle statistics in 3+1D. In Sec.~\ref{sec:emergent_fermion} we present a strategy using emergent fermions that is vulnerable to noise. In Sec.~\ref{sec:phaseSWAP} we present a strategy using trivial paraparticles with $R$-matrices of the form $R^{b'a'}_{ab}=\delta_{aa'}\delta_{bb'}\theta_{a}\theta^*_{b}$, and show that it is vulnerable to eavesdropping. In Sec.~\ref{sec:relative_para} we present a strategy in which Alice  and Bob use different type of particles with nontrivial mutual parastatistics, which fails the identical particle test. 
	\subsubsection{A fragile strategy using emergent fermions}\label{sec:emergent_fermion}
	In this section we describe a partial winning strategy using topological phases that host emergent fermions. We will see that although this strategy can theoretically allow Alice and Bob to send information to each other, it is not robust against noise, and consequently %
	in realistic experimental systems that suffer from environmental noise,  the chance of success is close to random guessing when the system size is large. %
	
	This strategy applies to any topological phase~(in either 2D or 3D) with emergent fermions~\cite{kitaev2003fault,Levin2003Fermions,kitaev2006anyons}~(along with suitable point-like defects), which can be described by a trivial special case of the axioms in Sec.~\ref{sec:axioms_emergent_para} with $m=1, R=-1$. 
	A simple model hosting emergent fermions is the 2D toric code model~\cite{kitaev2003fault}. To satisfy Axioms \ref{Axiom5} and \ref{Axiom6}, we use the hybrid open boundary condition shown in the middle figure in Fig.~\ref{fig:BoundaryDefect}, i.e. the western and southern boundaries are rough, while the eastern and northern boundaries are smooth. With this boundary condition, we can take $\oA$ and $\oB$  to be the upper left and lower right corners, respectively, where emergent fermions can be locally created and measured. To ease our discussion below, it is convenient to construct the operators $\hat{U}_\oA$ and $\hat{U}'_\oB$ in Axiom~\ref{Axiom5} to satisfy  $\hat{U}^2_\oA=\hat{U}'^2_\oB=\mathds{1}$, which can always be done for emergent fermions~(e.g., in the toric code, $\hat{U}_\oA$ and $\hat{U}'_\oB$ are both product of Pauli operators~\cite{Kitaev2012gappedboundary}).  

	To describe the winning strategy, we first describe how Bob can send 1 bit of information to Alice using this model:\\
	(a). At $t=0$, Alice applies the local unitary operator $(1+i\hat{U}_\oA)/\sqrt{2}$ to create a superposition of the ground state and a one-fermion state. Bob's operation depends on the classical bit $n\in\{0,1\}$ he wants to send: if $n=0$, he does nothing, while if $n=1$, he applies  $\hat{U}'_\oB$ to create a fermion. The state of the system becomes
	$\ket{\Psi(0)}=(1+i\hat{U}_\oA)\hat{U}'^n_\oB\ket{G}/\sqrt{2}$;\\
    (b). Throughout the game, Bob and Alice keep whatever excitations in their respective circles to follow the circle movements.  \\
	(c). When the exchange is complete, the state evolves to
	\begin{eqnarray}\label{eq:Psi3-toriccode}
		\ket{\Psi(T)}=\frac{1+i\hat{U}'_\oB (-1)^n}{\sqrt{2}} \hat{U}^n_\oA\ket{G},
	\end{eqnarray}
	where we applied Eq.~\eqref{eq:exchangestatR-npt}, producing the sign factor $(-1)^n$. %
	Alice can determine $(-1)^n$ by measuring the local observable $i\hat{U}'_\oB \hat{P}_\oB$, where $\hat{P}_\oB=1-2\hat{n}_\oB$ is the local fermion parity operator at $\oB$. 
	Finally, the players clean up their circle areas using local operations as before. 
	This allows Bob to send 1 bit of information to Alice. They can then use the multilayer trick mentioned in Sec.~\ref{sec:win_2pt} to win the game. 
	
	[For readers who find the above derivation too abstract and unfamiliar, it may be helpful to use a hand-wavy second quantization derivation, where we use $\hat{\psi}_j$~($\hat{\psi}^\dagger_j$) to denote the emergent fermion annihilation~(creation) operator at site $j$, represented as a string operator connecting $\oA$ and $j$. We have   $\hat{U}_\oA=\hat{c}_{\oA}=\hat{\psi}_\oA+\hat{\psi}^\dagger_\oA$ is the Majorana operator at $\oA$, which happens to be local at $\oA$. %
	The local operator $\hat{U}'_\oB$ satisfies $\hat{U}'_\oB\ket{G}=\hat{c}_{\oB}\ket{G}$, where $\hat{c}_{\oB}=\hat{\psi}_\oB+\hat{\psi}^\dagger_\oB$ is the Majorana operator at $\oB$~(a string operator connecting $\oA$ and $\oB$). 
	Then the quantum state at the end of step (a) is $\ket{\Psi(0)}=(1+i\hat{\psi}^\dagger_{\oA})(\hat{\psi}^\dagger_{\oB})^n\ket{G}/\sqrt{2}$. 
	In step (b), Alice and Bob move their fermions using local unitaries of the form using $e^{i\Delta t (\hat{\psi}^\dagger_{i}\hat{\psi}_{j}+\mathrm{h.c.})}$.
	After the exchange  is complete, the state evolves to
		\begin{eqnarray}\label{eq:Psi3-toriccode-2ndQ}
				\ket{\Psi(T)}&=&(1+i\hat{\psi}^\dagger_{\oB})(\hat{\psi}^\dagger_{\oA})^n\ket{G}/\sqrt{2}\nonumber\\
				&=&(\hat{\psi}^\dagger_{\oA})^n[1+i(-1)^n\hat{\psi}^\dagger_{\oB}]\ket{G}/\sqrt{2}\nonumber\\
				&=&(\hat{U}_{\oA})^n[1+i(-1)^n\hat{U}^{\prime}_{\oB}]\ket{G}/\sqrt{2},
			\end{eqnarray}
	which reproduces Eq.~\eqref{eq:Psi3-toriccode} since $\hat{U}_{\oA}$ and $\hat{U}^{\prime}_{\oB}$ commute. 
	For the relation between the second quantization formulation and the Axioms, and a microscopic realization of the unitary operators $\hat{U}_\oA, \hat{U}'_\oB$, see Sec.~S2 of Ref.~\cite{wang2024parastatistics}.]

\begin{figure}
	\includegraphics[width=.45\linewidth]{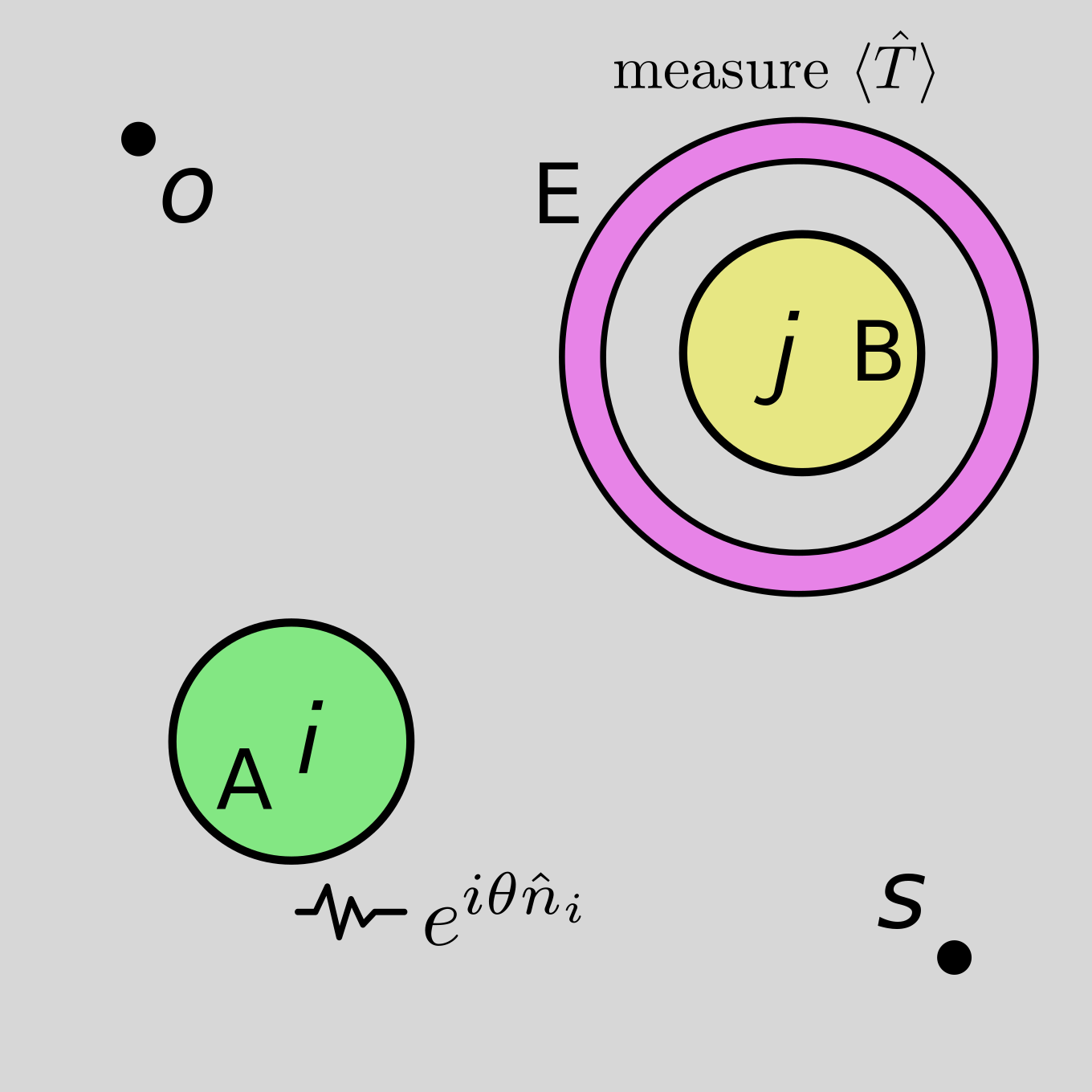} 
	\caption{\label{fig:fermionfragile} The partial winning strategy using emergent fermions is not robust against noise and eavesdropping. For example, a random phase noise $e^{i\theta \hat{n}_i}$ occurring in Alice's circle can change her measurement result. A eavesdropper can steal the information that Bob want to send simply by measuring the local fermion parity within Bob's circle.
	This can be done either by directly measuring in Bob's circle, or by measuring a suitable loop operator in the pink annulus region. %
	}
\end{figure}	
	
	This strategy is not robust against noise. 
	Suppose that at any point during the exchange process, the system suffers a phase noise of the form $e^{i\theta \hat{n}_i}$, where $\hat{n}_i$ %
	is the local fermion number operator at Alice's position, and $\theta$ is a random real number, then %
    after the exchange the sign factor $(-1)^n$ in Eq.~\eqref{eq:Psi3-toriccode} would get replaced by $e^{i\theta} (-1)^n$, which destroys the information. 
    As the system size get larger, the particles get exposed to noise for a longer period of time, and %
    the information stored in the relative phase factor cannot survive in the thermodynamic limit. 
    By contrast, for the paraparticle strategy in Sec.~\ref{sec:win_original}, the information stored in the topologically protected internal space of paraparticles is only vulnerable to local noise when the particles are close to either $\oA$ or $\oB$ due to Eq.~\eqref{eq:TPdegenerate} in Axiom~\ref{Axiom2}, therefore a large amount of information can survive in the thermodynamic limit, assuming the error rate is not too high. 

	This strategy is also vulnerable to eavesdropping: at any point during the game, an eavesdropper Eve can steal the information $n$ simply by measuring the local fermion parity within Bob's circle, as illustrated in Fig.~\ref{fig:fermionfragile}. Indeed, Eve can even steal the information remotely by measuring a suitable loop operator~(e.g., a Pauli string in the toric code) in the pink annulus region
	in Fig.~\ref{fig:fermionfragile}.

\subsubsection{A strategy using paraparticles with $R^{b'a'}_{ab}=\pm\delta_{aa'}\delta_{bb'}\theta_{a}\theta^*_{b}$}\label{sec:phaseSWAP}
	In this section we consider  paraparticles described by the simplest type of $R$-matrices of the form  
	\begin{equation}\label{eq:Rtrivial}
		R^{b'a'}_{ab}=\pm\delta_{aa'}\delta_{bb'}\theta_{a}\theta^*_{b},
	\end{equation}
	where $\theta_a$ is not a constant for $a\in\{1,2,\ldots,m\}$. We will see that in this case we have a partial winning strategy that is robust against noise but vulnerable to eavesdropping. 
	
	For simplicity we consider the case $m=2$ and $\theta_{a}=(-1)^a$ for $a=1,2$; more general cases can be treated in an identical way. We have $R=\pm X(\sigma^z\otimes \sigma^z)$, where $X$ is the swap gate, and $\sigma^z$ is the Pauli matrix.  %
	For convenience, we perform a basis change for the internal space of paraparticles after which $R$ becomes $R=\pm X(\sigma^x\otimes \sigma^x)$. Below we describe a strategy for Bob to send 1 bit of information to Alice.
	
	At $t=0$, Alice applies $\hat{U}_{\oA,1}$ to create a paraparticle with internal state $1$, while Bob applies $(\hat{U}'_{\oB,1})^n$, where $n=0,1$ is the 1 bit of information he wants to send. At $t=T$, Alice measures $\hat{O}'_{\oB}$ and then annihilate  her paraparticle, while Bob simply annihilate his paraparticle. According to the $R$-matrix, Alice will obtain $1$ if $n=0$ and will obtain $2$ if $n=1$.  Therefore Alice knows the value of $n$ from the result of her measurement, and by using the multilayer trick as before, the players can send more information and in both directions. 
	
	Compared to the strategy using emergent fermions in Sec.~\ref{sec:emergent_fermion}, this one does not involve superposition of different particle types, and is therefore robust against noise. However, it is still vulnerable to the same type of eavesdropping as illustrated in Fig.~\ref{fig:fermionfragile}, so this one is still only a partial winning strategy.  

\subsubsection{A strategy using mutual parastatistics}\label{sec:relative_para}
In this section we describe a partial winning strategy using mutual parastatistics. Example of a topological phase with nontrivial mutual parastatistics is given in App.~\ref{app:G64}. In this system, there are different types of topological quasiparticles, described by a SFC $\calC$~(see Sec.~\ref{sec:categorical_framework}). In particular, there are two types of $R$-paraparticles $\psi_\pm$,  both have quantum dimension $m=4$, and their $R$-matrices are given in   %
Eqs.~(\ref{eqApp:seth-R},\ref{eq:seth-R-G64-p},\ref{eqApp:seth-R-G64mutual}). 
In such a system, Axioms~\ref{Axiom1}-\ref{Axiom6} in Sec.~\ref{sec:axioms_emergent_para} generalize in a straightforward way, the main new ingredient is that Eq.~\eqref{eq:exchangestatR-npt} gets modified to
\begin{equation}\label{eq:exchangestatR-mutual}
	\ket{G;\ldots i_k^{a}i_{k+1}^{b}\ldots }=\sum_{a',b'}[R^{(\psi\varphi)}]^{b'a'}_{ab}\ket{G;\ldots i_{k+1}^{b'}i_k^{a'} \ldots},
\end{equation}
if the particle at $i_k$ has type $\psi$ and the particle at $i_{k+1}$ has type $\varphi$, for any $\psi,\varphi\in \calC$. Here $R^{(\psi\varphi)}$ is the mutual $R$-matrix between $\psi$ and $\varphi$, and in the special case $\psi=\varphi$, the mutual $R$-matrix  $R^{(\psi\psi)}$ gives back the self $R$-matrix of $\psi$. 
In this case,  if Alice uses $\psi_+$ and Bob uses $\psi_-$ in the winning strategy in Sec.~\ref{sec:win_2pt}, they can also win the game with $m_0=4$ by exploiting the nontrivial mutual $R$-matrix. 
This strategy is robust against noise and eavesdropping,  but it fails the identical particle test in Sec.~\ref{sec:IPT}. Furthermore, the who-entered first challenge can also be won by exploiting this type of mutual parastatistics, if Alice and Carol use $\psi_+$ while Bob and David use $\psi_-$. Therefore, in this example the mutual parastatistics between $\psi_+$ and $\psi_-$ is as capable as the self parastatistics of either of them except that using mutual parastatistics fails the identical particle test. In principle, there can also exist topological phases where the mutual parastatistics between two particles is more capable than the self parastatistics of any particle, but we do not discuss this kind of examples in this paper.

\section{The who-entered-first challenge}\label{sec:WHF}
The who-entered-first challenge is designed to separate %
$R$-matrices of the swap-type $R^{b'a'}_{ab}=\delta_{aa'}\delta_{bb'}\theta_{ab}$ from %
more non-trivial $R$-matrices such as that defined in %
Eq.~\eqref{eqApp:seth-R}, i.e.,  %
to separate levels 4 and 5 of the hierarchy in Tab.~\ref{tab:protocols_strategies}. Paraparticles that can pass this challenge will be called ``full-fledged paraparticles''.

In the following we first present the game protocol in Sec.~\ref{sec:WEFprotocol}, and then in Sec.~\ref{sec:winWEF} we give a winning strategy using emergent paraparticles defined by the $R$-matrix in Eq.~\eqref{eqApp:seth-R}, and finally in Sec.~\ref{sec:swapRfailWEF} we explain why 
level 4 paraparticles with %
$R^{b'a'}_{ab}=\delta_{aa'}\delta_{bb'}\theta_{ab}$ cannot win this challenge. 
\subsection{Game protocol}\label{sec:WEFprotocol}
\begin{figure}
	\begin{subfigure}[t]{.32\linewidth}
		\centering\includegraphics[width=.95\linewidth]{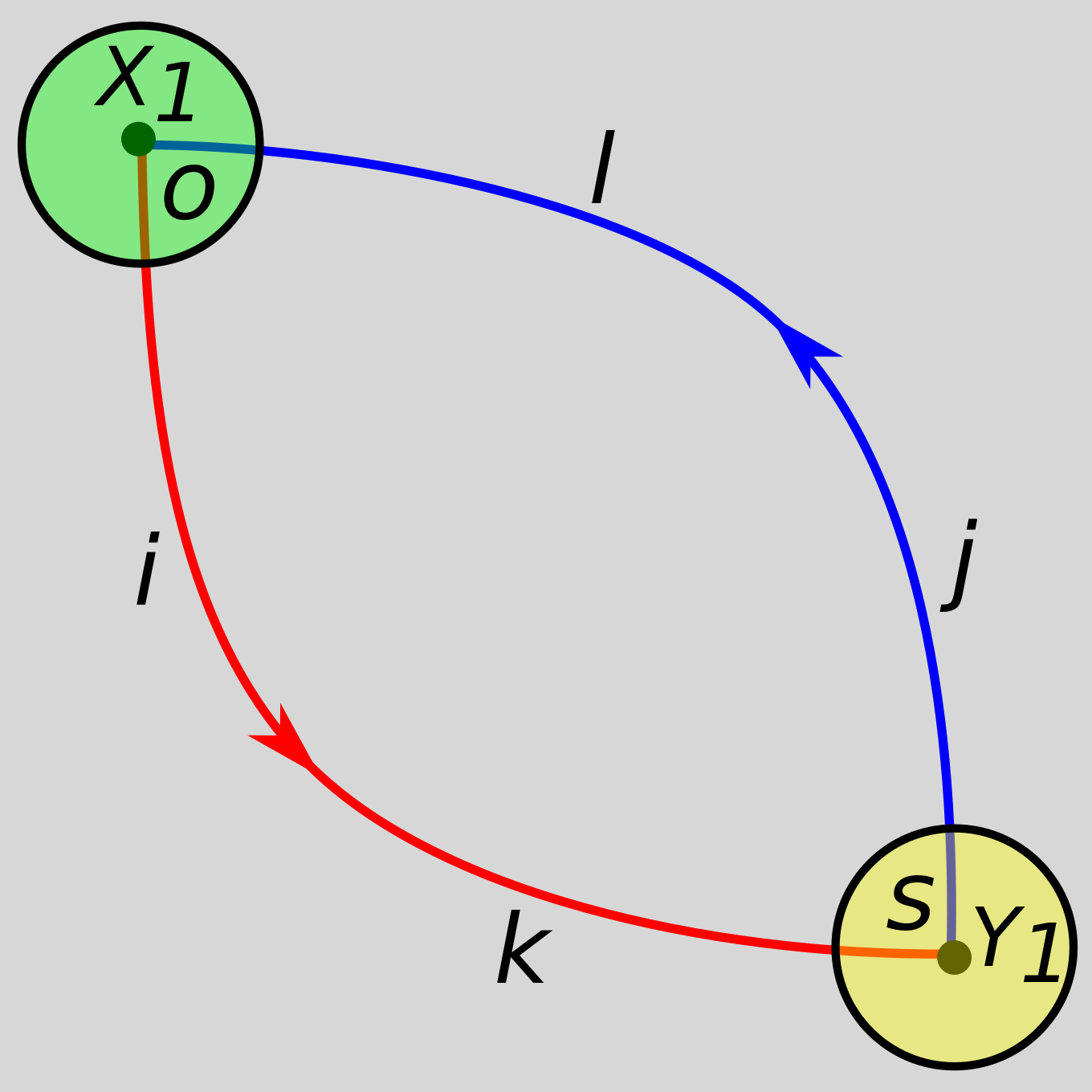} 
		\caption{\label{fig:WHF-0} Start $t=t_0$}
	\end{subfigure}
	\begin{subfigure}[t]{.32\linewidth}
		\centering\includegraphics[width=.95\linewidth]{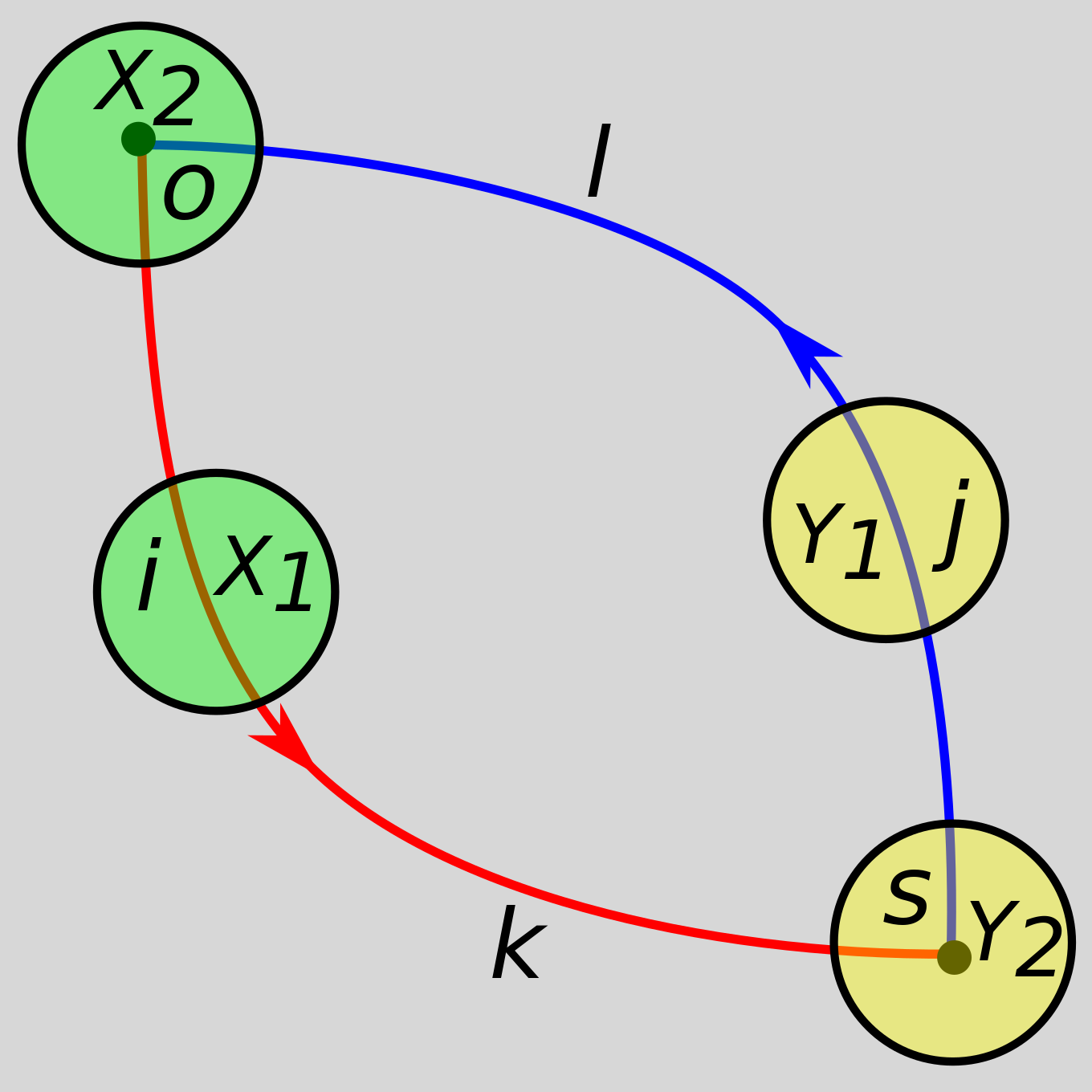} 
		\caption{\label{fig:WHF-1} $t=t_0+T/4$}
	\end{subfigure}
	\begin{subfigure}[t]{.32\linewidth}
		\centering\includegraphics[width=.95\linewidth]{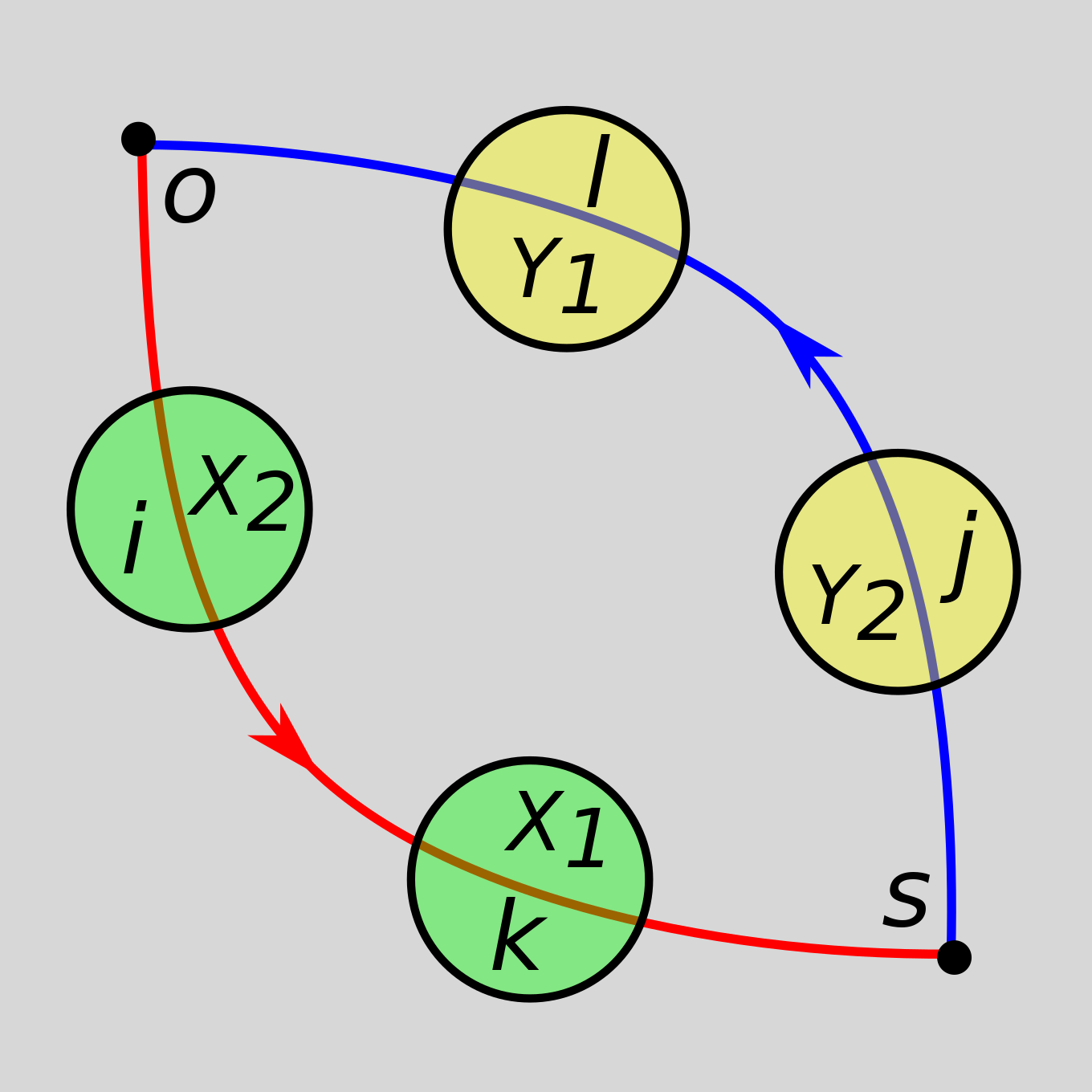} 
		\caption{\label{fig:WHF-2} $t=t_0+T/2$}
	\end{subfigure}
	\begin{subfigure}[t]{.32\linewidth}
		\centering\includegraphics[width=.95\linewidth]{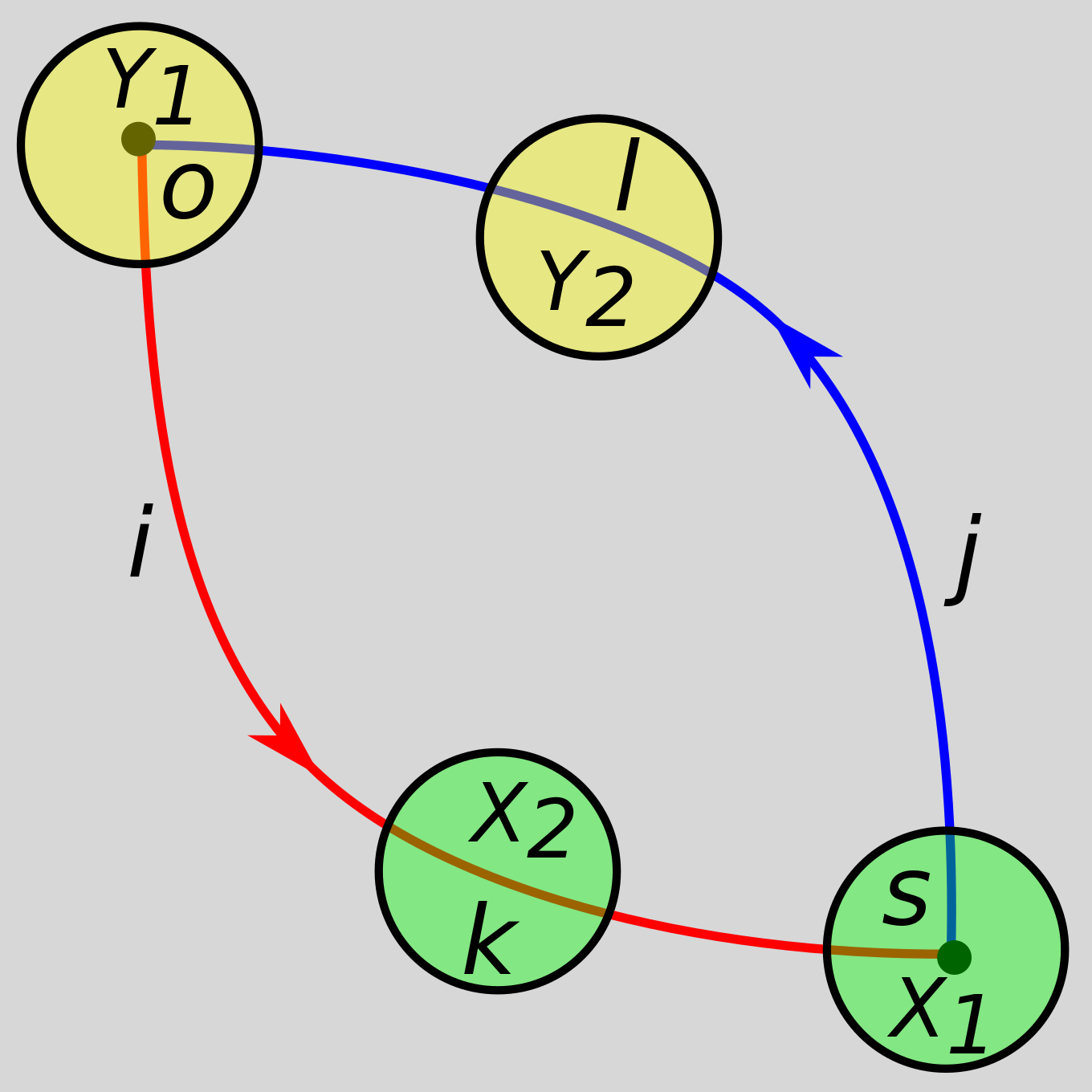} 
		\caption{\label{fig:WHF-3} $t=t_0+3T/4$}
	\end{subfigure}
	\begin{subfigure}[t]{.32\linewidth}
		\centering\includegraphics[width=.95\linewidth]{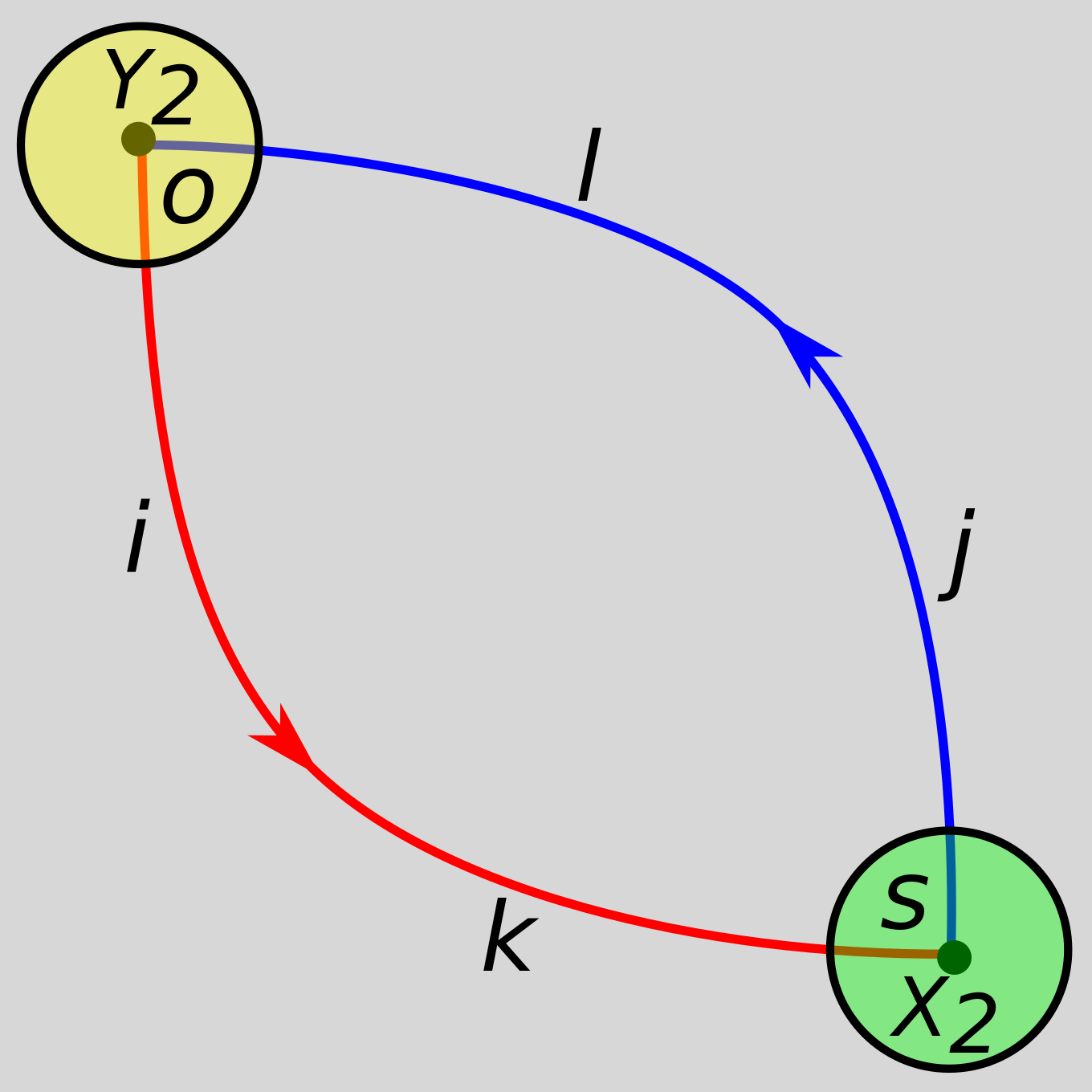} 
		\caption{\label{fig:WHF-4} End $t=t_0+T$}
	\end{subfigure}
	\caption{\label{fig:WHF} 
		Illustrating the game process of the who-entered-first challenge. Here $i,j,k,l$ are four fixed points in the system, introduced for illustration purpose. 
	}
\end{figure}
This challenge is designed as an additional challenge to be performed using the same physical system that the players use to succeed the original challenge. %
It involves four players, Alice, Bob, Carol, and David, and the same group of Referees as before. Before the game starts, the Referees randomly choose $(X_1,Y_1)$ from the set $\{(\text{A},\text{B}),(\text{C},\text{B}),(\text{A},\text{D}),(\text{C},\text{D})\}$. The players do not know the Referees' choice, and at the end of the game, the Referees ask Alice and Carol the value of $Y_1$, and ask Bob and David the value of  $X_1$, and they win if every player answers correctly. In the game Alice and Carol will start from $\oA$, and the value of $X_1$ determine whether Alice~(A) or Carol~(C) will enter the game first, and similarly Bob and David start from $\oB$, and $Y_1$ determines who enter first, hence the name of this challenge.  
To ease our following description we also define $X_2$ and $Y_2$ such that $\{X_1,X_2\}=\{\text{A},\text{C}\}$ and $\{Y_1,Y_2\}=\{\text{B},\text{D}\}$. 

Before the game starts, the four players are allowed to discuss a winning strategy all together. 
After the game starts at $t=0$, the general rules introduced in Sec.~\ref{sec:general_rules} still apply. 
The game process is shown in Fig.~\ref{fig:WHF}:\\
(a) at $t=t_0>0$, %
circle $X_1$ appears at $\oA$, and circle $Y_1$ appears at $\oB$, and both circles start moving along their designated paths, as shown in Fig.~\ref{fig:WHF-0};\\
(b) at $t=t_0+T/4$, after $X_1$ and $Y_1$ have moved a distance away, circles $X_2$ and $Y_2$ appear, as shown in Fig.~\ref{fig:WHF-1}; \\
(d) at $t=t_0+3T/4$, $X_1$ and $Y_1$ arrive at $\oB$ and $\oA$, respectively, and disappear shortly after; \\
(e) finally at $t=t_0+T$, $X_2$ and $Y_2$ arrive at $\oB$ and $\oA$, respectively, and disappear shortly after;\\
Importantly, the initial time $t_0$ and the total duration $T$ are randomly chosen by the Referees, so that each player cannot determine 
if he or she enters the game in the first or the second round by knowing the time of entrance.

\subsection{Winning strategy}\label{sec:winWEF}
We now give a winning strategy using emergent paraparticles defined by the $R$-matrix in Eq.~\eqref{eqApp:seth-R}.
During the pregame discussion, %
the four players agree that regardless of the Referee's choice of $(X_1,Y_1)$~(which of course they do not know), each player creates a paraparticle with a predetermined internal state at the time of entrance, say Alice, Bob, Carol, and David each creates a paraparticle with internal state $a$, $b$, $c$, and $d$, respectively. There are many possible choices for the values of $a,b,c,d$ that can win the game, and one possible choice is  $(a,b,c,d)=(1,2,3,4)$. Below, we analyze the quantum state of the physical system at each stage of the game. We focus on the case in which $(X_1,Y_1)=(\text{A},\text{B})$, the other three possibilities can be analyzed in an identical way. \\
(a). At $t=t_0$: after Alice applies $\hat{U}_{\oA,a}$ and Bob applies $\hat{U}'_{\oB,b}$, we have %
$\ket{\Psi(t_0)}=\ket{G;\oA^a \oB^b}$;\\
(b). At $t=t_0+T/4$:  after Carol applies $\hat{U}_{\oA,c}$ and David applies $\hat{U}'_{\oB,d}$, we have %
\begin{eqnarray}\label{eq:WHFwin-b}
	\ket{\Psi(t_0+T/4)}&=&\hat{U}_{\oA,c}\hat{U}^{\prime}_{\oB,d}\ket{G;i^a j^b}\nonumber\\
	&=&\ket{G;\oA^c i^a j^b \oB^d},
\end{eqnarray}
where $i$ and $j$ are the positions of $X_1$ and $Y_1$ at this time, respectively, as shown in Fig.~\ref{fig:WHF-1}, and in the second line  we use Eq.~\eqref{eq:localcreationatcorner}.\\
(d). At $t=t_0+3T/4$, when Alice and Bob arrive at $\oB$ and $\oA$, respectively, the state of the system evolves to
\begin{eqnarray}\label{eq:Psi3-WHF}
	\ket{\Psi(t_0+3T/4)}&=&\ket{G;k^c \oB^a \oA^b l^d}\\
	&=&\sum_{a',b',c',d'}
	\begin{tikzpicture}[baseline={([yshift=-.8ex]current bounding box.center)}, scale=0.5]
		\Rmatrix{0}{2*\AL}{R}
		\Rmatrix{0}{-2*\AL}{R}
		\Rmatrix{2*\AL}{0}{R}
		\Rmatrix{-2*\AL}{0}{R}
		\node  at (-\AL,-3*\AL) [below=-.04]{\footnotesize $a$};
		\node  at (\AL,-3*\AL) [below=-.04]{\footnotesize $b$};
		\node  at (-3*\AL,-\AL) [below=-.04]{\footnotesize $c$};
		\node  at (3*\AL,-\AL) [below=-.04]{\footnotesize $d$};
		\node  at (\AL,3*\AL) [above=-.04]{\footnotesize $c'$};
		\node  at (-\AL,3*\AL) [above=-.04]{\footnotesize $d'$};
		\node  at (3*\AL,\AL) [above=-.04]{\footnotesize $a'$};
		\node  at (-3*\AL,\AL) [above=-.04]{\footnotesize $b'$};
	\end{tikzpicture}\ket{G;\oA^{b'} l^{d'} k^{c'} \oB^{a'}},\nonumber
\end{eqnarray}
where $k$ and $l$ are the positions of $X_2$ and $Y_2$ at this time, respectively, as shown in Fig.~\ref{fig:WHF-3}, and  in the second line we use Eq.~\eqref{eq:exchangestatR-npt} four times. %
For the $R$-matrix in Eq.~\eqref{eqApp:seth-R}, we can omit the summation in the RHS of Eq.~\eqref{eq:Psi3-WHF}, since the tensor network of the $R$-matrices is only nonzero at one possible value of $(a',b',c',d')$.
Then Bob measures $\hat{O}_{\oA}$ and obtains $b'$, and similarly Alice measures $\hat{O}'_{\oB}$ and obtains $a'$, and the quantum state of the system is still the same as in Eq.~\eqref{eq:Psi3-WHF}. %
After this, Bob and Alice use $\hat{U}_{\oA,b'}$ and $\hat{U}'_{\oB,a'}$, respectively, to annihilate their paraparticles. \\
(e) Finally, at $t=t_0+T$, when Carol and David arrive at $\oB$ and $\oA$, respectively, David measures $\hat{O}_{\oA}$ and obtains $d'$, and similarly Carol measures $\hat{O}'_{\oB}$ and obtains $c'$. Then they use $\hat{U}_{\oA,d'}$ and $\hat{U}'_{\oB,c'}$, respectively, to annihilate their paraparticles.

This completes the analysis for the case $(X_1,Y_1)=(\text{A},\text{B})$. The other three cases can be analyzed in the same way. For example, the analysis for the case $(X_1,Y_1)=(\text{C},\text{B})$ can be obtained by simply applying the substitution $(\text{A},a,a')\leftrightarrow (\text{C},c,c')$ to the above analysis. The measurement results $(a',b',c',d')$ for all the four possible cases are summarized in Tab.~\ref{tab:WHFsolution}. %
Importantly,  Alice can determine %
$Y_1$ from her measurement result $a'$, and similarly for all other players. For example, if she gets $a'=3$ or $4$, then $Y_1=\text{B}$, otherwise $Y_1=\text{D}$. This strategy allows them to win the challenge with 100\% chance of success.

\begin{table}[t]
	\centering
	\begin{tabular}{c|cccc}
		$X_1 Y_1$	  & $a'$ & $b'$ & $c'$ & $d'$ \\%
		\hline
		\text{AB} & 3 & 1 & 2 & 3 \\
		\text{CB} & 4 & 3 & 1 & 2 \\
		\text{AD} & 2 & 1 & 4 & 3 \\
		\text{CD} & 2 & 4 & 4 & 1 \\
	\end{tabular}
	\caption{\label{tab:WHFsolution} A strategy to win the who-entered-first challenge using the paraparticles defined by the set-theoretical $R$-matrix in Eq.~\eqref{eqApp:seth-R}. The initial states are chosen as $\{a,b,c,d\}=\{1,2,3,4\}$. AB means $(X_1,Y_1)=(\text{A},\text{B})$, i.e., Alice and Bob enter first, and similarly for other combinations. $a',b',c'$, and $d'$ are the measurement results of A, B, C, and D, respectively. }
\end{table}

\subsection{Why swap-type $R$-matrices cannot win}\label{sec:swapRfailWEF}
We now explain why swap-type $R$-matrices of the form $R^{b'a'}_{ab}=\delta_{aa'}\delta_{bb'}\theta_{ab}$ %
cannot win the who-entered-first challenge. 
Let $\rho_A$ be the internal state of the paraparticle that Alice creates at $\oA$, which is allowed to be a mixed state in general, and $\rho_B,~\rho_C$, and $\rho_D$ are defined similarly for Bob, Carol, and David. Let $\rho'_A$ be the reduced density matrix describing the output internal state that Alice measures before her circle disappears at $\oB$, and similarly for $\rho'_B,~\rho'_C$, and $\rho'_D$. From the RHS of Eq.~\eqref{eq:Psi3-WHF} we can express $\rho'_A$ as~[for the case $(X_1,Y_1)=(\text{A},\text{B})$]
\begin{equation}\label{eq:rhopA-WEF}
	\rho'_A=\adjincludegraphics[height=12ex,valign=c]{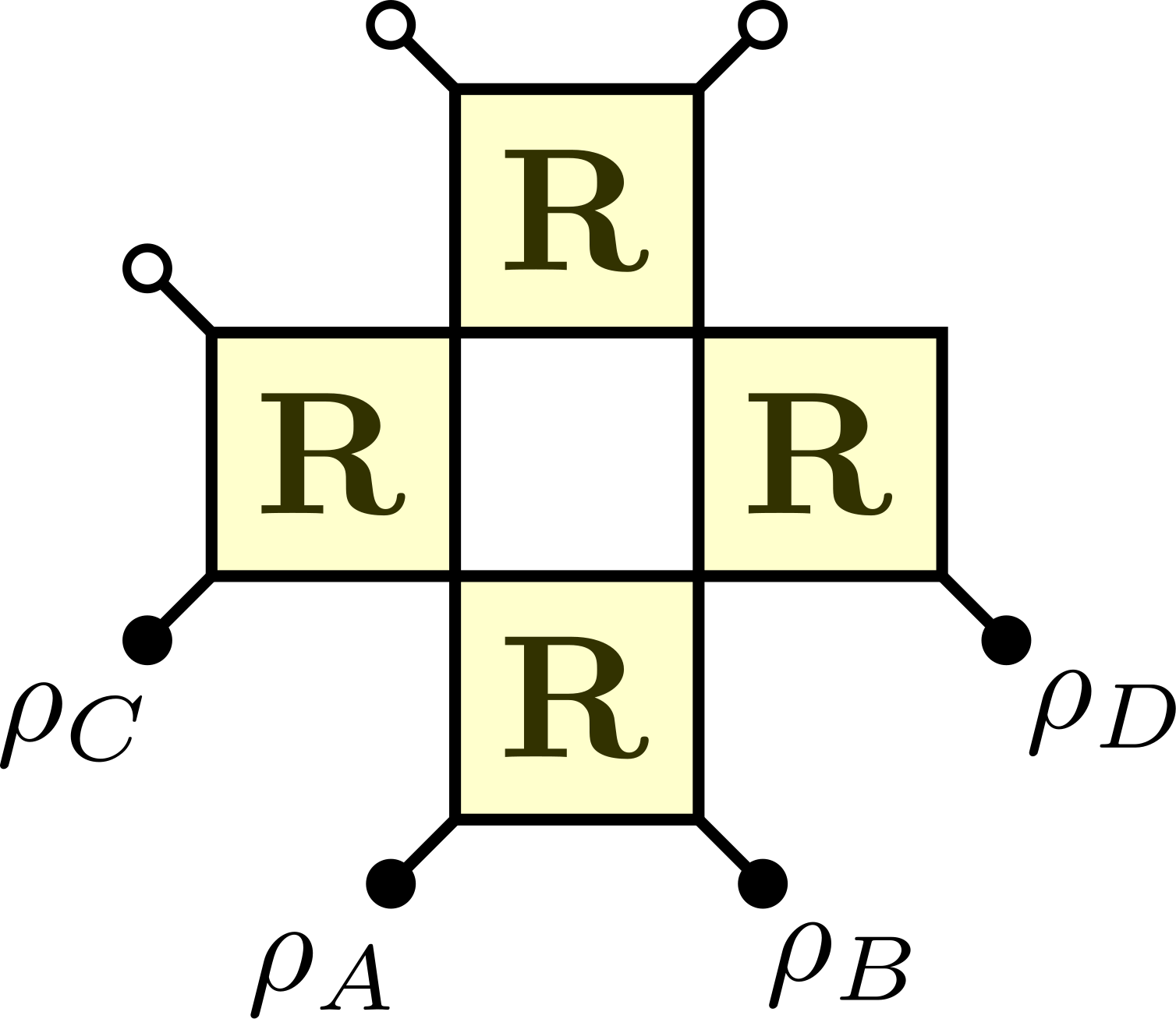},
\end{equation}
where we use the folded picture 
$\adjincludegraphics[height=4ex,valign=c]{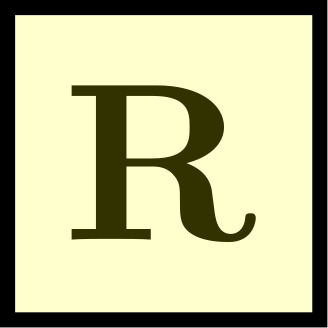}=
\adjincludegraphics[height=5ex,valign=c]{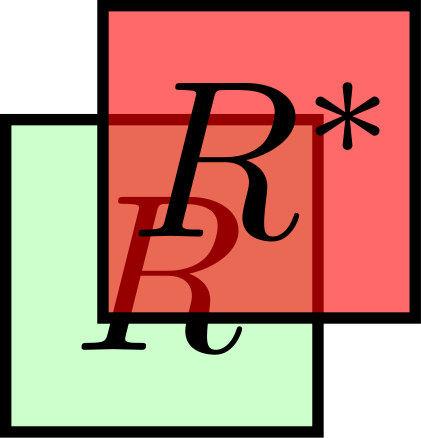}$ to describe the evolution of density matrices in the Heisenberg picture, and we use a dot $\adjincludegraphics[height=2ex,valign=c]{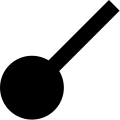}=
\adjincludegraphics[height=3ex,valign=c]{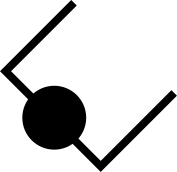}$ to represent a density matrix, and an open circle 
$\adjincludegraphics[height=2ex,valign=c]{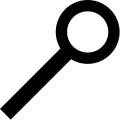}=
\adjincludegraphics[height=3ex,valign=c]{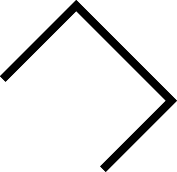}$ represents the identity matrix $\mathds{1}$~(the partial trace is obtained by contracting with $\mathds{1}$). %
Here comes the key observation: a swap-type $R$-matrix satisfies
\begin{eqnarray}\label{eq:diagonalcommute}
\adjincludegraphics[height=8ex,valign=c]{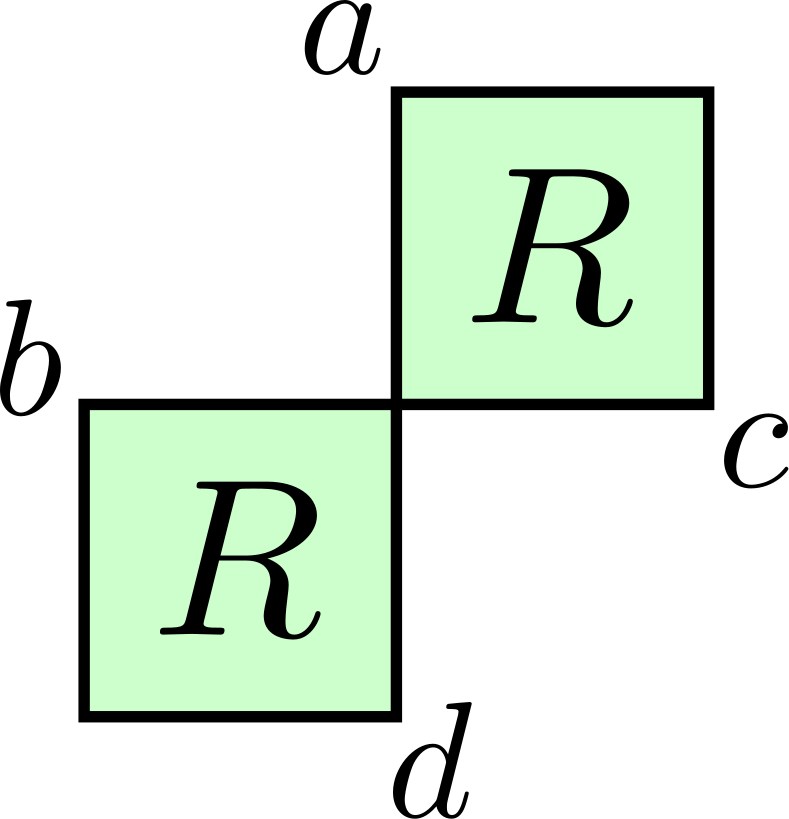}=
\adjincludegraphics[height=8ex,valign=c]{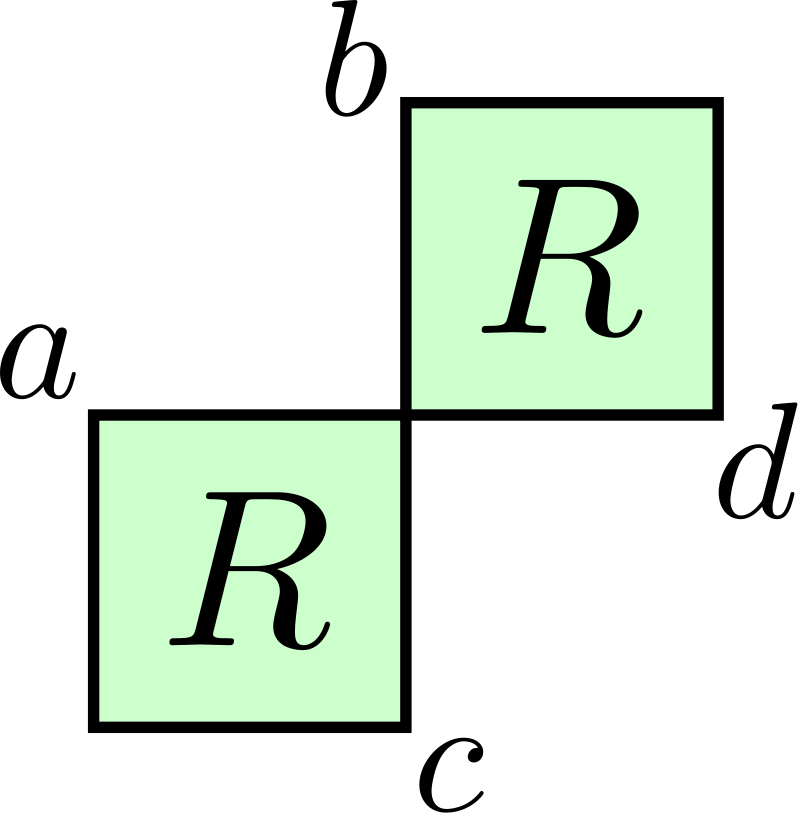}~,\quad
	\adjincludegraphics[height=8ex,valign=c]{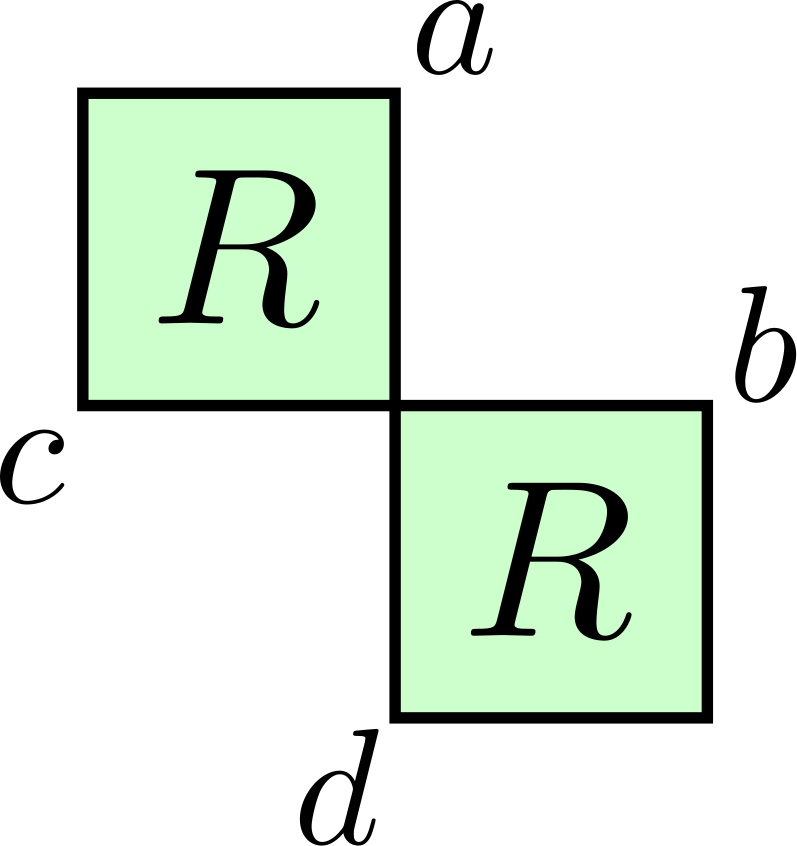}=
	\adjincludegraphics[height=8ex,valign=c]{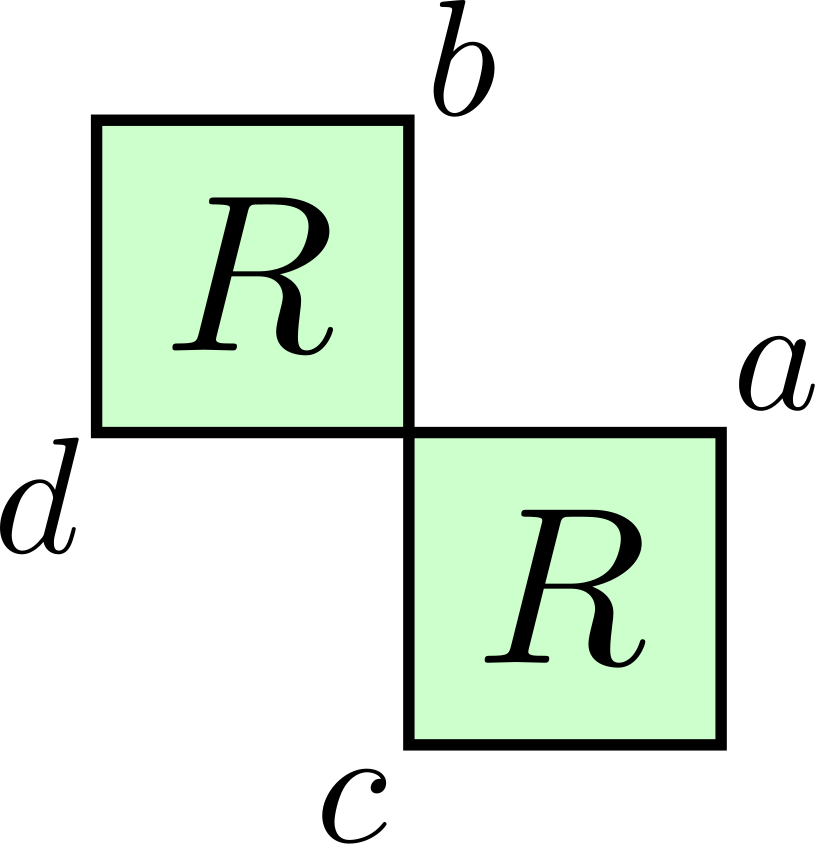}~,
\end{eqnarray}
leading to
\begin{eqnarray}
	&&\adjincludegraphics[height=12ex,valign=c]{Figures/antiswap-2.png}=
	\adjincludegraphics[height=12ex,valign=c]{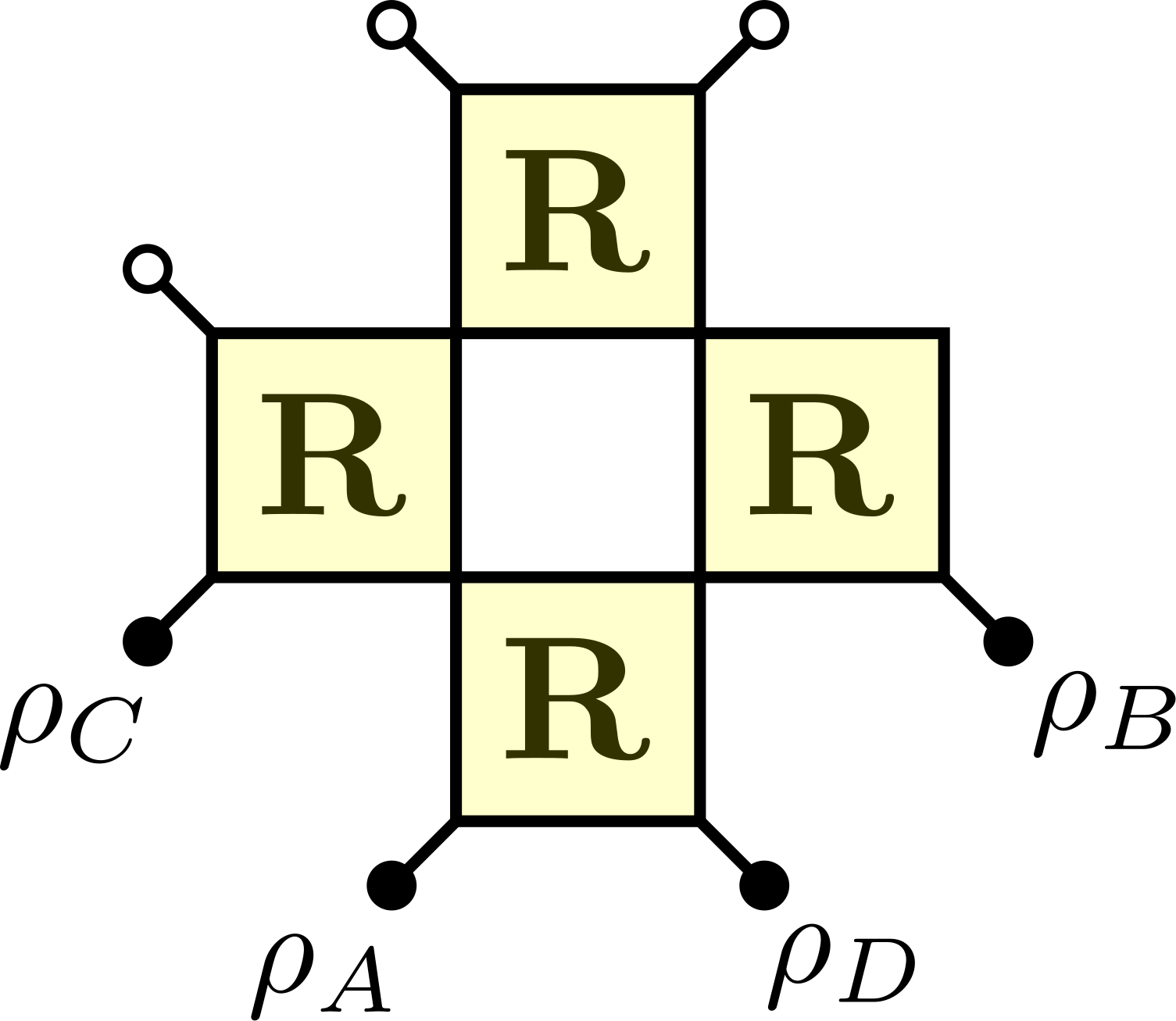}\nonumber\\
	&=&\adjincludegraphics[height=12ex,valign=c]{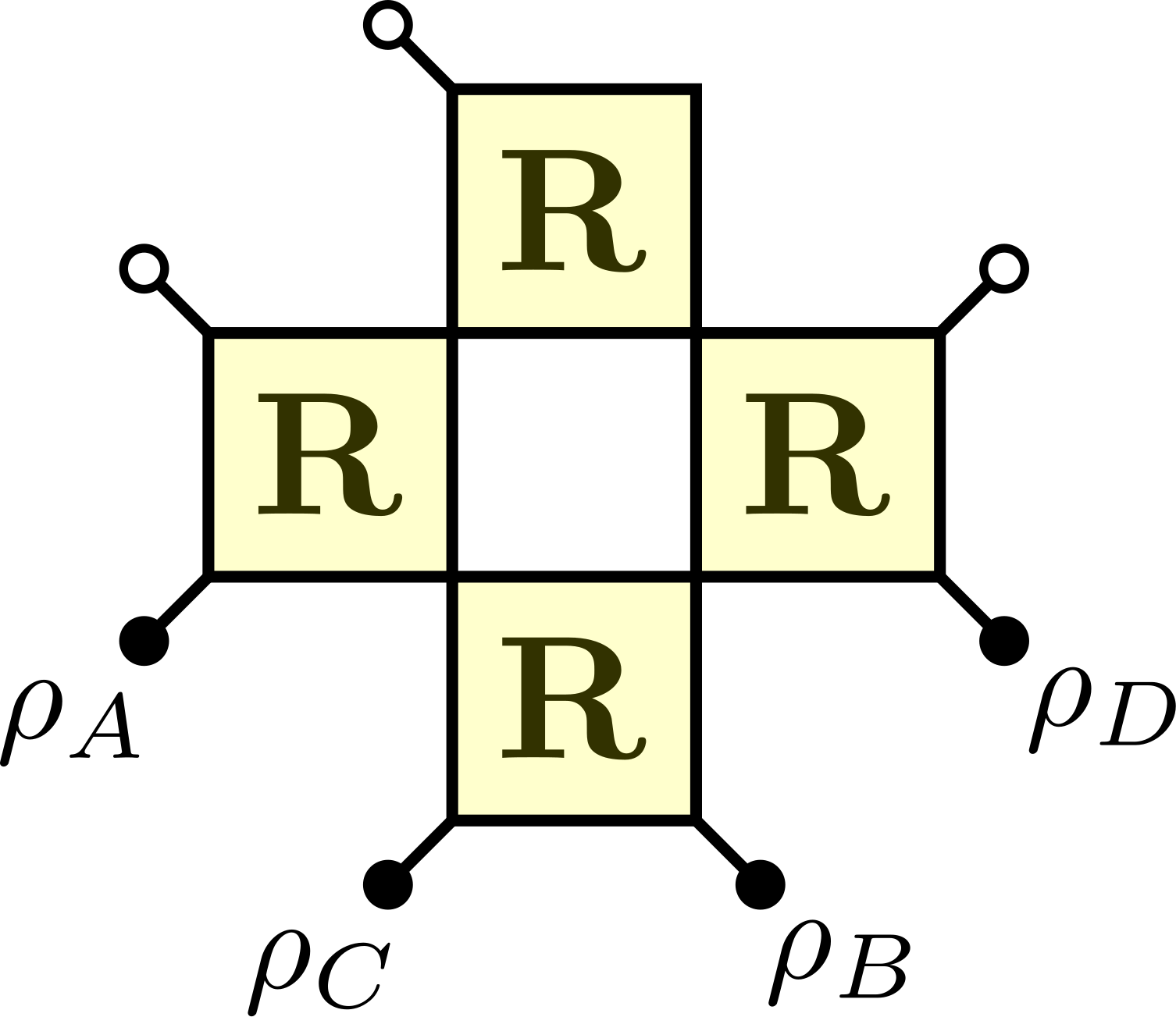}=
	\adjincludegraphics[height=12ex,valign=c]{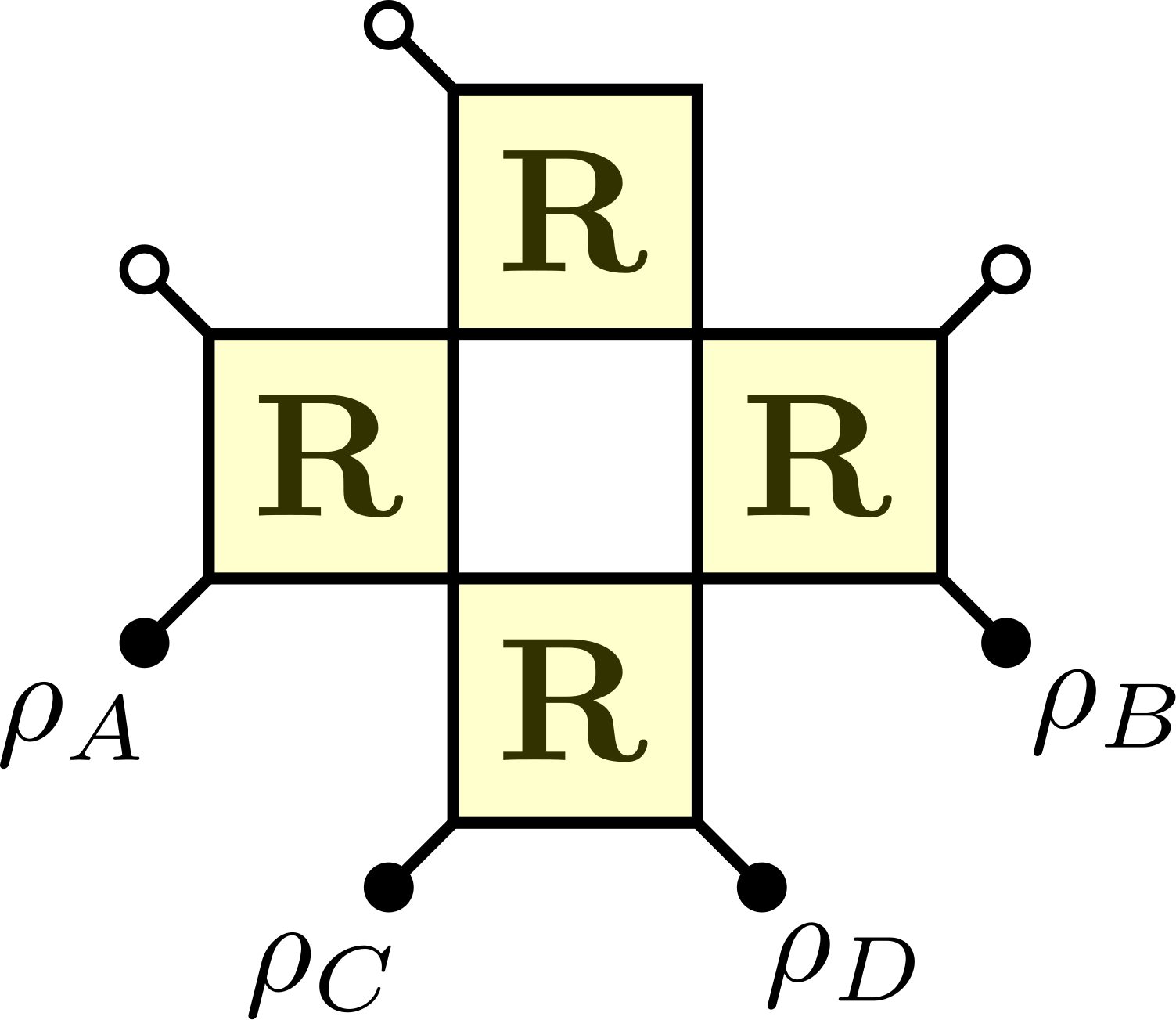},
\end{eqnarray}
which means that $\rho'_A$ is the same for all the four possible cases of $(X_1,Y_1)$. %
Therefore, Alice cannot obtain any information about $Y_1\in\{\text{B},\text{D}\}$ by measuring $\rho'_A$. The analysis for other players is identical, leading to the conclusion that with a swap-type $R$-matrix, the players cannot have a strategy to win the who-entered-first challenge any better than pure guessing.

\section{Categorical analysis of the challenge game}\label{sec:categorical_analysis}
In this section we present a more detailed analysis of the challenge game and its winning strategy using the framework of tensor category theory. The ultimate goal is to figure out which family of gapped phase of matter can be used to win this challenge. Rigorously performing this analysis in full generality is mathematically hard, since we first need a complete, rigorous classification of all gapped phases of matter in 3D~(and 2D), %
which is still %
a hard open question in mathematical physics~(if one assumes no more than locality and a spectral gap). However, there is now a combination of convincing physical arguments along with rigorous mathematical analysis~(based on some additional natural physical assumptions) showing 
that the universal properties of point-like excitations in 3D~(2D) topological phases are described by symmetric~(braided) fusion categories~\cite{kitaev2006anyons,kong2014braided,LanKongWen3DAB,LanWen3DEF,Johnson-Freyd2022,kong2022invitation}~\footnote{Note: for the 2D case, topological quasiparticles in a bosonic~(fermionic) topological phase are described by (super-)~modular tensor categories~\cite{lanModularExtensionsUnitary2017}, a special class of braided fusion categories satisfying the braiding non-degeneracy condition~[(super-)~modularity]. This (super-)~modularity condition is not  important for our discussions in this paper, so we do not emphasize it here.
}.
In this paper, we perform our analysis within this tensor categorical framework. For simplicity, in this section we focus on the case 
in which the Hamiltonian $\hat{H}$ is defined on a topologically trivial manifold~(e.g. on a sphere) and is translationally invariant everywhere except at the two special points $\oA$ and $\oB$. 
The more general case where $\oA$ and $\oB$ lie on some higher-dimensional defects can be analyzed in a similar way, but requires the additional tools of module categories, %
which will be treated in Sec.~\ref{sec:ModCatDefect}.

Specifically, in Sec.~\ref{sec:reduction_to_TPO} we present a simple argument that reduces the problem to universal topological properties of the system, in Sec.~\ref{sec:categorical_framework} we briefly review the tensor categorical description of point-like quasiparticles in  3D~(or 2D) gapped phases of matter~\cite{kitaev2006anyons,kong2014braided,LanKongWen3DAB,LanWen3DEF}. In Sec.~\ref{sec:win_strategy_diagrammatic} we apply this categorical framework to give an alternative description of the winning strategies presented in Secs.~\ref{sec:win} and \ref{sec:WHF} %
using fusion diagrams. Finally in Sec.~\ref{sec:TC_winning_condition} we derive the necessary condition for winning the challenge within the categorical framework, and show that the special class of SFCs defined in Sec.~\ref{sec:win_strategy_diagrammatic}~(the SFCs that contain nontrivial $R$-paraparticles) are the only subclass that can pass the full version of the challenge in 3+1D.

\subsection{Reduction of the problem to universal topological properties}\label{sec:reduction_to_TPO}
Our goal here is to show that whether a physical system can pass the challenge depends only on its universal topological properties, which allows us to forget about microscopic details and justifies our categorical analysis. To this end, we first prove the important fact that the set of physical systems that can be used to win any version of the game~(including the original version and all its variants and generalizations) is invariant under local unitary transformations~(LUTs)~\cite{Chen2010LUT}. To begin, suppose that for some physical system described by a locally-interacting Hamiltonian $\hat{H}$, we have a winning strategy with a success rate larger than pure guessing in the thermodynamic limit. Let $\ket{G}$ be the unique, gapped, and frustration-free ground state of $\hat{H}$. We can describe the winning strategy using the sequence of physical operations performed by the players: $\hat{U}_A(t),\hat{O}_A(t), \hat{U}_B(t),\hat{O}_B(t)$ for $0\leq t\leq T$, which means that Alice performs  the local unitary operation  $\hat{U}_A(t)$ and measurement $\hat{O}_A(t)$ at time $t$, and similarly for Bob. Now let $\hat{U}$ be any LUT that can be represented as a finite depth unitary circuit~\footnote{For simplicity, here we only consider the subclass of finite depth LUTs. But in principle we can relax the frustration-free condition in the game requirement~(see App.~\ref{sec:relax_FF}) and generalize these arguments  to the full class of LUTs.}. It is then clear that we also have a winning strategy for the physical system described by the transformed Hamiltonian $\hat{H}'=\hat{U}\hat{H}\hat{U}^\dagger$. Indeed, since any finite depth unitary circuit maps local operators into local operators, $\hat{H}'$ is also locally-interacting, and its ground state $\ket{G'}=\hat{U}\ket{G}$ also qualifies the requirements of the game as it is also unique, gapped, and frustration-free. It follows that the players can simply use the transformed operations $\hat{U}\hat{U}_A(t)\hat{U}^\dagger,\hat{U}\hat{O}_A(t)\hat{U}^\dagger, \hat{U}\hat{U}_B(t)\hat{U}^\dagger,\hat{U}\hat{O}_B(t)\hat{U}^\dagger$ to win the game~\footnote{A subtlety here is that these transformed operations are generally supported on a larger area due to the evolution by $\hat{U}$, so the players may need to choose a larger $r_0$ for the radius of the two circles~(or spheres in 3D) in order to accommodate these local operations. %
It is for this reason that in the game protocol we allow the players to choose the radius $r_0$. 
}, since with this choice the time evolution of the system is isomorphic to the original case without applying the LUT $\hat{U}$. 

The above fact implies that whether a physical system can pass the challenge depends only on its underlying topological order. If a physical system can win the challenge, then so do any other systems in the same topological phase, as any two states in the same topological phase are related by a LUT~\cite{hastings2005quasiadiabatic1,Chen2010LUT,LPPL}.  
A more formal argument can be made as follow~(see Fig.~\ref{fig:winnability_function}).  Let $\calS$ denote the set of gapped ground states of locally interacting Hamiltonians  on  a given lattice quantum system. %
Local unitary transformations define an equivalence relation $\sim$ between states in $\calS$: two states $\ket{G_1},\ket{G_2}\in\calS$ are called equivalent $\ket{G_1}\sim \ket{G_2}$ if there exists a local unitary transformation $\hat{U}$ such that $\ket{G_1}=\hat{U} \ket{G_2}$. A topological order is an equivalence class of states in $\calS/\sim$. Let $\pi:\calS \to \calS/\sim$ be the canonical projection, i.e., for any $\ket{G}\in\calS$, $\pi(\ket{G})$ denotes its topological order. 
Now let $w:\calS\to \{0,1\}$ denote the winnability function indicating whether a given ground state $\ket{G} $ can win the challenge, i.e., $w(\ket{G})=1$ means that a winning strategy exists for the state $\ket{G}$. 
Since $w(\ket{G})=w(\hat{U}\ket{G})$, for any LUT $\hat{U}$, $w$ must be a class function, i.e., $w:\calS\to \{0,1\}$ must factor through $\calS/\sim$ as $w=\bar{w}\circ\pi$, where $\bar{w}:S/\sim\to \{0,1\}$ indicates whether a given topological order can win the challenge, as depicted in Fig.~\ref{fig:winnability_function}. 

\begin{figure}
	\includegraphics[width=.95\linewidth]{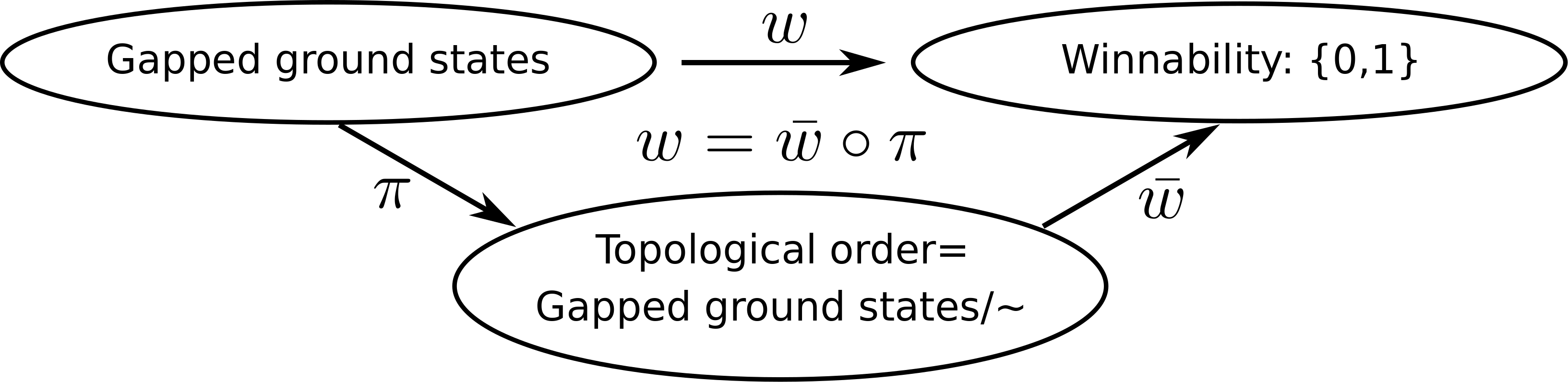} 
	\caption{\label{fig:winnability_function} A formal argument that whether a physical system can win the challenge game depends only on its underlying topological order. %
	}
\end{figure}

\subsection{Basic assumption: the tensor categorical description of a gapped phase}\label{sec:categorical_framework}
In this section we briefly outline our assumption that the universal properties of point-like excitations in a 3D~(2D) topological phases are \textit{described by} a symmetric~(braided) fusion category, which we denote by $\calC$. 
A rigorous formulation of this assumption  requires significant effort--indeed, the hardest part is to specify what the two words ``described by'' mean in precise mathematical language, i.e., what precisely do we mean by saying that a microscopic lattice quantum system is ``described by'' some  topological quantum field theory~(or some tensor category) in the long distance limit?  To our knowledge, there are two ways to formulate this: 
one is to use a lattice version of algebraic quantum field theory~\cite{halvorson2006algebraic,haagLocalQuantumPhysics1996,naaijkens2017quantum}, 
another is to use the idea of information convex set--a central concept in the entanglement bootstrap program~\cite{Shi2020,Shi2020Verlinde,Kim2021DomainWall,yang2025topological}. 
Here we use ideas from both formalisms, but we focus on illustrating the physics without pursuing the highest standard of mathematical rigor. 
(Actually, a fully rigorous formulation   
may still be lacking~\cite{kitaev2024almost}.) 
The basics of symmetric~(braided) fusion category and topological order can be found in standard textbooks and review papers~\cite{simon2023topological,kong2022invitation,preskill1999lecture,wenQuantumFieldTheory2007,kong2014braided}, here we will put a special emphasis on locality and the connection between microscopic degrees of freedoms and the macroscopic description. %

\paragraph{Quasiparticles and their types} 
A quasiparticle in a condensed matter system is a point-like excitation above its ground state $\ket{G}$. More precisely, let $\ket{\Psi}$ be an excited state of the Hamiltonian $\hat{H}$ of the system. If all local properties of $\ket{\Psi}$ are the same as that of $\ket{G}$~(this can be formulated rigorously by comparing their reduced local density matrices) except in a finite region centered around a position $x$, then we say $\ket{\Psi}$ has a point-like excitation at $x$. We always assume that a quasiparticle can be moved from one position to another by applying a sequence of local unitary operations on a path connecting the two positions~\footnote{In our challenge game, it is clear that whatever excitations the players have in their respective circles must be movable by local unitary operations, since the players are always obliged to move them to follow the circle movements. However, the special topological excitations $\sigma$ and $\bar{\sigma}$ sitting at $\oA$ and $\oB$ do not need to be mobile, since they remain at the same position throughout the game. In 3D, it is known that there are phases of matter hosting quasiparticles with restricted mobility, called fractons~\cite{haah2011local,vijay2016fracton1,nandkishore2019fractons}. 
We do not consider fracton phases in this work, but we expect that even in a 3D fracton phase, the category of all point-like topological quasiparticles with unrestricted mobility still forms an SFC $\calC$, and the fusion between fractons and mobile quasiparticles is still described by a $\calC$-module category $\calM$, so the more general module category analysis of the winning strategy presented in Sec.~\ref{sec:ModCatDefect} should still be valid.  
}. 

A topological phase may host different types of quasiparticles with distinct physical properties. The particle type, also known as topological charge or superselection sector, is a label of a quasiparticle that specifies its universal properties in topological processes such as fusion and braiding.  
A quasiparticle may have some internal degrees of freedom, leading to a degeneracy in the corresponding excited state of the system. If this degeneracy is stable against any local perturbation that do not close the spectral gap~(when the quasiparticle is far from boundary and other quasiparticles), then we say that the quasiparticle is simple.
Therefore by definition, a simple particle type cannot be changed by any kind of local physical processes, including local unitary operations and measurements.  
In the tensor category description of topological phases, %
it is always assumed that there is a finite set of possible simple particle types, which we denote by $\Irr(\calC)$. There is a special particle type $I\in\Irr(\calC)$, called vacuum, which physically describe quasiparticles that can be created by local operators. For each simple particle type $\beta\in\Irr(\calC)$, there is a unique antiparticle type, denoted by $\bar{\beta}$, such that it is physically possible to create a pair of quasiparticles $\beta\bar{\beta}$ at any position using local unitary operations~(we will elaborate this more later).  
If a quasiparticle is not simple, we call it composite, and a composite quasiparticle can always be decomposed as a direct sum~(superposition) of simple quasiparticles. The type of any simple quasiparticle can be measured locally, that is, there exists a local observable $\hat{T}$ at the vicinity of the quasiparticle being measured,  satisfying
\begin{equation}\label{eq:measure_type}
	\hat{T}\Ket{\adjincludegraphics[height=4ex,valign=c]{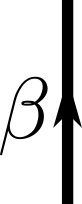}}=\beta\Ket{\adjincludegraphics[height=4ex,valign=c]{Figures/FCbasics/FS-beta.png}}, \quad \forall \beta\in \Irr(\calC).
\end{equation} 
Here we adopt a diagrammatic representation in which a quasiparticle is depicted as an arrow pointing upward~(the time direction), with label $\beta$ indicating its type. 
\paragraph{Fusion and splitting}
Two topological quasiparticles can be fused into another type of quasiparticle, which is generally a composite type. %
The process of fusion can be done either by physically moving two quasiparticles to the same location, or simply by zooming out and viewing the two quasiparticles as a single excitation in the region containing both of them.  
The result of fusion is abstractly described by the fusion rules of the theory
\begin{equation}\label{eq:FCrules}
	\sigma\times \psi=\sum_{\beta\in \Irr(\calC)} N_{\sigma\psi}^\beta \beta,\quad \forall \sigma, \psi\in \Irr(\calC),
\end{equation}
where $\{N_{\sigma\psi}^\beta\}$ are non-negative integers called fusion multiplicities.  Intuitively, Eq.~\eqref{eq:FCrules} means that there are $N_{\sigma\psi}^\beta$  linearly independent ways to fuse $\sigma$ and $\psi$ into $\beta$.
More precisely, we have
\begin{equation}\label{eq:fusion_basis}
	\Ket{\adjincludegraphics[height=4ex,valign=c]{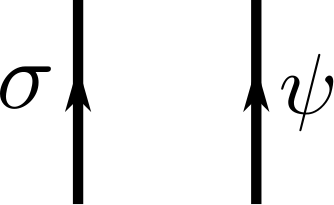}}=\sum_{\beta}\sum_{a=1}^{N_{\sigma\psi}^\beta} \Ket{\adjincludegraphics[height=7ex,valign=c]{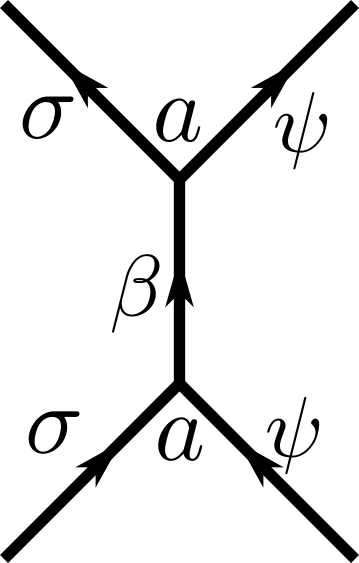}},
\end{equation} 
where we use $\{\Ket{\adjincludegraphics[height=5ex,valign=c]{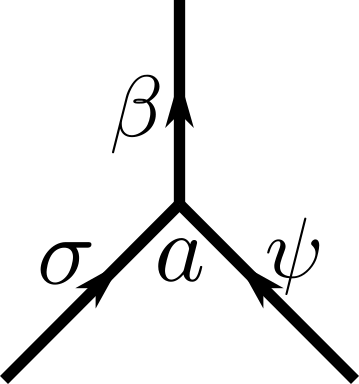}}|1\leq a\leq N_{\sigma\psi}^\beta\}$ to denote a basis for the fusion space $V^\beta_{\sigma\psi}$, and the trivalent vertex indicates the physical process of fusion. 
Note that in this paper we use physical normalization of fusion diagrams
\begin{equation}
\Ket{\adjincludegraphics[height=7ex,valign=c]{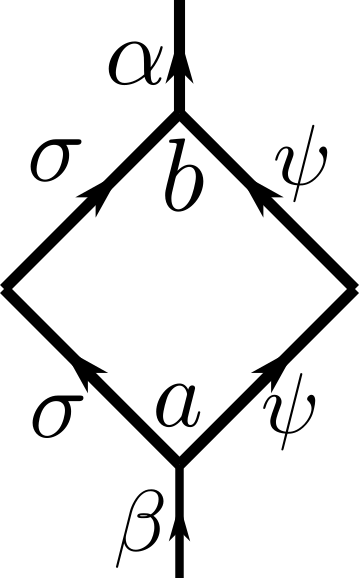}}=
\delta_{ab}\delta_{\alpha\beta}\Ket{\adjincludegraphics[height=4ex,valign=c]{Figures/FCbasics/FS-beta.png}}
\end{equation}
instead of the isotopic normalization~\cite{simon2023topological}.  %
Both the resulting particle type $\beta$ and the fusion multiplicity index $a$ can be measured locally, i.e., there exists observables $\hat{T},\hat{O}$ supported on a region containing $\sigma$ and $\psi$,  %
satisfying
\begin{eqnarray}\label{eq:measure_fusion}
	\hat{T}\Ket{\adjincludegraphics[height=7ex,valign=c]{Figures/FCbasics/FS-completeness-2.png}}=
	\beta\Ket{\adjincludegraphics[height=7ex,valign=c]{Figures/FCbasics/FS-completeness-2.png}},\quad %
	\hat{O}\Ket{\adjincludegraphics[height=7ex,valign=c]{Figures/FCbasics/FS-completeness-2.png}}=
	a\Ket{\adjincludegraphics[height=7ex,valign=c]{Figures/FCbasics/FS-completeness-2.png}}.
\end{eqnarray}
Note that if we measure $\hat{T}$ and $\hat{O}$ in the state in Eq.~\eqref{eq:fusion_basis} and obtain $\beta$ and $a$, then after the measurement the state will collapse to the eigenstate in Eq.~\eqref{eq:measure_fusion},  
according to the measurement axiom of quantum mechanics.

To precisely define the connection between the \textit{microscopic} quantum many body system and its \textit{macroscopic} tensor categorical description, we need to specify how the topological degrees of freedoms are encoded in the quantum state of the microscopic lattice system, i.e., how the fusion space $V^\beta_{\sigma\psi}$ is embedded in the Hilbert space of the lattice system. We formulate this connection using the concept of
information convex set introduced in Ref.~\onlinecite{Shi2020}.  
Let $K$ be a simply connected finite region of the system. Let $\ket{\Psi}$ be a quantum state with two far separated particles of type $\sigma$ and $\psi$ in region $K$~(both are far from the boundary of $K$), such that the total fusion channel in region $K$ is equal to $\beta$, and suppose all other particles in $\ket{\Psi}$ are far from the region $K$. Let $\rho=\ket{\Psi}\bra{\Psi}$, and let $\rho_K=\Tr[\rho]$  be the reduced density matrix in region $K$. 
The information convex set $\Gamma^\beta_{\sigma\psi}(K)$ is defined as the convex set of all density matrices $\rho'_K$ supported on $K$ that satisfies\\
(1). $\rho'_K$ is locally indistinguishable from the reference state $\rho_K$ in any $O(1)$-sized subregion of $K$;\\
(2). the total fusion channel of $\rho'_K$ in region $K$ is also equal to $\beta$.\\
For a mathematically precise definition of the information convex set, we refer to Ref.~\onlinecite{Shi2020}~\footnote{Although Ref.~\onlinecite{Shi2020} only defines information convex set in regions with no excitations~(or more precisely, for a reference state that satisfies their axioms A0 and A1 exactly everywhere), it is straightforward to extend their definitions to our case.}. 
A key assumption we make is the following equality, which is analogous to Theorem 4.5 of Ref.~\onlinecite{Shi2020}:
\begin{equation}\label{eq:structureinfoconvex}
\Gamma^\beta_{\sigma\psi}(K)\cong \calS(V^\beta_{\sigma\psi}),
\end{equation}
where $\calS(V^\sigma_{\sigma\psi})$ denotes the convex set of all density matrices defined on the fusion space $V^\sigma_{\sigma\psi}$, and $\cong$ here means that there exists an isomorphism between the two sets that preserve the geometry of the two convex sets~(see the isomorphism theorems in Ref.~\onlinecite{Shi2020} for the precise definition). %
Eq.~\eqref{eq:pair_creation} specifies how the fusion space $V^\beta_{\sigma\psi}$ is encoded in the microscopic degrees of freedom, and it is included as a part of our assumption that the gapped phase of matter in question is \textit{described by} the tensor category $\calC$. This assumption will be crucial for our analysis in Sec.~\ref{sec:TC_winning_condition}. 

The reverse process of fusion is called \textit{splitting}, which can be done via local unitary operations. More precisely, suppose we have a quantum state $\Ket{\adjincludegraphics[height=4ex,valign=c]{Figures/FCbasics/FS-beta.png}}$ with a simple quasiparticle of type $\beta$ at some position $x$, and suppose $N_{\sigma\psi}^\beta>0$ for some $\sigma,\psi\in \Irr(\calC)$. Then there exists a unitary operator $\hat{U}_a^{\beta,\sigma\psi}$ localized around $x$ satisfying 
\begin{equation}\label{eq:U_spltting}
	\hat{U}_a^{\beta,\sigma\psi}\Ket{\adjincludegraphics[height=4ex,valign=c]{Figures/FCbasics/FS-beta.png}}= 
	\Ket{\adjincludegraphics[height=7ex,valign=c]{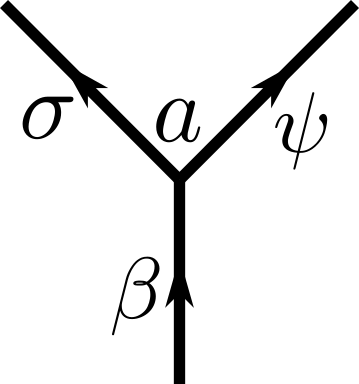}}.
\end{equation} 
In particular, taking $\beta=I$ in Eq.~\eqref{eq:U_spltting} forces $\psi=\bar{\sigma}$, and since in any unitary fusion category $N_{\sigma\bar{\sigma}}^I=1$, we can omit the fusion multiplicity label $a$ and obtain
\begin{equation}\label{eq:pair_creation}
	\hat{U}^{I,\sigma\bar{\sigma}}\Ket{G}=\Ket{\adjincludegraphics[height=7ex,valign=c]{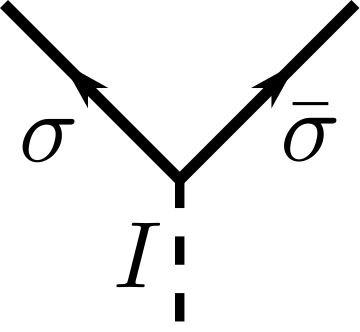}},
\end{equation}
where we use a dashed line to represent the vacuum type. Eq.~\eqref{eq:pair_creation} means that one can create a particle-antiparticle pair at any location using local unitary operations. 

\paragraph{The state space and change of basis}
In general, a quasiparticle $\beta$ can be split into quasiparticles $\alpha,\gamma,\psi$ %
in different ways, leading to a topologically degenerate~(assuming $\alpha,\gamma,\psi$ are far apart) %
space of excitations $V^{\alpha\gamma\psi}_{\beta}$.
This space is spanned by basis states of the form
\begin{equation}\label{eq:BFCstatespace}
\ket{\mu;a,b}=	\Ket{\adjincludegraphics[height=9ex,valign=c]{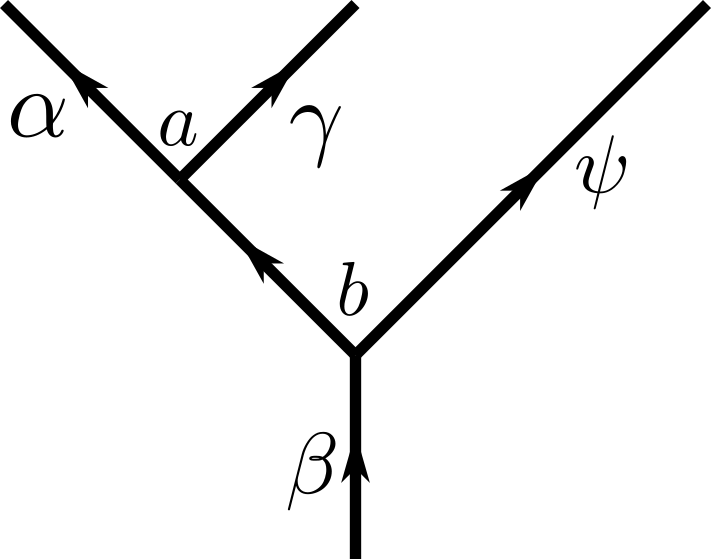}},
\end{equation} 
where the labels $\mu,a,b$ run over all possible allowed values, but the tree structure on the RHS is fixed. 
Note that one can in principle construct a different basis using a different tree structure that split $\beta$ into $\alpha,\gamma,\psi$,
and these two basis are related by a unitary transformation~(also called the $F$-move of fusion diagrams)
\begin{equation}\label{eq:F-move}
	\Ket{\adjincludegraphics[height=9ex,valign=c]{Figures/FCbasics/F-move-L-2.png}}=\sum_{\nu,c,d} [F^{\alpha\gamma\psi}_{\beta}]^{\mu ab}_{\nu cd} \Ket{\adjincludegraphics[height=9ex,valign=c]{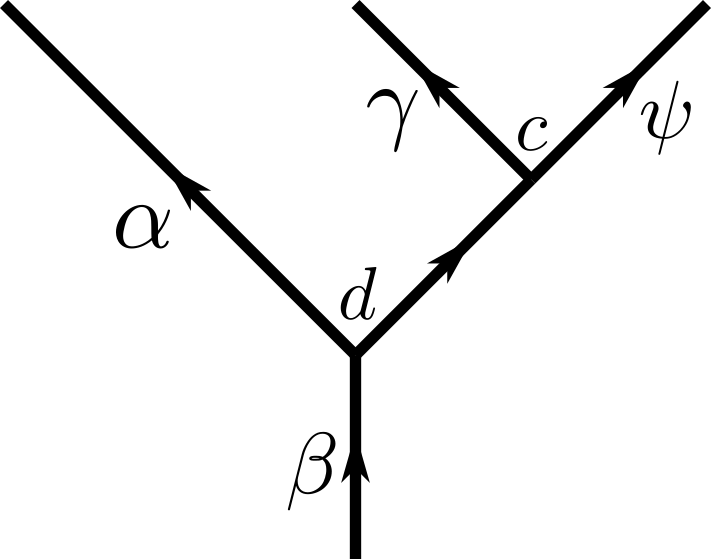}},
\end{equation} 
where $F^{\alpha\gamma\psi}_{\beta}$ is required to be unitary as a matrix~(we only consider unitary fusion categories). Furthermore, $F^{\alpha\gamma\psi}_{\beta}$ is also required to satisfy the so-called pentagon equation~\cite{MooreSeiberg1989,kitaev2006anyons} to guarantee the consistency of Eq.~\eqref{eq:F-move}. 

More generally, the space $V^{\alpha_1\alpha_2\ldots\alpha_n}_{\beta}$ with $n$ quasiparticles of types $\alpha_1,\alpha_2,\ldots,\alpha_n$ and total fusion channel $\beta$ is spanned by basis states of the form
\begin{equation}\label{eq:BFCstatespace-npt}
	\ket{\beta_2,\ldots,\beta_{n-1};a_1,\ldots,a_{n-2}}=\Ket{\adjincludegraphics[height=12ex,valign=c]{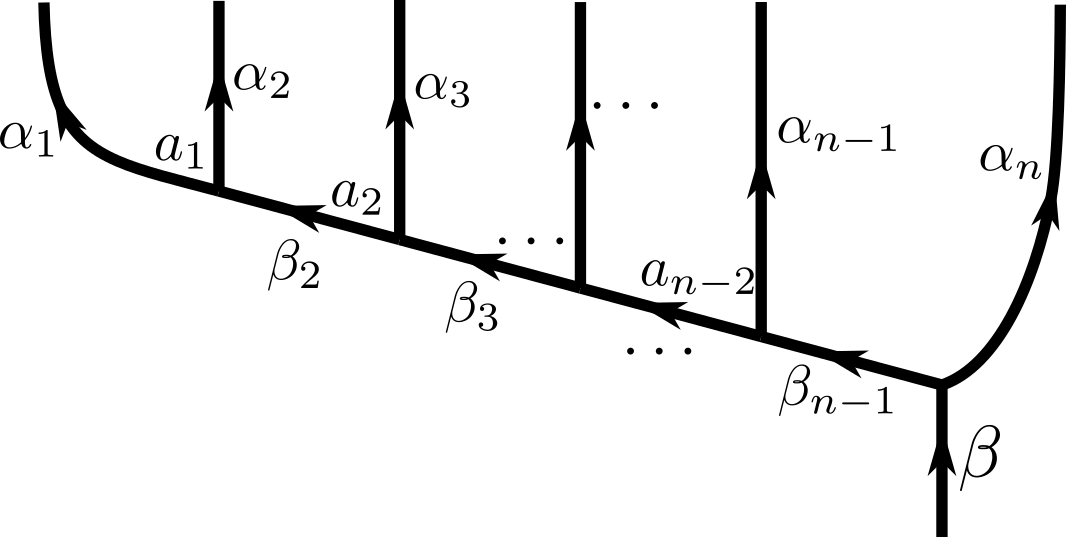}}.
\end{equation}
where the labels $\beta_2,\ldots,\beta_{n-1}$ and $a_1,\ldots,a_{n-2}$ run over all possible allowed values, but the tree structure on the RHS is fixed. 
\paragraph{Locality}
We emphasize that Eqs.~(\ref{eq:fusion_basis}-\ref{eq:F-move}) can all be applied locally to any subpart of a large fusion diagram such as Eq.~\eqref{eq:BFCstatespace}. For example, measuring the particle type operator $\hat{T}$ at position $x_3$ gives
\begin{equation}\label{eq:locality-measure}
	\hat{T}_{x_3}\Ket{\adjincludegraphics[height=9ex,valign=c]{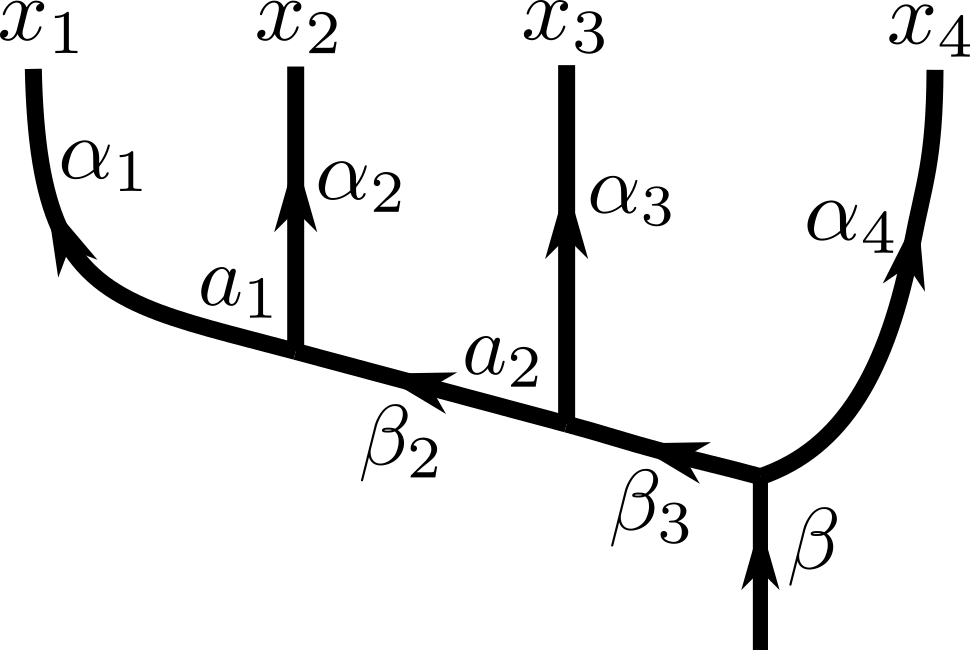}}=
	\alpha_3\Ket{\adjincludegraphics[height=9ex,valign=c]{Figures/FCbasics/BFClocality-1-2.png}}, 
\end{equation}
where we use $x_1,\ldots,x_4$ to explicitly label the positions of the quasiparticles. Similarly, applying the splitting operator $\hat{U}_{x_3,a}^{\alpha_3,\beta\gamma}$ at $x_3$ leads to
\begin{equation}\label{eq:locality-splitting}
	\hat{U}_{x_3,a}^{\alpha_3,\beta\gamma}\Ket{\adjincludegraphics[height=9ex,valign=c]{Figures/FCbasics/BFClocality-1-2.png}}=
	\Ket{\adjincludegraphics[height=9ex,valign=c]{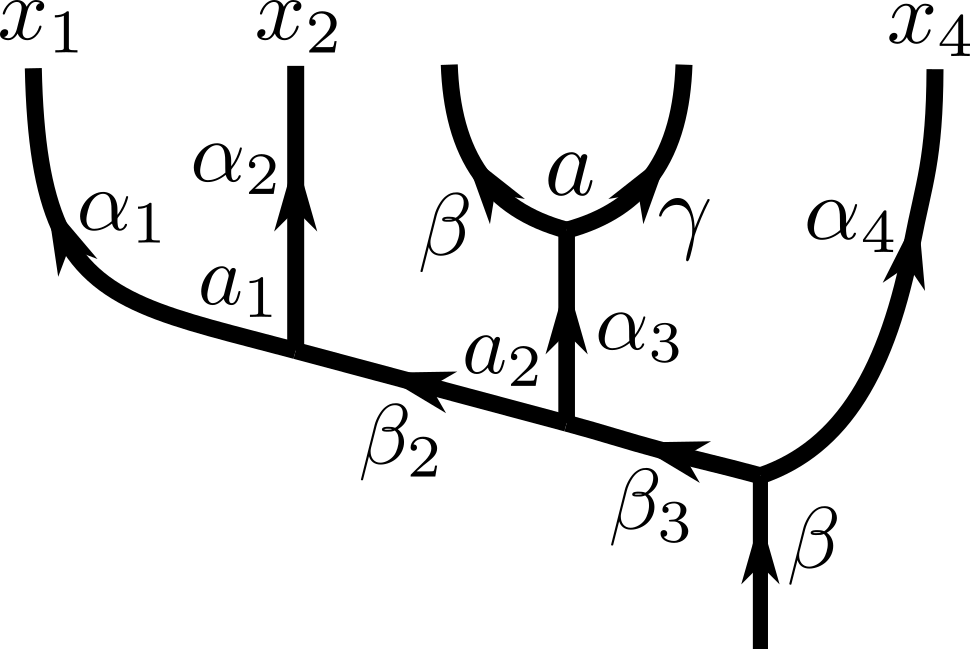}}. 
\end{equation}
The $F$-move in Eq.~\eqref{eq:F-move} can also be applied locally. In this way, two different basis for the space $V^{\alpha_1\alpha_2\ldots\alpha_n}_{\beta}$ constructed from two different fusion trees can be related by a unitary transformation composed of a sequence of $F$-moves. 

\paragraph{Braiding/exchange} 
When we spatially braid~(exchange) two topological quasiparticles in 2+1D~(3+1D) spacetime, the topologically degenerate subspace of excited states in Eq.~\eqref{eq:BFCstatespace-npt} generally undergo a nontrivial unitary evolution governed by the $R$-symbol. The $R$-symbol is defined as follow
\begin{equation}\label{eq:R-move}
	\adjincludegraphics[height=8ex,valign=c]{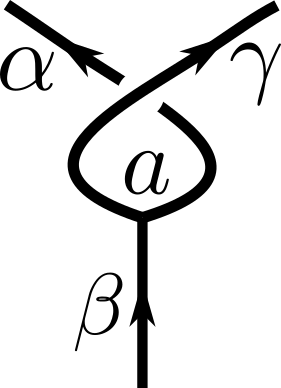}=\sum_{b} [R^{\gamma\alpha}_{\beta}]_{ab}\adjincludegraphics[height=8ex,valign=c]{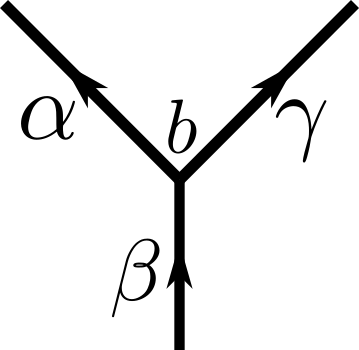},
\end{equation} 
where the LHS represents a counterclockwise braid between $\gamma$ and $\alpha$, and $R^{\gamma\alpha}_{\beta}$ is required to be unitary as a matrix. In addition, $R^{\gamma\alpha}_{\beta}$ needs to satisfy the hexagon equation to guarantee the consistency between fusion and braiding.  The unitary transformation associate to a clockwise braid can be obtained by inverting Eq.~\eqref{eq:R-move}:
\begin{equation}\label{eq:R-move-CW}
	\adjincludegraphics[height=8ex,valign=c]{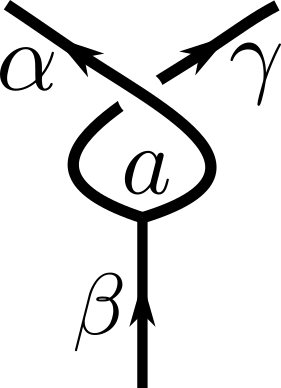}=\sum_{b} [(R^{\alpha\gamma}_{\beta})^{-1}]_{ab}\adjincludegraphics[height=8ex,valign=c]{Figures/FCbasics/braidingBFC-R-2.png}.
\end{equation} 
In 3+1D spacetime, there is no physical distinction between a CW and a CCW exchange, therefore we should impose the condition
\begin{equation}\label{def:SFCvsBFC}
	\sum_{b} [R^{\gamma\alpha}_{\beta}]_{ab} [R^{\alpha\gamma}_{\beta}]_{bc}=\delta_{ac}. 
\end{equation} 
A braided fusion category satisfying Eq.~\eqref{def:SFCvsBFC} is called a SFC, which describes the universal properties of point-like excitations in 3+1D gapped phases. 

We warn the readers not to confuse the $R$-symbol and the $R$-matrix: the former $[R^{\gamma\alpha}_{\beta}]_{ab}$ has five indices, three of which $\gamma,\alpha,\beta$ are particle type labels, while the latter $R^{b'a'}_{ab}$ has four indices. Their precise relation will be given in Sec.~\ref{sec:win1pt-SFC}, specifically in Eq.~\eqref{eq:RmatfromFRmove}. 

\paragraph{Time evolution and transition amplitude}
Consider a typical physical process involving motion of topological quasiparticles, as 
depicted in Fig.~\ref{fig:ta-3D}. Here we assume that all the particles are far separated, and move sufficiently slowly so that the adiabatic theorem holds. 
To compute the transition amplitude of such a process, we project the spacetime trajectories of the particles onto a 2D plane to obtain the fusion diagram in Fig.~\ref{fig:ta}. We then complete the top and bottom part of the diagram by fusing all the particles involved in this process, as shown in Fig.~\ref{fig:ta}, which corresponds to choosing a suitable fusion basis for the topologically degenerate state space of the initial and final particle configurations. Finally we evaluate this diagram using the diagrammatic rules presented in this section, the details can be found in standard textbooks~\cite{simon2023topological}. 
This gives the topological part of the transition amplitude. In general, the transition amplitude also involves a non-universal dynamical phase factor which depends on the microscopic details of the system. Such a dynamical phase factor does not affect our analysis of the winning strategies, therefore we ignore this non-universal part in this paper.
\begin{figure}
	\begin{subfigure}[t]{.68\linewidth}
		\centering\includegraphics[width=.85\linewidth]{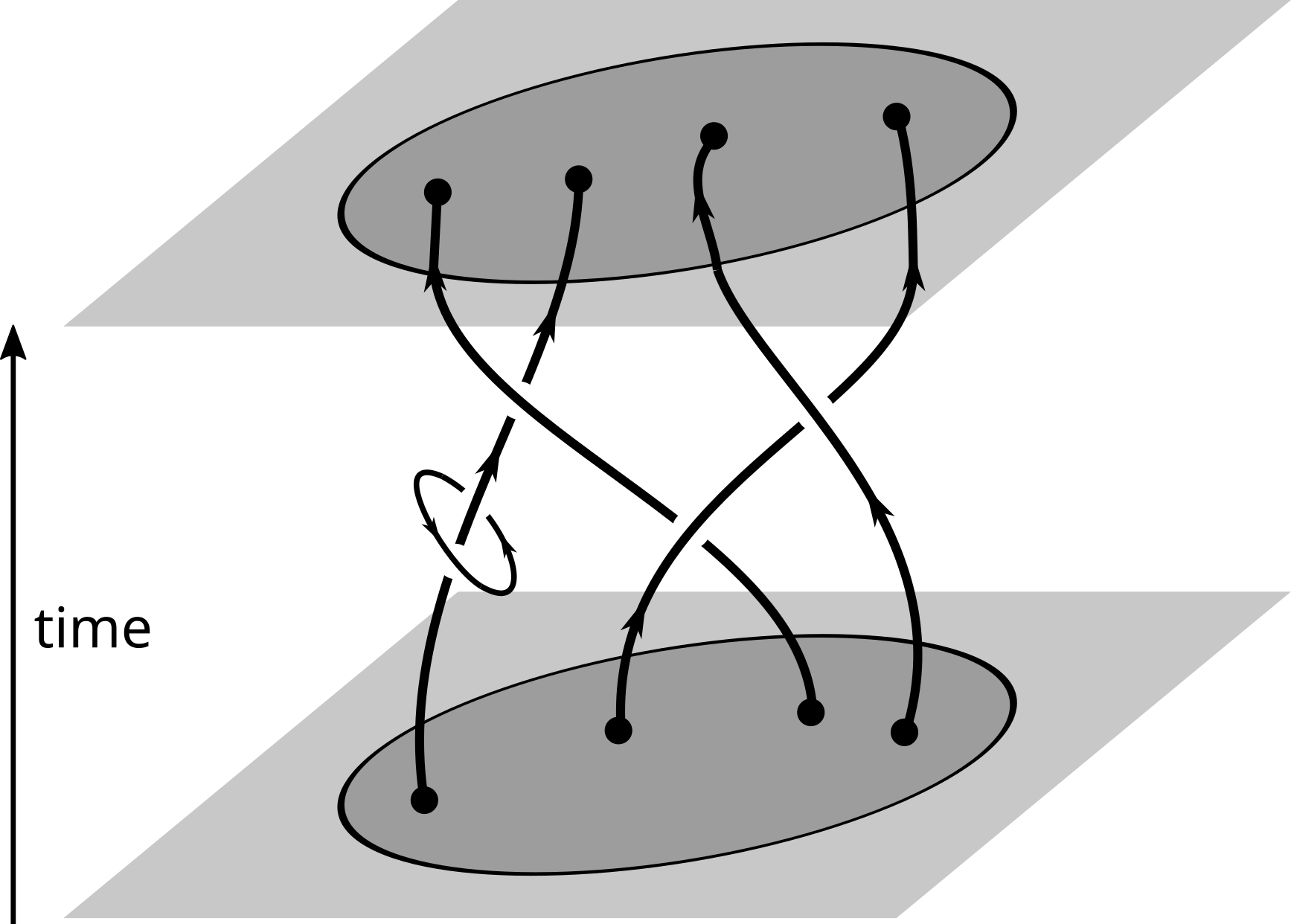} 
		\caption{\label{fig:ta-3D}}
	\end{subfigure}
	\begin{subfigure}[t]{.30\linewidth}
		\centering\includegraphics[width=.85\linewidth]{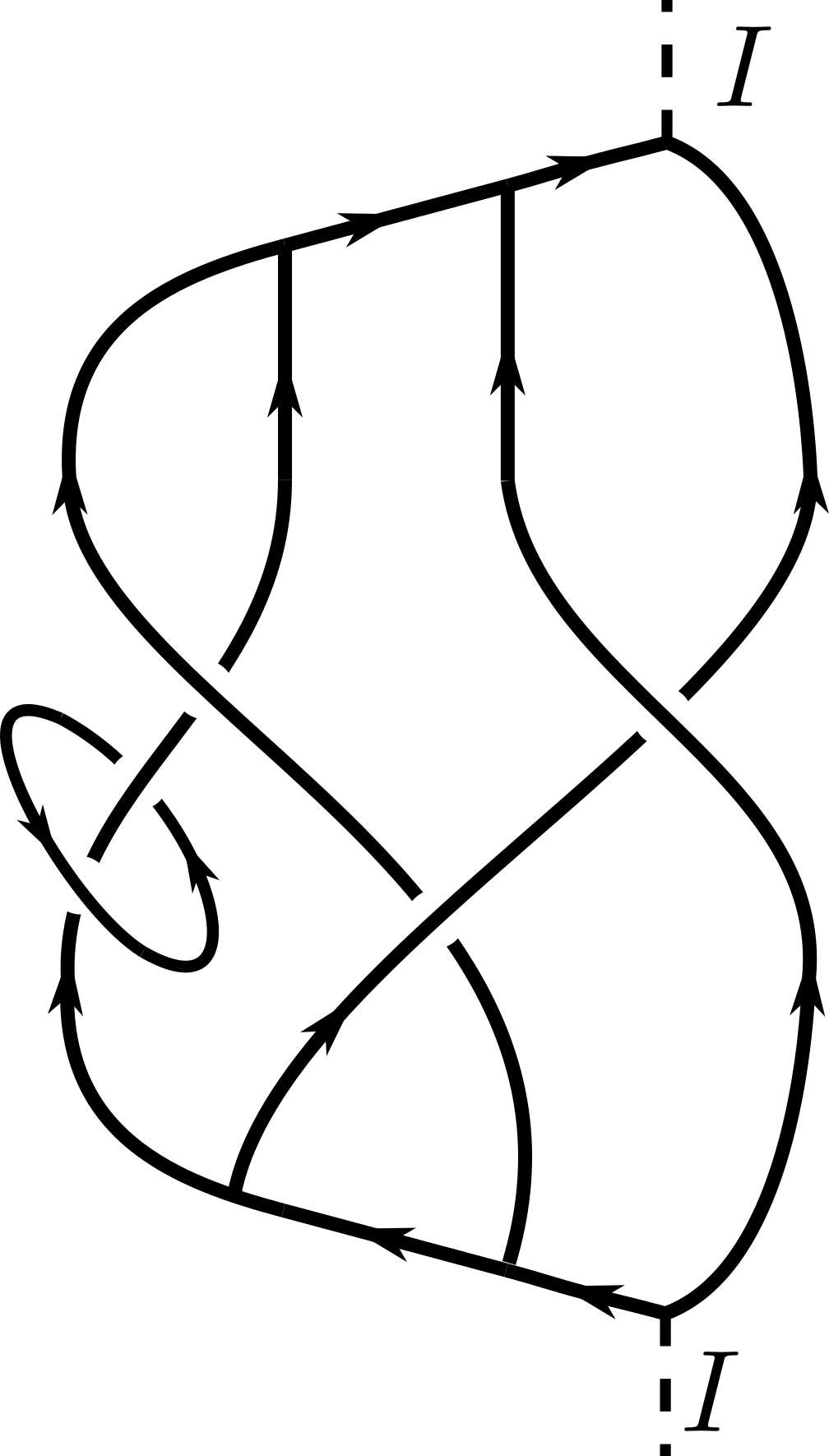} 
		\caption{\label{fig:ta}}
	\end{subfigure}
	\caption{\label{fig:transition_amplitude} 
		Computing transition amplitude by evaluating fusion diagrams. 
		(a) A typical quantum dynamical process involving adiabatic motion of quasiparticles, where we assume all particles to be far separated; (b) The fusion diagram obtained by projecting the spacetime trajectories to a 2D plane, and completing the top and bottom of the diagram by fusing all the particles involved in this process~(here we assume that the total fusion channel is the vacuum $I$, the general case can be treated in an identical way). Evaluation of this diagram gives the topological part of the transition amplitude. 
	}
\end{figure}

\subsection{Diagrammatic representation of the winning strategies}\label{sec:win_strategy_diagrammatic}
As a first application of the tensor categorical framework, in this section we provide an alternative description of the winning strategies to the challenge games %
using fusion diagrams. Compared to our previous description~(Sec.~\ref{sec:win_original}) using the axioms of emergent parastatistics, this categorical description is more general as it can be applied to strategies using non-Abelian anyons~(for the 2D version of the game) as well, and it allows us to visualize the entire game process in a single space-time diagram.  %

In the current section we only give an abstract description of the special class of SFCs that can pass the challenge, and concrete examples of this type of SFCs will be given in App.~\ref{sec:RfromCentralType}. In the next section we will show that this class of SFCs is essentially the only class that can pass the full version of the challenge in 3+1D. The generalization of this diagrammatic analysis to non-Abelian-anyon-based winning strategies for the 2D version of the game will be given in Sec.~\ref{sec:anti-anyon}. %
\subsubsection{The basic challenge with one special point}\label{sec:win1pt-SFC}
We begin by categorically describing the winning strategy for the version with one special point analyzed in Sec.~\ref{sec:win_1pt}, as this turns out to be the simplest case. 
Consider a SFC $\calC$ with a fusion rule of the following form%
\begin{equation}\label{eq:sigmapsifusion-0}
	\sigma \times \psi  =m~\sigma,
\end{equation}
where $m>1$ is the fusion multiplicity, and $\sigma,\psi\in \calC$ are simple objects.  
In the diagrammatic description, such a fusion rule %
leads to a fusion vertex of the form $\adjincludegraphics[height=5ex,valign=c]{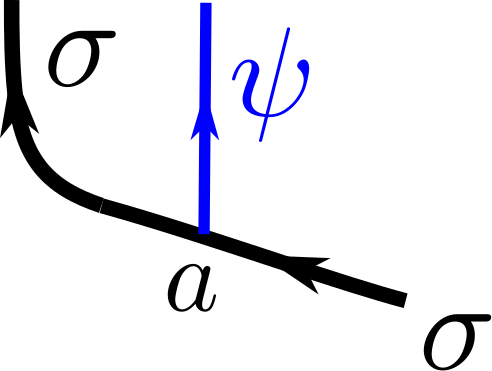}$, where $a=1,2,\ldots,m$ labels basis vectors in the fusion space $V^{\sigma\psi}_{\sigma}$, which we previously called the internal space of the paraparticles. %
Physically, this vertex describes a splitting process that can be implemented by a unitary operation $\hat{U}_a^{\sigma,\sigma\psi}$ in Eq.~\eqref{eq:U_spltting}, which is localized around $\sigma$. 
During the pregame preparation stage, the players submit a physical system whose ground state $\ket{G}$ already has a $\sigma$ particle at some point $\oA$, which they choose as the special point to begin and end their journey. Such a system can still satisfy all the requirements of the game, which we discuss later in Sec.~\ref{eq:BFC-groundstate}. 
Importantly, with this configuration the players can locally create a single paraparticle $\psi$ at $\oA$, using the local operator $\hat{U}_a^{\sigma,\sigma\psi}$ in Eq.~\eqref{eq:U_spltting}, which realizes 
the unitary operator $\hat{U}_{\oA,a}$ in Axiom~\ref{Axiom5}. %

We claim that the winning strategy  in Sec.~\ref{sec:win_1pt} is described by the following fusion diagram:
\begin{equation}\label{def:RfromSFC}
	\adjincludegraphics[height=15ex,valign=c]{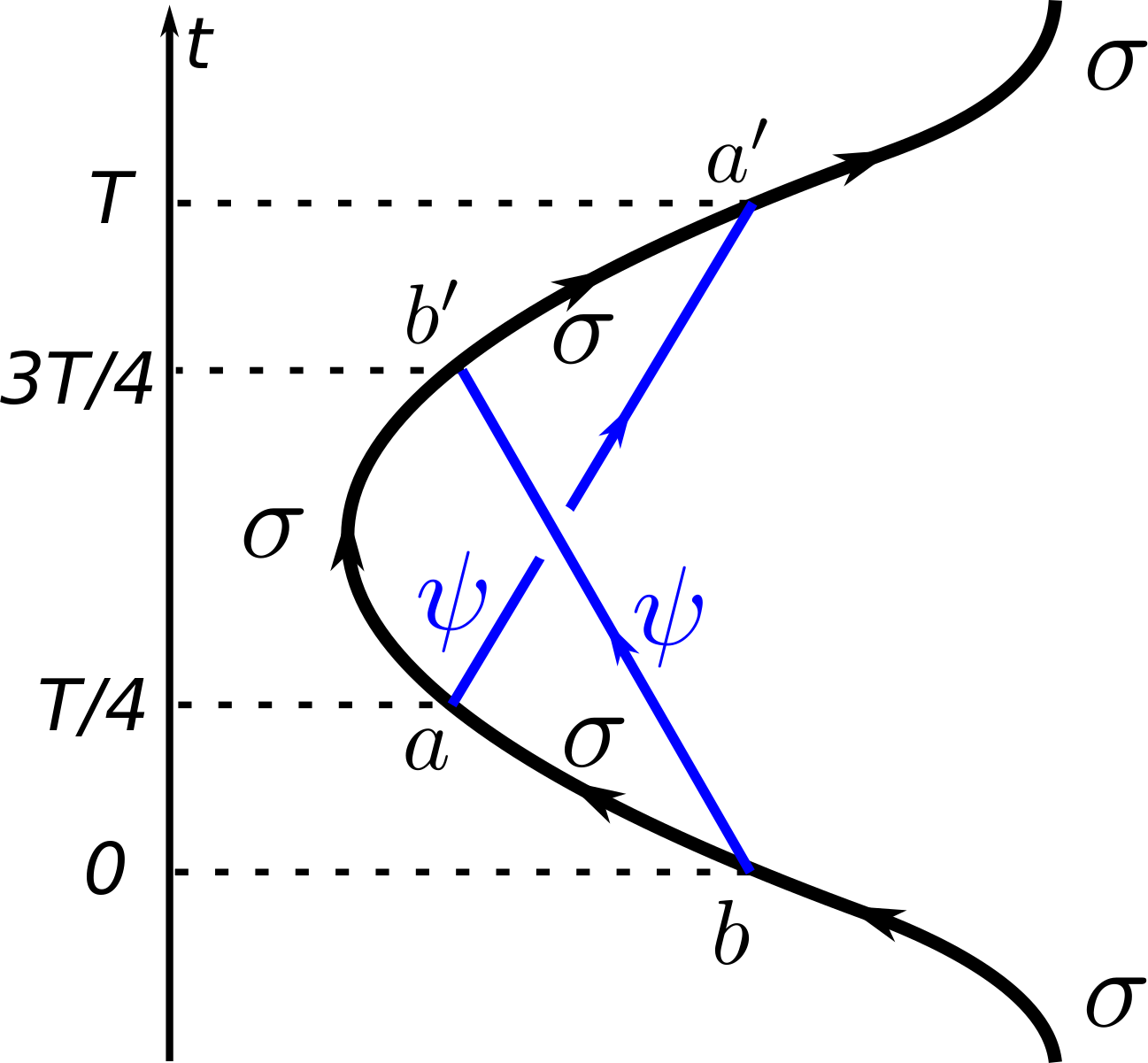}
	=R^{b'a'}_{ab}\quad\adjincludegraphics[height=12ex,valign=c]{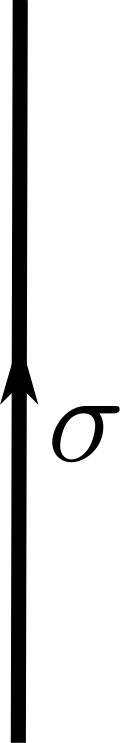}.
\end{equation}
Here the LHS describes the time evolution of the physical system. 
The trivalent vertex at $t=0$ means that Bob applies the local operator $\hat{U}_b^{\sigma,\sigma\psi}$  to create a paraparticle at $\oA$ and encode his number $b$ in the fusion space~(which we previously called the internal space). When his circle moves away from $\oA$, he keeps the paraparticle $\psi$ in the circle while leaving $\sigma$ unchanged at $\oA$.
Importantly,  the $\sigma$ particle at $\oA$ is considered to be a part of the background~(since it is already present in the ground state $\ket{G}$ submitted by the players), so leaving the $\sigma$ at $\oA$ does not violate the rules of the game~(as this does not locally leave any information behind). 
The trivalent vertex at $t=3T/4$  represents a  measurement in the fusion space $V^{\sigma\psi}_{\sigma}$ performed by Bob, using the local operator $\hat{O}$ in Eq.~\eqref{eq:measure_fusion}~(which realizes the observable $\hat{O}_{\oA}$ in Axiom~\ref{Axiom6}). The other two trivalent vertices at $t=T/4$ and $t=T$ represents the corresponding operations performed by Alice. Note that throughout the entire game the $\sigma$ particle sits at $\oA$ without ever changing its state.

The RHS of Eq.~\eqref{def:RfromSFC} evaluates the fusion diagram using te diagrammatic rules in Sec.~\ref{sec:categorical_framework}. 
The transition amplitude $R^{b'a'}_{ab}$ is exactly the $R$-matrix of the paraparticle $\psi$, and in App.~\ref{app:deriveYBE} we show that $R^{b'a'}_{ab}$ defined by Eq.~\eqref{def:RfromSFC} always satisfies the YBE. 
We can derive an explicit expression for the $R$-matrix by applying the $F$-move in Eq.~\eqref{eq:F-move} and the $R$-move in Eq.~\eqref{eq:R-move}:
\begin{equation}\label{eq:RmatfromFRmove}
	R^{b'a'}_{ab}=\sum_{\beta,e,f,g}[F^{\sigma\psi\psi}_{\sigma}]^{\sigma c d }_{\beta e f} [R^{\psi\psi}_\beta]_{eg} [F^{\sigma\psi\psi*}_{\sigma}]^{\sigma a b}_{\beta g f}.
\end{equation}
However, we will not actually use this complicated expression in this paper, as in App.~\ref{sec:RfromCentralType} we present a more convenient way to compute the $R$-matrix directly from group-theoretical data.  %

\subsubsection{The basic challenge with two special points}\label{sec:win2pt-SFC}
It is straightforward to show~(see App.~\ref{app:proofdualFrule}) that  if $\calC$ has a fusion rule of the form  in Eq.~\eqref{eq:sigmapsifusion-0},  $\calC$  must also have a fusion rule
\begin{equation}\label{eq:sigmapsifusion-1}
	\psi\times \bar{\sigma}=m~\bar{\sigma},
\end{equation}
meaning that $\psi$ can also be locally created in the vicinity of $\bar{\sigma}$. The winning strategy for the challenge game with two special points $\oA$ and $\oB$ is described by the following diagram
\begin{equation}\label{eq:SFCdescriptiongame}
	\adjincludegraphics[height=14ex,valign=c]{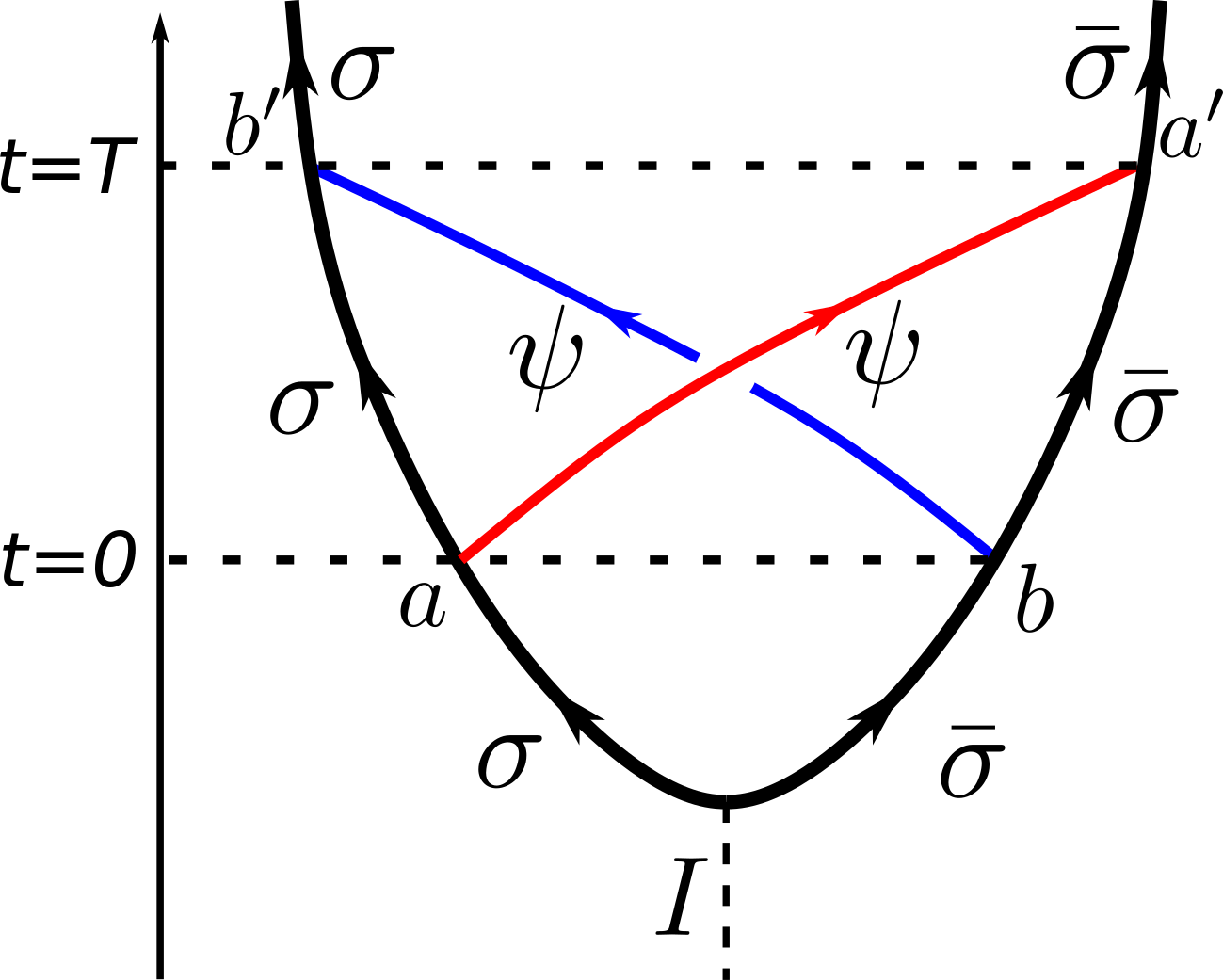}=
	R^{b'a'}_{ab}~\adjincludegraphics[height=14ex,valign=c]{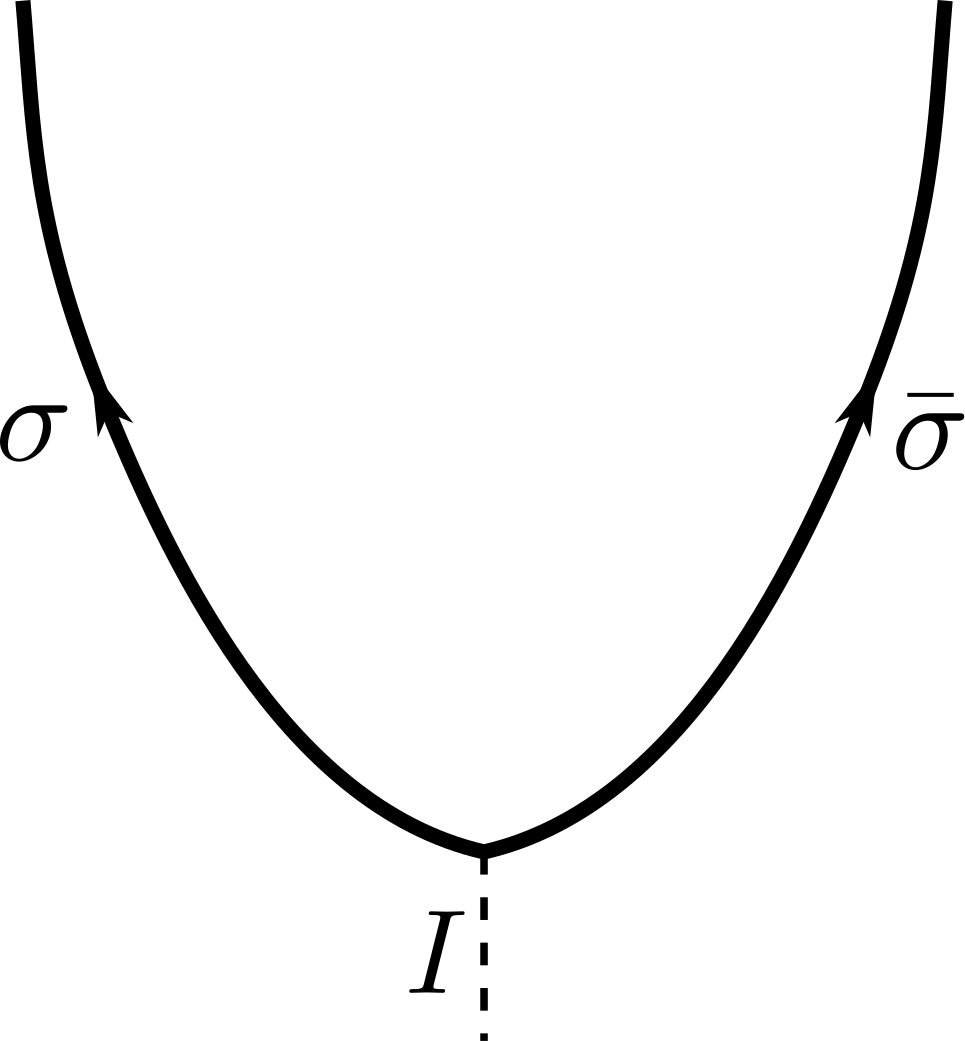},
\end{equation}
where, as before, the LHS describes the evolution of the system and the RHS evaluates the fusion diagram. Not surprisingly, exactly the same $R$-matrix appears in the RHS as in Eq.~\eqref{def:RfromSFC}. Indeed, it is straightforward to show that Eq.~\eqref{eq:SFCdescriptiongame} is equivalent to Eq.~\eqref{def:RfromSFC}, since we have
\begin{equation}\label{eq:SFC2pt-1pt}
	\adjincludegraphics[height=14ex,valign=c]{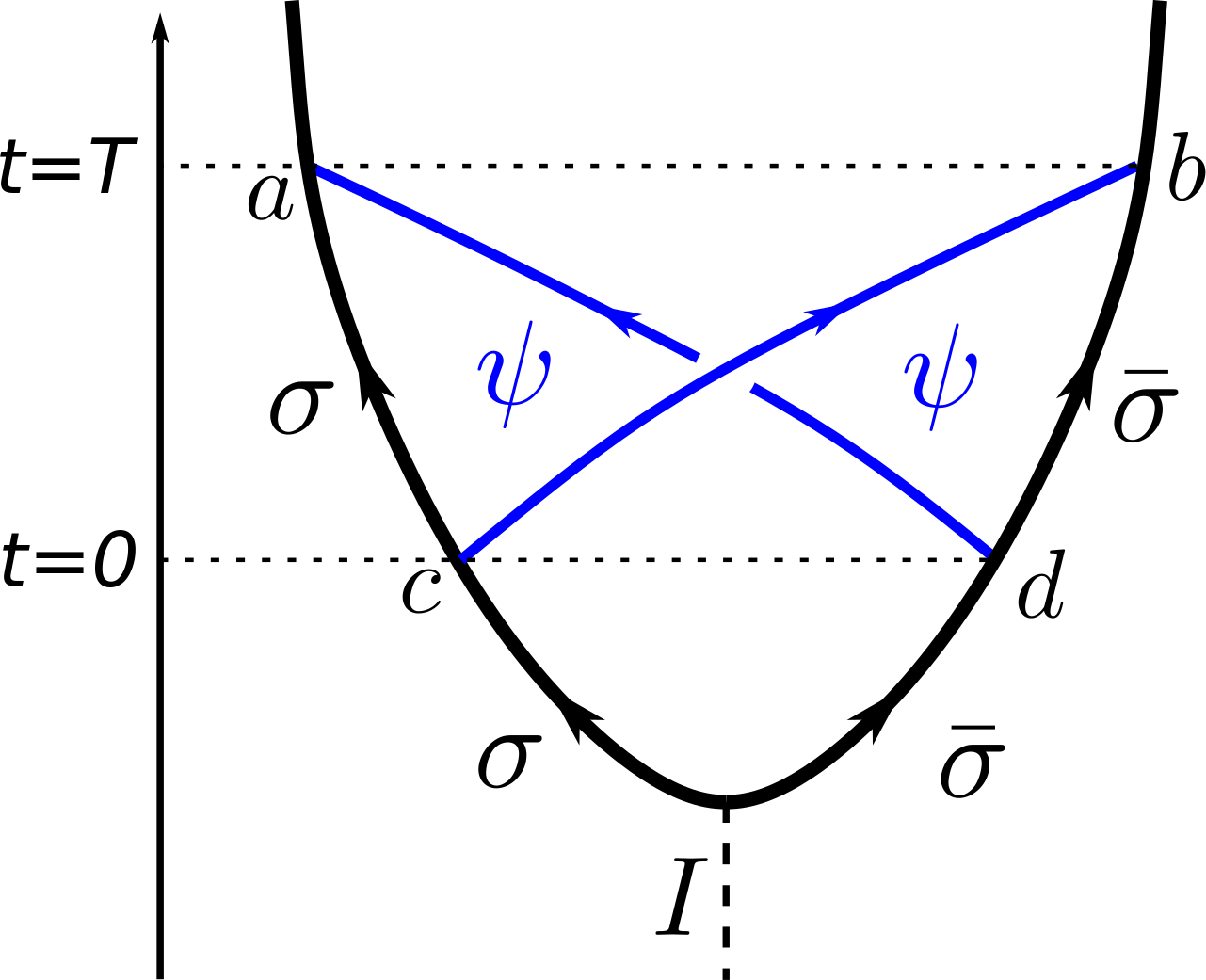}=
	\sum_{e,f}\adjincludegraphics[height=14ex,valign=c]{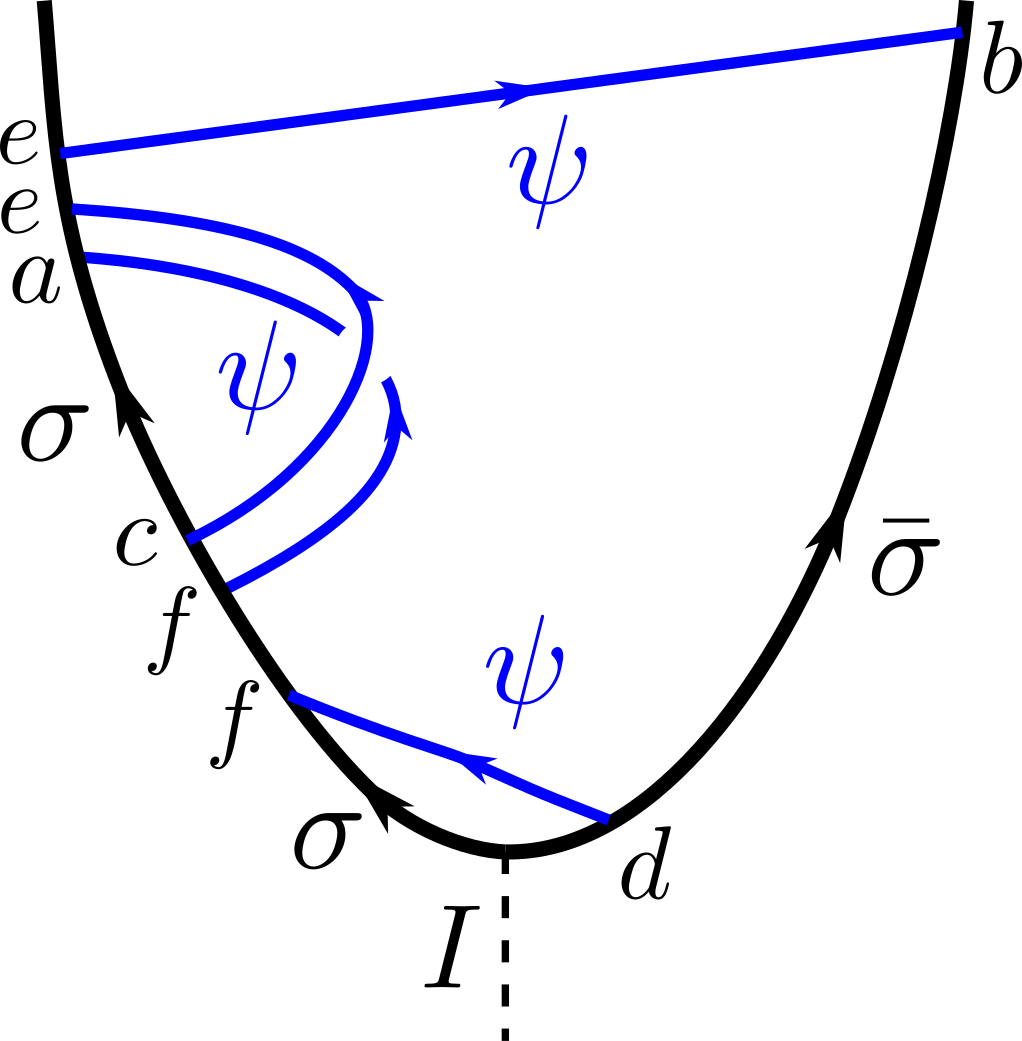}=\adjincludegraphics[height=14ex,valign=c]{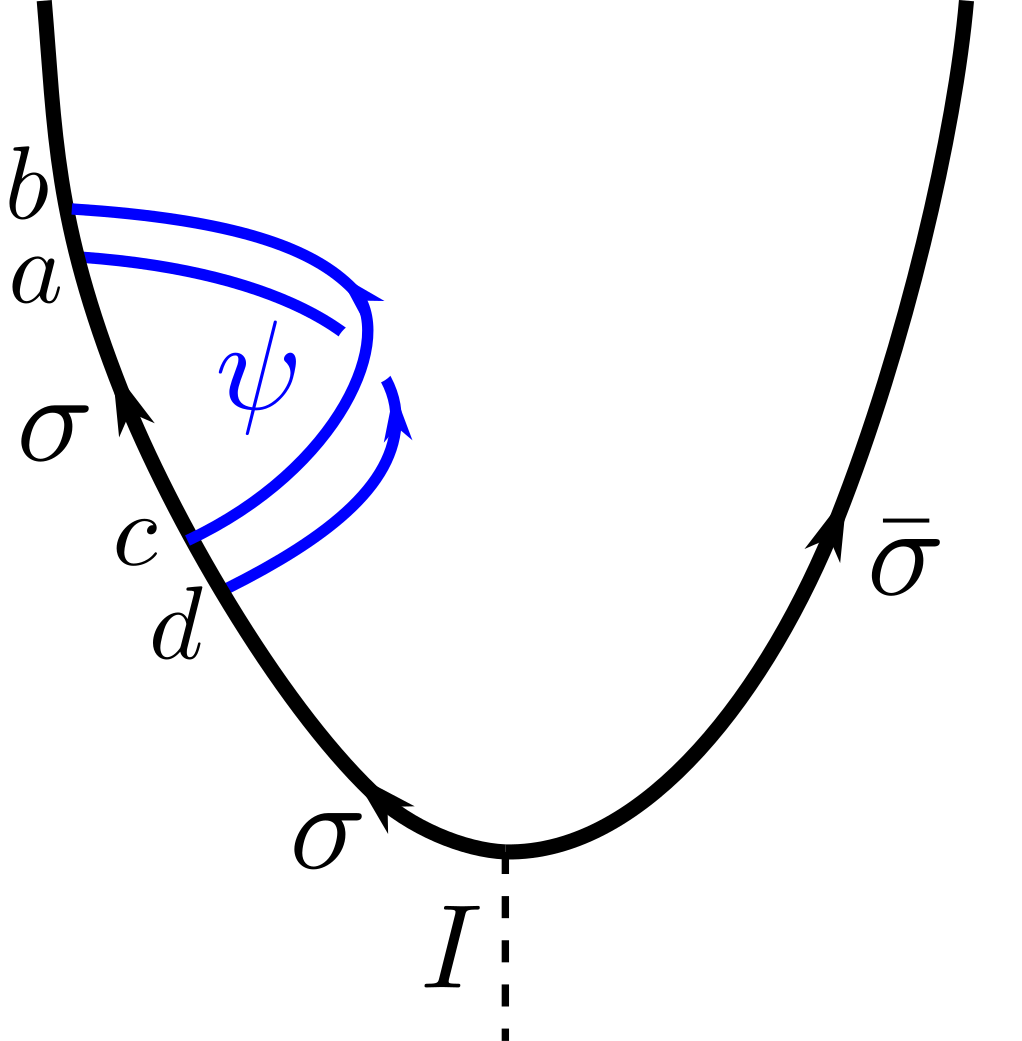},
\end{equation}
where in the last step we used the following convention
\begin{equation}\label{eq:SFC-longstring}
	\adjincludegraphics[height=14ex,valign=c]{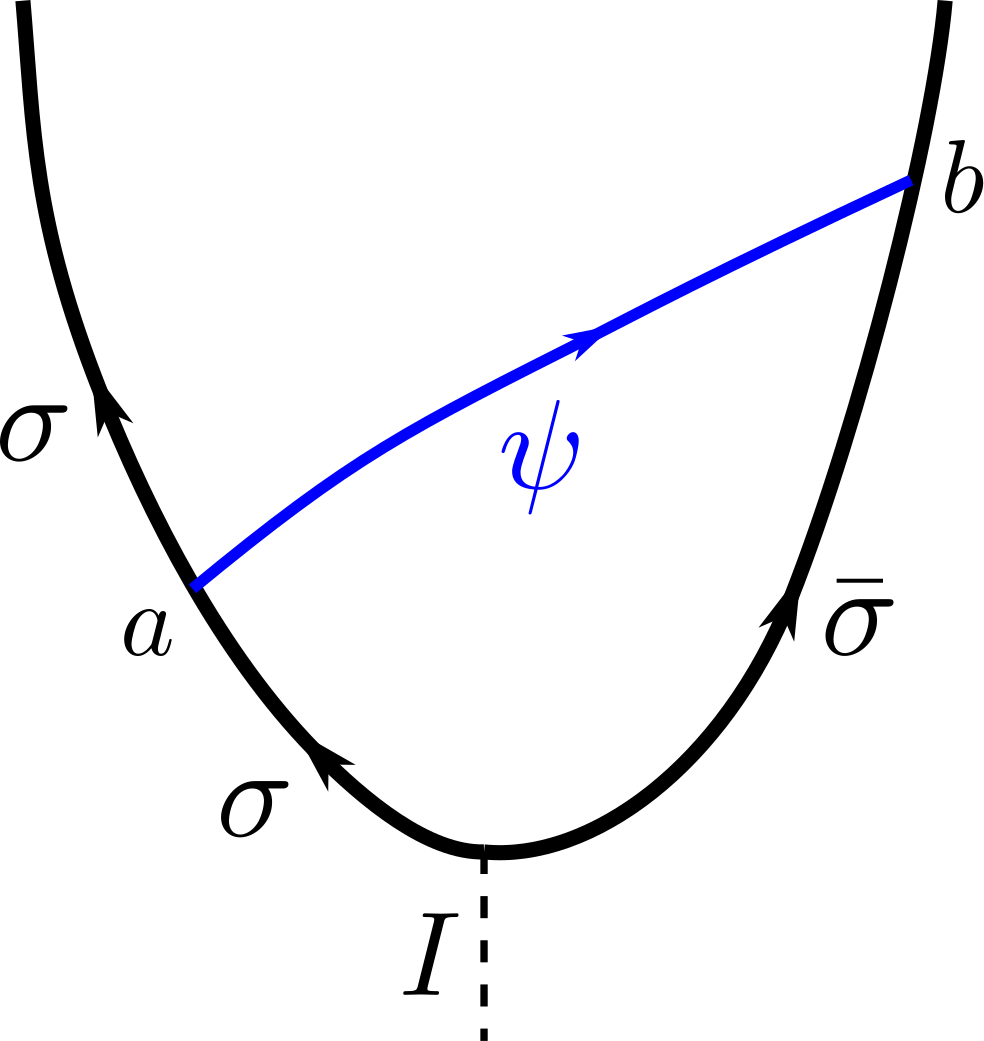}=
	\adjincludegraphics[height=14ex,valign=c]{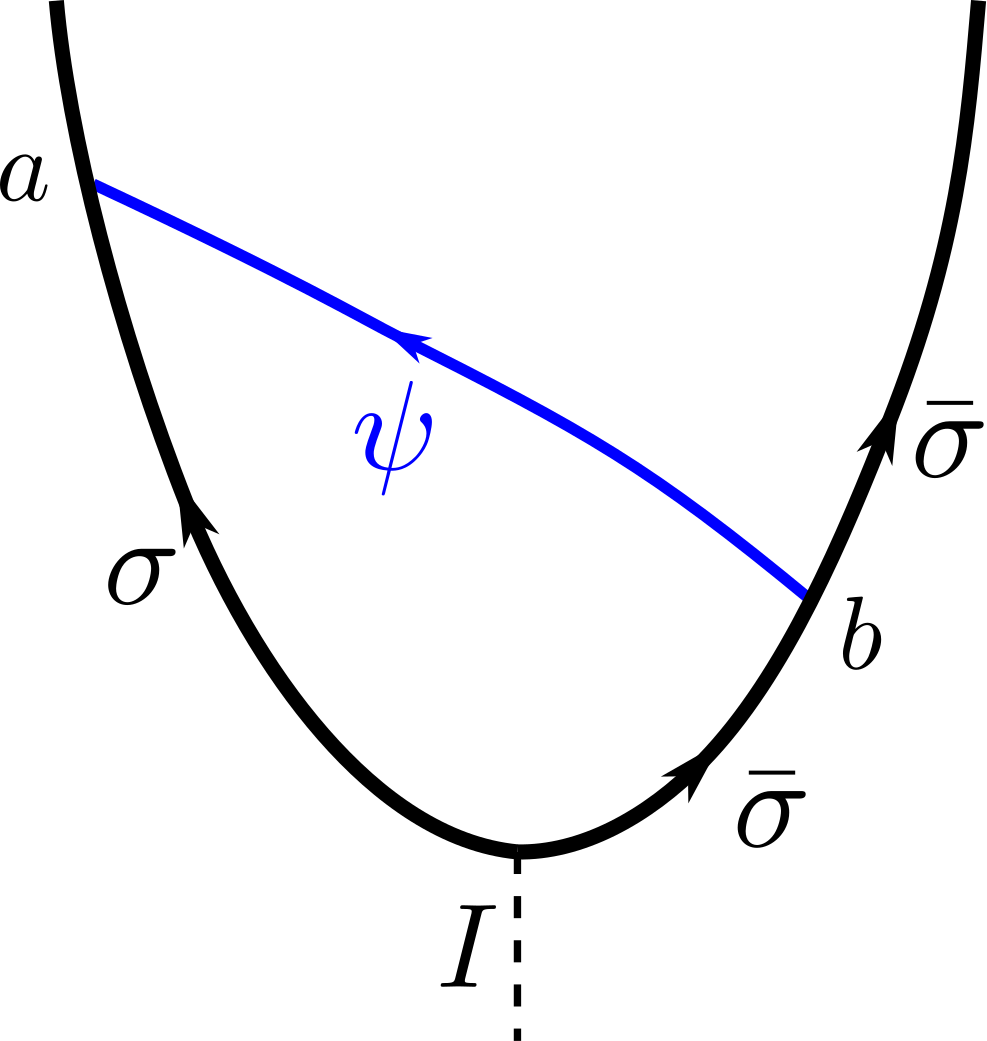}=
	\delta_{ab}\adjincludegraphics[height=14ex,valign=c]{Figures/SFC/SFCR-ssb-RHS.png},
\end{equation}
which can always be guaranteed by choosing a suitable basis for the fusion space of the paraparticle $\psi$ with $\bar{\sigma}$.

\subsubsection{The who-entered-first challenge}
In a similar fashion, the winning strategy for the who-entered-first challenge~(given in Sec.~\ref{sec:winWEF}) is described by the following diagram
\begin{equation}\label{eq:SFCdescriptionWHF}
	\adjincludegraphics[height=18ex,valign=c]{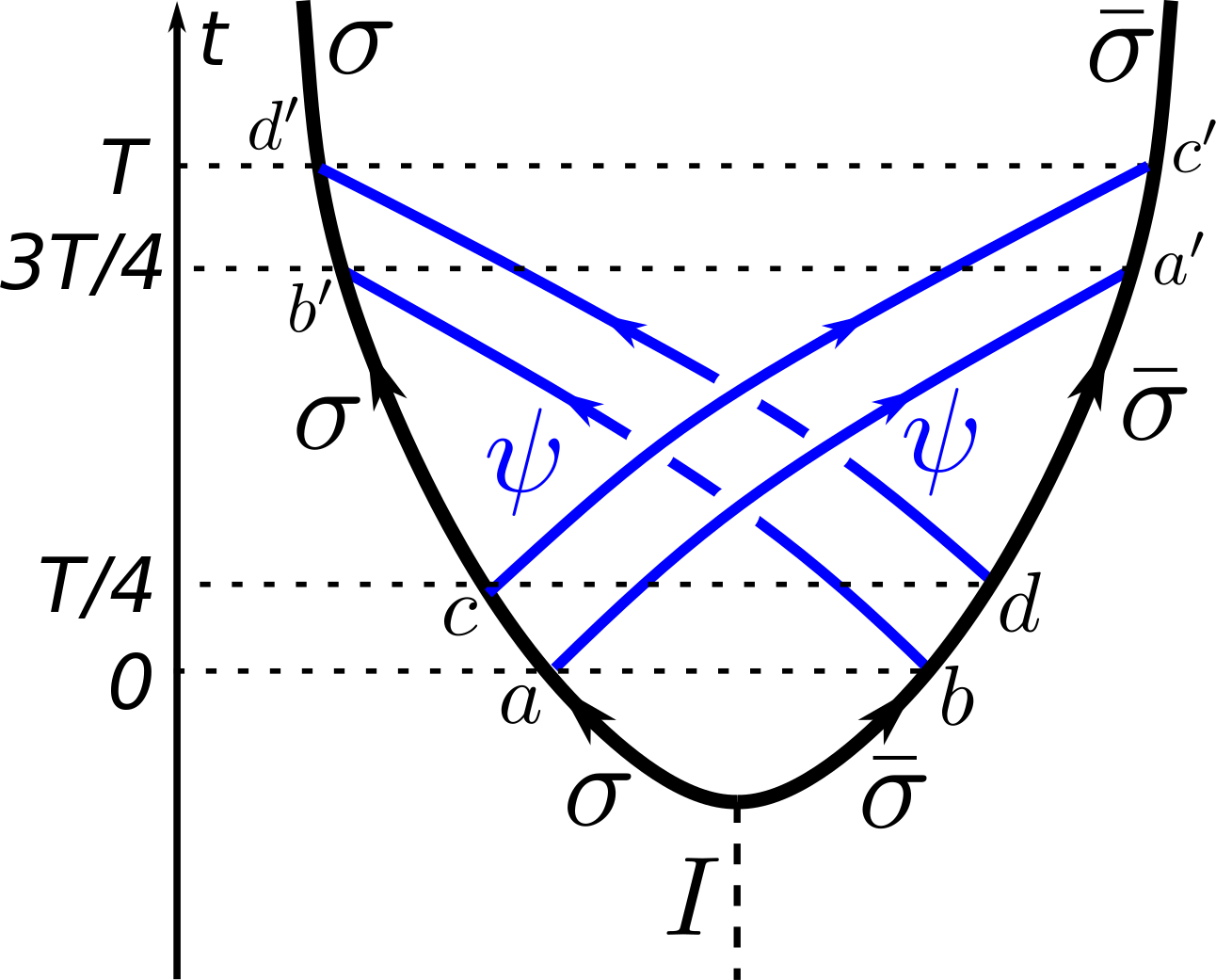}=
	\begin{tikzpicture}[baseline={([yshift=-.8ex]current bounding box.center)}, scale=0.5]
		\Rmatrix{0}{2*\AL}{R}
		\Rmatrix{0}{-2*\AL}{R}
		\Rmatrix{2*\AL}{0}{R}
		\Rmatrix{-2*\AL}{0}{R}
		\node  at (-\AL,-3*\AL) [below=-.04]{\footnotesize $a$};
		\node  at (\AL,-3*\AL) [below=-.04]{\footnotesize $b$};
		\node  at (-3*\AL,-\AL) [below=-.04]{\footnotesize $c$};
		\node  at (3*\AL,-\AL) [below=-.04]{\footnotesize $d$};
		\node  at (\AL,3*\AL) [above=-.04]{\footnotesize $c'$};
		\node  at (-\AL,3*\AL) [above=-.04]{\footnotesize $d'$};
		\node  at (3*\AL,\AL) [above=-.04]{\footnotesize $a'$};
		\node  at (-3*\AL,\AL) [above=-.04]{\footnotesize $b'$};
	\end{tikzpicture}\adjincludegraphics[height=14ex,valign=c]{Figures/SFC/SFCR-ssb-RHS.png},
\end{equation}
where the LHS describes the physical process shown in Fig.~\ref{fig:WHF}, and the RHS is the evaluation of the fusion diagram. 
Here we use a derivation similar to that in Eq.~\eqref{eq:SFC2pt-1pt} and the categorical definition of the $R$-matrix in Eq.~\eqref{def:RfromSFC}. This leads to the same tensor network of $R$-matrices as in Eq.~\eqref{eq:Psi3-WHF}. 

\subsubsection{*Winning strategy for the antiparticle test}\label{sec:win-APtest}
We now describe the winning strategy for the antiparticle test introduced in Sec.~\ref{sec:antiparticle}. 
Since this test is a small twist to the basic challenge, most part of the winning strategy is the same as given in Sec.~\ref{sec:win_2pt}, so in the following we focus the difference.\\ %
(1). At $t=0$, Bob creates a paraparticle $\psi$ in his circle and stores his number in its internal state as before; Alice creates a pair of $\psi$ and $\bar{\psi}$ in the overlapping region of circles A and C, where $\bar{\psi}$ is the antiparticle of $\psi$. 
For the rest of the game, Alice keeps $\psi$ in circle A, while Carol keeps $\bar{\psi}$ in circle C;\\
(2). At $t=t_1$, as shown in Fig.~\ref{fig:Exchange-bulk-APtest-2}, Carol measures the internal state of her paraparticle %
and reports the result $c$ to Alice. Then she %
annihilates her paraparticle and leaves the game;\\
(3). The rest of the strategy is essentially the same as in Sec.~\ref{sec:win_2pt}.
In the end Alice measures the internal state of her particle at $\oB$ and obtains $a'$, and then clean up. \\
The fusion diagram for the entire process is
\begin{equation}\label{eq:SFC-APtest}
	\adjincludegraphics[height=14ex,valign=c]{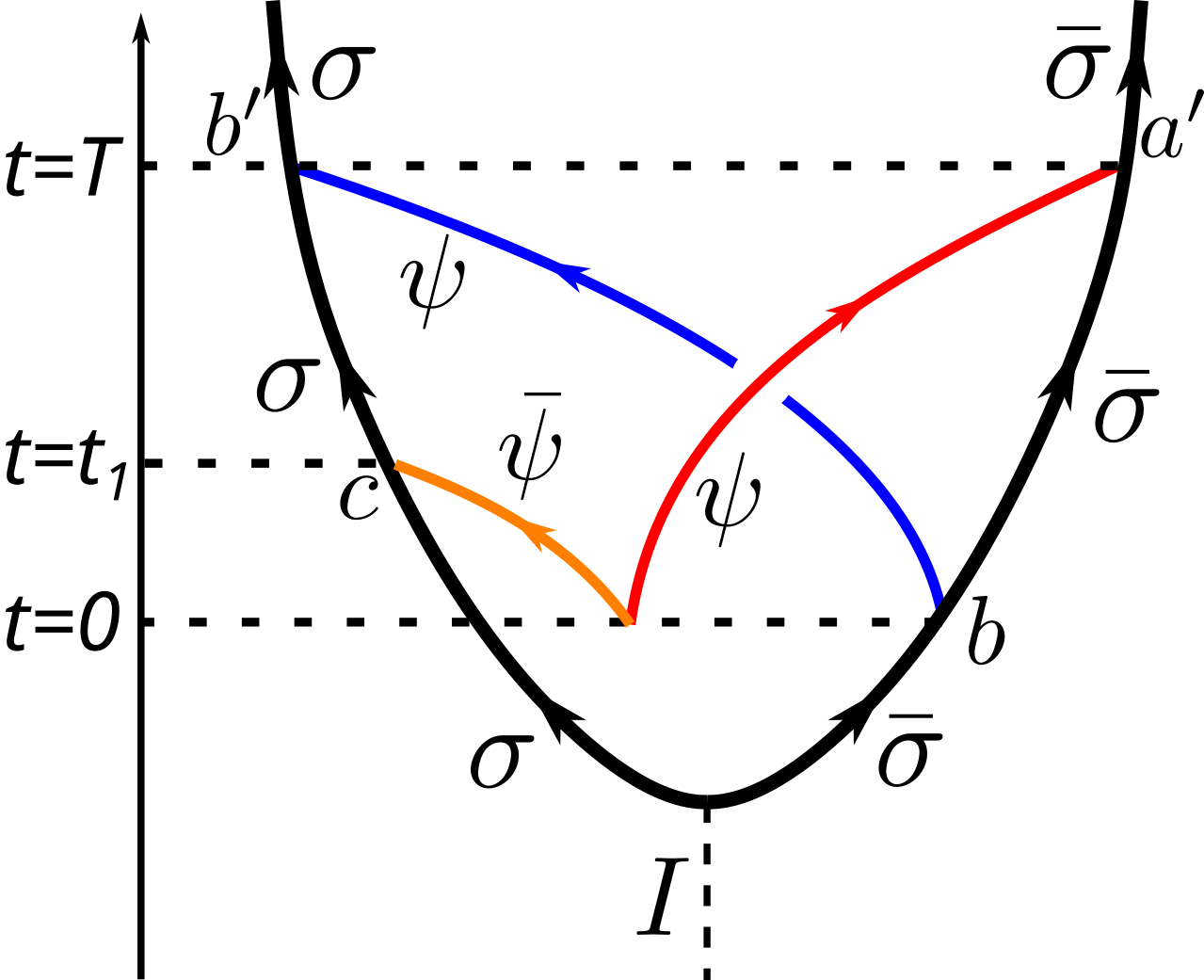}=
	\sum_a\kappa_{ca} R^{b'a'}_{ab}~\adjincludegraphics[height=14ex,valign=c]{Figures/SFC/SFCR-ssb-RHS.png},
\end{equation}
where %
we used the following $F$-move:
\begin{equation}
	\adjincludegraphics[height=8ex,valign=c]{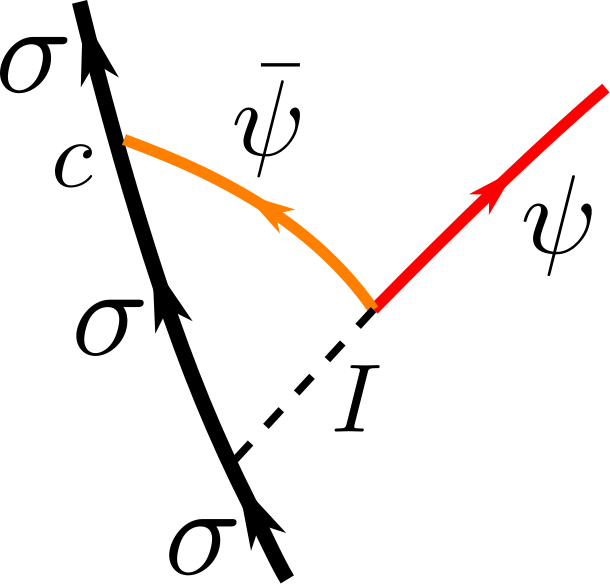}=
	\sum_c [F^{\sigma\bar{\psi}\psi}_\sigma]^*_{\sigma ac;I}~\adjincludegraphics[height=8ex,valign=c]{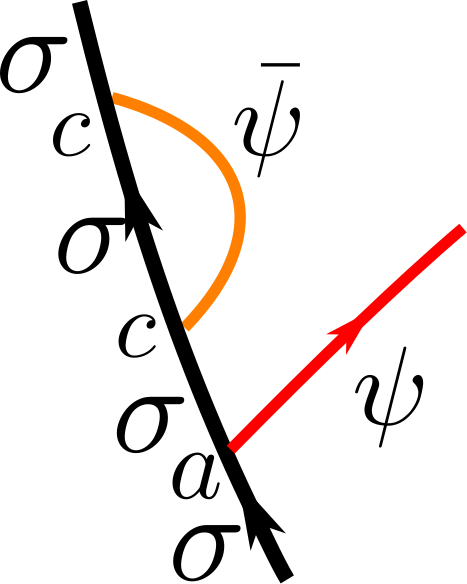}
	\equiv\sum_c \kappa_{ac}\adjincludegraphics[height=8ex,valign=c]{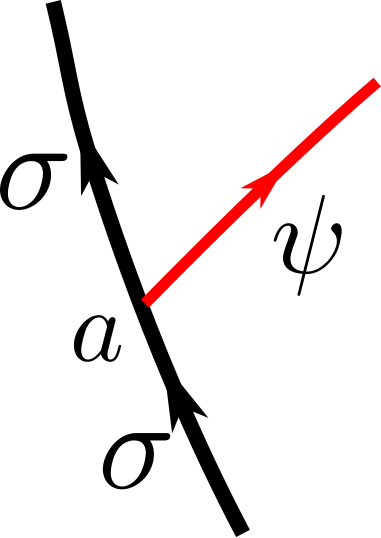}.
\end{equation}
Since the players know $\kappa$ and $R$ beforehand, as long as $R$ is nontrivial, Alice can obtain information about $b$ using $a'$ and $c$, the algorithm is similar to the one for the original challenge. With this strategy, all nontrivial $R$-paraparticles of hierarchies 4 and 5 can win this challenge in a noise-robust way. For hierarchy 3, we can combine the above strategy and the one in Sec.~\ref{sec:phaseSWAP}, to obtain a strategy for this challenge that is robust against local noise but not robust against eavesdropping. For hierarchy 2~(emergent fermions), we also have a fragile strategy for this challenge similar to the one in Sec.~\ref{sec:emergent_fermion}. 
So this test does not affect the main pattern in Tab.~\ref{tab:protocols_strategies}. We will explain why we introduce this test in Sec.~\ref{sec:win_APtest}. 

We mention that it is also possible to perform this analysis using a small extension to the axioms in Sec.~\ref{sec:axioms_emergent_para} involving pair creation of $R$-paraparticles; however,  we will not present these extended axioms in this paper. 

\subsection{Deriving the necessary condition for success}\label{sec:TC_winning_condition}
We assume that point-like excitations in the  3D~(2D) condensed matter system prepared by the players are described by a symmetric~(braided) fusion category $\mathcal{C}$, and derive necessary conditions on $\mathcal{C}$ for the players to be able to win the game. 

\subsubsection{The ground state $\ket{G}$}\label{eq:BFC-groundstate}
We begin by analyzing the ground state  $\ket{G}$ prepared by the players. %
In this section we consider Hamiltonian  of the form $\hat{H}=\hat{H}_0+\hat{h}_{\oA}+\hat{h}'_{\oB}$, where $\hat{H}_0$ is translationally invariant~(defined on a topologically trivial manifold), and  $\hat{h}_{\oA}$, $\hat{h}'_{\oB}$ are local Hermitian operators localized around $\oA$, $\oB$, respectively. Due to the frustration-free and the uniqueness condition on the ground state $\ket{G}$ of $\hat{H}$, %
$\ket{G}$ is locally isomorphic to the ground state $\ket{G_0}$ of $\hat{H}_0$ everywhere away from $\oA$, $\oB$. At the special points  $\oA$ and $\oB$, $\ket{G}$ may potentially have some point-like excitations~\footnote{Here $\ket{G}$ is considered as an excited state of $\hat{H}_0$ above the ground state $\ket{G_0}$. } that are trapped by the local potentials $\hat{h}_{\oA}$, $\hat{h}'_{\oB}$, respectively. Let $\sigma$ and $\sigma'$ be the types of the quasiparticles at $\oA$ and  $\oB$, respectively. In the following we show that both $\sigma$ and $\sigma'$ must be simple such that $\sigma'=\bar{\sigma}$, using the fact that $\ket{G}$ is the unique gapped ground state of the locally interacting Hamiltonian $\hat{H}$. 

Let $\hat{T}_{\oA}$ and $\hat{T}_{\oB}$ be the local observables that measure the particle types at positions $\oA$ and $\oB$, respectively, as defined by Eq.~\eqref{eq:measure_type}. $\hat{T}_{\oA}$ and $\hat{T}_{\oB}$ have the following decomposition
\begin{equation}
	\hat{T}_{\oA}=\sum_{\beta\in\Irr(\calC)} \beta \hat{\Pi}^\beta_\oA,\quad \hat{T}_{\oB}=\sum_{\beta\in\Irr(\calC)} \beta \hat{\Pi}^\beta_\oB,
\end{equation}
where $\hat{\Pi}^\beta_\oA$ is a local projection operator for the particle type $\beta$ at position $\oA$, and similarly for $\hat{\Pi}^\beta_\oB$. The global superselection rule of $\hat{H}_0$ on a closed manifold without boundary requires that all topological excitations in the system must fuse into vacuum, which enforces the following equality for the state $\ket{G}$
\begin{equation}\label{eq:global_superselection}
	\hat{\Pi}^\beta_\oA \hat{\Pi}^\varphi_\oB\ket{G}=0,%
\end{equation}
for all simple types $\beta,\varphi \in\Irr(\calC)$, unless $\beta=\bar{\varphi}$. Since $\ket{G}$ is the unique gapped ground state of the locally interacting Hamiltonian $\hat{H}$, the exponential clustering theorem~\cite{hastings2006,nachtergaele2006} claims that for any local observables $\hat{O}_{\oA}$ and $\hat{O}'_{\oB}$, we have
\begin{equation}
	\braket{G|\hat{O}_{\oA}\hat{O}'_{\oB}|G}=\braket{G|\hat{O}_{\oA}|G}\braket{G|\hat{O}'_{\oB}|G}+O(e^{-d_{\oA\oB}/\xi}),
\end{equation}
where $d_{\oA\oB}$ is the distance between $\oA$ and $\oB$. Now we take $\hat{O}_{\oA}=\hat{\Pi}^\beta_\oA, \hat{O}_{\oB}=\hat{\Pi}^\varphi_\oB$, for some simple types $\beta,\varphi \in\Irr(\calC)$,  and take $d_{\oA\oB}$ to be sufficiently large so that we can ignore the exponentially small correction. We have 
\begin{equation}\label{eq:expclusteringpipi}
	\braket{G|\hat{\Pi}^\beta_\oA\hat{\Pi}^\varphi_\oB|G}=\braket{G|\hat{\Pi}^\beta_\oA|G}\braket{G|\hat{\Pi}^\varphi_\oB|G}.
\end{equation}
The expectation value $\braket{G|\hat{\Pi}^\beta_\oA|G}$ is nonzero whenever $\beta\in\sigma$, i.e. the simple type $\beta$ appears at least once in the decomposition of $\sigma$, and similarly for $\braket{G|\hat{\Pi}^\varphi_\oB|G}$. 
If either $\sigma$ or $\sigma'$ is not simple, we can find $\beta\in\sigma$ and $\varphi\in\sigma'$ such that $\beta\neq\bar{\varphi}$. For such a pair $(\beta,\varphi)$, the RHS of Eq.~\eqref{eq:expclusteringpipi} is nonzero while the LHS is zero, a contradiction. 
In summary, in the ground state $\ket{G}$ submitted by the players, the particle types $\sigma$ and $\sigma'$ at the two special points $\oA$ and $\oB$ must both be simple, with $\sigma'=\bar{\sigma}$, such that they fuse into the vacuum.  Such a state can be prepared from the translationally invariant state $\ket{G_0}$~(which has no quasiparticles anywhere) by creating a pair $\sigma \bar{\sigma}$ at some point using Eq.~\eqref{eq:pair_creation} and then move $\sigma$ to $\oA$ and $\bar{\sigma}$ to $\oB$. 
\subsubsection{The necessary condition for bidirectional secret communication}\label{sec:deriv_nece_cond}
\begin{figure}
	\begin{subfigure}[t]{.48\linewidth}
		\centering\includegraphics[width=.7\linewidth]{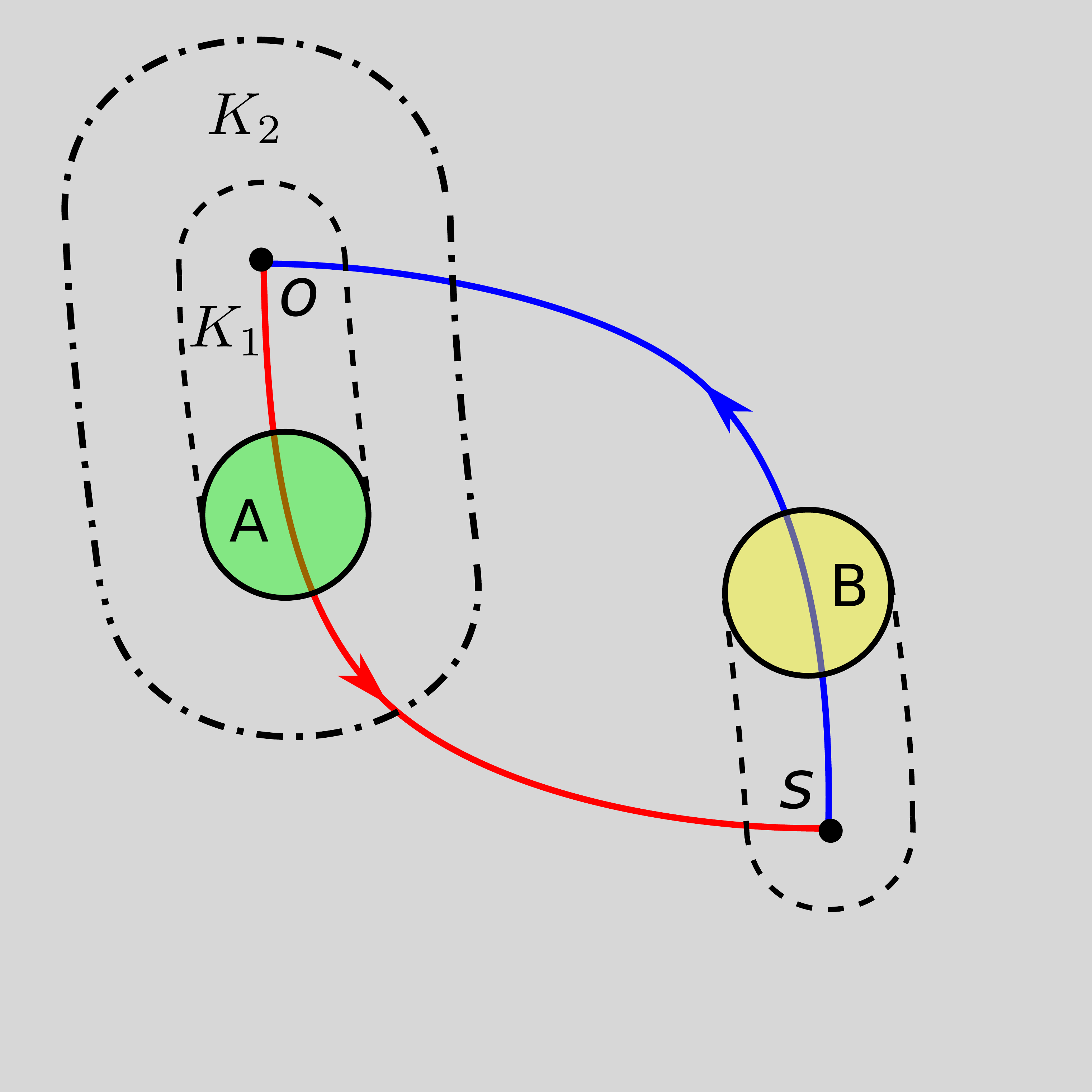} 
		\caption{\label{fig:ExchangeBootstrap-T-4} $t=t_1$}%
	\end{subfigure}
	\begin{subfigure}[t]{.48\linewidth}
		\centering\includegraphics[width=.7\linewidth]{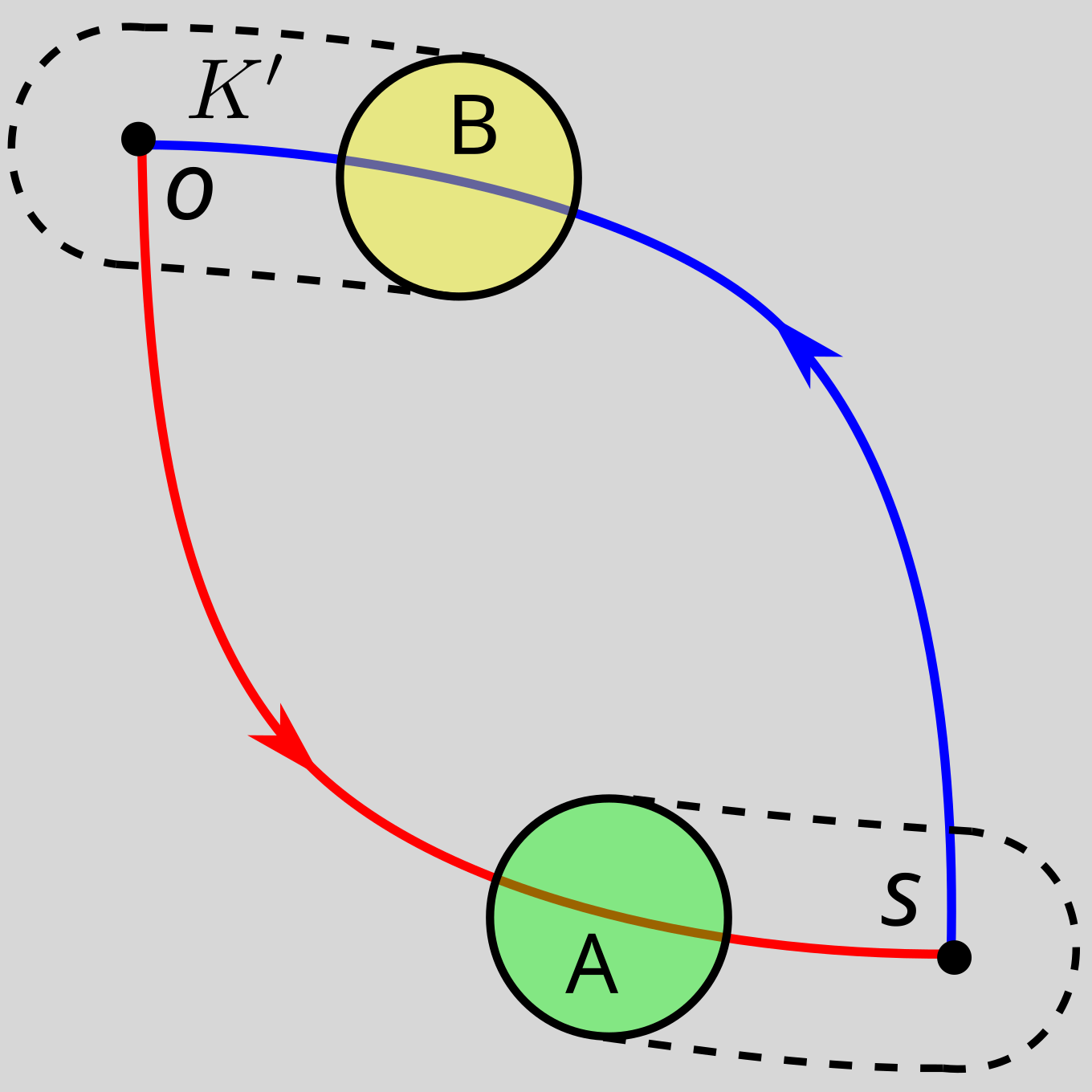} 
		\caption{\label{fig:ExchangeBootstrap-3T-4} $t=t_2$}
	\end{subfigure}
	\caption{\label{fig:ExchangeBootstrap} Analyzing non-local information encoding in the game process. %
		(a) definition of regions $K_1$ and $K_2$ at $t=t_1$; (b) definition of region $K'$ at $t=t_2$.}
\end{figure}
We first derive the necessary conditions for Alice to be able to send some information to Bob under the game rules, which is necessary for winning, and later we swap their roles to obtain the necessary condition for bidirectional communication. 
To this end, we analyze the quantum state of the physical system at each stage of the game process, taking into account all possible allowed actions of the players.  %
Before the game starts, the state of the system is initialized to be the ground state $\ket{G}$ prepared by the players, which has simple quasiparticles $\sigma$ and $\bar{\sigma}$ at $\oA$ and $\oB$, respectively, as we argued above. %
Let $\hat{U}_A(t)$ and $\hat{U}_B(t)$ be the local unitary operations performed by Alice and Bob at time $t\in[0,T]$, which act inside circles A and B, respectively.  
Let $\rho(t)$ be the quantum state of the physical system at time $t$, which can be a mixed state in general, and  
let $\rho_A(t)=\mathrm{Tr}_{\overline{A(t)}}[\rho(t)]$ be the reduced density matrix of the system inside circle A at time $t$, where $A(t)$ is the set of points inside circle A and $\overline{A(t)}$ is its complement, and $\rho_B(t)$ is defined similarly.
Therefore, all the influence Alice can make on the physical system is through the set of local operations $\{U_A(t)\}_{0\leq t\leq T}$, which is subject to the constraint that no excitations are left beyond the circle areas at any time, and all the information Bob can extract from the system are contained in the set $\{\rho_B(t)\}_{0\leq t\leq T}$. To make it possible for Alice to send any information to Bob, the set  $\{\rho_B(t)\}_{0\leq t\leq T}$ must have a nontrivial dependence on the set $\{U_A(t)\}_{0\leq t\leq T}$. 
Let $\vec{r}_A(t)$ and $\vec{r}_B(t)$ be the centers of circles A and B at time $t$, respectively. It is straightforward to prove the following facts using locality:\\
\begin{fact}\label{fact:locality}
Consider two time points $t_1,t_2$~(satisfying $0<t_1<t_2<T$) when both circles are far away from $\oA$ and $\oB$, e.g., $t_1=T/4$ and $t_2=3T/4$, as shown in Fig.~\ref{fig:ExchangeBootstrap}. Then we have:\\ 
	(1) For any $t\in[0,T]$, $\rho_B(t)$ does not depend on $\{U_A(t)\}_{t_1\leq t\leq T}$; \\%
	(2) For any $t\in[0,t_2]$, $\rho_B(t)$ does not depend on $\{U_A(t)\}_{0\leq t\leq T}$. %
\end{fact}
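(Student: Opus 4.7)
The plan is to reduce both parts to two standard physical principles and then exploit the spatial separation of the two circles. The first principle is \emph{causality}: since $\rho(t)$ is constructed by time-ordered application of operations and Hamiltonian evolution, it depends only on events at times $s\leq t$. The second principle is \emph{locality}: a unitary $\hat U$ supported in a region $R_1$ obeys $\mathrm{Tr}_{\bar R_2}[\hat U\rho\hat U^\dagger]=\mathrm{Tr}_{\bar R_2}[\rho]$ whenever $R_1\cap R_2=\emptyset$, which is just cyclicity of the partial trace. The geometric input supplied by the game rules is that circles $A$ and $B$ remain well-separated throughout the entire game---not merely on $[t_1,t_2]$---by a distance $d_0$ that can be taken much larger than the correlation length $\xi$ of $\ket G$.

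I would then work in the Heisenberg picture. Fix any local observable $\hat O_B$ supported in $B(t)$, and let $\hat O^H(s)$ denote its Heisenberg-evolved image back to time $s$ under the combined action of $\hat H$, Bob's operations, and Alice's operations at times other than $s$. It then suffices to show $[\hat U_A(s),\hat O^H(s)]=0$ for every Alice operation $\hat U_A(s)$ supported in $A(s)$ under consideration. Because the intervening dynamics is generated by a local Hamiltonian in a gapped phase and the players' operations stay confined to their moving circles, Lieb--Robinson-type bounds imply that $\hat O^H(s)$ remains supported in a small neighborhood of $B$'s trajectory, which lies at distance at least $d_0$ from $A(s)$; the required commutator therefore vanishes up to tails exponentially small in $d_0/\xi$.

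With this commutation lemma in hand, both parts follow by case analysis. For part (1), fix $s\in[t_1,T]$: if $t<s$ the result is pure causality, since operations at time $s$ simply have not happened yet; if $t\geq s$ we invoke the commutation step above, which applies because $A(s)$ is far from $B$'s trajectory at all times $s'\in[s,t]$. For part (2), fix any $s\in[0,T]$ and $t\in[0,t_2]$: the case $s>t$ is again causality, while for $s\leq t\leq t_2$ the same commutation argument applies, using that at any early time $s\in[0,t_1]$ Alice's operation sits near $\oA$ whereas $B$'s trajectory is chosen to stay far from $\oA$ throughout.

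The main obstacle I anticipate is promoting the ``effectively vanishing commutator'' to an exact equality. Lieb--Robinson gives only exponentially small tails, and the ground state $\ket G$ itself already carries exponentially small long-range correlations between $\oA$ and $\oB$ mediated by the $\sigma\bar\sigma$ pair. A fully rigorous version of Fact~\ref{fact:locality} should therefore be read as ``independent up to corrections exponentially small in $d_0/\xi$, vanishing in the large-separation limit''. This is consistent with the rigor caveat already flagged at the start of Sec.~\ref{sec:categorical_analysis}, and these corrections should propagate cleanly through the subsequent categorical analysis without altering the topological conclusions.
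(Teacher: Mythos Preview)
Your strategy---causality for $s>t$, a Heisenberg-picture commutation argument for $s\le t$---is correct and is precisely what the paper means by its one-line ``using locality''; the paper offers no further detail, so your elaboration already goes beyond it.

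One substantive over-complication, though: in the game analyzed here there is no Hamiltonian time evolution between the players' operations (contrast the separate adiabatic variant in App.~\ref{app:adiabatic}); the state changes solely through the strictly local unitaries $\hat U_A(s),\hat U_B(s)$, each supported exactly in its circle. Lieb--Robinson bounds are therefore unnecessary. A straightforward backward induction shows that $\hat O^H(s)$ is supported \emph{exactly} in Bob's swept region $\bigcup_{s'\in[s,t]}B(s')$, since each intervening $\hat U_A(s'')$ has support disjoint from this set and hence commutes on the nose. Consequently $[\hat U_A(s),\hat O^H(s)]=0$ exactly, and the Fact is an exact statement, not an approximate one. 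Your final-paragraph worry about ground-state correlations is also misplaced: the Heisenberg argument is an operator identity and does not see the initial state at all.

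A minor geometric imprecision: in part~(2) you justify $s\in[0,t_1]$ by saying ``Alice sits near $\oA$,'' but by hypothesis at $t_1$ Alice is already far from $\oA$. The clean input is that the two paths (as drawn in Fig.~\ref{fig:ExchangeBootstrap}) are spatially disjoint except at the shared endpoints $\oA,\oB$, and for $t\le t_2$ Bob's circle has not yet reached $\oA$; hence $A(s)$ is disjoint from $\bigcup_{s'\le t_2}B(s')$ for all $s$. The subcase $s\in[t_1,t_2]$ is in any case already covered by part~(1).
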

In other words, any local operation Alice performs after $t=t_1$ cannot influence Bob's measurements, and any local measurement Bob performs before $t_2$ does not depend on Alice's operations. %
Therefore, if Alice can send her number $a$ to Bob by encoding it in $\{U_A(t)\}_{0\leq t\leq T}$, then $a$ must already be encoded in $\{U_A(t)\}_{0\leq t\leq t_1}$, and 
$\{\rho_B(t)\}_{t_2\leq t\leq T}$ must have a nontrivial dependence on $\{U_A(t)\}_{0\leq t\leq t_1}$, such that Bob can decode the number $a$ from $\{\rho_B(t)\}_{t_2\leq t\leq T}$ using his knowledge about the physical system along with knowledge about all the local operations he has done $\{U_B(t)\}_{0\leq t\leq T}$. 
In summary, at $t=t_1$, the number $a$ must already be encoded in the physical system. The key question is precisely where this information is stored. 

Let us analyze in detail the quantum state of the system at $t=t_1$. As shown in Fig.~\ref{fig:ExchangeBootstrap-T-4}, let $K_1=\cup_{0\leq t\leq t_1}A(t)$ be the region~(enclosed by the dashed curve) traversed by circle A from $t=0$ up to $t=t_1$. This is the region where $\{U_A(t)\}_{0\leq t\leq t_1}$ act on. Let $K_2$ be the annulus region enclosing $K_1$~(bounded by the dashed and the dot-dashed curves), and the width of $K_2$~(denoted by  $d_{K_2}$) is taken to be much larger than the correlation length $\xi$. Let $K_{12}=K_1\cup K_2$, and let $\overline{K_{12}}$ be its complement.
In the following we will argue that at $t=t_1$, %
the number $a$ must be encoded in the reduced density matrix $\rho_{K_{12}}$ of the region $K_{12}$~\footnote{It may be tempting to say that $a$ must be stored in the region $K_1$ given that $\{U_A(t)\}_{0\leq t\leq t_1}$ only acts on $K_1$. However, this is not true in general if the initial state has strong entanglement between $K_1$ and its complement. For example, one can encode a classical bit in the GHZ state using a local phase gate, but such information cannot be locally decoded from the reduced density matrix of any proper subsystem.}.

For gapped ground states, it is generally expected that $I(K_1, \overline{K_{12}})\leq C e^{-d_{K_2}/\xi}$ due to the exponential clustering theorem~\cite{hastings2006,nachtergaele2006}. Although a rigorous proof of the exponential decay of mutual information in gapped ground states is not yet known in the literature, we assume it to be true in this paper. We now use the important lemma that follows from Ref.~\onlinecite{petz2003recovery}: %
\begin{lemma}\label{thm:petz}
	Consider a quantum system consisting of three non-overlapping subsystems $A,B$ and $C$. Let $\rho$ and $\rho'$ be two mixed states of this system satisfying $\rho_{AB}=\rho'_{AB},\rho_{BC}=\rho'_{BC}$, and $I(A:C|B)_{\rho}=I(A:C|B)_{\rho'}=0$, where
	\begin{equation}
		I(A:C|B)_{\rho}=S(\rho_{AB})+S(\rho_{BC})-S(\rho_{ABC})-S(\rho_{B}),
	\end{equation}
	is the condition mutual information. Then $\rho=\rho'$. 
\end{lemma}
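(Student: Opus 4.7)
The plan is to invoke the Petz recovery theorem, which characterizes states with vanishing conditional mutual information as precisely those that can be reconstructed from their marginal on $AB$ by a quantum channel depending only on the marginal on $BC$. Concretely, for any density operator $\omega$ on $BC$ with marginal $\omega_B$, I would define the Petz recovery map
\begin{equation}
\mathcal{R}^{\omega}_{B\to BC}(X_B) = \omega^{1/2}\bigl(\omega_B^{-1/2} X_B \omega_B^{-1/2}\otimes \mathds{1}_C\bigr)\omega^{1/2},
\end{equation}
where the inverses are taken in the Moore--Penrose sense on the support of $\omega_B$. Petz's theorem then asserts
\begin{equation}
I(A{:}C|B)_\rho = 0 \iff \rho = (\mathrm{id}_A\otimes \mathcal{R}^{\rho_{BC}}_{B\to BC})(\rho_{AB}),
\end{equation}
and symmetrically with the roles of $A$ and $C$ swapped.

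Given this tool, the proof collapses to a single observation. I would apply the ``only if'' direction of Petz's theorem to both states, obtaining
\begin{equation}
\rho = (\mathrm{id}_A\otimes \mathcal{R}^{\rho_{BC}}_{B\to BC})(\rho_{AB}), \qquad \rho' = (\mathrm{id}_A\otimes \mathcal{R}^{\rho'_{BC}}_{B\to BC})(\rho'_{AB}).
\end{equation}
The hypothesis $\rho_{BC} = \rho'_{BC}$ forces the two recovery maps to coincide as linear maps, while $\rho_{AB} = \rho'_{AB}$ means they are applied to the same input. Hence the right-hand sides are equal and $\rho = \rho'$.

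The only substantive ingredient is Petz's theorem itself, which is the main obstacle to a fully self-contained argument. If one prefers to avoid citing it, the alternative route I would take is through the Hayden--Jozsa--Petz--Winter structure theorem for short quantum Markov chains: the condition $I(A{:}C|B)_\rho = 0$ forces a block decomposition $\mathcal{H}_B = \bigoplus_j \mathcal{H}_{B_j^L}\otimes \mathcal{H}_{B_j^R}$, together with probabilities $p_j$ and states $\rho^j_{AB_j^L}$, $\rho^j_{B_j^R C}$, such that $\rho = \bigoplus_j p_j\, \rho^j_{AB_j^L}\otimes \rho^j_{B_j^R C}$. From this explicit form one reads off that $\rho_{AB}$ and $\rho_{BC}$ jointly determine the decomposition data $(p_j, \rho^j_{AB_j^L}, \rho^j_{B_j^R C})$ uniquely, and hence determine $\rho$. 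Applying the same structure theorem to $\rho'$ and using the identity of the two pairs of marginals then forces $\rho = \rho'$. Either route reduces the lemma to a known characterization of quantum Markov chains; the ``active'' content of the proof is entirely absorbed into that characterization.
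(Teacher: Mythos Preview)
Your proof is correct and matches the paper's approach: the paper does not give its own proof but simply states that the lemma ``follows from'' Petz's recovery theorem (the reference cited there), which is exactly the tool you invoke. Your argument spelling out how the recovery map depends only on $\rho_{BC}$ and is applied to $\rho_{AB}$, together with the alternative via the Hayden--Jozsa--Petz--Winter structure theorem, is a faithful unpacking of that citation.
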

At $t=0$, the initial state of the system  $\rho(0)=\ket{G}\bra{G}$ is a pure state, so we have $I(K_1, \overline{K_{12}}|K_{2})_{\rho(0)}=I(K_1, \overline{K_{12}})_{\rho(0)}\approx 0$ for $d_{K_2}\gg \xi$. Since 
$\{U_A(t)\}_{0\leq t\leq t_1}$ only acts on the region $K_1$, we have $I(K_1, \overline{K_{12}}|K_{2})_{\rho(t)}=I(K_1, \overline{K_{12}}|K_{2})_{\rho(0)}\approx 0$~[more generally, if $\{U_A(t)\}_{0\leq t\leq t_1}$ are allowed to be local quantum channels acting on $K_1$, we still have $I(K_1, \overline{K_{12}}|K_{2})_{\rho(t)}\leq I(K_1, \overline{K_{12}}|K_{2})_{\rho(0)}\approx 0$ due to the data processing inequality~\cite{nielsenQuantumComputationQuantum2010}]. 
Now consider how the reduced density matrices $\rho_{K_{12}}(t),\rho_{\overline{K_{1}}}(t)$, and $\rho(t)$ changes as the encoded number $a$ varies~(here $\overline{K_{1}}$ is the complement of $K_1$). It is clear that $\rho_{\overline{K_{1}}}(t)$ is independent of $a$, since $\{U_A(t)\}_{0\leq t\leq t_1}$ only acts on $K_1$. If $\rho_{K_{12}}(t)$ is also independent of $a$, then $\rho(t)$ must also be independent of $a$ due to Lemma~\ref{thm:petz}. Therefore, given that $a$ is encoded in $\rho(t)$, it must be  encoded in $\rho_{K_{12}}(t)$. 
In the following we extend this line of thought and show that $a$ can only be stored in the topological degree of freedom in $\rho_{K_{12}}(t)$. 

Let $\psi_1(t)$ and $\psi_2(t)$ be the total fusion channel of all quasiparticles in circle A and B at time $t$, respectively. The requirement to pass the identical particle test requires that both  $\psi_1(t)$ and $\psi_2(t)$ must be simple, and $\psi_1(t)=\psi_2(t)$, see Sec.~\ref{sec:passingIPT}. 
It is then clear that during the time interval $t\in[t_1,t_2]$, $\psi_1(t)$ must be constant in time~(which we denote by $\psi\in\calC$), since for $t\in[t_1,t_2]$, circle A is away from $\sigma$ and $\bar{\sigma}$, and local operations inside circle A cannot change the total fusion channel. 

At time $t=t_1$, the total fusion channel in the region $K_{12}$ must be equal to $\sigma$, since $\sigma$ was the total fusion channel in $K_{12}$ before the game begins, and 
the local operations $\{U_A(t)\}_{0\leq t\leq t_1}$
performed  inside this region  cannot change its total fusion channel. This implies that $\calC$ must have a fusion rule of the form 
\begin{equation}\label{eq:sigmapsifusion-extra1}
	\sigma \times \psi=m~\sigma+\beta,
\end{equation}
where $m\geq 1$ is the fusion multiplicity and $\beta$ denotes the sum of all other particle types in the RHS not equal to $\sigma$. Later we will see in Sec.~\ref{sec:win_APtest} that passing the antiparticle test requires $\beta=0$. This leads to the fusion rule in Eq.~\eqref{eq:sigmapsifusion-0}. The proof in App.~\ref{app:proofdualFrule} then shows that 
$\calC$ must also have a fusion  rule in Eq.~\eqref{eq:sigmapsifusion-1}. 

We now argue that the number $a$ can only be encoded in the fusion space $V^\sigma_{\sigma\psi}$. Indeed, consider how 
$\rho_{K_{12}}(t)$ changes when $a$ varies. It is clear that the reduced density matrix on any $O(1)$-sized subregion of $K_{12}$ %
cannot change when $a$ is varied--otherwise, an eavesdropper can obtain information about $a$ via local measurement.  %
This means that as the number $a$ varies, the reduced density matrix $\rho_{K_{12}}(t)$ for different values of $a$
must belong to the same 
information convex set~\cite{Shi2020}~[defined above Eq.~\eqref{eq:structureinfoconvex}] $\Sigma_{\rho(t)}(K_{12})$ on the region $K_{12}$. 
Given that there exists topological quasiparticles $\sigma,\psi$ in the region $K_{12}$ with total fusion channel $\sigma$, Eq.~\eqref{eq:structureinfoconvex} now becomes
\begin{equation}\label{eq:structureinfoconvexK12}
	\Sigma_{\rho(t)}(K_{12})\cong \calS(V^\sigma_{\sigma\psi}),%
\end{equation}
Therefore, given that the number $a$ is encoded in $\rho_{K_{12}}(t)\in \Sigma_{\rho(t)}(K_{12})$,  Eq.~\eqref{eq:structureinfoconvex} implies that $a$ can only be stored in the fusion space $V^\sigma_{\sigma\psi}$. 

We now consider the quantum state of the system at $t=t_2$, as shown in Fig.~\ref{fig:ExchangeBootstrap-3T-4}. At this point, Bob still has no knowledge about the number $a$, due to Fact.~\ref{fact:locality}. Therefore, if Bob  eventually knows the value of $a$ after the game ends, he must be able to gain useful information about $a$ through his measurement during $t\in[t_2,T]$. %
It is straightforward to see that any useful information Bob can obtain during   $t\in[t_2,T]$ must already be stored in the local reduced density matrix $\rho_{K'}(t)$, where $K'$ is the region enclosed by the dashed curve in Fig.~\ref{fig:ExchangeBootstrap-3T-4}. 
More precisely, %
when the set $\{U_B(t)\}_{0\leq t\leq t_2}$ is fixed and the value of $a$ varies, $\rho_{K'}(t_2)$ must have a nontrivial dependence on $a$. 
Following a similar argument as before, we conclude that at $t=t_2$, the useful information about $a$ that Bob can eventually learn must be stored in the fusion space $V_{\sigma\psi}^\sigma$. %

In summary, the above analysis shows that any winning strategy to the game must have the following two essential steps~(note that the winning condition of the game requires bidirectional communication, and the necessary condition for Bob to be able to send information to Alice is obtained by swapping the roles of Alice and Bob in the above analysis):\\
(1). At early game $t\in[0,t_1]$, Alice encodes $a$ into the fusion space $V_{\sigma\psi}^\sigma$, while Bob  encodes $b$ into the fusion space $V_{\psi\bar{\sigma}}^{\bar{\sigma}}$;\\
(2). At late game $t\in[t_2,T]$, Alice decodes information from the fusion space $V_{\psi\bar{\sigma}}^{\bar{\sigma}}$, while Bob decodes information from the fusion space $V_{\sigma\psi}^\sigma$. \\
The space-time diagram of the whole process is given in the LHS of  Eq.~\eqref{eq:SFCdescriptiongame}. We can now use a similar argument given in Sec.~\ref{sec:win_2pt} to show that the strategy can succeed if and only if the $R$-matrix in the RHS of Eq.~\eqref{eq:SFCdescriptiongame} is nontrivial. This shows that the winning strategy given in Sec.~\ref{sec:win_2pt} and Sec.~\ref{sec:win2pt-SFC} is essentially unique, up to a potentially different encoding and decoding algorithm, and up to adding or removing pointless operations during the game~(e.g. creating and annihilating local excitations in the circle for no purpose, or making a measurement at $t=0$ that gives no useful information). %

\subsubsection{Passing the identical particle test}\label{sec:passingIPT}
Let $\psi_1$ and $\psi_2$ be the total fusion channel of all quasiparticles in circle A and B at some time $t$, respectively. 
Suppose that an identical particle test is performed at this time $t$.
We show that in order to guarantee that the players can pass the identical particle test, both $\psi_1$ and $\psi_2$ must be simple such that $\psi_1=\psi_2$. %
The reason is that when Charlie first enters the game in the configuration shown in Fig.~\ref{fig:IPT}, she can 
measure the total fusion channel in each circle~[by measuring the particle type operator $\hat{T}$ in Eq.~\eqref{eq:measure_type}], 
and obtain simple particle types $\beta\in\psi_1$ and $\varphi\in\psi_2$, respectively.  %
After this the quantum  state  collapse into $\hat{\Pi}^\beta_i \hat{\Pi}^\varphi_j\ket{\Psi}$. %
If either $\psi_1$ or $\psi_2$ is not simple, then there is always a nonzero chance that $\beta\neq \varphi$. In this case, after Charlie temporarily leaves the game, whatever local operations Alice and Bob apply  cannot change the total fusion channel $\beta$, $\varphi$ within each circle area, and when Charlie comes back, she can simply measure %
$\hat{T}$ in each circle again to determine if an exchange has happened. Therefore to guarantee that the players can pass the identical particle test, %
both $\psi_1$ and $\psi_2$ must be simple such that $\psi_1=\psi_2$.

\subsubsection{Passing the antiparticle test}\label{sec:win_APtest}
We now show that passing the antiparticle test requires %
$\beta=0$ in Eq.~\eqref{eq:sigmapsifusion-extra1}.
The analysis for the ground state is the same as before and we do not repeat it here. 
Since the antiparticle test is performed between circles A and B after A and C separates, according to our previous analysis in Sec.~\ref{sec:passingIPT} the particle types in the circles A and B have to be the same simple type which we denote by $\psi$. Since the particles in the circles A and C were initially created by applying a sequence of local unitaries in the bulk of the ground state,
they must fuse into vacuum. Therefore the particle type in circle C must be $\bar{\psi}$, the antiparticle of $\psi$.  %
Here comes the important point: at time $t=t_1$, before Carol's circle disappears, she should be able to fuse $\bar{\psi}$ into $\sigma$ using local operations, and she should be able to do this without changing the particle type $\sigma$ at the point $\oA$, as the Referees will check local ground state condition at $\oA$ after her circle disappears. This requires 
$\adjincludegraphics[height=6ex,valign=c]{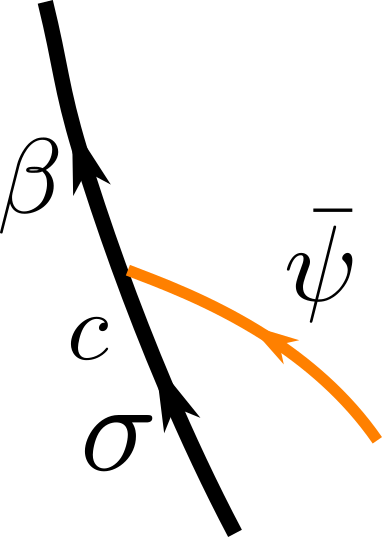}=0$ for any  $\beta\neq \sigma$, which is equivalent to $N_{\sigma\bar{\psi}}^\beta=0$. Therefore, we must have $\sigma \times \bar{\psi}=m~\sigma$. This is the main simplification resulting from this antiparticle test. Although the original challenge also requires the players to locally annihilate their paraparticles (without changing the defect types $\sigma$ and $\bar{\sigma}$) before their circles disappear, it does not strictly require $N_{\sigma\bar{\psi}}^\beta=0$ for all $\beta\neq \sigma$. For example, we can imagine an SFC with fusion rules 
\begin{equation}\label{eq:sigmapsifusion-extra}
	\sigma \times \psi=m~\sigma+\beta,\quad   \psi\times \bar{\sigma}=m~\bar{\sigma}+\bar{\beta},
\end{equation}
with $\beta\neq \sigma$, satisfying
\begin{equation}\label{eq:SFC-extra}
	\adjincludegraphics[height=15ex,valign=c]{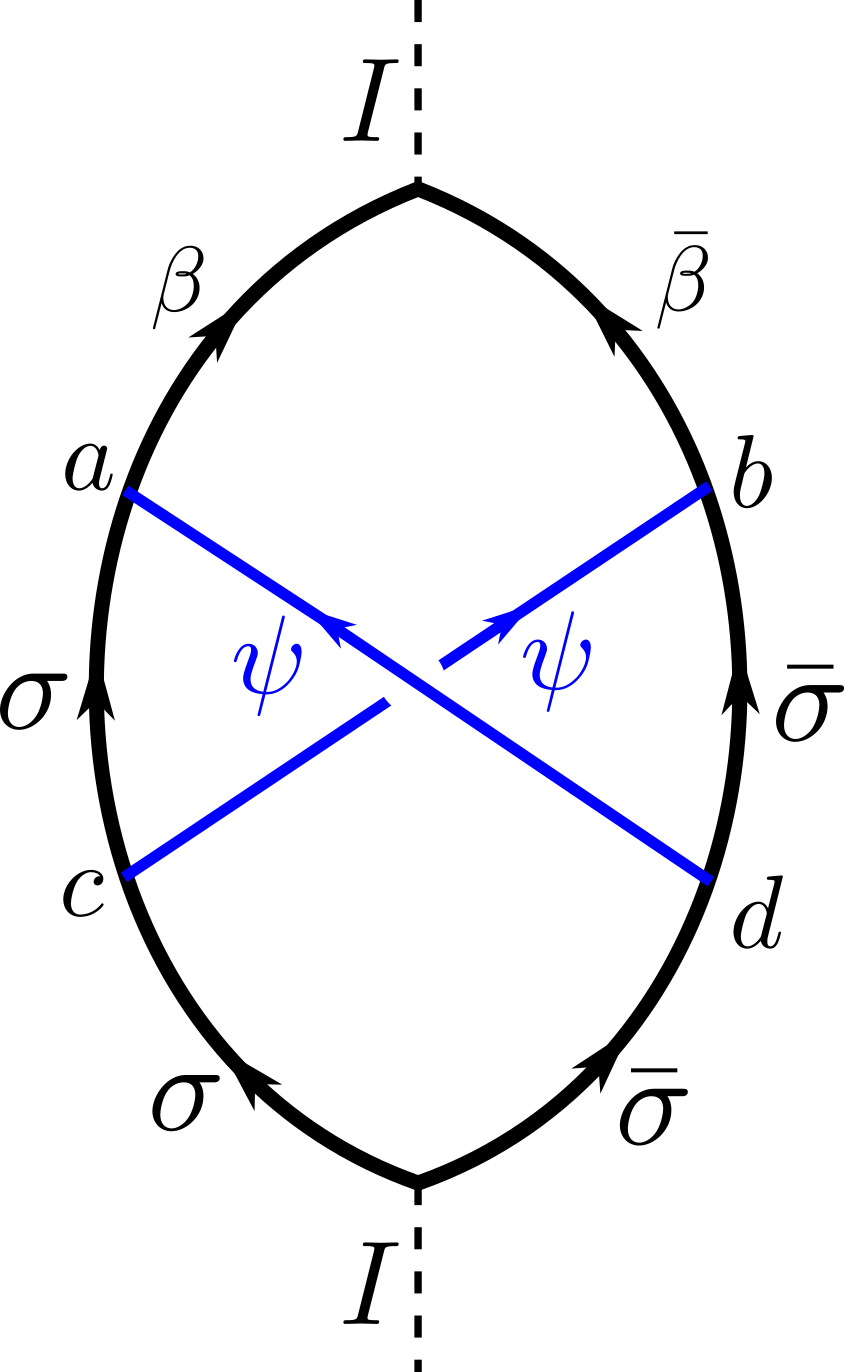}=0. %
\end{equation}
Eq.~\eqref{eq:SFC-extra} means that after the players fuse their paraparticles into the defects $\sigma,\bar{\sigma}$, there is zero possibility that the defect types $\sigma,\bar{\sigma}$ make a transition into $\beta,\bar{\beta}$. In this case, the players can simply pretend that the particle type %
$\beta$ does not exist, and the winning strategy is described by the same fusion diagram as in Eq.~\eqref{eq:SFCdescriptiongame}. The problem is that the fusion rule in Eq.~\eqref{eq:sigmapsifusion-extra} is more complicated than Eq.~\eqref{eq:sigmapsifusion-0}, and we do not yet have a systematic way to construct and classify SFCs with such fusion rules satisfying Eq.~\eqref{eq:SFC-extra}, so for simplicity in this paper we focus on the simpler class of SFCs with fusion rule in Eq.~\eqref{eq:sigmapsifusion-0}, and we enforce this simplification by adding the antiparticle test. %

\subsubsection{Winning the who-entered-first challenge}\label{sec:condition_winWEF}
\begin{figure}%
	\centering\includegraphics[width=.4\linewidth]{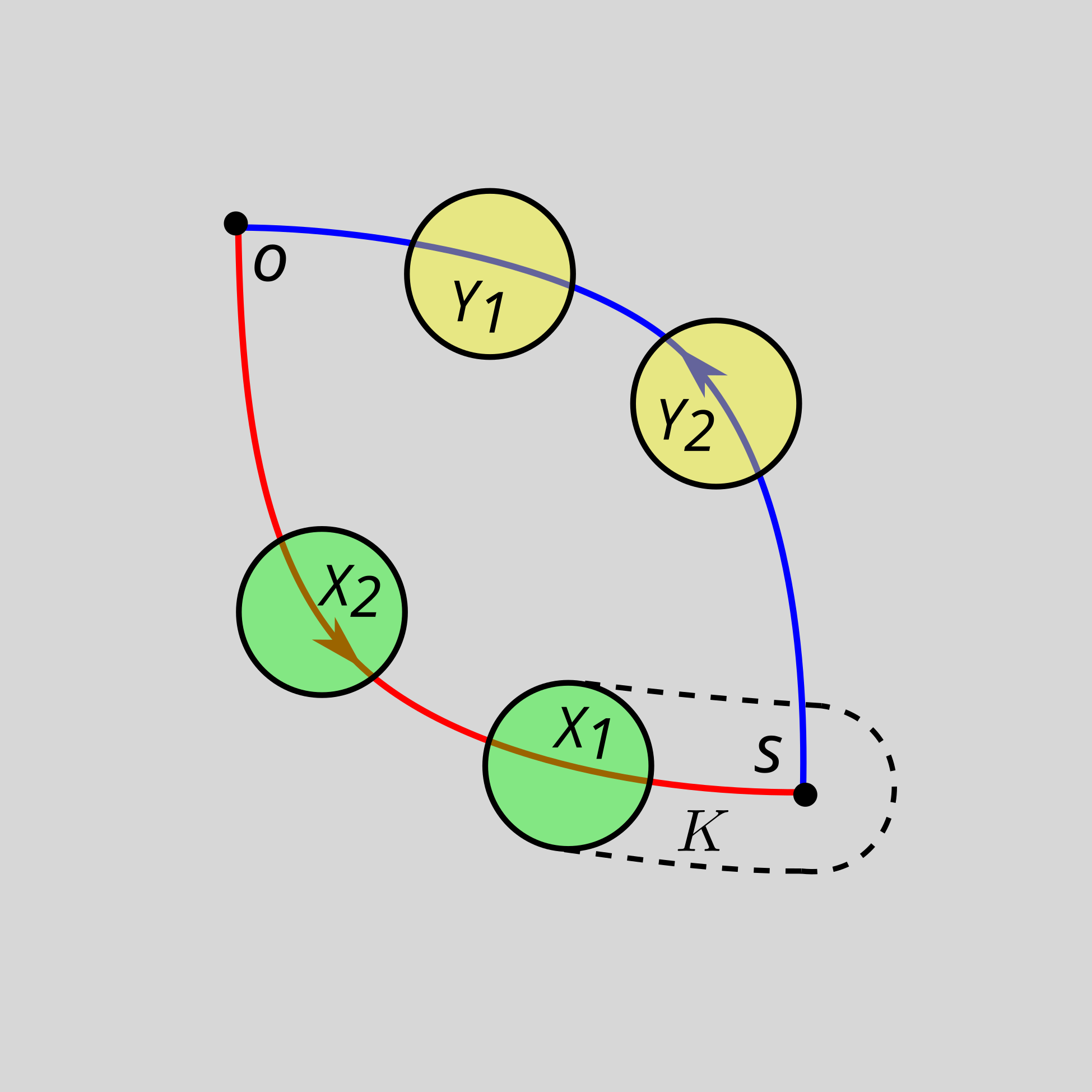} 
	\caption{\label{fig:WEFBootstrap} The who-entered-first challenge at $t=t_2<t_0+3T/4$. %
	}
\end{figure} 
We now extend the analysis of Sec.~\ref{sec:deriv_nece_cond} to the who-entered-first challenge, and show that swap-type $R$-matrices cannot win the challenge. %
First, it is clear that at the time when a player enters the game, he or she has no knowledge about the value of $(X_1,Y_1)$, since the local reduced density matrix near $\oA$ or $\oB$ is equal to the ground state value. Let $t_{2}<t_0+3T/4$ be a time when the circle $X_1$  is separated from the blue path~(the path of circles $Y_1$ and $Y_2$) by at least $2r_0$~(e.g., we can take $t_{2}=t_0+T/2$), as shown in Fig.~\ref{fig:WEFBootstrap}. It is clear that for $0\leq t\leq t_2$, the reduced density matrix in the circle $X_1$, denoted by $\rho_{X_1}(t)$, does not depend on $\{\hat{U}_Z(t')|t_0\leq t'\leq t, Z\in \{X_2,Y_1,Y_2\}\}$, similar to Fact.~\ref{fact:locality} we used before. By an inductive argument, one can show that before $t=t_2$, the player $X_1$ has no knowledge about $Y_1$. 
Therefore, if the player $X_1$ eventually knows $Y_1$, the reduced density matrix in the region $K$ shown in  Fig.~\ref{fig:WEFBootstrap}~(which is the region $X_1$ operates on during $t\in[t_2,t_0+3T/4]$), denoted by $\rho_K(t_2)$, must have a nontrivial dependence on $Y_1$.  Using a similar argument in Sec.~\ref{sec:deriv_nece_cond} [the argument around Eq.~\eqref{eq:structureinfoconvexK12}], this information about $Y_1$ must be stored in the fusion space $V_{\psi\bar{\sigma}}^{\bar{\sigma}}$.   %

The space time diagram of the game process is given in 
Eq.~\eqref{eq:SFCdescriptionWHF}. Then the local reduced density matrix of the system in the fusion space $V_{\psi\bar{\sigma}}^{\bar{\sigma}}$~\footnote{Here ``the local reduced density matrix of the system in the fusion space $V_{\psi\bar{\sigma}}^{\bar{\sigma}}$'' means $\Phi[\rho_K(t_2)]$, where $\Phi$ is the isomorphism from the information convex set in the region $K$ to the convex set $\calS (V_{\psi\bar{\sigma}}^{\bar{\sigma}})$.} is given by an expression similar to $\rho'_A$ in Eq.~\eqref{eq:rhopA-WEF}. It is clear that the values of $b,d$ cannot depend on $Y_1$, since  $b,d$ are determined by the action of Bob and David at their time of entrance, when both players have no knowledge about $Y_1$. From here we can use the analysis given in Sec.~\ref{sec:swapRfailWEF}, which shows a necessary condition to win this challenge is that $R$ does not satisfy Eq.~\eqref{eq:diagonalcommute}. This completes the proof that swap-type $R$-matrices cannot win the who-entered-first challenge. It also shows that the winning strategy given in Sec.~\ref{sec:winWEF} is essentially unique. 

\subsubsection{Summary: winning strategy  is essentially unique}\label{sec:FCanalysis_summary}
In summary, we have shown that in order to have a winning strategy for the 3+1D version of the challenge game that \\
(1) satisfies all the requirements stated in  Sec.~\ref{sec:parachallenge_variants_twists}, 
including the robustness against noise and eavesdropping,\\
(2) pass the identical particle test and the antiparticle challenge, and in addition assuming that\\
(3) the topological defects $\sigma$ and $\bar{\sigma}$ at the special points $\oA$ and $\oB$ are also mobile topological quasiparticles, \\
then the SFC $\calC$ must have a fusion rule of the form in  Eq.~\eqref{eq:sigmapsifusion-0}, such that the $R$-matrix defined by Eq.~\eqref{def:RfromSFC} or Eq.~\eqref{eq:SFCdescriptiongame} is nontrivial, and the winning strategy given in Sec.~\ref{sec:win_2pt} and Sec.~\ref{sec:win2pt-SFC} is essentially unique. If one additionally wants to win the who-entered-first challenge, then the $R$-matrix must not be of the swap-type.  Examples of this type of SFCs are given in App.~\ref{sec:RfromCentralType}.

\subsubsection{Deriving the axioms of emergent parastatistics from the SFC formulation}
We now briefly show that the quasiparticle $\psi$ in this type of SFC naturally satisfies the axioms of emergent parastatistics in Sec.~\ref{sec:axioms_emergent_para}. This is based on the identification of $n$-particle states in the two formalisms 
\begin{equation}\label{eq:nptstatecorr}
	\ket{G;i_1^{a_1} i_2^{a_2} \ldots i_n^{a_n}}=\Ket{\adjincludegraphics[height=20ex,valign=c]{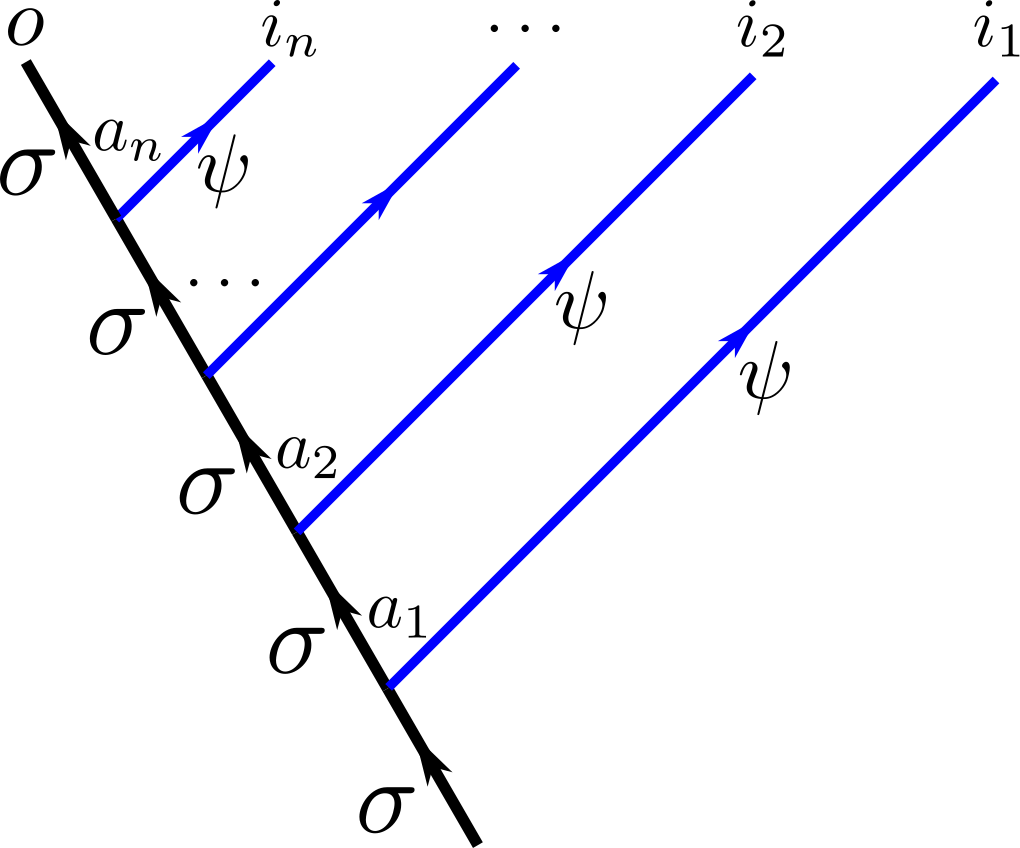}}.
\end{equation}
Indeed, the RHS of Eq.~\eqref{eq:nptstatecorr} describes the quantum state obtained by first creating a $\psi$ near $\sigma$ using $\hat{U}_{\oA,a_1}\equiv\hat{U}^{\sigma,\sigma\psi}_{\oA,a_1}$ and move it to position $i_1$, and then  repeat this process up tp $a_n$ and $i_n$, %
and this sequence of operations exactly produces the state in the LHS according to Axioms~\ref{Axiom3} and \ref{Axiom5}. %
In particular, the internal state of a paraparticle is identified with the fusion space $V_{\sigma}^{\sigma \psi}$. 
Then all the Axioms~\ref{Axiom1}-\ref{Axiom6} can be derived from the  properties of SFC and topological order, for example, Axiom~\ref{Axiom4} follows from the identity
\begin{equation}\label{eq:SFCR-pullthrough}
	\adjincludegraphics[height=11ex,valign=c]{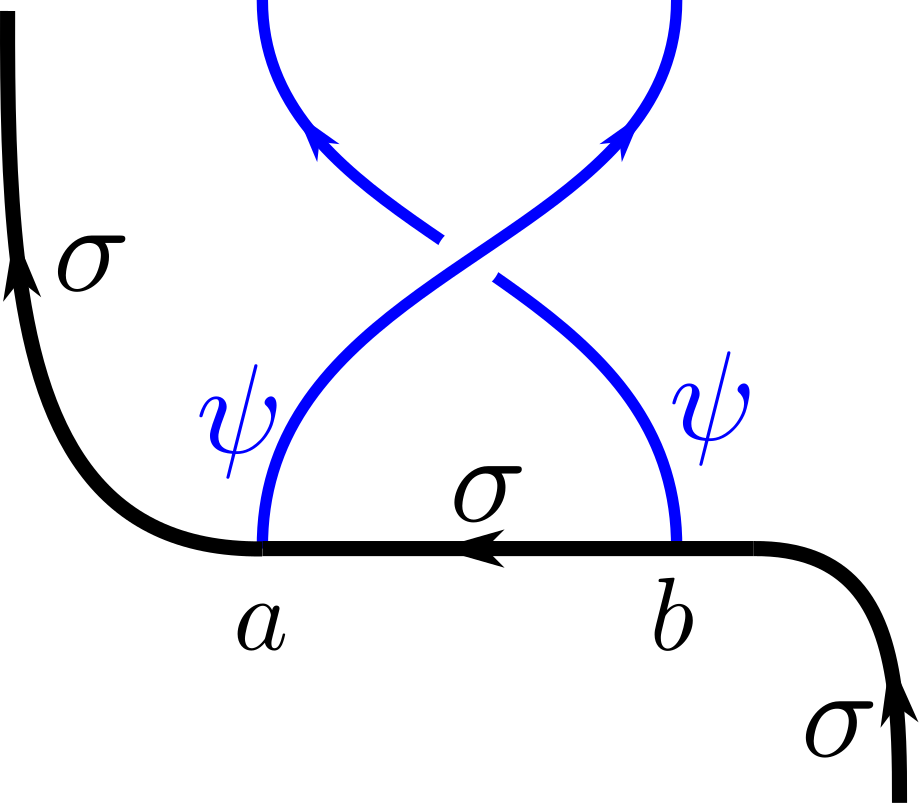}=
	\adjincludegraphics[height=11ex,valign=c]{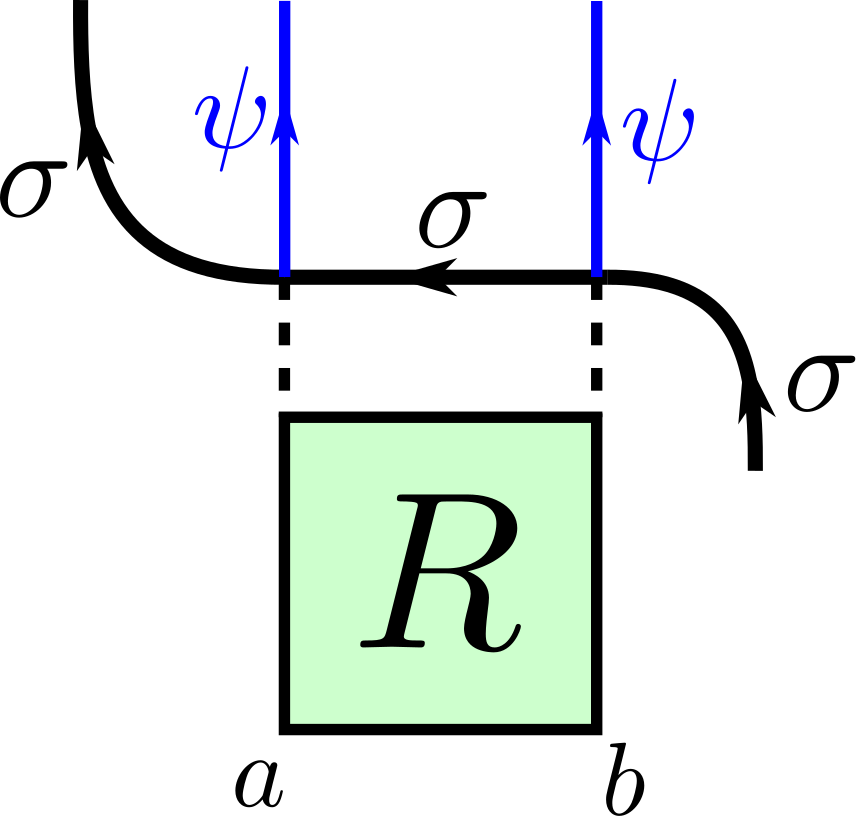},
\end{equation}
which is equivalent to Eq.~\eqref{def:RfromSFC}. The unitary operators in Eq.~\eqref{eq:localcreationatcorner} of Axiom~\ref{Axiom5} correspond to
\begin{equation}
	\hat{U}_{\oA,a}\equiv\hat{U}^{\sigma,\sigma\psi}_{\oA,a},\quad \hat{U}'_{\oB,b}\equiv\hat{U}^{\bar{\sigma},\psi\bar{\sigma}}_{\oB,b}
\end{equation}
where the operators in the RHS were first introduced in Eq.~\eqref{eq:U_spltting}. The observables in Eq.~\eqref{eq:localmeasurementatcorner} of Axiom~\ref{Axiom6} correspond to the observable $\hat{O}$ in Eq.~\eqref{eq:measure_fusion}. 

Therefore, under the assumptions made in this section, only nontrivial $R$-paraparticles can pass the full version of the challenge in 3+1D. 
In the next section we generalize our above analysis to the more general case where $\sigma$ and $\bar{\sigma}$ are arbitrary point-like topological defects in the system.

\section{The general module category description for the point-like defect $\sigma$}\label{sec:ModCatDefect} %
Up to now, in our analysis for the winning strategy, we have been assuming that the defects $\sigma$ and $\bar{\sigma}$ at the special points $\oA$ and $\oB$ are some special topological quasiparticles in the system. However, since the game protocol does not require the defects $\sigma$ and $\bar{\sigma}$ to be mobile~(as they do not change position throughout the game), they can in principle be more general point-like topological defects that are not necessarily quasiparticles. For example, in Ref.~\onlinecite{wang2024parastatistics}, we described a winning strategy where $\sigma$ is the defect lying at the intersection between two different types of gapped boundaries of a 2D topological phase.

To systematically describe winning strategies in this more general case, it is important to understand the fusion and splitting between the paraparticle $\psi$ and the defect $\sigma$. Such a process is described by a module category $\mathcal{M}$~(with $\sigma\in\mathcal{M}$) over the symmetric fusion category $\calC$~(with $\psi\in\mathcal{C}$)~\cite{Kitaev2012gappedboundary,kongAnyonCondensationTensor2014}. In the following we first briefly review the module category description of point-like defects in a topological phase, and then analyze winning strategies in this framework.
\subsection{Point-like defects and module categories}\label{sec:ModCatrecap}
Let $\calC$ be a unitary braided or symmetric fusion category describing point-like excitations~(topological quasiparticles) of a 2D or 3D topological phase. A point-like topological defect in such a phase is always equipped with the structure of a module category $\calM$ over $\calC$, which physically describes the fusion between quasiparticles and the defect. An object of the category $\calM$ describes a possible state of the defect $\calM$, which is also called a topological charge or superselection sector of the defect. Such a topological charge cannot be changed by any kind of local operations near the defect, and the only way to change it is by fusing a bulk quasiparticle with the defect. Similar to the fusion of bulk quasiparticles, the fusion of quasiparticles in $\calC$ and the point-like defect also has a well-defined set of fusion rules 
\begin{equation}\label{eq:FCrules-modcat}
	\sigma\times \psi=\sum_{\beta\in\calM} \tilde{N}_{\sigma\psi}^\beta \beta,
\end{equation}
for $\sigma \in \calM$ and $\psi\in\calC$. Physically, Eq.~\eqref{eq:FCrules-modcat} means that fusing the defect with a quasiparticle will generally transform the defect into a different state, and there are $\tilde{N}_{\sigma\psi}^\beta$ linearly independent ways to fuse $\sigma$ and $\psi$ into $\beta$.  %
Eqs.~(\ref{eq:fusion_basis}-\ref{eq:U_spltting}) still hold for the module category case, in which the left most legs are understood as objects of the module category $\calM$.  
In particular, in the fusion process $\adjincludegraphics[height=5ex,valign=c]{Figures/FCbasics/FS-F-2.png}$, 
both the resulting defect state $\beta$ and the index $a$ can be locally measured, and the splitting process $\adjincludegraphics[height=5ex,valign=c]{Figures/FCbasics/FS-S-2.png}$ can be achieved by a local unitary operation. 

In a module category we also have an $F$-move of fusion diagrams analogous to Eq.~\eqref{eq:F-move}:
\begin{equation}\label{eq:F-move-MC}
	\Ket{\adjincludegraphics[height=9ex,valign=c]{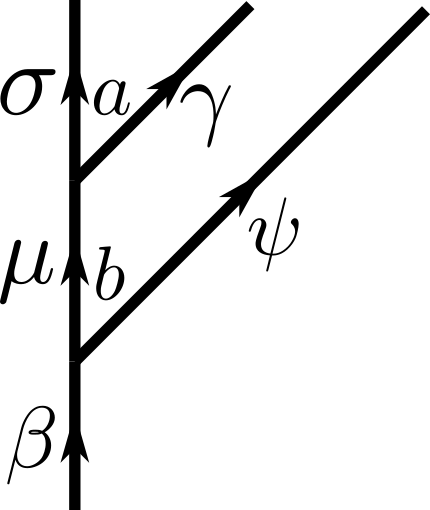}}=\sum_{\nu,c,d} [\tilde{F}^{\sigma\gamma\psi}_{\beta}]^{\mu ab}_{\nu cd} \Ket{\adjincludegraphics[height=9ex,valign=c]{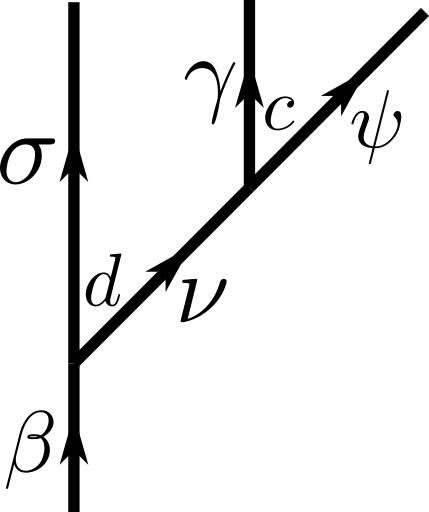}},
\end{equation} 
where $\sigma,\mu,\beta$ are simple objects of $\calM$, while $\gamma,\psi,\nu$ are simple objects of $\calC$. Here $\tilde{F}^{\sigma\gamma\psi}_{\beta}$ is a unitary matrix, called the $\tilde{F}$-symbol of the module category $\calM$, which is analogous to the $F$-symbol of fusion categories. The $\tilde{F}$ symbol of $\calM$ and the $F$-symbol of $\calC$ also need to satisfy a pentagon equation~\cite{Kitaev2012gappedboundary} to guarantee the consistency of Eqs.~(\ref{eq:F-move},\ref{eq:F-move-MC}). 
Eqs.~(\ref{eq:BFCstatespace-npt}-\ref{eq:locality-splitting}) also hold for the module category case, where all the labels on the leftmost legs~(e.g., $\alpha_1,\beta$ and $\beta_2,\ldots,\beta_n$) are now understood as simple objects of $\calM$.

It should be immediately clear that $\calM=\calC$ is always a module category over $\calC$, where the fusion between $\calM$ and $\calC$ is simply defined as fusion in $\calC$, and the $\tilde{F}$-symbol is given by the $F$-symbol of $\calC$. The physical meaning is that in a topological phase $\calC$, having no defect at a certain point can be viewed as a trivial special case of having a defect, where the states of the defect is simply given by the quasiparticle types in $\calC$. Therefore, our conclusions in this section should include Sec.~\ref{sec:categorical_analysis} as a special case. 
\subsection{Module category description of the winning strategy}
We now generalize the categorical description of the winning strategies given in Sec.~\ref{sec:categorical_analysis} to the more general case where the defect at the special point $\oA$ is described by a module category $\calM$ over the braided or symmetric fusion category $\calC$ of the bulk phase. As before, we need a fusion rule of the form Eq.~\eqref{eq:sigmapsifusion-0}, where  $\sigma \in \calM$ is a state of the defect and $\psi\in\calC$ is the paraparticle. 
The winning strategies for this more general case can still be described by exactly the same spacetime diagrams in  Sec.~\ref{sec:win_strategy_diagrammatic},
the only difference is in the physical meaning of the trivalent vertices. 
For example, we still use Eq.~\eqref{def:RfromSFC} to define the $R$-matrix, where all the trivalent vertices in the diagram now describe the fusion and splitting processes between the paraparticle $\psi$ and the defect $\sigma$. Eq.~\eqref{eq:RmatfromFRmove} is still valid if we replace the $F$-symbol by the $\tilde{F}$-symbol of the module category $\calM$. Notice that as before, fusion and splitting never change the state $\sigma$ of the defect, therefore the Referees cannot locally detect any difference at the special point $\oA$.  
It should then be clear that the strategy described in Sec.~\ref{sec:win1pt-SFC} for the version with one special point still works in this more general case. 

We now move on to understand the strategy with two special points~(described in Sec.~\ref{sec:win2pt-SFC}) in the module category case. 
To generalize the fusion diagram in Eq.~\eqref{eq:SFCdescriptiongame} to the module category case, we need to understand the physical meaning of the ``dual object'' of $\sigma$, denoted by $\bar{\sigma}$, 
and the bottom vertex describing the splitting process $I\to\sigma\bar{\sigma}$. Compared to fusion categories, in a module category $\calM$, we generally do not have a natural notion of antiparticles or a distinguished unit object $I$~(a vacuum). For $\sigma\in\calM$, its dual object $\bar{\sigma}$ should be understood as an object in the dual module category %
$\bar{\calM}$ describing the defect at the other special point $\oB$~\footnote{Technically, the module category defined in Sec.~\ref{sec:ModCatrecap} is more precisely known as a right module category in the math literature~\cite{TenCat_EGNO}. In general, if $\calM$ is a right module category over the fusion category $\calC$, then the dual $\bar{\calM}$ is naturally a left $\calC$-module category. A left $\calC$-module category  is analogous to a right $\calC$-module category, %
	but in which an object $\psi\in\calC$ fuse with $\bar{\sigma}\in \calM$ from the left, denoted by $\psi\times \bar{\sigma}$, and the fusion rule and the $\tilde{F}$-move are defined in a similar way. 
	If $\calC$ is symmetric~(or braided in the 2D case), then one can use the braiding structure of $\calC$ to turn a right $\calC$-module category into a left $\calC$-module category~\cite{kong2022invitation,TenCat_EGNO} and vice versa. Since in this paper, $\calC$ is always symmetric~(or braided in the 2D case), we do not need to distinguish left and right $\calC$-module categories.}. 
To understand the meaning of the vertex $I\to \sigma\bar{\sigma}$, we need to first understand the fusion between objects in $\calM$ and $\bar{\calM}$, whose physical meaning depends on the how the defect $\calM$ is realized. We explain this through specific examples below.  %

A simple way to realize point-like defects in a topological phase is to have a line-like defect~(such as a domain wall in 2D or a string excitation in 3D) with two end points,  as shown in Fig.~\ref{fig:LineDefect}. Here $\calM$~($\bar{\calM}$) describes the point-like defect at left~(right) end point. The fusion $\sigma\times\bar{\sigma}$ can be understood as the physical process of continuously shrinking the line defect, as shown in Fig.~\ref{fig:LineDefect}. Eventually, the line defect shrinks to a point and becomes a quasiparticle $q\in\calC$ in the bulk. This whole process defines the fusion %
\begin{equation}
	\sigma\times\bar{\sigma} =\sum_{q\in\calC}N_{\sigma\bar{\sigma}}^q q.
\end{equation}

More generally, objects in $\calM$ and $\bar{\calM}$ can fuse into objects in a fusion category $\calD$ potentially different from $\calC$. For example, in Fig.~\ref{fig:BoundaryDefect}, the point-like defects  $\sigma\in\calM$ and $\bar{\sigma}\in\bar{\calM}$ 
are realized as the intersection points between two different types of gapped boundaries, whose boundary excitations are described by fusion categories $\mathcal{B}$ and $\mathcal{D}$, respectively. In this case, if we continuously shrink the boundary $\mathcal{B}$ up to a point, the defects $\sigma\in\calM$ and $\bar{\sigma}\in\bar{\calM}$ eventually fuse into a point-like excitation $q$ on the boundary $\calD$. The unit object $I\in \calD$ simply refers to the state with no excitation on the boundary $\calD$, i.e., the unique gapped ground state with open  boundary $\calD$. 

We now explain the meaning of the bottom vertex $I\to\sigma\bar{\sigma}$  in  Eq.~\eqref{eq:SFCdescriptiongame}. As before, splitting is the  reverse process of fusion. 
The bottom vertex $I\to\sigma\bar{\sigma}$ simply describes the way how the players prepare the initial state of the system that they submit to the Referees. For example, to create the configuration in Fig.~\ref{fig:LineDefect}, they can begin with the translationally invariant ground state of the system and then create a tiny line defect somewhere in the bulk, and then continuously stretching it to a segment~(i.e., the reverse process of fusion). The resulting state is also the unique gapped ground state of a local Hamiltonian provided that the line defect is gappable. 
With all the discussions above, it is now straightforward to see that the winning strategy with two special points still works in the module category case, and all the derivations in Sec.~\ref{sec:win2pt-SFC} are still valid. 
We conclude these discussions with a categorical definition of paraparticles:
\begin{definition}\label{def:cat_paraparticles}
	A simple object $\psi$ of a symmetric fusion category $\calC$ is called an $R$-paraparticle if $\calC$ has a module category $\calM$ with a defect fusion rule of the form in Eq.~\eqref{eq:sigmapsifusion-0}, such that the $R$-matrix given by Eq.~\eqref{def:RfromSFC} is nontrivial. 
\end{definition}
It is clear that our categorical analysis in Sec.~\ref{sec:TC_winning_condition} can also be straightforwardly generalized to the module category case, provided that the defect configuration in $\ket{G}$ satisfies some general requirements stated in Sec.~\ref{app:condition_defect}, leading us to the conclusion that only $R$-paraparticles can win the full version of the challenge game in 3+1D. 
\begin{figure}
	\begin{subfigure}{1\linewidth}
		\begin{tikzpicture}
			\node[midway] (RoughSmooth) {\includegraphics[width=.2\linewidth]{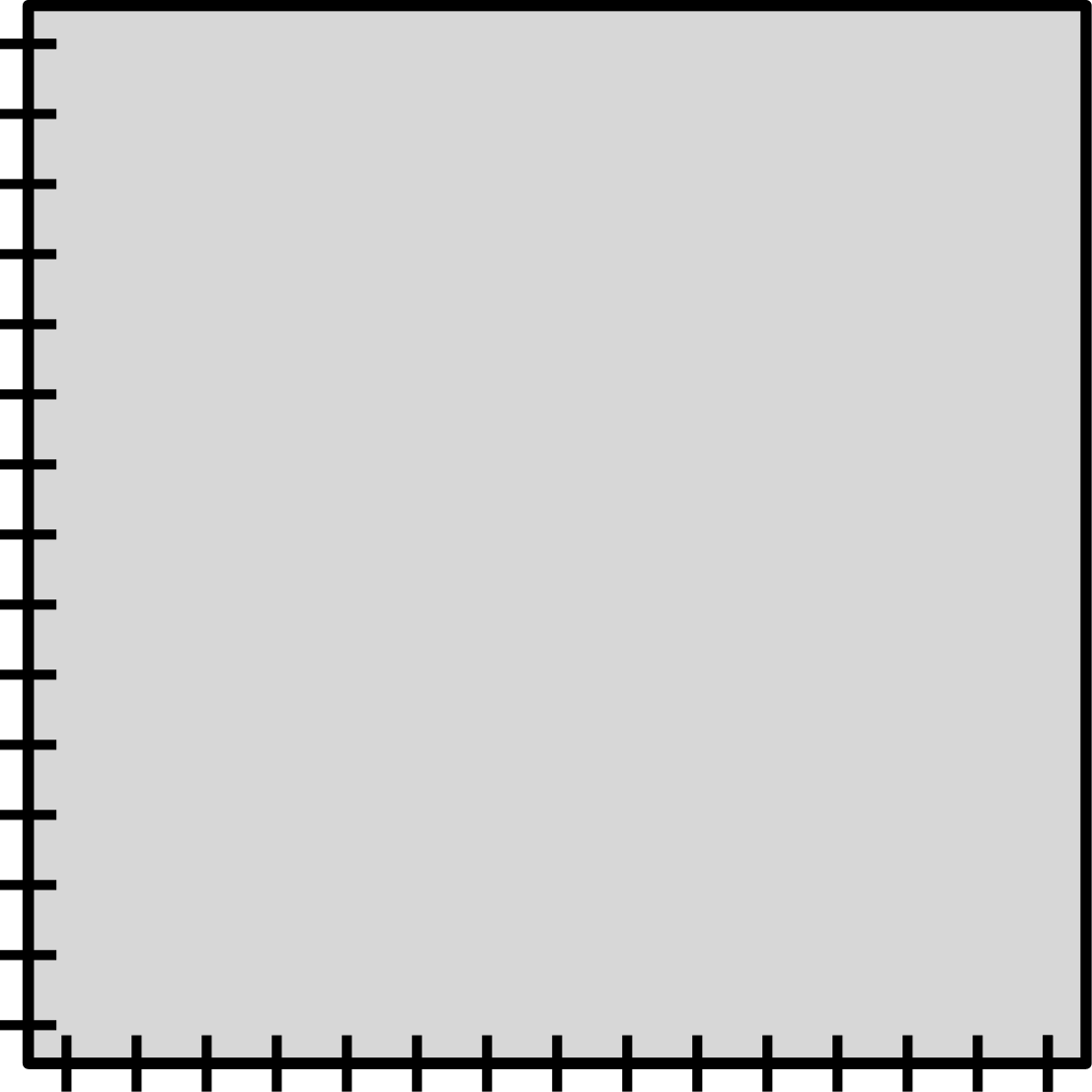}};
			\node[midway] {$\calC$};
			\node[below left=-.15cm and -.8cm of RoughSmooth] {$\mathcal{B}$};
			\node[above right=-.15cm and -.8cm of RoughSmooth] {$\mathcal{D}$};
			\node[above left=-.2cm and -.8cm of RoughSmooth] {$\sigma\in\calM$};
			\node[below right=-.2cm and -.8cm of RoughSmooth] {$\bar{\sigma}\in\calM^{\mathrm{op}}$};
			
			\node[right= .8cm of RoughSmooth] (AlmostRough) {\includegraphics[width=.2\linewidth]{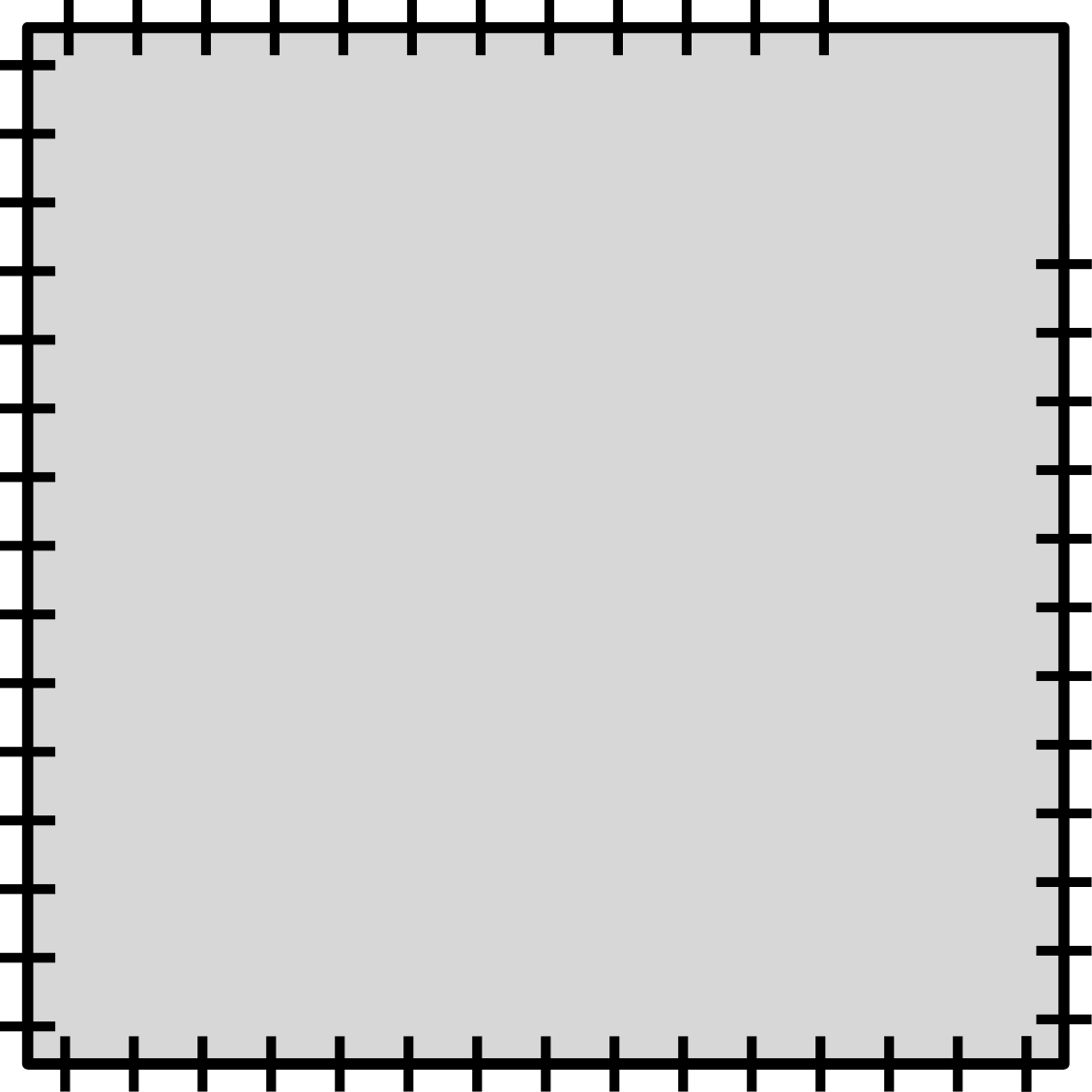}};
			\node[above right=-.8cm and -.2cm of AlmostRough] {$\sigma$};
			\node[above right=-.2cm and -.8cm of AlmostRough] {$\bar{\sigma}$};
			
			\node[below= .8cm of AlmostRough] (Rough) {\includegraphics[width=.2\linewidth]{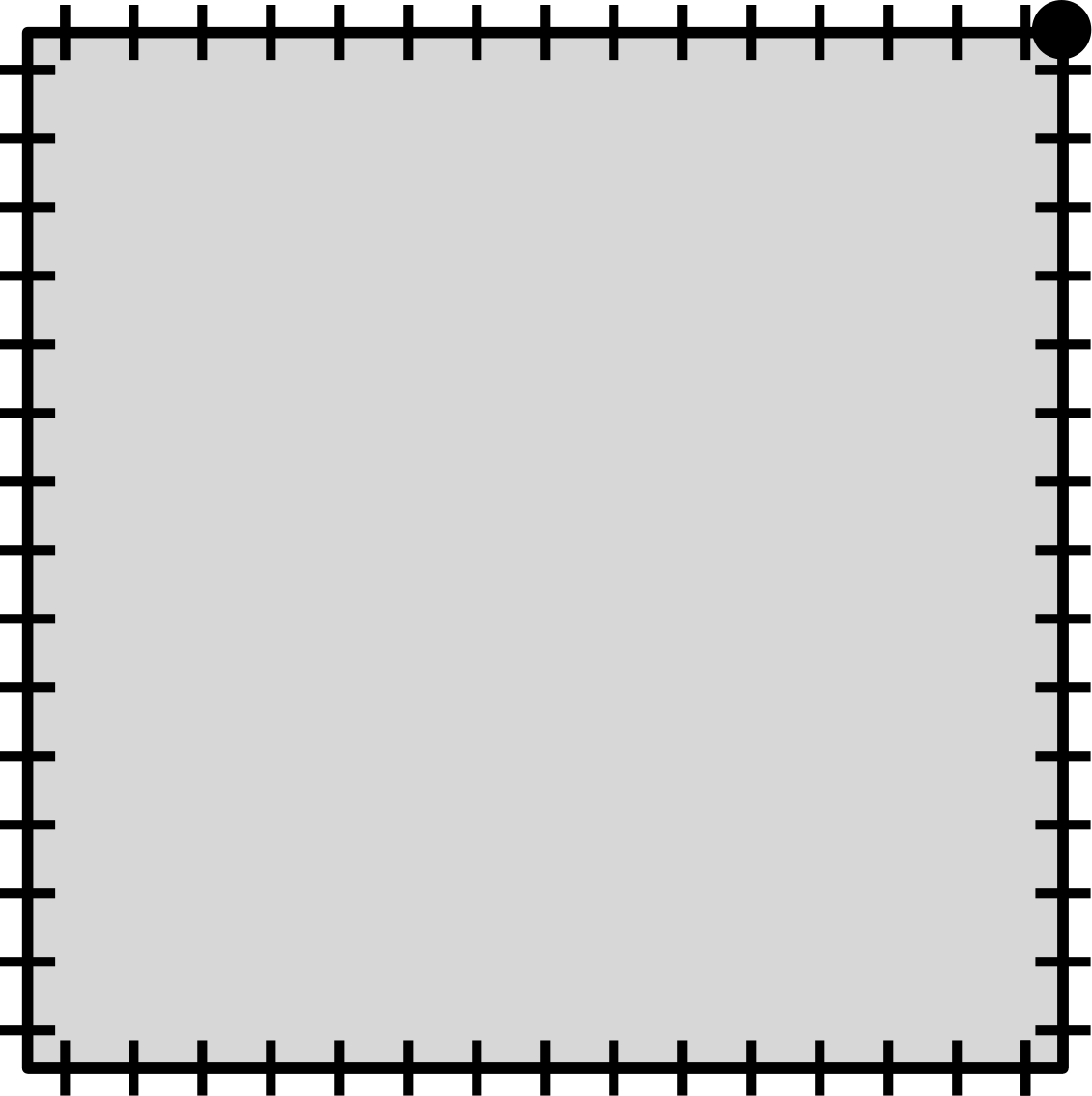}};
			\node[above right=-.2cm and -.7cm of Rough] {$\beta\in\mathcal{B}$};
			
			\node[left= .8cm of RoughSmooth] (AlmostSmooth) {\includegraphics[width=.2\linewidth]{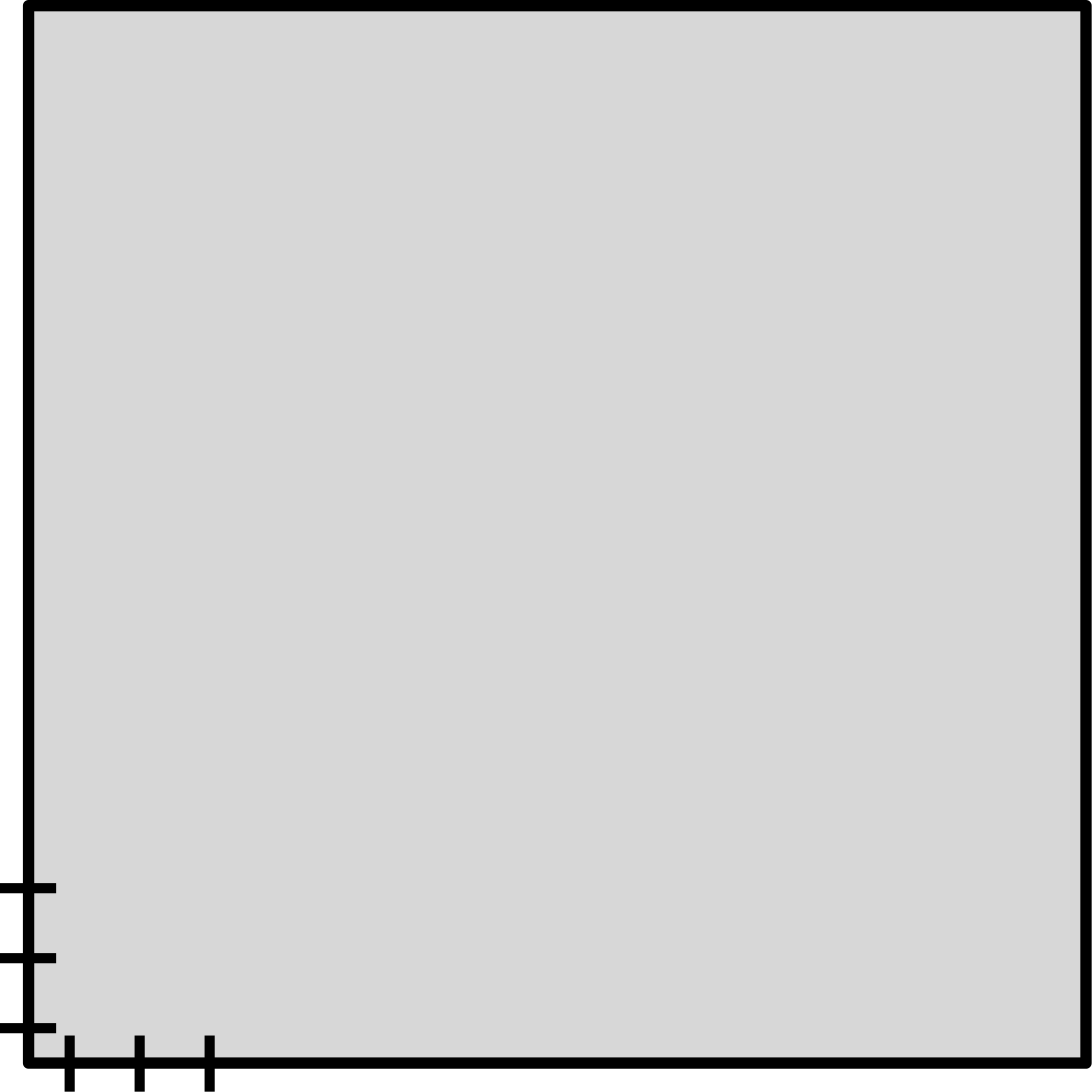}};
			\node[below left=-.7cm and -.2cm of AlmostSmooth] {$\sigma$};
			\node[below left=-.2cm and -.7cm of AlmostSmooth] {$\bar{\sigma}$};
			
			\node[below = .8cm of AlmostSmooth] (Smooth) {\includegraphics[width=.2\linewidth]{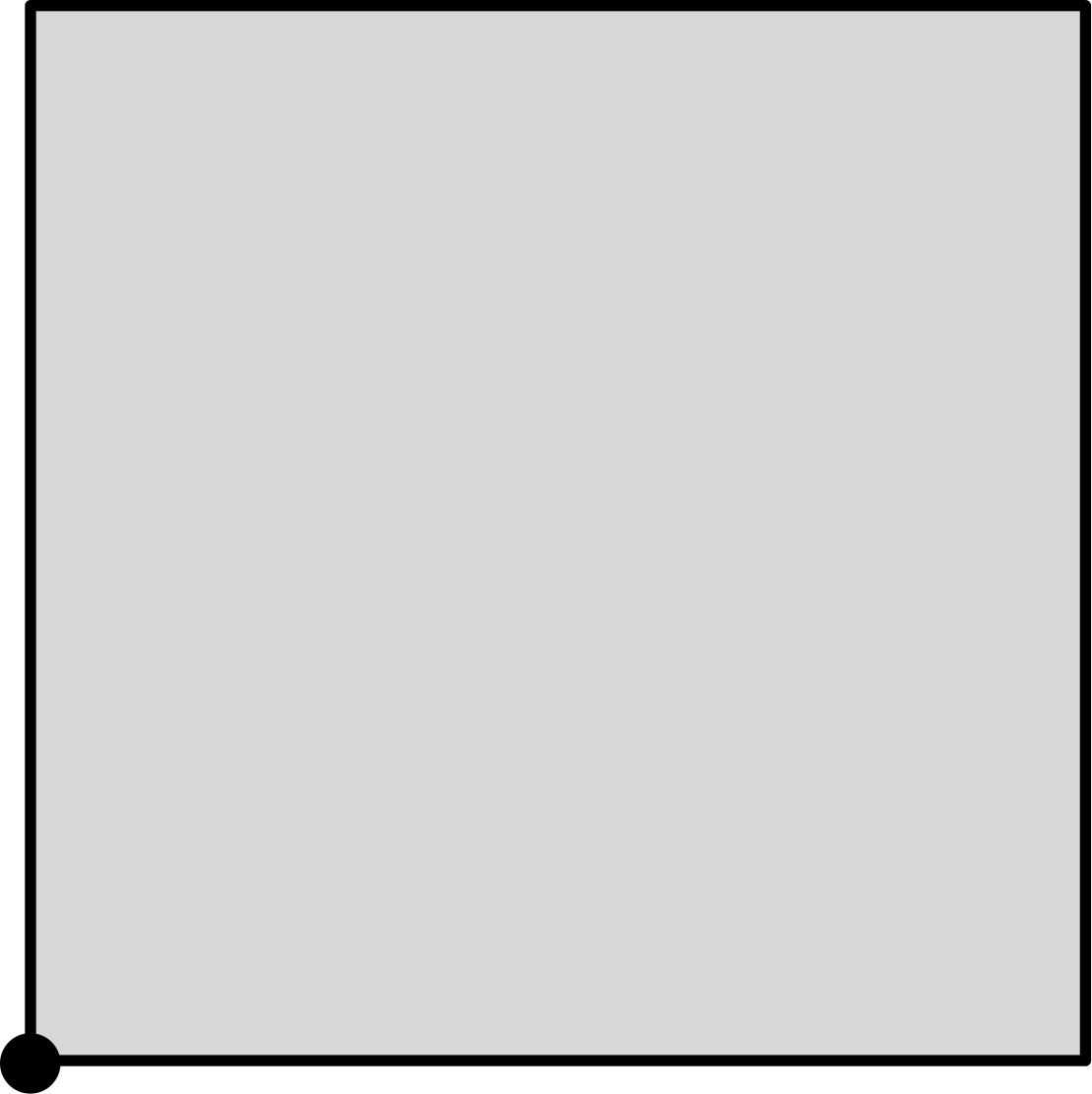}};
			\node[below left=-.2cm and -.7cm of Smooth] {$\beta\in\mathcal{D}$};
			
			\draw[->,thick] ($(RoughSmooth.west)-(0.1,0)$) -- ($(AlmostSmooth.east)+(0.1,0)$); 
			\draw[->,thick] ($(RoughSmooth.east)+(0.1,0)$) -- ($(AlmostRough.west)-(0.1,0)$); 
			\draw[->,thick] ($(AlmostSmooth.south)-(0,0.1)$) -- ($(Smooth.north)+(0,0.1)$); 
			\draw[->,thick] ($(AlmostRough.south)-(0,0.1)$) -- ($(Rough.north)+(0,0.1)$);

		\end{tikzpicture}
		\caption{\label{fig:BoundaryDefect} Defects on the boundary.}
	\end{subfigure}
	\begin{subfigure}[t]{\linewidth}
		\begin{tikzpicture}
			\node[midway] (DefectLine) {\includegraphics[height=1ex]{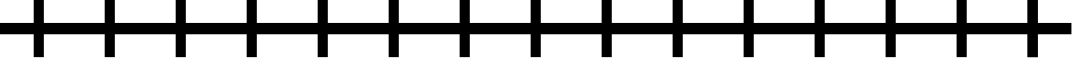}};
			\node[above left=-.2cm and -.8cm of DefectLine] {$\sigma\in\calM$};
			\node[above right=-.2cm and -.7cm of DefectLine] {$\bar{\sigma}\in\calM^\op$};

			\node[right= 1.6cm of DefectLine] (DefectLineShrinked) {\includegraphics[height=1ex]{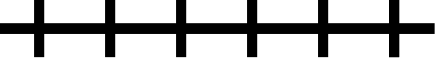}};
			\node[above left=-.2cm and -.3cm of DefectLineShrinked] {$\sigma$};
			\node[above right=-.2cm and -.3cm of DefectLineShrinked] {$\bar{\sigma}$};

			\node[right= 1.6cm of DefectLineShrinked] (quasiparticle) {\includegraphics[height=.7ex]{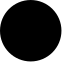}};
			
			\draw[->,thick] ($(DefectLine.east)+(0.3,0)$) -- ($(DefectLineShrinked.west)-(0.3,0)$); 
			\draw[->,thick] ($(DefectLineShrinked.east)+(0.3,0)$) -- ($(quasiparticle.west)-(0.3,0)$);
		\end{tikzpicture}
		\caption{\label{fig:LineDefect} A line defect in the bulk. }
	\end{subfigure}
	\caption{\label{fig:DefectRealization} Illustrating the fusion of point like defects. (a) The two point-like defects at the intersections between two different types of gapped boundaries can be fused into a point-like boundary excitation, by continuously shrinking one of the boundaries. (b) Similarly,  the two defects at the end points of a line-like defect can be fused into a bulk excitation, by shrinking the line segment to a point. }
\end{figure}

\subsection{Black defects and their classification}\label{sec:blackdefect-realization}
In this section we focus on an important subclass of the module categories mentioned in Definition~\ref{def:cat_paraparticles}, where we give an explicit description of their categorical data, and provide a partial classification along with several examples. We will also briefly mention their physical realization in lattice models.   

We begin with the following definition:
\begin{definition}\label{def:blackdefect}
	Let $\calC$ be a braided or symmetric fusion category describing point-like quasiparticles in a topological phase. A point-like defect in this topological phase is called a black defect if it is described by a module category $\calM$ with exactly one simple object, which we denote by $\sigma\in\calM$.
\end{definition}
It is clear that if $\calM$ describes a black defect, then for any $\psi\in\calC$ we have a fusion rule $\sigma\times \psi=m \sigma$, %
where $m=d_\psi$ is the quantum dimension of $\psi$. The reason is that since $\calM$ has only one simple object $\sigma$, the object  $\sigma\times \psi$ can be nothing but a direct sum of $m$ copies of $\sigma$, and $m=d_\psi$ follows from the associativity of fusion. The physical picture is that a black defect can absorb any particle in $\calC$ without ever changing its state $\sigma$, i.e. all particles in $\calC$ can be locally created and annihilated here. %

Notice that since the fusion multiplicities in Eq.~\eqref{eq:FCrules-modcat} must always be integers, if $\calC$ admits a black defect $\calM$, then all objects of $\calC$ must have integer quantum dimensions. Therefore, fusion categories containing objects with non-integer quantum dimensions, such as the Ising and the Fibonacci fusion categories, cannot admit any black defect.

In the following we give 
more concrete examples of black defects %
and mention their physical realization in lattice systems. 

\subsubsection{Black defects for $\Rep(G)$}\label{sec:blackdefectRep(G)}
We now classify black defects for $\calC=\Rep(G)$ or $\mathrm{sRep}(G,z)$ describing point-like quasiparticles in a 3D topological phase. Indeed, for a given finite group $G$, the classification of module categories over $\Rep(G)$ and $\mathrm{sRep}(G,z)$ are the same, as the fusion structure of the two are the same~(they only differ in the braiding), so we focus on the case $\calC=\Rep(G)$. 

For a finite group $G$, Ref.~\onlinecite{Ostrik2003} proved that indecomposable module categories over $\Rep(G)$ are classified by a pair $(H,\omega)$, where $H$ is a subgroup of $G$ and 
where $\omega\in \mathcal{H}^2(H,U(1))$ is a 2-cocycle of $H$ satisfying $\omega(1,g)=\omega(g,1)=1$ and
\begin{equation}\label{eq:def_projRep}
	\omega(f,g)\omega(fg,h)=\omega(g,h)\omega(f,gh), ~~\forall f,g,h\in H.
\end{equation}
Given such a pair $(H,\omega)$, a module category $\calM$ is constructed as the category of projective representations of $H$ with the cocycle $\omega$, denoted by $\calM=\Rep^\omega(H)$. More explicitly, an object of $\calM$ is a map $\sigma: H\to M_d(\C)$ satisfying
\begin{equation}
	\sigma(g)\sigma(h)=\omega(g,h)\sigma(gh), \quad \forall g,h\in H,
\end{equation}
and $\sigma(1)=\mathds{1}$, where $1\in G$ is the group unit, and $M_d(\C)$ is the algebra of $d\times d$ complex valued matrices. 
Morphisms of $\calM$ are defined as intertwiners between projective representations.  
Fusion between $\sigma\in\calM$ and $\psi\in \calC$ is defined by the tensor product representation
\begin{equation}
	(\sigma\otimes \psi)(g):=\sigma(g)\otimes \psi(g), \forall g\in H.
\end{equation}
Notice that $\sigma\otimes \psi$ is still a projective representation of $H$ with the same cocycle $\omega$, i.e. $\sigma\otimes \psi\in \calM$, giving $\calM$ the structure of a  module category over $\calC$.  

To describe black defects, we now impose the extra condition that $\calM=\Rep^\omega(H)$ has exactly one simple object, which we call $\sigma$. This is equivalent to saying that $H$ has exactly one irreducible projective representation corresponding to the cocycle $\omega$~[Eq.~\eqref{eq:def_projRep}]. Such a 2-cocycle $\omega$ is called non-degenerate. A central type factor group~(CTFG)~\cite{Howlett1982,Shahriari1991CTFG,etingof2000semisimpleTHAclassification} is a finite group that admits a non-degenerate 2-cocycle. 
In this situation, for any $\psi\in\calC$, since $\sigma\otimes \psi\in\calM$ and the semisimple category $\calM$ has only one simple object $\sigma$,  we must have $\sigma\otimes \psi=d_\psi \sigma$, as we argued before. %
Several examples of such groups are given in App.~\ref{sec:RfromCentralType}, where we show how to compute the $R$-matrix explicitly. 

Central type factor groups are the essential mathematical structure that underlies nontrivial $R$-paraparticles in  topological phases. Indeed, any type of nontrivial $R$-paraparticle $\psi$ that can pass the challenge must be equipped with the structure of a CTFG. This can be seen as follows. Let $\psi\in\calC$ be a non-trivial $R$-paraparticle satisfying the condition in Definition~\ref{def:cat_paraparticles}, and let $\calC_\psi$ be the symmetric fusion subcategory of $\calC$ generated by $\psi$~(the smallest fusion subcategory of $\calC$ containing $\psi$). Since any object $\varphi$ of $\calC_\psi$ is contained in $\psi^{\otimes n}$ for some integer $n$, and fusing $\sigma$ with $\psi$ does not change $\sigma$~[due to the fusion rule Eq.~\eqref{eq:sigmapsifusion-0}], fusing $\sigma$ with $\varphi$ cannot change $\sigma$ either, therefore we must have $\sigma\times \varphi=d_\varphi \sigma~\forall \varphi\in\calC_\psi$. Now let $\calM_\sigma$ be the linear subcategory of $\calM$ that has only one simple object $\sigma$. Then $\calM_\sigma$ is a module category over $\calC_\psi$ with exactly one simple object. According to the classification results for SFCs and their module categories, we must have $\calC=\mathrm{sRep}(G,z)$ and $\calM_\sigma=\Rep^\omega(H)$, where $H\leq G$ is a CTFG and $\omega$ is a non-degenerate 2-cocycle of $H$.  
In a future work we present an algorithm that constructs the underlying CTFG directly from a given $R$-matrix. %

We finally mention that given a non-Abelian CTFG $H$, we can construct an SFC $\calC$ that satisfies the winning condition in Sec.~\ref{sec:FCanalysis_summary}, where $\sigma\in\calC$ is a quasiparticle in the bulk. 
This is based on the close connection between CTFGs and the so-called groups of central type~\cite{Howlett1982}, 
which we present in App.~\ref{sec:RfromCentralType}, where several examples are given.   %

\subsection{The relative meaning of the $R$-matrix}\label{sec:weakequivalence}
We note that the categorical definition of $R$-matrix given in Eq.~\eqref{def:RfromSFC} depends not only on the type of paraparticle $\psi$, but also on the black defect $\sigma$ at which $\psi$ can be locally created and measured. This is why in several occasions we used the terminology ``the $R$-matrix of $\psi$ with respect to a black defect $\sigma$''. 
In particular, it is possible that the same type of particle $\psi$ in a topological phase can be described by different $R$-matrices with respect to different defects, and in App.~\ref{app:ex-diffR} we give an explicit example, where a particle has a nontrivial $R$ with respect to defect $\sigma$ but has a trivial $R=-X$ with respect to defect $\sigma'$. This means that the $R$-matrix encodes more information about the statistics of the particle $\psi$--it also encodes some information about the defect $\sigma$ at which $\psi$ can be locally created and measured. The implication of this on the theoretical foundation of $R$-parastatistics and the relation to no-go theorems will be discussed more extensively in a future paper~\cite{wangOnRparaI}.

Importantly, in Definition~\ref{def:cat_paraparticles}, we call $\psi$ a nontrivial $R$-paraparticle as long as there exists a defect $\sigma$ with respect to which $R$ is nontrivial. If $\psi$ has a trivial $R$-matrix with respect to any possible black defect in a certain topological phase, we call it trivial~(a fermion or a boson), an example is the boson in $\Rep(S_3)$ with quantum dimension $d=2$. A trivial particle cannot pass the challenge game in a noise-robust way no matter what kind of defect we put at $\oA$ and $\oB$, while a nontrivial paraparticle can, provided that we choose the defect $\sigma$ carefully. 

The fact that a nontrivial $R$-paraparticle can also be described by a trivial $R$-matrix is a manifestation of the fact that it is categorically a fermion or a boson in its own right. This means that a system of $R$-paraparticles will be physically indistinguishable from ordinary fermions or bosons if we are only allowed to do physical operations in the bulk where no defect is present. We can only see the nontrivial effect of parastatistics~(e.g., winning the challenge) if we bring some of these particles close to a nontrivial defect $\sigma$ with respect to which $R$ is nontrivial.  

Here is a slightly different perspective on the nontriviality of $R$-parastatistics in topological phases.  
The braiding of a SFC $\calC$ is trivial in its own right, %
but this trivial braiding can become nontrivial when it is projected onto the fusion space between particles in $\calC$ and a defect $\sigma\in\calM$, and the secret communication challenge is an explicit demonstration of this nontriviality at the physical level. This summarizes the relative meaning of $R$-parastatistics in the context of topological order.

\subsection{A hierarchy of nontrivial exchange statistics in 3D topological phases}
We have seen in Tab.~\ref{tab:protocols_strategies} that there is a hierarchy of nontrivial $R$-paraparticles depending on which version of the challenge game a certain type of $R$-paraparticle can pass~(i.e., what extra requirement we put on the winning condition). 
It is then natural to ask if we can translate this hierarchy of nontriviality conditions on $R$ into conditions on the underlying group $G$ of the SFC $\calC=\mathrm{sRep}(G,z)$, and classify finite groups according to which level of the Tab.~\ref{tab:protocols_strategies} the particles in $\calC$ can reach. Unfortunately, we do not yet have a complete answer to this, as such a classification seems to involve some hard problems in group theory. Below we present some general facts that we know. 
In this section we will always require passing the identical particle test and the antiparticle test to simplify our classification scheme. For convenience, we define the parastatistics level of a group $G$~[denoted by $l(G)$] to be the maximal level~(see Tab.~\ref{tab:protocols_strategies}) any particle $\psi\in \calC=\mathrm{sRep}(G,z)$ can reach with respect to any possible black defect $\sigma$ over $\calC$, and for any central element $z\in G$ satisfying $z^2=1$.   

\subsubsection{Level 1 and 2~(trivial): only ordinary fermions and bosons}
At levels 1 and 2, any $\psi\in\mathrm{sRep}(G,z)$ is either an ordinary boson or fermion, in that it has a trivial $R$-matrix $R=\theta X$ with respect to any possible black defect in the system, where $\theta=\psi(z)=\pm 1$.  
Such particles cannot pass the challenge game in a noise-robust way. 
Here is a simple sufficient condition for $G$ to belong to one of these two levels: if $G$ is Abelian, or $G$ is non-Abelian but $G$ does not have any nontrivial CTFG subgroup $H$. In the former case, any simple $\psi\in\mathrm{sRep}(G,z)$ has quantum dimension $m=1$, implying $R=\psi(z)=\pm 1$.  In the latter case, the only possible black defect is the module category $\calM=\Vect$ corresponding to the trivial forgetful functor $\mathrm{Forg}: \mathrm{sRep}(G,z)\to\Vect$, which produces the trivial $R$-matrix $R=\theta X$. 
Examples of such non-Abelian groups include $G=S_3$ or $G=Q_4$~(the quaternion group of order 6). 
Such a group $G$ can reach at most level 2, depending on whether $G$ has a central element of order 2. 

\subsubsection{Level 3 and 4: minimal extensions of fermions and bosons}
At levels 3 and 4, some particles in $\mathrm{sRep}(G,z)$ have nontrivial $R$-matrices of the swap-type with respect to certain black defects. We call such particles minimal extensions of fermions and bosons. 
According to our argument above, in order to reach at least level 3, $G$ must be non-Abelian and has a non-trivial CTFG subgroup $H$. 
Fact.~\ref{fact:AbCTFGSWAP} in App.~\ref{app:AbCTFG} implies that if all CTFG subgroups of $G$ are Abelian, then $G$ can reach at most level 4. 
An example of level 3 is the dihedral group $D_8$, see App.~\ref{app:AbCTFGD8}. An example of level 4 is the alternating group $A_4$, see App.~\ref{app:CTFGA4}. 

\subsubsection{Level 5: full-fledged paraparticles}
At level 5, there exists $\psi\in\mathrm{sRep}(G,z)$ that can pass both the basic challenge and the who-entered-first challenge in a fully robust way, and is therefore called a full-fledged paraparticle. In order to reach level 5, 
$G$ must have a non-Abelian CTFG subgroup; however, this is not a sufficient condition. Examples of level 5 include $A_4\times Z_3$ and $D_8\ltimes Z_2^{\times 3}$~(given in Apps.~\ref{sec:A4Z3} and \ref{app:G64}, respectively), both are CTFGs themselves.

\section{The anti-anyon twist in 2D}\label{sec:anti-anyon}
In Sec.~\ref{sec:TC_winning_condition}  we have shown that only nontrivial paraparticles can win the full version of the challenge game in 3+1D. In 2+1D, however, there exists a special class of non-Abelian anyons that can also win the game. Indeed, consider a 2+1D topological phase described by some modular tensor category $\calC$, and suppose that there is a simple particle type $\psi\in \calC$ with a fusion rule of the form in Eq.~\eqref{eq:sigmapsifusion-0}, where $\sigma$ is an arbitrary point-like defect. Then it is clear that the categorical description of the winning strategy presented in Sec.~\ref{sec:win_strategy_diagrammatic}, in particular the fusion diagram in Eq.~\eqref{eq:SFCdescriptiongame}, still applies. The only difference here is that the $R$-matrix in Eq.~\eqref{eq:SFCdescriptiongame} is no longer involutive~(i.e. $R^2\neq \mathds{1}$), and one has to distinguish upper and lower crossing between the two $\psi$ lines, depending on whether the braid between Alice and Bob is counterclockwise or clockwise; but in any case, the winning strategy works as long as $R^{b'a'}_{ab}$ does not factorize as $p_{aa'}q_{bb'}$. Such a winning strategy also satisfies all the requirements in Sec.~\ref{sec:parachallenge_variants_twists}.

In order to single out paraparticles from more general non-Abelian anyons, in this section we introduce some extra twists~(which we call ``anti-anyon twists'') to the game protocol that prevent non-Abelian anyons from winning. We present the twisted game protocols in Sec.~\ref{sec:AAtwists}, and in Sec.~\ref{sec:AAtwist_examples} we give several examples of non-Abelian anyons that can pass the original challenge but get blocked by these anti-anyon twists. Unfortunately, we will see that there still exists a very special type of non-Abelian anyons that can pass all these anti-anyon twists. In Sec.~\ref{sec:interference_test} we describe an interference experiment that physically distinguish paraparticles from this special type of non-Abelian anyons. 
\subsection{Twisting the game protocols with full braids}\label{sec:AAtwists}
Our guiding principle for designing the anti-anyon twists is to exploit the fundamental distinction between paraparticles and non-Abelian anyons according to their definitions: since paraparticles satisfy $R^2=\mathds{1}$, a full braid between two paraparticles does not change the quantum state of the whole system, while a full braid between two anyons leads to a nontrivial unitary evolution. Therefore, if we twist the game protocol by adding some random full braids between the players' circles, we can scramble anyon-based strategies without complicating the paraparticle-based strategies.
Below in Sec.~\ref{sec:simpletwist} we first illustrate this idea by adding a simple twist to the original two players challenge, which can already block a class of non-Abelian anyons. Then in Sec.~\ref{sec:immunity} we introduce a more powerful twist involving more players that play as scramblers. The effectiveness of these anti-anyons twists will be analyzed in Sec.~\ref{sec:AAtwist_examples} through specific examples. 
\begin{figure}
	\begin{subfigure}[t]{.45\linewidth}
		\centering\includegraphics[width=.8\linewidth]{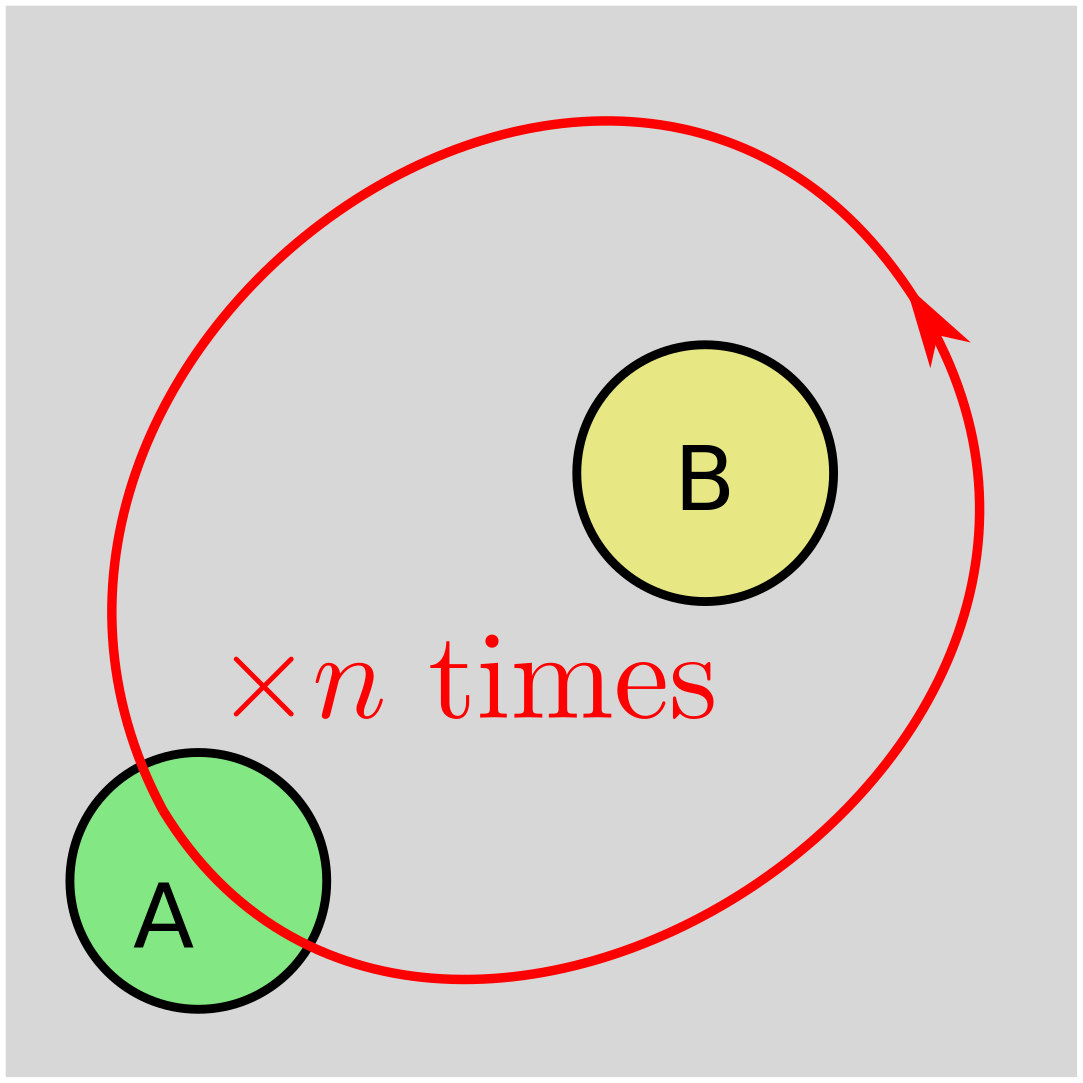}
		\caption{\label{fig:AAtwist-protocol}}
	\end{subfigure}
	\begin{subfigure}[t]{.48\linewidth}
		\centering\includegraphics[width=.7\linewidth]{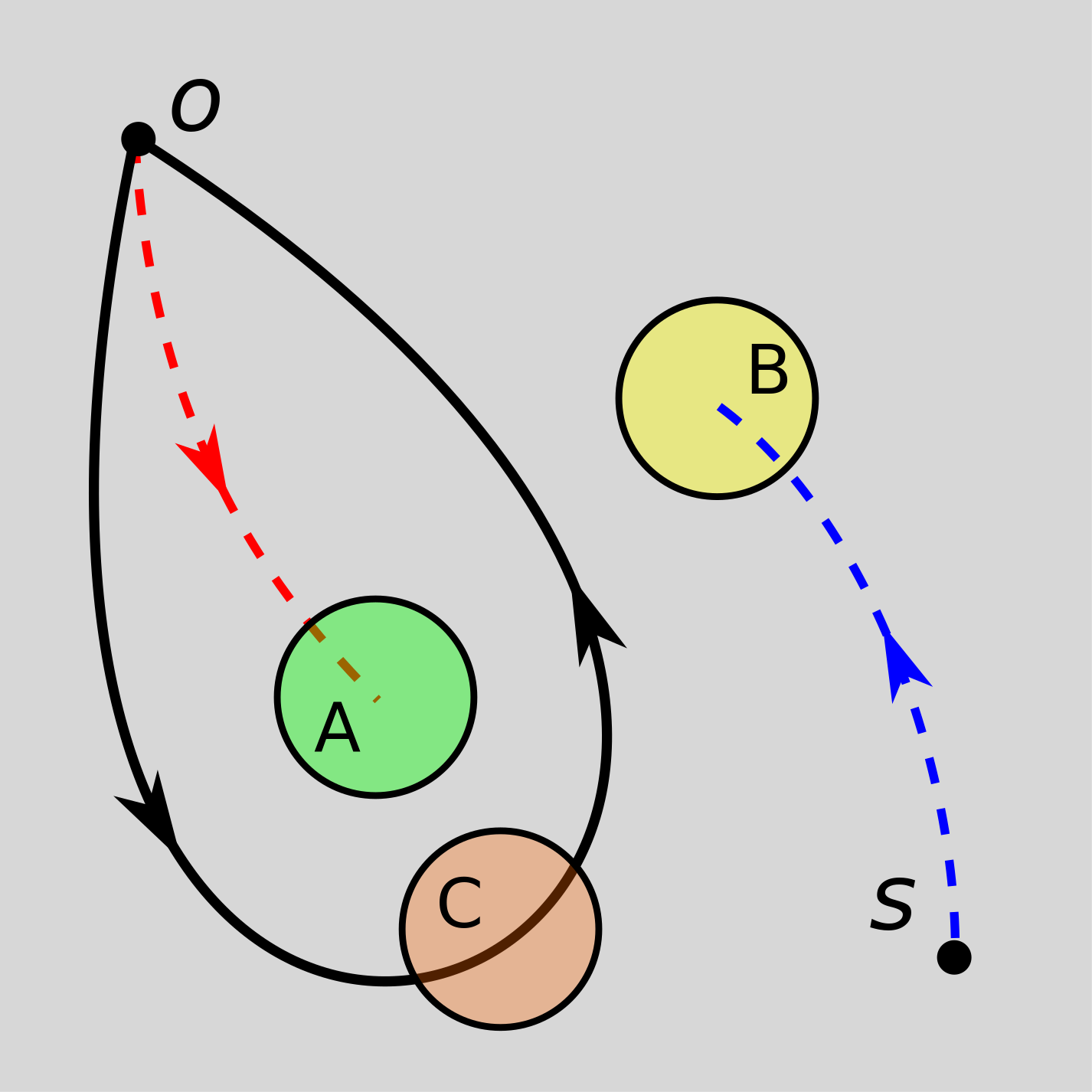} 
		\caption{\label{fig:Immunity-Game} }
	\end{subfigure}
	\caption{\label{fig:AAtwist} Illustrating game protocols for the anti-anyon twists in 2D. (a) The simple twist; (b) Adding a full braid scrambler.  }
\end{figure}
\subsubsection{Twisting the basic challenge}\label{sec:simpletwist}
A simple twist can be performed somewhere in the middle of the game~(say at $t=T/2$), when both circles are deep in the bulk, as shown in Fig.~\ref{fig:AAtwist-protocol}. At this point the Referees temporarily halt the movement of circle B, and braid the circle A around circle B in counterclockwise direction for $n$ times, where $n\in\mathbb{Z}$ is chosen randomly.   
Importantly, throughout this process Alice still needs to confine the excitations inside her circle, and both players do not know the value of $n$~\footnote{One may worry that Alice may be able to calculate $n$ by carefully analyzing her circle movements in this process. To prevent this, we can add another round of the twist in which the role of Alice and Bob are exchanged, i.e., this time  the circle B braids around circle A in counterclockwise direction for $k$ times. In this way the two circles complete $n+k$ full braids. But Alice does not know $k$, while Bob does not know $n$, so neither of them know $n+k$.}. 
After this, the game proceeds as before. 

We now briefly analyze how this simple twist affects the winning strategy presented in Secs.~\ref{sec:win_original} and  \ref{sec:win_strategy_diagrammatic}. 
Taking into account the added $n$ full braids, the time evolution of the whole system is now described by the space-time diagram 
\begin{equation}\label{eq:SFCdescription-AAtwist}
	\adjincludegraphics[height=14ex,valign=c]{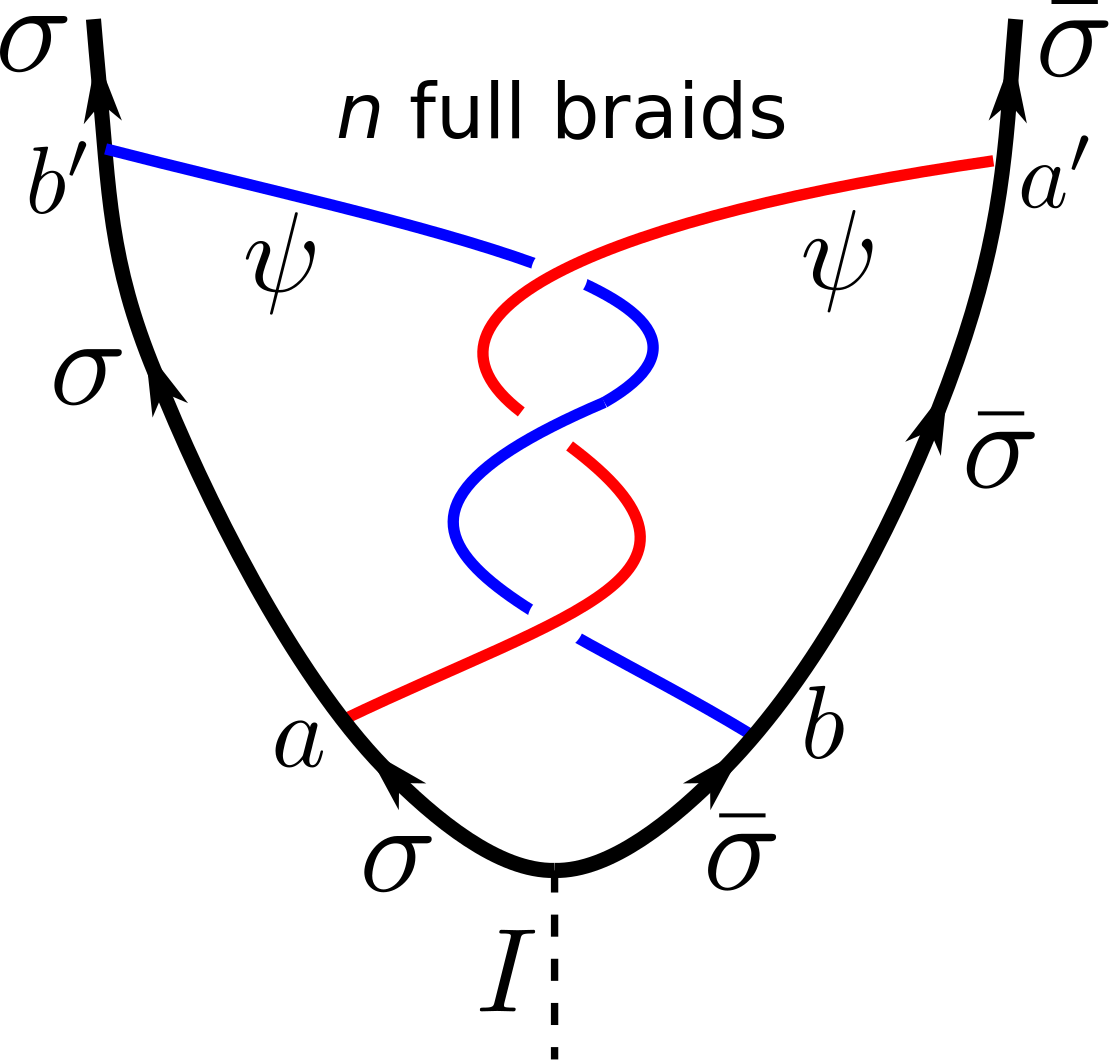}
	=[R^{2n+1}]^{b'a'}_{ab}~\adjincludegraphics[height=14ex,valign=c]{Figures/SFC/SFCR-ssb-RHS.png},
\end{equation}
where the RHS is obtained using a derivation similar to Eq.~\eqref{eq:SFC2pt-1pt}. 
At $t=T$, the density matrix for the internal states of the two particles is
\begin{eqnarray}\label{eq:reducedDM_B}
	\rho_{BA}(T)&=&\sum_n p_n R^{2n+1}(\rho_A\otimes\rho_B)R^{-2n-1},
\end{eqnarray}
where  $R^{2n+1}$ is considered as a linear map from $V_A\otimes V_B$ to $V_B\otimes V_A$, and $V_A$~($V_B$) is the internal space~(the fusion space between $\sigma$ and $\psi$) of Alice's~(Bob's) particle, and $p_n$ is the probability distribution for $n$ decided by the Referees. Here $\rho_A$ and $\rho_B$ are the initial states of the two quasiparticles prepared by Alice and Bob, respectively, which we allow to be mixed states. 
The reduced density matrix for each particle's final state is
\begin{equation}
	\rho'_A=\Tr_B[\rho_{BA}(T)],\quad \rho'_B=\Tr_A[\rho_{BA}(T)],
\end{equation}
and $\Tr_A$ means the partial trace over $V_A$.
In Sec.~\ref{sec:simpletwistexample} we give an example where, by choosing a suitable probability distribution $\{p_n\}$, we can make $\rho_A'$ independent of $\rho_B$, and $\rho_B'$ independent of $\rho_A$, thereby making both players fail the game.  %

\subsubsection{Immunity to  full-braid scramblers}\label{sec:immunity}
We now design a more powerful anti-anyon twist by adding more players that act as full-braid scramblers. The twisted game protocol is depicted in Fig.~\ref{fig:Immunity-Game}. 
Similar to the simple twist, this twist is also performed when both circles are deep in the bulk, say at $t=T/2$. At this point a new player, Charlie, who plays against Alice and Bob, enters the game. Charlie's circle starts from $\oA$, then completes $n$ full braids around circle A~(Fig.~\ref{fig:Immunity-Game} shows the $n=1$ case), and finally returns back to $\oA$ and disappears. Here, Charlie is required to use the same type of particle $\psi$ in her circle~(more precisely, the total fusion channel of circle C is required to be the same as that of circle A). %
The integer $n$ is chosen by Charlie in advance, and both Alice and Bob do not know the value of $n$. This describes one round of the scrambling twist. Charlie is given the right to demand as many rounds of this full-braid scrambling twist
as she wants, and she can choose the value of $n$ independently in each round. [More generally, we can allow more scramblers to be simultaneously present in the game, and we allow a scrambler to start and end their journey at $\oB$ as well~(but a scrambler cannot start at $\oA$ and end at $\oB$ or vice versa).]  After all the scrambling twists are finished, the game proceeds as before in Sec.~\ref{sec:twoptversion}. 

The  above process can be described by the following space-time diagram~(for the $n=1$ case)
\begin{equation}\label{eq:SFCdescription-immunityFBS}
	\adjincludegraphics[height=14ex,valign=c]{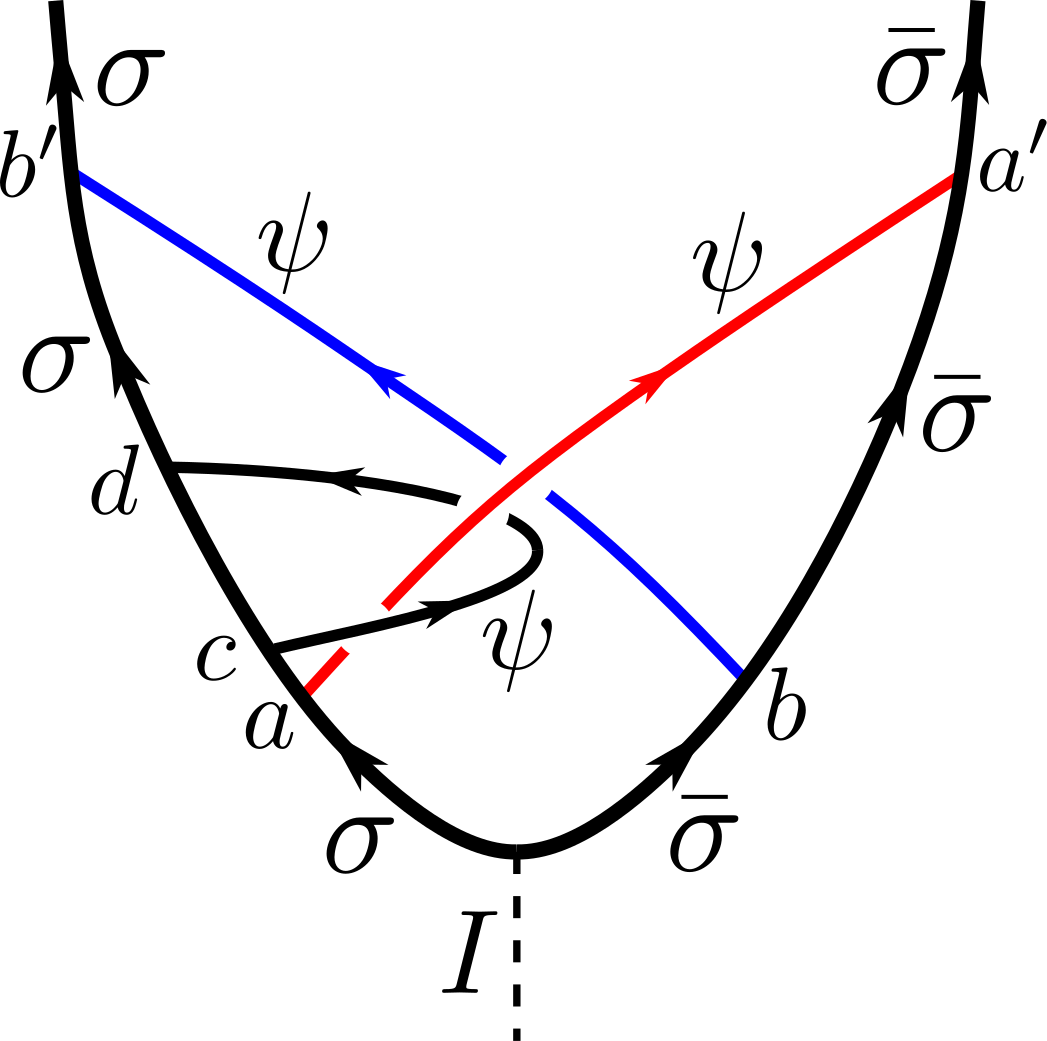} =
	\begin{tikzpicture}[baseline={([yshift=-.8ex]current bounding box.center)}, scale=0.5]
		\Rmatrix{0}{2*\AL}{R}
		\Rmatrix{-2*\AL}{0}{R^2}
		\node  at (-\AL,-1*\AL) [below=-.04]{\footnotesize $a$};
		\node  at (\AL,1*\AL) [below=-.04]{\footnotesize $b$};
		\node  at (-3*\AL,-\AL) [below=-.04]{\footnotesize $c$};
		\node  at (\AL,3*\AL) [above=-.04]{\footnotesize $a'$};
		\node  at (-\AL,3*\AL) [above=-.04]{\footnotesize $b'$};
		\node  at (-3*\AL,\AL) [above=-.04]{\footnotesize $d$};
	\end{tikzpicture}
	\adjincludegraphics[height=14ex,valign=c]{Figures/SFC/SFCR-ssb-RHS.png},
\end{equation}
where the black curve depicts Charlie's trajectory, and the RHS is derived using the methods introduced in Sec.~\ref{sec:win_strategy_diagrammatic}. The effect of this twist on the winning strategy will be analyzed through examples in Sec.~\ref{sec:FBSexample}.

\subsection{Examples}\label{sec:AAtwist_examples}
In the following we consider several examples of non-Abelian anyons and understand how the anti-anyon twists introduced in Sec.~\ref{sec:AAtwists} prevent them from winning. The non-Abelian anyons we consider here appear in Kitaev's quantum double models~\cite{kitaev2003fault} based on some finite non-Abelian group $G$, or more formally, they appear as simple objects in the modular tensor category $\Rep[D(G)]$, the category of finite dimensional representations of Drinfeld's quantum double $D(G)$. 
As a qualified physical system for the challenge game, we consider the quantum double model on a 2D lattice with the hybrid open boundary shown in the middle figure of Fig.~\ref{fig:BoundaryDefect}. With this hybrid boundary, the players can choose the special points $\oA$ and $\oB$ to be the upper left and lower right corners, which form black defects $\sigma\in \calM$ and $\bar{\sigma}\in\calM^\op$, respectively, as we mentioned in Sec.~\ref{sec:blackdefectQD}. For any simple particle type $\psi\in \Rep[D(G)]$, we have the defect fusion rules
Eqs.~(\ref{eq:sigmapsifusion-0},\ref{eq:sigmapsifusion-1}), and Eq.~\eqref{eq:SFCdescriptiongame} gives the $R$-matrix, which is generally non-involutive for anyons.  
All the categorical data and detailed computations can be found in the accompanying Mathematica code~\cite{MmaCode-SFC}. 
Note that in this section, whenever we say that a certain type of anyon can or cannot pass a certain game scenario, we implicitly mean that it can or cannot pass with the specific choice of defects $\sigma$ and $\bar{\sigma}$ we mentioned above. It is in principle possible that a certain type of anyon cannot pass a certain challenge with one type  of defect $\sigma$ but can pass with another type of defect $\sigma'$. After all, our goal here is to help understand how the anti-anyon twists work, not to give a comprehensive analysis about the capability of a specific type of anyon. 
\subsubsection{Anyons that pass the original challenge but get blocked by the simple twist}\label{sec:simpletwistexample}
Consider the quantum double model $D(G)$ with $G=S_3$, the symmetric  group of order 6. This model has a non-Abelian anyon with quantum dimension $m=2$ and topological twist factor $\theta=e^{4\pi i/3}$. The $R$-matrix is
\begin{equation}
	R^{b'a'}_{ab}=\delta_{aa'}\delta_{bb'}\omega^{ab},
\end{equation} 
where $\omega=e^{2\pi i/3}$, and $a,b,c,d\in\{1,2\}$.
Since the phase factor $\omega^{ab}$ does not factorize as $\theta_a\phi_b$ for some $\theta$ and $\phi$, the players can win the game according to our discussion in Sec.~\ref{sec:win_2pt}. However, notice that since $R^3$ is simply the swap gate, the Referees can prevent the players from winning by choosing the probability distribution in Eq.~\eqref{eq:reducedDM_B} to be $p_n=\delta_{n,1}$, as this leads to $\rho_A'=\rho_A$ and $\rho_B'=\rho_B$. 

\subsubsection{Anyons that pass the simple twist but get blocked by full-braid scramblers}\label{sec:FBSexample}
Consider again the quantum double model $D(G)$ with $G=S_3$. This time we take $\psi$ to be the non-Abelian anyon with quantum dimension $m=3$ and topological twist factor $\theta=+1$. The $R$-matrix is
\begin{equation}\label{eq:DS3anyon-7}
	R=X\sum_{a=1}^3 P_a\otimes Q_a,
\end{equation}
where $P_a$ is the projector to $\ket{a}$, and $Q_a$ swaps the two states $\ket{(a\pm 1) \mathrm{mod}3}$ while leaves $\ket{a}$ invariant~(here the range of $(x~\mathrm{mod}~3)$ is taken to be in $\{1,2,3\}$). For example, $R\ket{1}\otimes \ket{2}=\ket{3}\otimes \ket{1}$, and  $R\ket{3}\otimes \ket{3}=\ket{3}\otimes \ket{3}$. In other words, $R$ applies the gate $Q_a$ to Bob's qutrit controlled by Alice's qutrit.  Notice that we have $R^3=\mathds{1}$ and $R^5=R^{-1}=XRX$.

With this type of $R$-matrix, even after adding the simple twist protocol in Sec.~\ref{sec:simpletwist}, the players can still transfer a nonzero amount of information to each other no matter how the Referees choose the probability distribution $\{p_n\}$ in Eq.~\eqref{eq:reducedDM_B}--indeed, by using two layers of this system, the players can have a winning strategy with a 100\% success rate. The detailed algorithm is quite technical and is given in App.~\ref{app:winD(S3)}. 

We now show that this type of non-Abelian anyon is completely blocked by the full-braid scrambler twist introduced in Sec.~\ref{sec:immunity}. Here, having one scrambler~(Charlie) with four rounds of the scrambling twists is enough to completely destroy the winning strategy. A possible scrambling strategy is as follow: in the first round, Charlie creates a $\psi$ with internal state $\ket{c}$ with $c=1$ at $\oA$, and completes $n=1$ full braid around circle A, then come back to $\oA$ and measure the internal state of her particle, and obtain $d$. According to the $R$-matrix in Eq.~\eqref{eq:DS3anyon-7}, we have $R^2=XRX$, and after Charlie's measurement, the internal state of Alice's particle collapse to $\ket{a}$ with $a\in\{1,2,3\}$, and Charlie can calculate $a$ from $c,d$ and $R$. In the second round, Charlie creates a $\psi$ with a mixed internal state $\rho_C=\mathds{1}/3$, and completes $n=2$ full braids around circle A, and go back to $\oA$. After this round, the internal state of Alice's particle evolves to the mixed state $\mathds{1}/3$, i.e. any information initially stored by Alice in the internal state of her particle is completely erased by the full-braid scrambler. Using two more rounds of full-braid scrambling, Charlie can similarly erase the information stored in Bob's particle, thereby completely destroying the winning strategy.

\subsubsection{Anyons that are immune to full braid scramblers}
Although the full-braid scrambler twist is much more powerful and can prevent many types of non-Abelian anyons from winning, there is still one special class of non-Abelian anyons that are stubbornly immune to it. This class of non-Abelian anyons have a nontrivial $R$-matrix~(i.e., not of the trivial product form) that satisfies $R^2=e^{i\theta}\mathds{1}$, where $e^{i\theta}\neq 1$ is a phase factor. For example, the fusion product of a nontrivial paraparticle and an Abelian anyon will generally have such an $R$-matrix. 
This type of $R$-matrix is completely immune to full-braid scrambling, as the global phase factor $e^{i\theta}$ will never show up in measurement. To some extent, we can also call this type of non-Abelian anyons ``paraparticles'', as they realize a representation of $S_n$ up to a phase~\footnote{Note that we do not call this a projective representation since the associated 2-cocycle is trivial.}. 
Perhaps the only way to separate this type of non-Abelian anyons from paraparticle is to do interference experiments, which we describe in the next section.  

\subsection{Interference tests}\label{sec:interference_test}
We now describe an interference test~(based on anyon interferometry experiments~\cite{Wen1997FQHinterferometer,Camino2005Realization,Stern2006Proposed,BondersonKitaev2006,bondersonInterferometryNonAbelianAnyons2008,Nayak2009Interferometric,Feldman_2021}) that can in principle completely distinguish anyons and paraparticles in 2+1D systems. However, this experiment requires the players to have a dynamics Hamiltonian under the evolution of which the particles can propagate in space, and it is cumbersome to precisely formulate all the requirements for such a Hamiltonian. In the following we only explain the basic idea without worrying about technical details. 

The players are required to submit a locally interacting Hamiltonian $\hat{H}(J)$ depending on a tuning parameter $J$ that satisfies: \\
(1). The players can win the  challenge game with one special point $\oA$, using a topological excitation $\psi$ of $\hat{H}(0)$;\\ %
(2). At some $J\neq 0$, where $\hat{H}(J)$ is still in the same phase as $\hat{H}(0)$, the spectrum of the particle $\psi$ has nontrivial dispersion relation and  therefore $\psi$ can
 propagate in space under the evolution of $\hat{H}(J)$; \\
(3). As shown in Fig.~\ref{fig:Interference-highlevel}, the system $\hat{H}(J)$ contains a hard wall with two slits, which the particle $\psi$ cannot penetrate except through the slits. The wall is located at a distance away from the special point $\oA$;\\
(4). At a point between the double slits, there is a point $C$~(the central red dot in Fig.~\ref{fig:Interference-highlevel}) at which $\psi$ can stay without dispersing away under the evolution of $\hat{H}(J)$. Further more, we require that the point $C$ is isolated from the rest of the system, in the sense that any $\psi$ created from $\oA$  can never get into contact~(interaction) with a $
\psi$ stationed at $C$ under the evolution of $\hat{H}(J)$;\\ %

As a concrete example, if $\psi$ is a paraparticle, then such a system $\hat{H}(J)$ can be explicitly realized using the exactly solvable model constructed in Refs.~\onlinecite{wang2023para,wang2024parastatistics}, in which $J$ is the tunneling constant of the emergent free paraparticle. As shown in Fig.~\ref{fig:Interference-lowlevel}, 
the double-slit wall in Fig.~\ref{fig:Interference-highlevel} can be realized by removing all the paraparticle tunneling terms~(three-body terms in the original spin model) on the dashed links, and the isolated point $C$ is at the central red dot. 

The players are also required to engineer a time dependent Floquet driving potential $\hat{V}_\oA(t)$ localized at $\oA$ that keeps creating particles of type $\psi$ at $\oA$ and shooting them towards the double slit. The experiment is done in two separate rounds. In the first round, the isolated point $C$ is unoccupied, while in the second round, the Referees put a particle $\psi$ at $C$. In each round, the Referees measure the particle density distribution on a line parallel to the double slit wall, as shown in Fig.~\ref{fig:Interference-2DESM}, and they gather a sufficient amount of data to obtain a stable interference pattern. The winning condition is that the interference patterns obtained in the two rounds are the same, no matter how one choose the relevant length parameters of this geometry. That is, one should not be able to decide if $C$ is occupied by a $\psi$ by looking at the interference pattern on the screen.  %

This experiment can in principle completely distinguish anyons and paraparticles in 2D, since anyons have $R^2\neq \mathds{1}$, leading to a~(generally non-Abelian) phase difference between the two interfering paths, which changes the resulting interference pattern, analogous to the Aharonov-Bohm effect. For example, for the aforementioned type of anyons with $R^2=e^{i\theta}\mathds{1}$, an anyon at the isolated point $C$ shifts the interference pattern by an amount 
\begin{equation}
	\Delta y = -\frac{\theta \lambda L}{2\pi d},	
\end{equation}
where $\lambda$ is the wavelength of the particle, $d$ is the distance between the slits, and $L$ is the distance between the double slit wall and the measurement screen, as shown in Fig.~\ref{fig:Interference-highlevel}. 

\begin{figure}
	\begin{subfigure}[t]{\linewidth}
		\includegraphics[width=.65\linewidth]{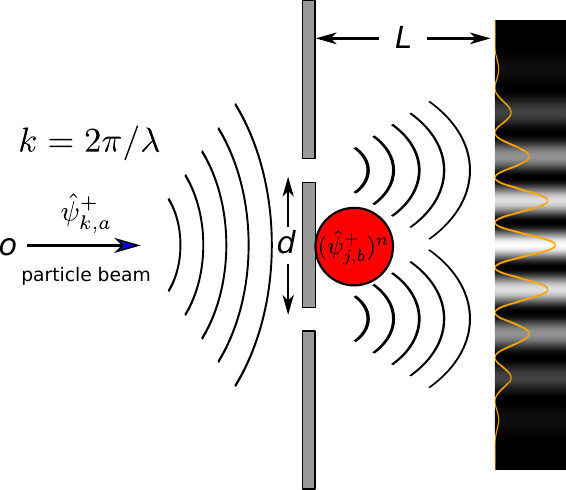} 
		\caption{\label{fig:Interference-highlevel}}
	\end{subfigure}
	\begin{subfigure}[t]{\linewidth}
		\includegraphics[width=.75\linewidth]{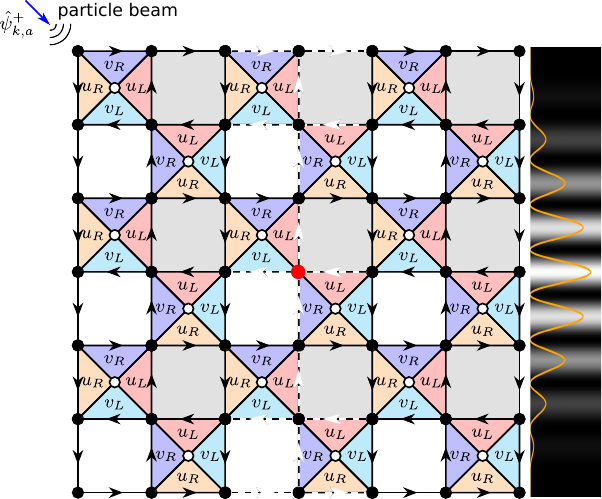} 
		\caption{\label{fig:Interference-lowlevel}}
	\end{subfigure}
	\caption{\label{fig:Interference-2DESM} An interference experiment that can in principle distinguish anyons and paraparticles. (a) A schematic view of the experiment at high level; (b) a possible low level implementation on a lattice model proposed in Ref.~\onlinecite{wang2023para}.
	}
\end{figure}

\section{Conclusions}\label{sec:conclusions}
In summary, we have demonstrated that our proposed family of secret communication challenge games naturally define a hierarchy of particle exchange statistics in 3+1D topological phases, summarized in Tab.~\ref{tab:protocols_strategies}. 
We introduced the axioms of emergent $R$-parastatistics, %
and showed that they provide a natural and general description of the winning strategies, which exploits the nontrivial exchange statistics of $R$-paraparticles to achieve non-local secret communication between the two players.  
We then performed a detailed analysis of all possible winning strategies to the game using the tensor categorical description of topological quasiparticles in 3D gapped phases, and showed that only emergent $R$-paraparticles~(as defined by our axioms) can win the full version of the game in 3D, and our proposed winning strategy is essentially unique. This establishes the hierarchy on solid grounds, and provides an operational definition of emergent $R$-paraparticles, as those quasiparticles that can be used to win the challenge game in a noise-robust way.  We also proposed the anti-anyon twists and showed how they exclude anyons in 2D.

Our analysis 
clarifies the meaning  of $R$-parastatistics in the context of topological order, deconfined gauge theories, and tensor category theory.
In short, the class of $R$-paraparticles involved in this paper are realized as charged particles in  deconfined non-Abelian gauge theories based on an exotic class of finite groups~(intimately related to groups of central type~\cite{Howlett1982}). 
Although all charged particles %
in deconfined gauge theories %
were previously classified as fermions or bosons, 
they can still lead to nontrivial physical consequences when interplay with certain types of defects.  %
The ability of $R$-paraparticles to pass the challenge game demonstrates their dramatic difference from our conventional picture of fermions and bosons. 
In the categorical description, such exotic class of deconfined gauge theories admit a special type of point-like defect $\sigma$ that has a fusion rule $\sigma\times\psi=m\sigma$ with the paraparticle $\psi$, and exchanging two paraparticles induces a nontrivial unitary transformation~(characterized by the $R$-matrix) on the particle-defect fusion space, which the winning strategies exploit. 
Such an exotic exchange behavior cannot be captured by the standard first or second quantization frameworks for conventional fermions and bosons, but are instead naturally captured by our axioms of emergent $R$-parastatistics. Whether one should call them $R$-paraparticles or fermions/bosons coupled to $G$-gauge fields is a matter of definition and viewpoint, however, we argue that a good definition of a fundamental physical concept should be understandable by the \textit{majority} of physicists, not just a small community of theoretical/mathematical physicists who know an extensive amount of non-Abelian gauge theories, representation theory, or even tensor categories.
Our axiomatic description of emergent $R$-paraparticles %
are simple enough to be understandable with basic knowledge of quantum mechanics. %
In comparison, to determine whether a certain 3D topological phase~(or a deconfined $G$-gauge theory) %
can pass the challenge requires a serious amount of representation theory calculations, while in our formulation the $R$-matrix alone makes winnability crystal clear.  %
We emphasize that the theory of $R$-paraparticles~(including the second quantization theory~\cite{wang2023para} and the axioms proposed in this paper) is not just a reformulation of some special types of charged particles in some exotic class of deconfined gauge theories. As we mentioned before, $R$-paraparticles realized in deconfined gauge theories~(described by the aforementioned class  SFCs) only correspond to a subclass of $R$-paraparticles proposed in Ref.~\onlinecite{wang2023para}, and there in principle exists a more exotic class of $R$-paraparticles that are beyond the description of SFCs~(and deconfined gauge theories). As a simple example, for the $R$-matrix $R=-\mathds{1}_{m\times m}$, both the second quantization theory~\cite{wang2023para} and the axioms in Sec.~\ref{sec:axioms_emergent_para} are still consistently defined, and the corresponding $R$-paraparticle can in principle win the challenge game if one assumes that they can be realized in a 2D or 3D physical system satisfying the axioms. Although these beyond SFC $R$-paraparticles are unlikely realizable in gapped phases of matter~\footnote{A small exception is that these beyond-SFC $R$-paraparticles can be realized in a family of 1D solvable quantum spin chains~\cite{BSundar2019,wang2023para}, which can be either gapped or gapless depending on the model parameters, and they satisfy all our axioms except Axiom.~\ref{Axiom2}~(because there is no topological order in 1D).}~(according to the conjecture that point-quasiparticles in 3D topological phases are described by SFCs~\cite{LanKongWen3DAB,LanWen3DEF}), their realizability in much-less-understood gapless phases remains a possibility. 

The secret communication challenge games also advance our understanding of topological phases in general. 
It allows us to distinguish  different 3D topological phases using a physical process that only involves exchanging point-like excitations, without involving any string-like excitations that are much harder to manipulate experimentally. 
It provides a reliable way to detect topological order and long-range quantum entanglement: 
it is straightforward to show that %
a product ground state cannot pass the challenge, therefore, combined with our arguments in Sec.~\ref{sec:reduction_to_TPO} that winnability is invariant under local unitary transformations, experimentally passing the challenge is a demonstration of topological order. For this reason, experimentally winning the game even in 2D systems using anyons can also be interesting~\footnote{Here we mention a possibility that is reasonably achievable using current experimental technology. The quantum double model based on the solvable groups $D_8$ and $A_4$
	can win at least levels 3 and 4 in Tab.~\ref{tab:protocols_strategies}, respectively, provided that one carefully design a hybrid boundary condition that we describe in App.~\ref{sec:blackdefectgappedboundary}. 
	Ground state of these quantum double models can be efficiently prepared using the recently developed topological quantum state preparation protocol involving measurements and feedforward~\cite{HierarchyTPO_LOCC,Iqbal2024}~(indeed, $D_8$ is already realized in Ref.~\onlinecite{Iqbal2024}, which is called $D_4$ in there due to a different naming convention. However, the system realized in Ref.~\onlinecite{Iqbal2024} cannot win our challenge due to the uniform boundary condition used there). It would be interesting to actually play the challenge games on these platforms.
}. 
The different versions of the games provide us a versatile way to define topological invariants %
of the ground state. For example, %
the maximal amount of information the players can transfer in one round of the game subject to a certain winning condition is an invariant under local unitary transformations of the system's ground state. 

We now compare our secret communication games to some other applications of topological phases and particle statistics in the literature. 
First, we mention that permutational quantum computation~(PQC) proposed in Ref.~\onlinecite{Jordan2010PQC} may have some philosophical similarity to our secret communication games, in that it also exploits the nontrivial unitary transformation of the multiparticle fusion space induced by the exchange operaion in an SFC~\footnote{Note that in the original PQC scheme proposed in Ref.~\onlinecite{Jordan2010PQC}, PQC was illustrated using ordinary spin-1/2 particles, such as ordinary electrons, where quantum information is stored in the spin degree of freedom, and consequently there is no topological protection. However, it is also possible to perform PQC using~(indeed, this was already mentioned in Ref.~\onlinecite{Jordan2010PQC}) topological quasiparticles in 3+1D topological phases, analogous to topological quantum computation~\cite{kitaev2003fault,Nayak2008NAAnyons} in 2D using non-Abelian anyons, where quantum information stored in the fusion space is topologically protected. }. 
However, it is difficult to use PQC to establish a hierarchy of 3+1D topological phases and point particle statistics, similar to what we achieved in this paper. For example, one may try to distinguish 3+1D topological phases by comparing their PQC power. The difficulty, though, is that it is generally extremely hard to prove rigorous separation between different computational complexity classes. Although PQC can solve certain problems that appear classically hard~\cite{Jordan2010PQC}, there is no formal proof that PQC offers quantum advantage. Indeed, it is not even known if PQC can efficiently simulate a classical computer~\cite{Jordan2010PQC}. Furthermore, %
computational power is often defined as the asymptotic time complexity of a certain algorithm when the problem size is large, and in practice, 
it requires a large number of quasiparticles and %
exchange operations to perform meaningful computations, while our challenge game only requires a single exchange operation between two quasiparticles to demonstrate the nontriviality of a topological phase and quasiparticle statistics. 

We emphasize that the ability to perform secret communication in the scenario we study is strictly more demanding than secret sharing %
using topological phases~\cite{Kato2016Information,fiedlerJonesIndexSecret2017,yang2025topological}. It is known that every topological phase with a topological quasiparticle of quantum dimension greater than 1 can be used for non-local secret sharing, while only a special class of topological phases can do secret communication in the scenario we study. Secret sharing uses only the fusion properties of the quasiparticles, while secret communication exploits both the fusion structure and the braiding/exchange statistics. Consequently, secret sharing capability cannot distinguish between different topological phases whose fusion structure of point particles are equivalent, but have inequivalent braiding/exchange statistics~\footnote{A specific example in 3+1D can be constructed by deconfined gauge theories based on isocategorical groups. Specifically, Ref. gives an example of two groups $G_1$ and $G_2$~(both of order 64), such that the SFCs $\Rep(G_1)$ and $\Rep(G_2)$ are equivalent as fusion categories, but inequivalent as SFCs. }, which can be distinguished by our secret communication games. 

We also note that while there are prior proposals of playing non-local games~\cite{brassardQuantumPseudoTelepathy2005} in topological phases~\cite{Burnell2023TCnonlocalgame,Nandkishore2025TCnonlocalgame}, our secret communication challenge games fundamentally differ in goal and design. The  prior non-local games~\cite{Burnell2023TCnonlocalgame,Nandkishore2025TCnonlocalgame} generalize the parity game~\cite{Mermin1990Extreme,brassard2005recasting} from the GHZ %
state~\cite{greenbergerGoingBellsTheorem1989} to the 2D toric code, %
mainly to demonstrate noise-robust quantum advantage; while our goal %
is  to distinguish between different types of emergent quasiparticle statistics and topological order. %
The design and strategy of our secret communication games are not based on generalizing prior non-local games in few-body systems; in particular, our games do not appear to have a few body analog, as emergent quasiparticle statistics is an intrinsically many-body phenomenon. %

We end our discussion by mentioning some potential generalizations and future directions. 
First, it may be interesting to extend the challenge game protocols to the relativistic case, to allow the use of elementary particles in relativist quantum field theories. 
A key open question speculated in Ref.~\onlinecite{wang2023para} concerns whether $R$-paraparticles may exist in the universe as elementary particles. %
In view of the aforementioned operational definition of $R$-paraparticles, %
this translates into the following question: does there exist a 3+1-dimensional relativistic quantum field theory 
that can win the relativistic version of the secret communication challenge? 
Of course, in the relativistic case, one needs to formulate the games carefully to respect all the fundamental symmetries of relativistic quantum field theories and to prevent any possibility of cheating~(e.g., prevent the players from leaving any trace information behind that is locally accessible by a third party).
A positive answer to this question would imply the theoretical possibility of elementary $R$-paraparticles in the universe, in a way compatible with locality and Lorentz invariance.   

Another interesting future direction is to generalize our results to $R$-paraparticles in gapless phases of matter, including the axioms, %
the game protocols, the winning strategies, and the proof that only emergent $R$-paraparticles can win. 
Generalizing the axioms to gapless phases requires more than simply removing the spectral gap assumption on the Hamiltonian $\hat{H}$. For example, the exponentially small correction in Eq.~\eqref{eq:TPdegenerate} may need to be replaced by a power-law correction. The localization properties of the local unitary operators and observable in Axioms~\ref{Axiom3},  \ref{Axiom5}, and \ref{Axiom6} may also need to be relaxed to accommodate gapless phases in general. A primary example of a gapless system hosting emergent $R$-paraparticles is the 2D exactly solvable model constructed in Ref.~\onlinecite{wang2023para}, when the parameters $J,\mu$ are chosen such that the single particle spectrum is gapless~(e.g., a conducting band), and the axioms should be generalized to accommodate this example.
To generalize the game protocols to gapless phases, a potential difficulty %
is that it may be challenging for the Referees to efficiently verify the condition that no excitations exist beyond the circle areas. 
A related discussion of this can be found in App.~\ref{sec:relax_FF}.
To generalize the winning strategies to gapless phases, a technical challenge is how to confine the paraparticles inside the circle areas, as quasiparticles in gapless phases typically have power-law decaying tails. 
To prove that only $R$-paraparticles can win the game in the gapless case, we need a complete description of the universal properties of quasiparticles in gapless  phases, which is still under development~\cite{kongMathematicalTheoryGapless2020,kongMathematicalTheoryGapless2021,kongCategoriesQuantumLiquids2022,bhardwajClubSandwichGapless2025}.  

Last but not least,  it is interesting to generalize the secret communication games to mixed state topological order~\cite{Grusdt2017Topological,zini2021mixed,bao2023mixed,Grover2024Separability,Vishwanath2024Diagnostics,Hsieh2025Markov,Cheng2025Premodular, Wang2025Intrinsic,Sohal2025Noisy,Li2025Replica,yang2025topological,lessa2025higher,sang2025mixed,ogata2025mixed}, in particular,
topological phases  at finite temperature~\cite{Ortiz2008autocorrelations,TCFiniteT2008,alickiThermalStabilityTopological2010,Hastings2011finiteTTPO,Bombin_2013,Hsieh2020finiteTTPO,Cheng2025finiteTTPO}. %
This can be experimentally relevant as all quantum matter in reality interact with the environment and are practically described by mixed states. The generalization of the game protocol to mixed states is briefly discussed in App.~\ref{app:relaxFF2DM}. Winning the challenge at finite temperature in 3D may be possible as it is demonstrated recently that there exists a 3D topological phase that exhibits long-range quantum entanglement~\cite{Cheng2025finiteTTPO} at finite temperature. %

\acknowledgments
We thank Alexei Kitaev, Xiao-Liang Qi, J. Ignacio Cirac, Norbert Schuch, Meng Cheng, Kaden Hazzard,  Timothy Hsieh,  Hao Song, Yingfei Gu, Jingyuan Chen,  Qingrui Wang, Liang Kong, Zhihao Zhang, Bowen Shi, and Xiaoqi Sun for discussions. Our research at MPQ is supported by 
the Munich Center for Quantum Science and Technology~(MCQST), funded by the Deutsche Forschungsgemeinschaft~(DFG) under Germany’s Excellence Strategy~(EXC2111-390814868).  
Research at Perimeter Institute is supported in part by the Government of Canada through the Department of
Innovation, Science and Economic Development and by the Province of Ontario through the Ministry of Colleges and Universities.

\appendix

\section{Technical requirements on the Hamiltonian $\hat{H}$}\label{sec:technical_requirement_H}
In this section we discuss the various technical conditions on the Hamiltonian $\hat{H}$ for any physical system the players may use to win the challenge game. In App.~\ref{app:condition_defect} we detail what kind of defects are allowed to be present in the system, and in App.~\ref{sec:relax_FF} we discuss several possible ways to relax the frustration-free condition on $\hat{H}$. 

\subsection{Technical conditions on defects}\label{app:condition_defect}
In formulating the requirements on the physical system in Sec.~\ref{sec:original_version}, we do not require the Hamiltonian $\hat{H}$ to be translationally invariant, and allow defects to be present, as long as $\hat{H}$ has a unique, gapped, and frustration-free ground state, as we have seen that having point-like defects at the special points $\oA$ and $\oB$ is important for the winning strategies. In this section we formulate precisely what kind of defect configuration are allowed to be present in the system. 

Let us begin with some generalities. We write the Hamiltonian in the following form
\begin{equation}
	\hat{H}=\hat{H}_{\text{bulk}}+\hat{H}_{\text{defect}},
\end{equation}
where $\hat{H}_{\text{bulk}}=\sum_{i}\hat{h}_i$ is the bulk Hamiltonian, and $i$ runs through lattice points in the bulk, away from the defects. We write  $\hat{H}_{\text{defect}}$ as
\begin{equation}
	\hat{H}_{\text{defect}}=\sum_{K}\sum_{j\in K} \hat{h}'_j,
\end{equation}
where $K$ runs through all defects in the system, and the second sum runs over points within the defect $K$. Importantly, the ground state is required to be frustration-free everywhere, including in the defect region, i.e.,  $\hat{h}'_j\ket{G}=0$ on any defect. Furthermore, we require that $\hat{h}'_{j_1}$ and $\hat{h}'_{j_2}$ are related by translation if $j_1$ and $j_2$ lie on the same type of defect. We also require $\hat{h}_{i_1}$ and $\hat{h}_{i_2}$ are related by translation for any two points $i_1$ and $i_2$ in the bulk. 

We now specify various different defect configurations we allow. %
In the simplest configuration, we only allow 2 point-like defects in the system, so that $\hat{H}_{\text{defect}}=\hat{h}'_\oA+\hat{h}'_\oB$, and the whole system is defined on a topologically trivial manifold. Here we do not require $\hat{h}'_\oA$ and $\hat{h}'_\oB$ to be related by a translation. Note that in this case, it only makes sense for the players to choose the special points $\oA$ and $\oB$ to be the positions of the two defects. %
With this configuration, the two defects $\sigma$ and $\bar{\sigma}$ can only be quasiparticles of the bulk topological phase. The winning strategy is classified by the categorical analysis in Sec.~\ref{sec:categorical_analysis}. Only a very special class of SFCs can pass this version of the challenge, %
examples are given in  Sec.~\ref{app:GCT}. 

In order to allow more general winning strategies in which $\sigma$ and $\bar{\sigma}$ are genuine point-like defects~(beyond quasiparticles) described by the module category analysis in Sec.~\ref{sec:ModCatDefect}, we should allow higher dimensional topological defects to be present in the system. %
A simple example is a line-like defect in 2D or 3D with two end points at $\oA$ and $\oB$, as shown in Fig.~\ref{fig:LineDefect}. In this case, $\hat{H}_{\text{defect}}$ is required to have the form $\hat{H}_{\text{defect}}=\sum_{j\in L} \hat{h}'_j+\hat{h}'_\oA+\hat{h}'_\oB$, where $L$ is the set of interior points~(excluding end points $\oA,\oB$) of the line segment $L$, and $\hat{h}'_\oA$ is not required to be related to any $\hat{h}'_j$ by a simple translation, and similarly for $\hat{h}'_\oB$.
We also allow the line-like defect to lie on the boundary in the 3D case, as shown in Fig.~\ref{fig:3DBC-1}. In 2D, we allow the hybrid boundary condition shown in Fig.~\ref{fig:BoundaryDefect}, where $\oA$ and $\oB$ are the two intersection points. In 3D, we allow the hybrid boundary condition shown in Fig.~\ref{fig:3DBC-2}, where $\oA$ and $\oB$ are the two intersection points of the three different boundaries. In all these configurations, the two point-like defects $\oA$ and $\oB$ are described by module categories $\calM$ and $\calM^\op$ over $\calC$~(where $\calC$ is the tensor category of bulk particles) that are dual to each other, and the analysis in Sec.~\ref{sec:ModCatDefect} applies. 
Note that in the last case, there are defects of different dimensions in the system: 0D defects at $\oA,\oB$, 1D defects at the intersection between two different types of boundaries, and the boundaries are 2D defects. The most natural language to describe the interplay between defects of different dimensions is using higher-category theory~\cite{kong2014braided,kong2015boundary,douglas2018fusion,LanKongWen3DAB,LanWen3DEF,gaiotto2019condensations,Kong2020Algebraichigher,Kong2020Classification,johnson20203+,Johnson-Freyd2022}, which provides a unified mathematical framework to study all topological particles and defects in a topological phase. We do not discuss this higher-category approach to winning strategies in this work, but hope to adopt it in a future paper. 

\begin{figure}
	\begin{subfigure}[t]{.4\linewidth}
		\includegraphics[width=\linewidth]{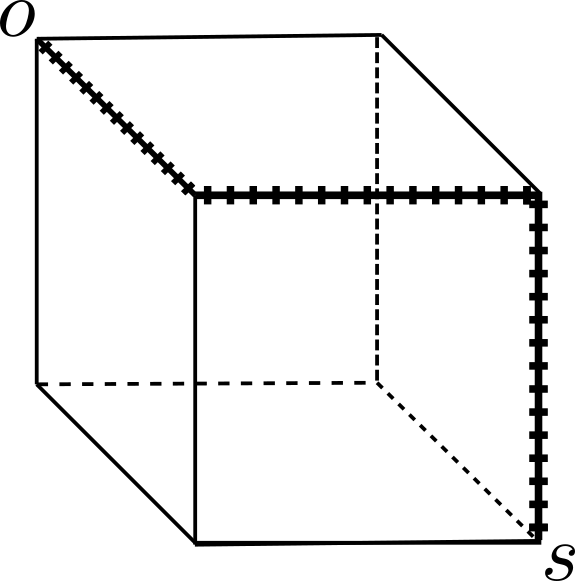}
		\caption{\label{fig:3DBC-1}}
	\end{subfigure}
	\begin{subfigure}[t]{.4\linewidth}
		\includegraphics[width=\linewidth]{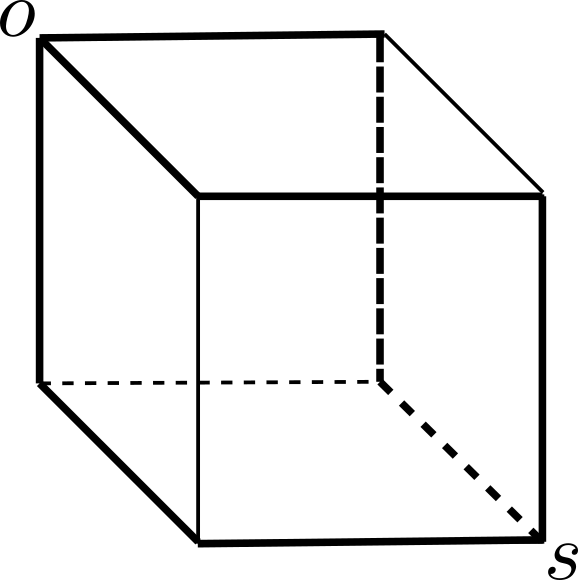} 
		\caption{\label{fig:3DBC-2}}
	\end{subfigure}
	\caption{\label{fig:3DBC} Examples of defects and  boundary conditions allowed in 3D. (a) a line defect on a gapped boundary, where $\oA$ and $\oB$ are the two endpoints (b) three different types of gapped boundaries intersecting at $\oA$ and $\oB$.
	}
\end{figure}

\subsection{Relaxing the frustration free condition}\label{sec:relax_FF}
Recall that in Sec.~\ref{sec:parachallenge_variants_twists}, we require that the Hamiltonian $\hat{H}$ to be frustration-free, meaning that $\hat{H}$ can be written in the form $\hat{H}=\sum_{i}\hat{h}_i$ such that (1). $\hat{h}_i$ is a local Hamiltonian acting a neighborhood of site $i$; (2) $\hat{h}_i \ge 0$; and (3) $\hat{h}_i \ket{G} = 0$. Many interesting topological phases are known to be realizable in frustration-free Hamiltonians. 
For example, in 2D, it is known that any nonchiral topological phases can be realized by a string-net model~\cite{levin2005string}; in 3D, it is conjectured that all topological phases can be realized by  commuting projector Hamiltonians~\cite{Zhu3DTPOModels,LanKongWen3DAB,LanWen3DEF}. Furthermore, although realizing 2D chiral topological order in strictly local frustration-free Hamiltonians on a lattice with finite local Hilbert space is known to be hard~\cite{DubailRead2015,kapustinLocalCommutingProjector2020,Kapustin2020Thermal,Levin2022Vanishing}, 2D chiral topological states can still be frustration-free in a relaxed sense, for example,  if one allows $\hat{H}$ to be defined in continuum~\cite{Wang2017Number,Wang2018}, or allows $\hat{H}$ to contain longer range %
interactions~\cite{Greiter2007CSLParentH,Cirac2013chiralPEPS,Yang2015chiralPEPS}.  %
Nevertheless, there is still a strong motivation to relax this unnecessarily restrictive frustration-free condition.
Our ultimate goal is to formulate the game in such a way that a physical system can win the game \textit{if and only if} it hosts emergent $R$-paraparticles. Emergent parastatistics is defined as a universal property of the underlying phase of matter at long distance, %
but being frustration-free is not a universal property of the phase, and it excludes many interesting physical systems that host emergent parapartcles that are not frustration-free. For example, the 2D exactly solvable model constructed in Ref.~\onlinecite{wang2023para} is shown to host emergent $R$-paraparticles, but it is not always frustration-free depending on the model parameters. 

The main reason why we assumed the frustration-free condition in the main text is for the following two reasons:\\
(1). Before the game, the Referees can efficiently verify that the state $\ket{G}$ prepared by the players is indeed the ground state of the Hamiltonian $\hat{H}$, by verifying that $\hat{h}_i\ket{G}=0$ everywhere;\\
(2). During the game, the Referees can efficiently monitor the game to ensure that no extra excitations are left outside the circle areas, again by verifying $\hat{h}_i\ket{\Psi(t)}=0$ for $i$ outside the two circles.

In the following we discuss two possible ways to relax this frustration-free condition. A third option is to use a different version of the game based on adiabatic evolution, which we present in App.~\ref{app:adiabatic}. 

\subsubsection{Option 1: using approximate local ground state annihilators for gapped ground states}\label{app:relaxFF1Hastings}
By a theorem of Hastings~\cite{hastings2006solving}~(see also Proposition D.1 of Ref.~\onlinecite{kitaev2006anyons}), if a locally interacting Hamiltonian $\hat{H}=\sum_i \hat{h}_i$ has a unique, gapped ground state $\ket{G}$, then one can rewrite $\hat{H}$ as $\hat{H}=E_0+\sum_i \hat{h}'_i$, where $E_0$ is the ground-state energy and $\hat{h}'_i$ is localized near $i$ with subexponentially decaying tails, satisfying $\hat{h}'_i\ket{G}=0$ for all $i$. 
In this way,  the Referees can still efficiently verify conditions (1) and (2) above by measuring $\hat{h}'_i$ instead. The main technical difficulty in this approach is that it is generally hard in practice to explicitly write down the exact form of the local ground state annihilators $\hat{h}'_i$, and even more difficult to measure them in experiment, although we know that they theoretically exist. 

\subsubsection{Option 2: an alternative formulation using local reduced density matrices}\label{app:relaxFF2DM}
Another way to relax the frustration-free condition is to formulate the above two conditions (1) and (2) in an alternative way, in terms of local reduced density matrices. %
This formulation is based on the following theorem that follows from the main theorem of Ref.~\onlinecite{hastings2006solving}: 
\begin{theorem}\label{thm:local_verifiable}
	Let $\ket{G}$ be the unique, gapped ground state of a local Hamiltonian  $\hat{H}$, and let $\rho=\ket{G}\bra{G}$. Then there exists a finite length scale $\xi$ independent of the system size, such that the local reduced density matrices on all disk regions of radius $\xi$ uniquely determines the ground state $\ket{G}$. More precisely, if there exists another (potentially mixed) state $\rho'$ such that $\rho'_D=\rho_D$ for any disk region $D$ of radius no larger than $\xi$, then $\rho'=\rho$.  
\end{theorem}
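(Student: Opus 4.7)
The plan is to translate local agreement of $\rho'$ with $\rho$ into a global energy bound via Hastings' quasi-local ground state annihilators, and then apply the spectral gap to pin $\rho'$ to the ground state projector. In overview: matching all radius-$\xi$ reduced density matrices forces $\rho'$ to nearly annihilate each quasi-local term in a Hastings-type rewriting of $\hat{H}-E_0$, hence to have near-minimal energy, which by the gap collapses $\rho'$ onto $\ket{G}\bra{G}$.

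Concretely, I would first invoke the theorem of Hastings (the same one referenced in App.~\ref{app:relaxFF1Hastings}) to write $\hat{H}=E_0+\sum_i \hat{h}'_i$ with $\hat{h}'_i\geq 0$, $\hat{h}'_i\ket{G}=0$, and each $\hat{h}'_i$ quasi-localized around site $i$ with sub-exponentially decaying tails. Truncating $\hat{h}'_i$ to a disk $D_i$ of radius $\xi$ around $i$ produces an operator $\tilde{h}_i$ supported strictly inside $D_i$ with $\|\hat{h}'_i-\tilde{h}_i\|\leq f(\xi)$ for some $f(\xi)$ decaying faster than any polynomial. The hypothesis $\rho'_{D_i}=\rho_{D_i}$ then gives $\Tr(\tilde{h}_i\rho')=\Tr(\tilde{h}_i\rho)$, and since $\hat{h}'_i\ket{G}=0$ the right-hand side equals $-\Tr((\hat{h}'_i-\tilde{h}_i)\rho)=O(f(\xi))$, so $\Tr(\hat{h}'_i\rho')=O(f(\xi))$ as well. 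Summing over $i$ yields $\Tr((\hat{H}-E_0)\rho')=O(|\Lambda|f(\xi))$, where $|\Lambda|$ is the system size. The gap inequality $\hat{H}-E_0\geq \Delta(\mathds{1}-P_0)$ then gives $1-\braket{G|\rho'|G}\leq O(|\Lambda|f(\xi)/\Delta)$, so that taking $\xi$ sufficiently large (and then using purity of the ground state to upgrade fidelity to identity of states) forces $\rho'=\rho$.

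The main obstacle is the appearance of the volume factor $|\Lambda|$ multiplying the sub-exponential tail $f(\xi)$: making the right-hand side vanish for fixed $|\Lambda|$ requires $\xi$ to grow at least poly-logarithmically with system size, which is in mild tension with the claim that $\xi$ depends only on the correlation length and the gap. The cleanest way to resolve this, and the sense in which I would state the theorem, is to pass to the infinite-volume limit on the quasi-local operator algebra: $\rho,\rho'$ become states (in the $C^\ast$-algebraic sense), equality is tested against every local observable, and the above estimate applied to an exhausting sequence of subregions shows that $\rho'$ and $\rho$ agree on every local observable, hence as algebraic states. For systems admitting a strictly local frustration-free decomposition (e.g.\ commuting projector models) the volume factor disappears entirely and $\xi$ can be taken as the support radius of a single $\hat{h}'_i$, recovering the statement in the strongest form, which is all that is needed for the application in App.~\ref{app:relaxFF2DM}.
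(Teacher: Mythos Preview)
Your approach is essentially what the paper invokes: the paper does not give a self-contained proof but simply asserts that the theorem ``follows from the main theorem of Ref.~\onlinecite{hastings2006solving}'', i.e., precisely the Hastings quasi-local annihilator construction you use. Your energy-gap argument via truncated $\tilde h_i$ is the natural way to unpack that citation.

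You have, however, put your finger on a genuine subtlety that the paper glosses over: the volume factor $|\Lambda|f(\xi)$ means that your argument, as written, yields a $\xi$ that must grow poly-logarithmically with system size rather than being strictly size-independent as the theorem claims. The paper does not resolve this either; indeed, immediately after stating the theorem it adds in brackets ``[We also expect that there exists an approximate version of this result]'', which suggests the authors are aware the exact finite-system statement is delicate. Your proposed resolutions (passing to the infinite-volume $C^\ast$-algebraic setting, or restricting to strictly frustration-free models where the tail vanishes) are both reasonable and are the senses in which the result is actually used in the paper's application in App.~\ref{app:relaxFF2DM}. So your proof is at least as complete as anything the paper provides, and you have been more careful about the caveat.
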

[We also expect that there exists an approximate version of this result.] 
With this, we can reformulate the conditions (1) and (2) as follow. When the Referees receive the state $\rho'$ from the players, they simply verify $\rho'_D=\rho_D$ for any disk region $D$ of radius no larger than $\xi$. 
At any time $t$ during the game, for any disk region $D$ of radius no larger than $\xi$ that lie outside the circles of the two players~\footnote{Here it is important that we only demand $\rho_{D}(t)=\rho_D$ for simply connected local regions $D$ that do not intersect circle A and B. All the winning strategies in this paper satisfy this condition, however, with a topological quasiparticle in circle A, $\rho_{K}(t)\neq\rho_K$ if $K$ is taken to be an annulus region encircling circle A. It is unlikely that any pure state topological phases can win this game if we demand $\rho_{K}(t)=\rho_K$ on multiply connected regions as well, due to the principle of remote detectability of topological excitations~\cite{LanKongWen3DAB,LanWen3DEF}. }, the reduced density matrix $\rho_D(t)$ of the physical system is required to be equal to the initial ground state value, i.e., $\rho_{D}(t)=\rho_D$. %

Although it may be practically challenging for the Referees to efficiently verify $\rho_{D}(t)=\rho_D$, 
this alternative formulation can still be conceptually useful, as with this formulation one can argue that any gapped topological phase hosting emergent $R$-paraparticles can pass this formulation of the game. Furthermore, this formulation can be generalized to mixed states in a straightforward way, since 
Thm.~\ref{thm:local_verifiable} %
generalize  to mixed states with finite Markov length~\cite{yang2025topological,Hsieh2025Markov,sang2025mixed}. %

\section{Remarks on noise-robustness}\label{app:noise_robust}
In Sec.~\ref{sec:robust_noise_eaves} we stated that a realistic winning strategy is required to be robust against local noise, and we have argued that paraparticle based winning strategies are robust against local noise in the circles, as such noise can always be corrected by the players themselves using local operations in the circles, while the emergent fermion based strategy given in Sec.~\ref{sec:emergent_fermion} is not robust. Here we discuss the robustness against local noise that happen outside of the circle areas.  Such local noise cannot destroy the information stored in the fusion space of the topological quasiparticles 
due to topological protection, Eq.~\eqref{eq:TPdegenerate}.  However, local noise outside the circles can break the rule of the game by  accidentally creating local excitations beyond the circle areas, and the players cannot clean them up since they are not allowed to do any operation beyond the circles.

Ref.~\onlinecite{wang2024parastatistics} briefly discussed one possible way to resolve this issue, by introducing a third player, called the corrector, Carol, who plays in the same team as Alice and Bob, but is not allowed classical communication with other players. During the game, whenever the Referees detects an excitation outside the circle areas, they first ask Carol to eliminate the excitation using local operations. If Carol can successfully eliminate the excitation, then the game proceeds; otherwise the challenge fails. This modified game protocol allows the paraparticle based strategies to win the game in the presence of local noise, provided that the noise rate is below a certain ``error correction threshold''~\cite{Dennis2002TQM,bombinIntroductionTopologicalQuantum2013}, and still prevents cheating~(e.g., if the players intentionally leave local excitations at $\oA$ or $\oB$, then the Referees will quickly detect such excitations and immediately ask Carol to clean them up).%

The above proposal applies ideas of quantum error correction in topological quantum codes~\cite{Dennis2002TQM,bombinIntroductionTopologicalQuantum2013}. An alternative option is to use a different version of the game based on adiabatic evolution presented in the next section, where robustness against noise is guaranteed by the spectral gap and the stability of topological phase against local perturbations. 
\section{A version based on adiabatic evolution}\label{app:adiabatic}
In our original formulation of the game  presented in Sec.~\ref{sec:original_version}, 
quasiparticles are moved using local unitary operations. Here we briefly mention an alternative formulation, 
in which quasiparticles are moved using adiabatic evolution, using a time dependent Hamiltonian with slowly moving trapping potentials. This version gives an alternative way to guarantee noise-robustness for paraparticle-based strategies, and maybe more suitable for realization in condensed matter materials. It also allows the game to be defined in continuum, and does not require the Hamiltonian $\hat{H}$ to be frustration-free.

Below we list the main difference from the original design:\\ %
(1). Before the game, in addition to the Hamiltonian $\hat{H}$, the players are also required to submit a local operator $\hat{V}$ with finite support, called the trapping potential;\\
(2). During the game, the players are only allowed to perform local operations inside their assigned circles at $t=0$ and $t=T$, and they have no control over the system for $0<t<T$;\\
(3). For $0<t<T$, the system evolves according to the time dependent Hamiltonian
\begin{equation}
	\hat{H}(t)=\hat{H}+\hat{V}_A(t)+\hat{V}_B(t),
\end{equation}
where $\hat{V}_A(t)$ is the trapping potential acting inside the circle A, and similarly $\hat{V}_B(t)$. Note that the dependence of $\hat{V}_A(t)$ on time $t$ is solely through the position of the circle $A$, which changes slowly and continuously in time, as controlled by the Referees. It is required that the trapping potential $\hat{V}$ can successfully trap all the excitations the players create at $t=0$~(near $\oA$ and $\oB$); otherwise, shortly after the circles A and B move away from $\oA$ and $\oB$, if the Referees detect any extra excitations at $\oA$ or $\oB$, then the challenge fails. 

With this formulation, topological phases hosting emergent $R$-paraparticles can also win the game, the winning strategy is essentially the same as described in the main text, the only new thing here is to construct the trapping potential $\hat{V}$ for the paraparticles. Such trapping potential always exist for any mobile topological quasiparticle in any topological phase. This winning strategy is robust against local noise, based on a similar argument on the noise-robustness of topological quantum computation. The categorical analysis presented in Sec.~\ref{sec:categorical_analysis} still applies in this case, up to some minor modifications in technical details, and the main conclusions are the same. 

\section{Deriving the YBE for the $R$-matrix constructed from SFC}\label{app:deriveYBE}
Let $\calC$ be a SFC and let $\calM$ be a module category over $\calC$ with a fusion rule of the form in Eq.~\eqref{eq:sigmapsifusion-0}, where $\sigma\in\calM$ and $\psi\in\calC$~(Sec.~\ref{sec:win1pt-SFC} corresponds to the special case $\calM=\calC$). 
In this section we show that $R^{b'a'}_{ab}=\!\!
\begin{tikzpicture}[baseline={([yshift=-.6ex]current bounding box.center)}, scale=0.45]
	\Rmatrix{0}{0}{R}
	\node  at (-1.5*\AL,\AL) {\footnotesize $b'$};
	\node  at (1.7*\AL,\AL) {\footnotesize $a'$};
	\node  at (-1.5*\AL,-\AL) {\footnotesize $a$};
	\node  at (1.5*\AL,-\AL) {\footnotesize $b$};
\end{tikzpicture}$ defined in Eq.~\eqref{def:RfromSFC} satisfies the Yang-Baxter equation~\eqref{eq:YBE}, and is therefore consistent with the second quantization formulation of $R$-parastatistics proposed in Ref.~\onlinecite{wang2023para}. %
We first prove $R^2=\mathds{1}$.
Since $\calC$ is symmetric, we have the following diagrammatic relation
\begin{equation}\label{eq:SFCinvol}
	\adjincludegraphics[height=22ex,valign=c]{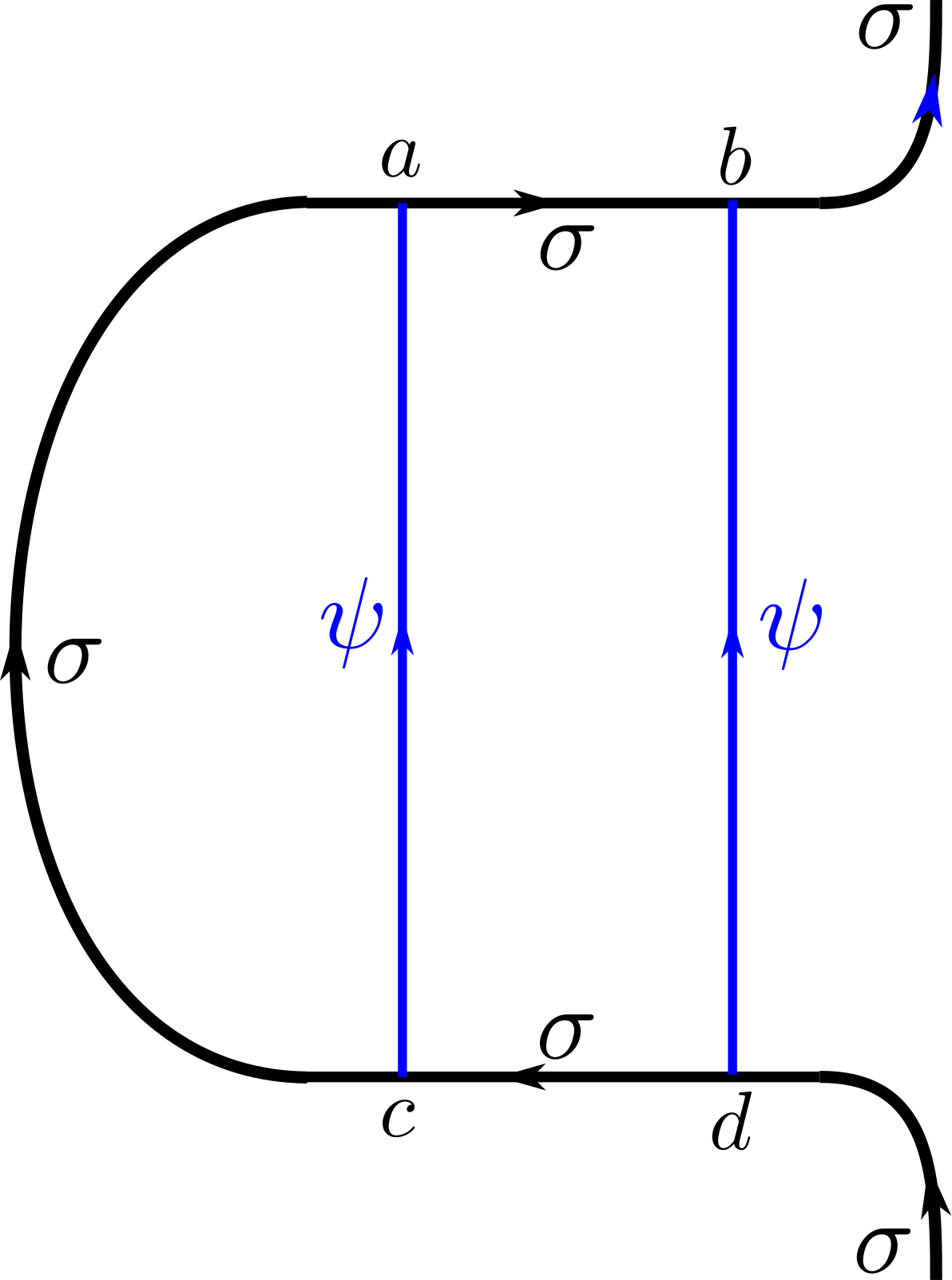}=
	~\adjincludegraphics[height=22ex,valign=c]{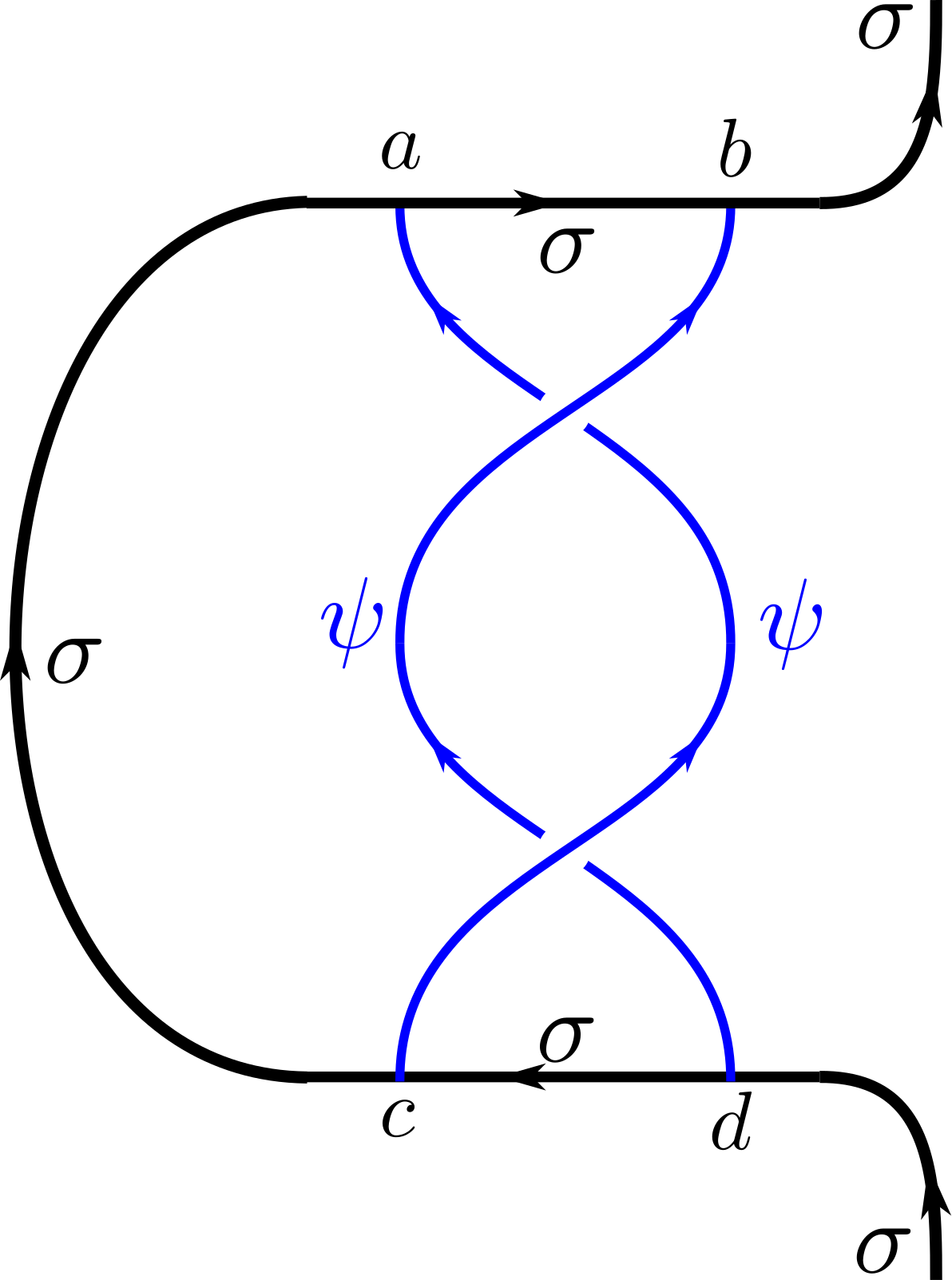}.
\end{equation}
Now we apply the following orthonormality and completeness relation satisfied by the fusion vertex 
$\adjincludegraphics[height=5ex,valign=c]{Figures/SFC/fusionvertex.png}$:
\begin{equation}\label{eq:orthocomplete}
	\adjincludegraphics[height=12ex,valign=c]{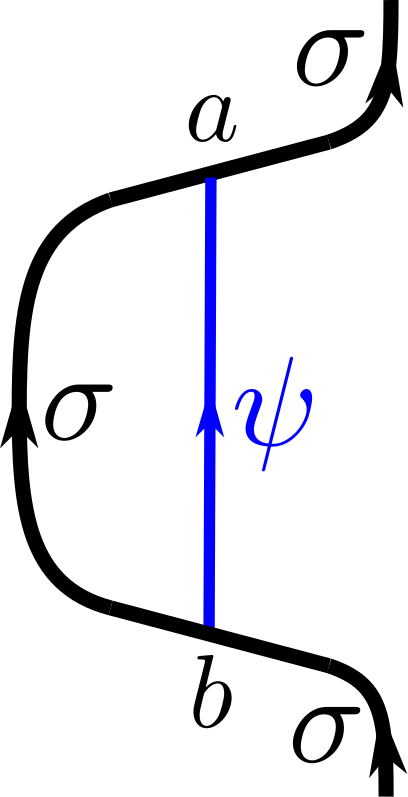}~=
	~\delta_{ab}~\adjincludegraphics[height=12ex,valign=c]{Figures/SFC/sigmaline.png}~,\quad
	\adjincludegraphics[height=12ex,valign=c]{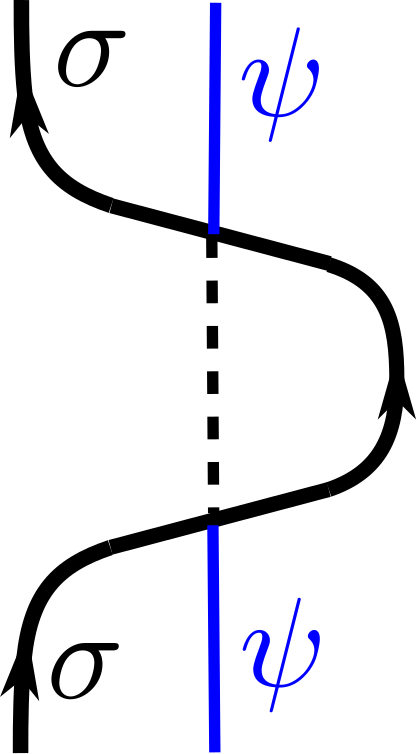}~=
	~\adjincludegraphics[height=12ex,valign=c]{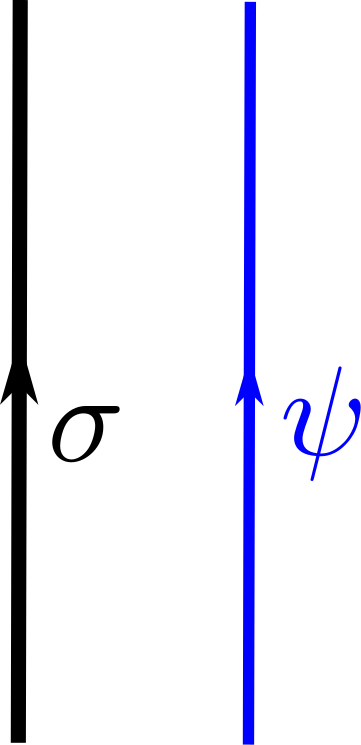}~,
\end{equation}
where a dashed line indicates a contraction of indices. With Eq.~\eqref{eq:orthocomplete},  Eq.~\eqref{eq:SFCinvol} is equivalent to 
\begin{equation}
	\delta^{a}_{c}\delta^{b}_{d}\quad\adjincludegraphics[height=12ex,valign=c]{Figures/SFC/sigmaline.png}=
	~\adjincludegraphics[height=22ex,valign=c]{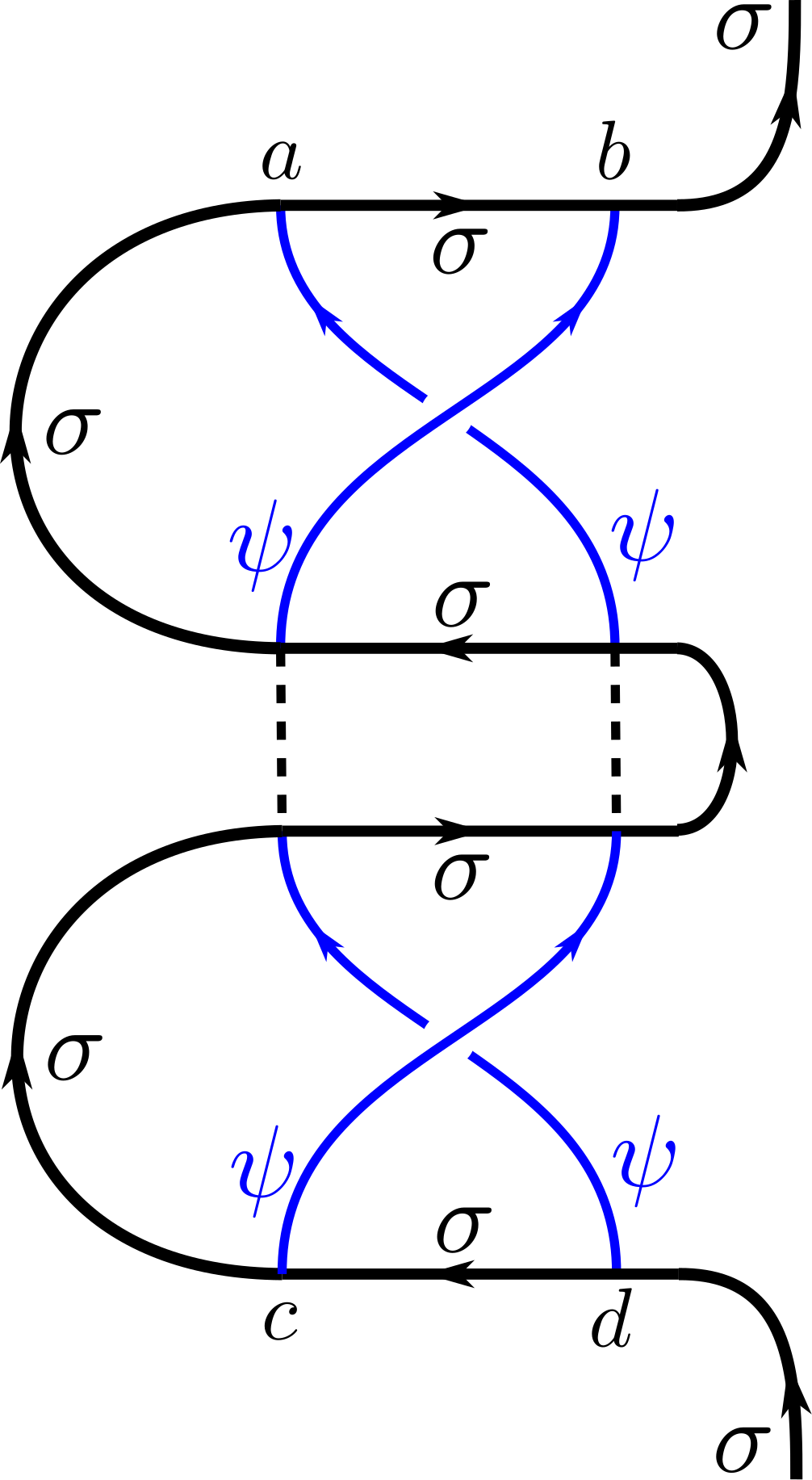}=\begin{tikzpicture}[baseline={([yshift=-.8ex]current bounding box.center)}, scale=0.5]
		\Rmatrix{0}{\AL}{R}
		\Rmatrix{0}{-\AL}{R}
		\node  at (-\AL,2.5*\AL) {\footnotesize $a$};
		\node  at (\AL,2.5*\AL) {\footnotesize $b$};
		\node  at (-\AL,-2.5*\AL) {\footnotesize $c$};
		\node  at (\AL,-2.5*\AL) {\footnotesize $d$};
	\end{tikzpicture}\quad\adjincludegraphics[height=12ex,valign=c]{Figures/SFC/sigmaline.png}.
\end{equation}
This proves the the first relation in Eq.~\eqref{eq:YBE}. 
The second relation in Eq.~\eqref{eq:YBE} is proved in a similar way using the following diagrammatic relation 
\begin{equation}\label{eq:SFCRYBELR}
	\adjincludegraphics[height=30ex,valign=c]{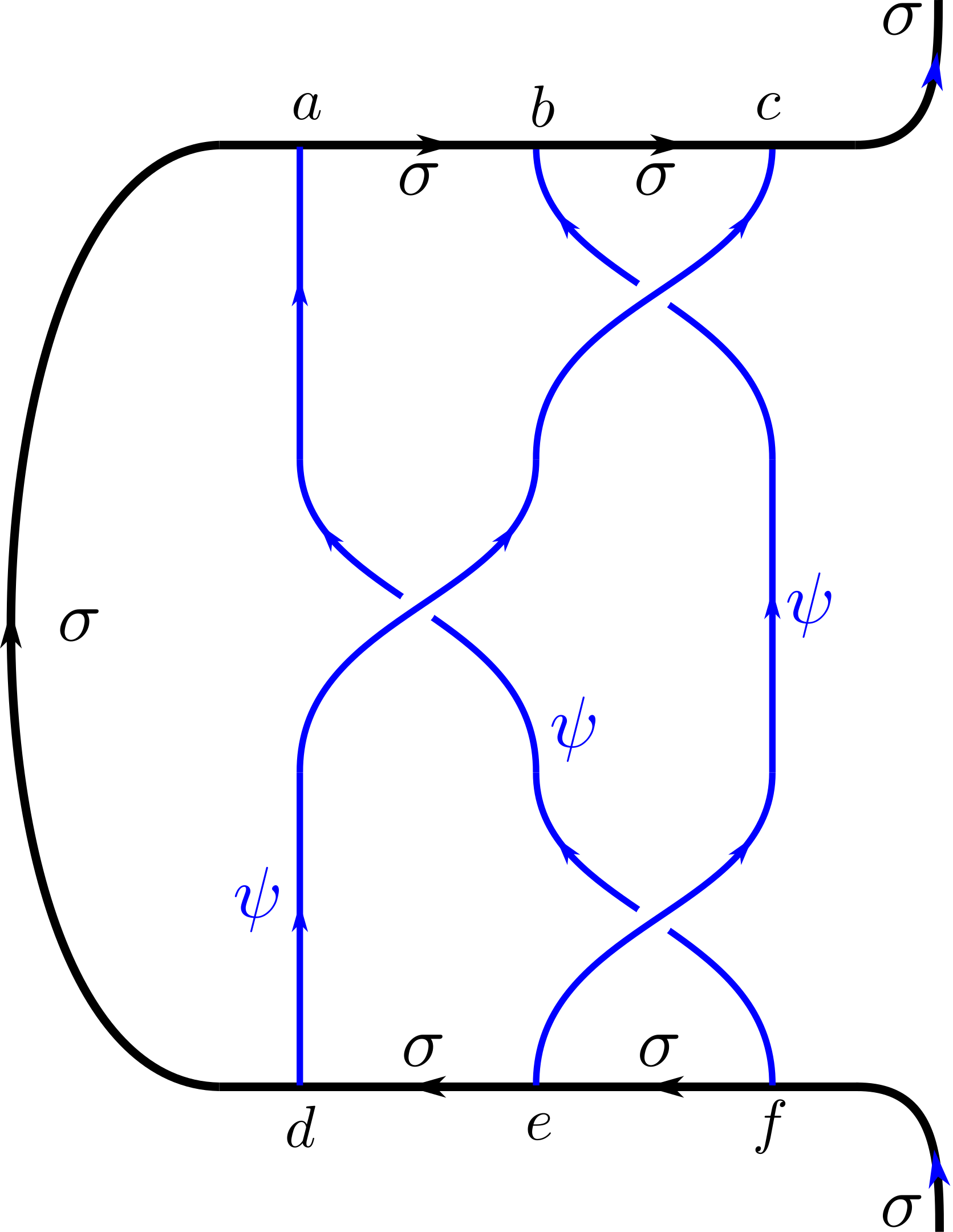}=
	~\adjincludegraphics[height=30ex,valign=c]{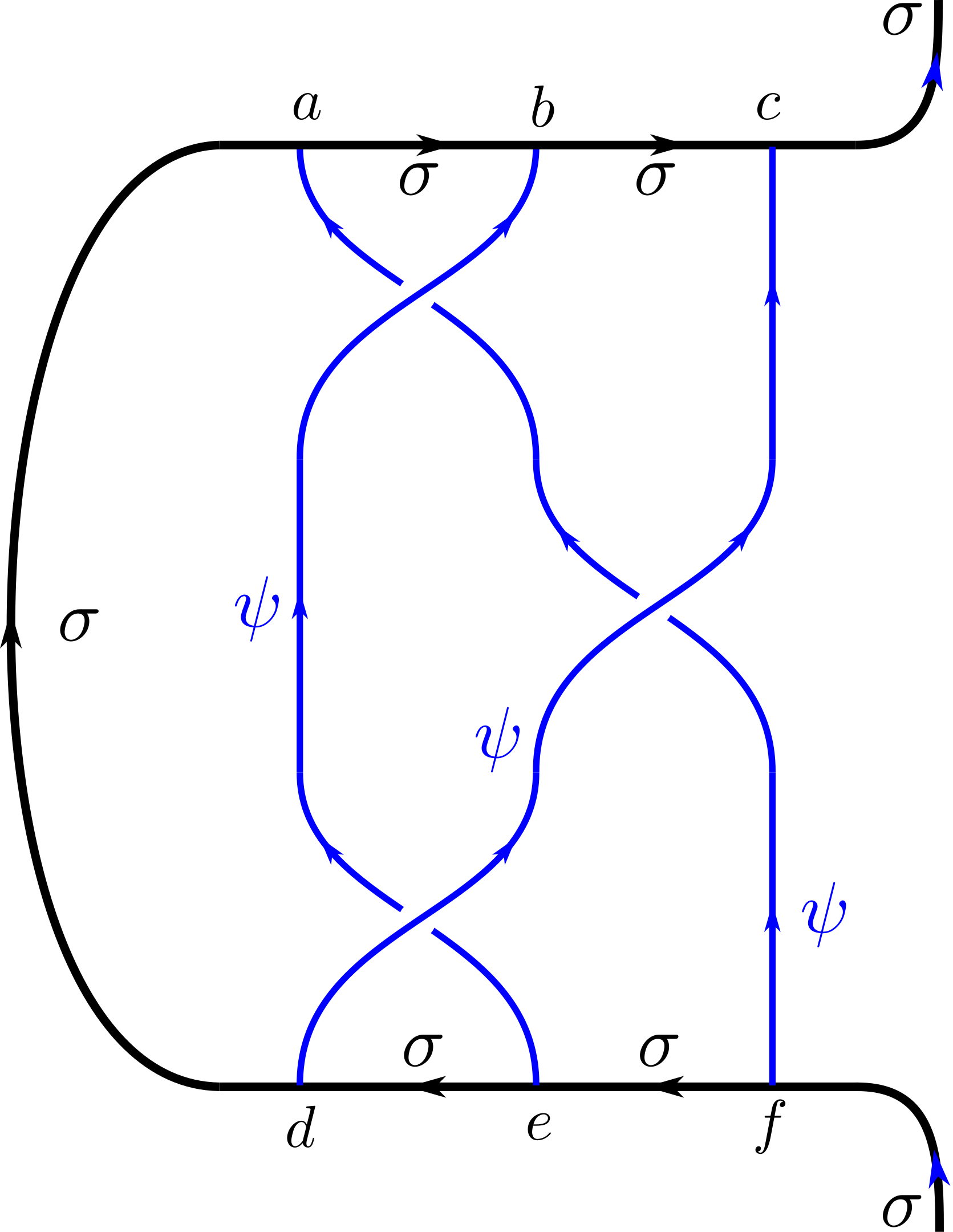},
\end{equation}
which holds more generally in any braided fusion category that has a fusion rule of the form~\eqref{eq:sigmapsifusion-0}. 

We finally remark that if $\calC$ is a braided fusion category describing anyons in 2D, Eq.~\eqref{eq:SFCRYBELR} still holds, but Eq.~\eqref{eq:SFCinvol} is no longer true. Consequently, the $R$-matrix defined in Eq.~\eqref{def:RfromSFC} still satisfies the second relation in Eq.~\eqref{eq:YBE}, but not the first one~(i.e., $R^2\neq \mathds{1}$ for anyons).

\section{Proof: $\sigma \times \psi  =m~\sigma\Rightarrow \psi\times \bar{\sigma}  =m~\bar{\sigma}$}\label{app:proofdualFrule}
Let $\calC$ be a unitary fusion category, and let $\sigma,\psi\in\calC$ be simple particle types. Below we prove a crucial fact we used in Sec.~\ref{sec:win2pt-SFC} that if $\calC$  has a fusion rule $\sigma \times \psi  =m~\sigma$ for some positive integer $m$, then we must have $\psi\times \bar{\sigma}  =m~\bar{\sigma}$, where $\bar{\sigma}$ is the antiparticle of $\sigma$. We first show that
$N_{\sigma\psi}^\sigma=N_{\psi\bar{\sigma}}^{\bar{\sigma}}$. Using the associativity of the fusion product, we have
\begin{alignat}{3}\label{eq:multiplicity_deriv}
	&&(\sigma \times \psi)\times \bar{\sigma}&=\sigma \times (\psi\times \bar{\sigma})\\
	\Leftrightarrow&& \sum_{\beta\in\calC}N_{\sigma\psi}^\beta \beta\times \bar{\sigma} &=\sum_{\beta\in\calC}N_{\psi\bar{\sigma}}^\beta \sigma\times \beta\nonumber\\
	\Leftrightarrow&& \sum_{\beta\in\calC}N_{\sigma\psi}^\beta N_{\beta\bar{\sigma}}^\nu  &=\sum_{\beta\in\calC}N_{\sigma\beta}^\nu N_{\psi\bar{\sigma}}^\beta ,\forall \nu\in\calC.  \nonumber
\end{alignat}
Taking $\nu=I$ in Eq.~\eqref{eq:multiplicity_deriv}, and using the fact that $N_{\sigma\beta}^I=\delta_{\bar{\sigma}\beta}$ for any simple types $\sigma,\beta\in\calC$, we obtain $N_{\sigma\psi}^\sigma=N_{\psi\bar{\sigma}}^{\bar{\sigma}}$. Now suppose we have $\sigma \times \psi  =m~\sigma$, i.e., $N_{\sigma\psi}^\beta=m\delta_{\sigma\beta}$. Inserting into the  last line of Eq.~\eqref{eq:multiplicity_deriv}, we obtain 
\begin{equation}
	m N_{\sigma\bar{\sigma}}^\nu=\sum_{\beta\in\calC}N_{\sigma\beta}^\nu N_{\psi\bar{\sigma}}^\beta=m N_{\sigma\bar{\sigma}}^\nu+\sum_{\beta\neq \bar{\sigma}}N_{\sigma\beta}^\nu N_{\psi\bar{\sigma}}^\beta,
\end{equation}
leading to 
\begin{equation}
	\sum_{\beta\neq \bar{\sigma}}(\sigma\times\beta) N_{\psi\bar{\sigma}}^\beta=0,
\end{equation}
which is impossible unless $N_{\psi\bar{\sigma}}^\beta=0$ for any $\beta\neq \bar{\sigma}$. Combining with $N_{\psi\bar{\sigma}}^{\bar{\sigma}}=N_{\sigma\psi}^\sigma=m$, we obtain $\psi\times \bar{\sigma}  =m~\bar{\sigma}$. 

We now generalize the above result to the module category case where $\sigma,\bar{\sigma}$ are point-like defects. Let $\calM$ be a right module category over the unitary fusion category $\calC$, and let $\calM^{\mathrm{op}}$ be its opposite category. It is useful to make the following identifications 
\begin{eqnarray}
	\calC&\cong& \Fun_{\calC}(\calC,\calC),\nonumber\\
	\calM&\cong&\Fun_{\calC}(\calC,\calM),\nonumber\\
	\calM^{\op}&\cong&\Fun_{\calC}(\calM, \calC),
\end{eqnarray}
as we did in Sec.~\ref{sec:blackdefectQD}. This makes it clear that $\calM^{\op}$ is a left $\calC$-module category, where the module action is given by functor composition. Then we can show that if we have a fusion rule of the form $\sigma \times \psi  =m~\sigma$ for some $\sigma\in\calM$ and $\psi\in\calC$, then we must have  $\psi\times \bar{\sigma}  =m~\bar{\sigma}$, where $\bar{\sigma}\in\calM^\op$ is the dual of $\sigma$. Indeed, the proof above still formally applies here without change. For example, in the module category case, the associativity condition used in Eq.~\eqref{eq:multiplicity_deriv} follows from the associativity of functor composition, and both sides of  %
Eq.~\eqref{eq:multiplicity_deriv} are objects of the fusion category $\Fun_{\calC}(\calM,\calM)$. 
Other steps also generalize to the module category case in a straightforward way. 

\section{A winning strategy using $D(S_3)$ anyon that is robust against the simple anti-anyon twist}~\label{app:winD(S3)}
In this section we present a winning strategy using the non-Abelian anyon in the quantum double model $D(S_3)$ with the $R$-matrix given in Eq.~\eqref{eq:DS3anyon-7}. We will see that this strategy is robust against the simple anti-anyon twist in Sec.~\ref{sec:simpletwist} but still gets blocked by the full-braid scrambler twist in Sec.~\ref{sec:immunity}.

The strategy is as follows. %
During the pregame discussion, the players
agree the following: Alice uses $a\in\{1,2\}$ to encode the bit of information that she will receive from the Referee~(here they choose $m_0=2$ for simplicity), while Bob uses $b\in\{1,3\}$. In this way, after Alice measures $a'$ when the game ends at $t=T$, according to the $R$-matrix in Eq.~\eqref{eq:DS3anyon-7}, if she has $a=1$ and $a'=2$ or $a'=3$, then even without knowing $n$ she immediately knows that $b=3$. This already means that a nonzero amount of information can be transferred between the two players. 
\begin{figure}
	\includegraphics[width=.5\linewidth]{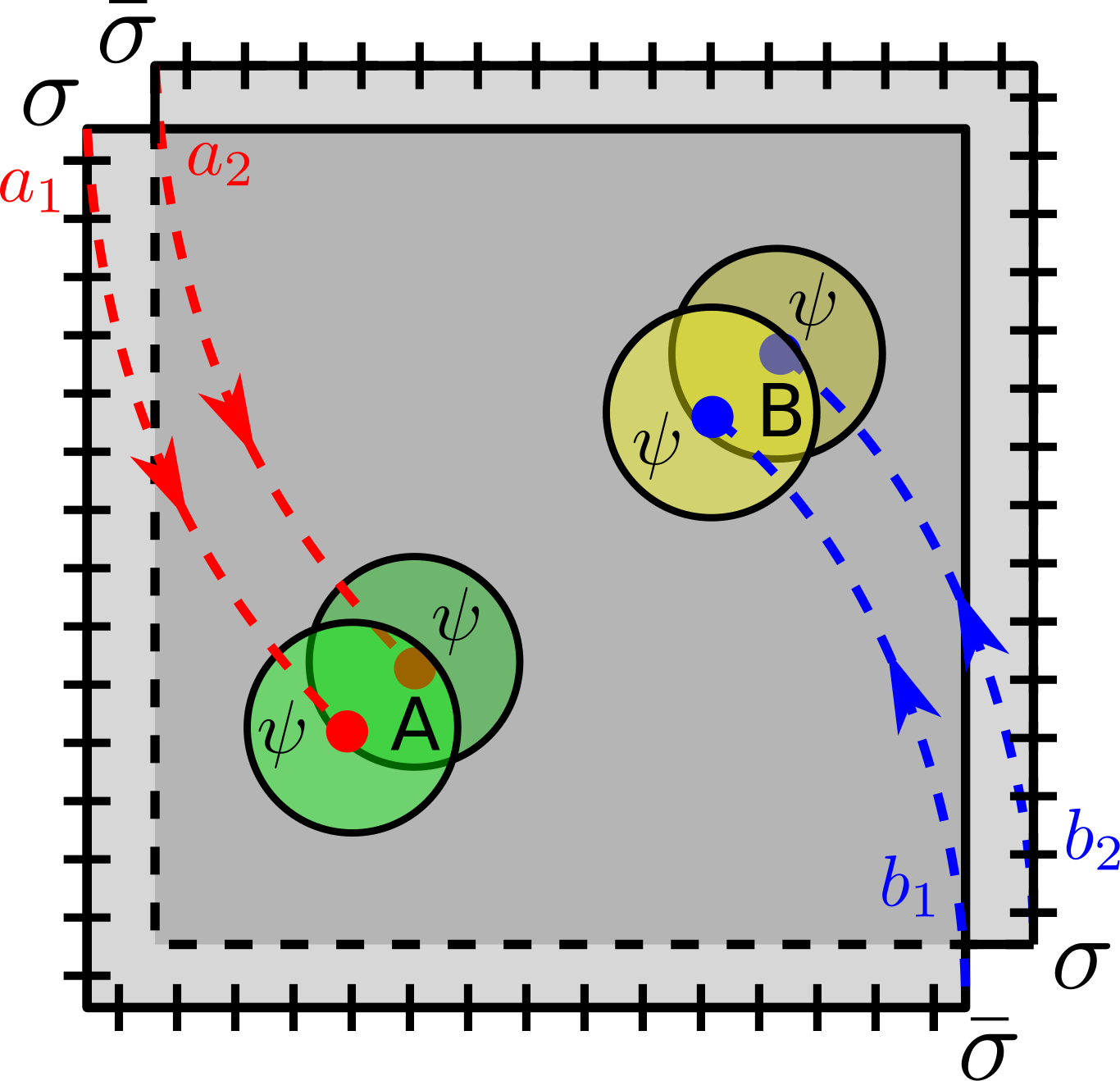}
	\caption{\label{fig:bilayerDS3} Winning strategy for the $D(S_3)$ anyon with $R$-matrix in Eq.~\eqref{eq:DS3anyon-7}. This double layer system~[where each layer is a quantum double model $D(S_3)$] can win the game with 100\% success rate, in a way robust to the simple anti-anyon twist. 
	}
\end{figure}
Indeed, using the multilayer trick introduced in Sec.~\ref{sec:win_2pt}, 
the players can have a winning strategy with a 100\% success rate. Here we stack two layers of this system and use the boundary condition shown in Fig~\ref{fig:bilayerDS3}. Notice that the second layer is obtained by rotating the first layer by 180 degrees, which swaps the positions of the boundary defects $\sigma$ and $\bar{\sigma}$. 
Therefore, with respect to the defect at $\oA$, the $R$-matrix for $\psi$ in the second layer is $R_2=XR_1X$~(this can be obtained directly from the axioms in Sec.~\ref{sec:axioms_emergent_para} by swapping the roles of $\oA$ and $\oB$), where $R_1$ is given in Eq.~\eqref{eq:DS3anyon-7}. In the winning strategy each player uses one $\psi$ in each layer, and Alice uses $a=(a_1,a_2)\in\{(1,1),(2,2)\}$ to encode the number she receives from the Referees, while Bob uses $b=(b_1,b_2)\in\{(1,1),(3,3)\}$. Here $a_1$ and $a_2$ denotes the internal state of the anyon in the first and second layer at $t=0$, respectively, and similarly for $(b_1,b_2)$.   At $t=T$, Alice measures the internal states of $\psi$ in each layer, and obtain $(a'_1,a'_2)$, and similarly for Bob. 
In Tab.~\ref{tab:DS3-7solution} we compute the measurement outcomes $(a'_1 a'_2,b'_1 b'_2)$ for all possible values of $n~(\mathrm{mod}~3)$ and all possible input values $(a_1a_2,b_1b_2)$ with $a_1=a_2, b_1=b_2$. From here we can see that in any case, knowing $a_1,a'_1,a'_2$ allows Alice to uniquely determine $b_1$, and similarly for Bob. 
For example, according to Tab.~\ref{tab:DS3-7solution}, if $(a_1,a'_1,a'_2)\in\{111,223,211,232\}$, then Alice knows $b=1$, and if $(a_1,a'_1,a'_2)\in\{112,133,121,221,233,212\}$, then $b=3$~(importantly, note that these two sets have no overlap, so this algorithm is unambiguously defined). 
Therefore, this strategy works and is robust against the simple twist.   

\begin{table}[t]
	\centering
	\begin{tabular}{c|c|cccc}
		&$(a_1a_2,b_1b_2)$ & (11,11) & (11,33) & (22,11) & (22,33) \\
		\hline
		$n=0$& $(a'_1a'_2,b'_1b'_2)$& (11,11) & (12,23) & (23,31) &(21,13) \\
		$n=1$& $(a'_1a'_2,b'_1b'_2)$ & (11,11) & (33,11) &(11,22) & (33,22) \\
		$n=2$& $(a'_1a'_2,b'_1b'_2)$ & (11,11) & (21,32) &(32,13) & (12,31) \\
	\end{tabular}
	\caption{\label{tab:DS3-7solution} Winning strategy for the $R$-matrix in Eq.~\eqref{eq:DS3anyon-7} that is robust against the simple anti-anyon twist. Here we exhaust all possible measurement outcomes $(a'_1a'_2,b'_1b'_2)$, for $n=0,1,2~(\mathrm{mod}~3)$, and all possible input values $(a_1a_2,b_1b_2)$. }
\end{table}

\section{Computing $R$-matrices from central type factor groups}\label{sec:RfromCentralType}
In this section we show explicitly how to compute $R$-matrices from central type factor groups. In principle, one can always do this computation by directly  evaluating Eq.~\eqref{eq:RmatfromFRmove}, however, to do this in practice, one first need to know the $F$-symbols, which is not easy to compute directly from group-theoretical data. Below in App.~\ref{app:computeRfrommorphisms} we present an alternative approach to compute the $R$-matrix, by obtaining the tensor network representation of the LHS of Eq.~\eqref{def:RfromSFC}. We then present several concrete examples in Apps.~\ref{app:AbCTFG}-\ref{app:G64}. Finally in App.~\ref{app:GCT} we present a special class of finite groups $G$ in which $\Rep(G)$ itself has a fusion rule of the form in Eq.~\eqref{eq:sigmapsifusion-0}.
\subsection{Extracting $R$ from fusion morphisms}\label{app:computeRfrommorphisms}
Let us recall the situation in Sec.~\ref{sec:blackdefectRep(G)}. 
Let $G$ be a finite group, $\calC=\Rep(G)$ be the category of finite dimensional representations of $G$. Let $H$ be a subgroup of $G$ and let $\omega\in \mathcal{H}^2(H,U(1))$ be a non-degenerate 2-cocycle of $H$, i.e., $H$ is a central type factor group, and $\calM=\Rep^\omega(H)$ is a module category over $\Rep(G)$ describing a black defect. In Sec.~\ref{sec:blackdefectRep(G)} we have argued that for any $\psi\in\calC$, we must have a fusion rule $\sigma \otimes \psi = m~\sigma$, where $m$ is the dimension of $\psi$.  
This fusion rule indicates an isomorphism between the corresponding projective representations of $H$, %
meaning that there exists a $d_\sigma d_\psi$-dimensional invertible matrix $V$~(indeed, $V$ can always be chosen to be unitary, which follows from the general fact that if two unitary representations of a group are equivalent, then they are unitarily equivalent) satisfying
\begin{equation}\label{eq:RepInterwiner}
	[\sigma(g) \otimes \psi(g)] \cdot V=V\cdot[I_m \otimes \sigma(g)],\quad \forall ~g\in H, %
\end{equation} 
where $I_m$ is the $m\times m$ identity matrix, so that $I_m \otimes \sigma(g)$ is equivalent to a direct sum of $m$ copies of the projective representation $\sigma$.  
We can rewrite Eq.~\eqref{eq:RepInterwiner} in tensor graphical form as %
\begin{equation}\label{eq:RepInterwinerTN}
	\begin{tikzpicture}[baseline={([yshift=-1.4ex]current bounding box.center)}, scale=.8]
		\umatrix{0}{0}{V}{a}
		\indexflowout{-1.4}{0}{-1}{}
		\indexflowoutV{0}{1.4}{1}{}
		\indexflowoutV{0}{0}{1}{}
		\indexflowin{0}{0}{1}{}
		\indexflowin{-1.4}{0}{1}{}
		\circleTensor{-1.4 }{0}{\small \sigma(g)}
		\circleTensor{0}{1.4}{\small \psi(g)}
	\end{tikzpicture}=
	\begin{tikzpicture}[baseline={([yshift=-.6ex]current bounding box.center)}, scale=.8]
		\umatrix{0}{0}{V}{a}
		\indexflowout{0}{0}{-1}{}
		\indexflowoutV{0}{0}{1}{}
		\indexflowin{0}{0}{1}{}
		\indexflowin{1.4}{0}{1}{}
		\circleTensor{1.4}{0}{\small \sigma(g)}
	\end{tikzpicture}
\end{equation} 
where the arrows indicate the flow of indices in matrix multiplications. In fact, the four-index tensor $\begin{tikzpicture}[baseline={([yshift=.4ex]current bounding box.center)}, scale=.6]
	\umatrix{0}{0}{V}{a}
\end{tikzpicture}$ is the explicit matrix representation of the fusion vertex $\adjincludegraphics[height=5ex,valign=c]{Figures/SFC/fusionvertex.png}$. 
Such a unitary tensor $V$ satisfying Eq.~\eqref{eq:RepInterwinerTN} is unique up to a basis transformation in the fusion space $\Hom(\sigma,\sigma\times \psi)$, i.e., up to a unitary transformation in the index $a$. 
The unitarity of $V$ is expressed by the following tensor graphical equations
\begin{eqnarray}\label{eq:Vunitarity}
	\begin{tikzpicture}[baseline={([yshift=.4ex]current bounding box.center)}, scale=.8]
		\umatrix{0}{1}{V^\dagger}{}
		\umatrix{0}{0}{V}{}
		\indexflowout{0}{1}{1}{}
		\indexflowin{0}{0}{1}{}
		\deltatensorRup{-0.5}{1}
	\end{tikzpicture}~=~ 
	\begin{tikzpicture}[baseline={([yshift=-.4ex]current bounding box.center)}, scale=.8]
		\deltavert{0}{0}
		\deltatensorRup{0.5}{0.5}
	\end{tikzpicture}~~,\quad
	\begin{tikzpicture}[baseline={([yshift=.4ex]current bounding box.center)}, scale=.8]
		\umatrix{0}{1}{V}{}
		\umatrix{0}{0}{V^\dagger}{}
		\indexflowout{0}{1}{-1}{}
		\indexflowin{0}{0}{-1}{}
		\deltatensorup{0.5}{1}
	\end{tikzpicture}~=~ 
	\begin{tikzpicture}[baseline={([yshift=-.4ex]current bounding box.center)}, scale=.8]
		\deltavert{0}{0}
		\deltatensorup{-0.5}{0.5}
	\end{tikzpicture}~,
\end{eqnarray}
which is reminiscent of the orthonormality and completeness relation~\eqref{eq:orthocomplete} satisfied by the fusion vertex. By replacing $\adjincludegraphics[height=5ex,valign=c]{Figures/SFC/fusionvertex.png}$ with $\begin{tikzpicture}[baseline={([yshift=.4ex]current bounding box.center)}, scale=.6]
	\umatrix{0}{0}{V}{a}
\end{tikzpicture}$, we obtain the tensor network representation of Eq.~\eqref{def:RfromSFC}:
\begin{equation}\label{eq:RmatfromGrp}
		\begin{tikzpicture}[baseline={([yshift=-.4ex]current bounding box.center)}, scale=.8]
			\umatrix{0}{0}{V}{a}
			\umatrix{1}{0}{V}{b}
			\umatrix{0}{2}{V^\dagger}{}
			\umatrix{1}{2}{V^\dagger}{}
			\pimatrix{0.5}{1}
			\indexflowin{1}{0}{1}{s}
			\indexflowout{1}{2}{1}{r}
			\deltatensorRlongup{-0.5}{2}{2}
			\hindices{{b',a'}}{0}{2.7}
		\end{tikzpicture}=R^{b'a'}_{ab}\delta_{r,s}.
	\end{equation} 

We give a few remarks before presenting examples. First, notice that Eq.~\eqref{eq:RmatfromGrp} can be rewritten in the following simpler form
\begin{equation}\label{eq:RXweakequiv}
\begin{tikzpicture}[baseline={([yshift=.4ex]current bounding box.center)}, scale=.8]
	\umatrix{0}{0}{V}{}
	\umatrix{1}{0}{V}{}
	\Rmatrix{0.5}{-1}{R}
\end{tikzpicture}= \begin{tikzpicture}[baseline={([yshift=.4ex]current bounding box.center)}, scale=.8]
	\pimatrix{0.5}{1}
	\umatrix{0}{0}{V}{}
	\umatrix{1}{0}{V}{}
\end{tikzpicture}. 
\end{equation}
This gives an important criterion on which type of $R$-matrices can be produced by an SFC: there exists a unitary~[in the sense of Eq.~\eqref{eq:Vunitarity}] four-index tensor $V$ that satisfies Eq.~\eqref{eq:RXweakequiv}. Not every $R$-matrix satisfies this condition, for example, for  $R=-\mathds{1}_{m\times m}$~\cite{wang2023para}, there does not exist a finite dimensional unitary tensor $V$ that satisfies Eq.~\eqref{eq:RXweakequiv}, and therefore $R=-\mathds{1}_{m\times m}$ cannot be produced by any SFC. Note the close similarity between Eq.~\eqref{eq:RXweakequiv} and Eq.~(1) in Ref.~\onlinecite{wang2024hopf}, which is a consequence of the connection between SFC and triangular Hopf algebras~\cite{TenCat_EGNO}. 
Second, the above method only gives the $R$-matrix for $\calC=\Rep(G)$; if one instead considers the SFC $\mathrm{sRep}(G,z)$, where $z\in G$ is a central element of order 2, then one should instead use the braiding of $\mathrm{sRep}(G,z)$ in  the LHS of Eq.~\eqref{def:RfromSFC}. The resulting $R$-matrix is equal to $\psi(z)R^{b'a'}_{ab}$, where $R^{b'a'}_{ab}$ is computed from Eq.~\eqref{eq:RmatfromGrp}. 
Finally, %
one can straightforwardly generalize Eq.~\eqref{def:RfromSFC} to give the mutual $R$-matrix between two particles $\psi,\varphi\in\calC$:
	\begin{equation}\label{eq:RmatfromGrp-mutual}
			\begin{tikzpicture}[baseline={([yshift=-.4ex]current bounding box.center)}, scale=.8]
				\umatrix{0}{0}{V_\psi}{a}
				\umatrix{1}{0}{V_\varphi}{b}
				\umatrix{0}{2}{V^\dagger_\varphi}{}
				\umatrix{1}{2}{V^\dagger_\psi}{}
				\pimatrix{0.5}{1}
				\indexflowin{1}{0}{1}{s}
				\indexflowout{1}{2}{1}{r}
				\deltatensorRlongup{-0.5}{2}{2}
				\hindices{{b',a'}}{0}{2.7}
			\end{tikzpicture}=[R^{(\psi\varphi)}]^{b'a'}_{ab}\delta_{r,s},
		\end{equation} 
		where $V_\psi$ is the solution to Eq.~\eqref{eq:RepInterwinerTN} while $V_\varphi$ is the solution to Eq.~\eqref{eq:RepInterwinerTN} with $\psi(g)$ replaced by $\varphi(g)$. 
		Note that for the case $\calC=\mathrm{sRep}(G,z)$, the mutual $R$-matrix has an extra $(-1)$ sign factor if $\psi(z)=\varphi(z)=-1$. 
		In the following we give concrete examples of the group $G$ and use Eq.~\eqref{eq:RmatfromGrp} to compute the $R$-matrix. 
		
		\subsection{Example: Abelian CTFGs}\label{app:AbCTFG}
		We begin by considering examples of finite groups $G$ that has an Abelian CTFG subgroup $H$. We first prove a general fact
		\begin{fact}\label{fact:AbCTFGSWAP}
		If the CTFG $H$ is Abelian, then the $R$-matrix produced by Eq.~\eqref{eq:RmatfromGrp} 
		 must be a swap-type $R$-matrix of the form  $R^{b'a'}_{ab}=\delta_{aa'}\delta_{bb'}\theta_{ab}$, up to a internal space basis transformation in Eq.~\eqref{eq:basistransformRmat}. 
		\end{fact}
		\begin{proof}
		Since $H$ is an Abelian subgroup of $G$ and $\psi\in\Rep(G)$, we have $\psi(g)\psi(h)=\psi(h)\psi(g)$ for any $g,h\in H$. Therefore, $\{\psi(g)|g\in H\}$ can be simultaneously diagonalized, and without loss of generality, we can assume that all of them are already diagonal, by choosing a suitable basis for the representation space of $\psi$. Let $[\psi(g)]_{ab}=\delta_{ab}\psi_b(g)$, where $\psi_b(g)$ is a U$(1)$ phase factor.
		Then Eq.~\eqref{eq:RepInterwinerTN} becomes
		\begin{equation}\label{eq:sgVcommu}
		\psi_b(g)\sigma(g)V^b_a=V^b_a\sigma(g),\quad\forall ~g\in H,
		\end{equation}
		where for each pair $(a,b)$, $V^b_a$ is a $d_\sigma\times d_\sigma$ matrix, defined as $[V^b_a]_{ij}=
		\begin{tikzpicture}[baseline={([yshift=.4ex]current bounding box.center)}, scale=.65]
			\umatrix{0}{0}{V}{}
			\quantumindices{0}{0}{b}{a}
			\paraindices{0}{0}{i}{j}
		\end{tikzpicture}$. Taking Hermitian conjugate on Eq.~\eqref{eq:sgVcommu}, we obtain
		\begin{equation}\label{eq:sgVcommudagger}
			\psi^*_b(g)\sigma(g)(V^b_a)^\dagger=(V^b_a)^\dagger \sigma(g),\quad\forall ~g\in H.
		\end{equation}
		It then follows from Eqs.~(\ref{eq:sgVcommu},\ref{eq:sgVcommudagger}) that $(V^b_a)^\dagger V^b_a$ commutes with $\sigma(g)$ for all $g\in H$. Since $\sigma$ is an irreducible projective representation of $H$, Schur's lemma implies that $(V^b_a)^\dagger V^b_a=\lambda^b_a \mathds{1}$ for some $\lambda^b_a\geq 0$. It is clear that for each $b$, $\lambda^b_a$ cannot be identically zero for all $a\in\{1,2,\ldots,m\}$--otherwise, $V$ cannot satisfy Eq.~\eqref{eq:Vunitarity}. For each $b$, choose $a(b)\in\{1,2,\ldots,m\}$ such that $\lambda^b_{a(b)}\neq 0$. This allows us to construct another unitary solution to Eq.~\eqref{eq:RepInterwinerTN}
		\begin{equation}\label{Vprimedef}
		\begin{tikzpicture}[baseline={([yshift=.4ex]current bounding box.center)}, scale=.75]
			\umatrix{0}{0}{V'}{}
			\quantumindices{0}{0}{b}{c}
			\paraindices{0}{0}{i}{j}
		\end{tikzpicture}=\delta_{bc}[V^b_{a(b)}]_{ij}/\sqrt{\lambda^b_a}.
		\end{equation}
It is clear from Eq.~\eqref{eq:RmatfromGrp} that the $R$-matrix constructed from $V'$ must be of the swap-type due to the $\delta_{bc}$ factor in the RHS of Eq.~\eqref{Vprimedef}. 
As we remarked below Eq.~\eqref{eq:RepInterwinerTN}, the two unitary solutions $V$ and  $V'$ to Eq.~\eqref{eq:RepInterwinerTN} must be related by a unitary transformation in the index $a$, and from  Eq.~\eqref{eq:RmatfromGrp} we see that the two $R$-matrices produced by $V$ and  $V'$ must be related by a internal space basis transformation in Eq.~\eqref{eq:basistransformRmat}, which proves our claim. 
		\end{proof}
		\subsubsection{The group $G=D_8$~(level 3)}\label{app:AbCTFGD8}
		The dihedral group $D_8$ is the group of symmetries of a square, and has the following presentation
		\begin{equation}\label{def:DihedralGrp}
			D_8=\left\langle r, s \mid r^4=s^2=1, \quad s r s^{-1}=r^{-1}\right\rangle.
		\end{equation}
		We also have $D_8\cong Z_2\ltimes (Z_2\times Z_2)$, with presentation
		\begin{eqnarray}\label{def:DihedralGrp-alt}
			D_8=\langle x,z_1,z_2 \mid&& x^2=z_1^2=z_2^2=1,\nonumber\\
			&& x z_1=z_2x, z_1 z_2=z_2z_1\rangle.
		\end{eqnarray}
		The relation between the two presentations is given by $(x,z_1,z_2)=(sr,s,sr^2)$. 
		$D_8$ has a 2-dimensional representation $\psi\in\Rep(D_8)$ defined by
		\begin{equation}\label{rep:D8rep2}
			\psi(r)=\begin{pmatrix}
				i & 0 \\
				0 & -i
			\end{pmatrix},\quad \psi(s)=\begin{pmatrix}
				0 & 1 \\
				1 & 0
			\end{pmatrix}.
		\end{equation}
		$D_8$ has a subgroup $H=Z_2\times Z_2$ generated by $z_1$ and $z_2$, which is a CTFG, with a 2-dimensional irreducible projective representation $\sigma\in\Rep^\omega(H)$, defined by  
		\begin{equation}\label{eq:Z2Z2projRep}
			\sigma(z_1)=\sigma^x, \quad \sigma(z_2)=\sigma^y.
		\end{equation}
		In this case, we have $\sigma\otimes\psi=2\sigma$, and the solution to Eq.~\eqref{eq:RepInterwinerTN} is expressed as
		\begin{equation}
			\begin{tikzpicture}[baseline={([yshift=.4ex]current bounding box.center)}, scale=.65]
				\umatrix{0}{0}{V}{}
				\quantumindices{0}{0}{1}{1}
				\paraindices{0}{0}{i}{j}
			\end{tikzpicture}=[\sigma^x]_{ij},\quad
			\begin{tikzpicture}[baseline={([yshift=.4ex]current bounding box.center)}, scale=.65]
				\umatrix{0}{0}{V}{}
				\quantumindices{0}{0}{2}{2}
				\paraindices{0}{0}{i}{j}
			\end{tikzpicture}=[\sigma^y]_{ij}.
		\end{equation}
		One can then use Eq.~\eqref{eq:RXweakequiv} to conveniently extract the $R$-matrix, and the result is $R^{b'a'}_{ab}=(-1)^{a+b} \delta_{aa'}\delta_{bb'}$ 
		with quantum dimension $m=2$.

\subsubsection{The group $G=A_4$~(level 4)}\label{app:CTFGA4}
The alternating group $A_4$ is defined as the group of even permutations of the set $\{1,2,3,4\}$, and has order $|A_4|=12$. It can alternatively be  described as a semidirect product $Z_3\ltimes (Z_2\times Z_2)$ with the following presentation
\begin{eqnarray}\label{def:A4Grp}
	A_4=\langle x,z_1,z_2 \mid&& x^3=z_1^2=z_2^2=1,~z_1z_2=z_2z_1,\nonumber\\
	&&xz_1=z_2x,~xz_2=z_1z_2 x \rangle,
\end{eqnarray}
where the permutation representation of the generators are
\begin{equation}
	x=(123),\quad z_1=(12)(34),\quad z_2=(14)(23).	
\end{equation}
For later convenience, we define the following $3\times 3$ matrices
\begin{alignat}{3}
	p_1&=\begin{pmatrix}
		1 & 0 & 0 \\
		0 & -1 & 0\\
		0 & 0 & -1
	\end{pmatrix},&~~
	p_2&=\begin{pmatrix}
		-1 & 0 & 0 \\
		0 & 1 & 0\\
		0 & 0 & -1
	\end{pmatrix},\nonumber\\
	s&=\begin{pmatrix}
		1 & 0 & 0 \\
		0 & \omega & 0\\
		0 & 0 & \omega^2
	\end{pmatrix},&~~
	t&=\begin{pmatrix}
		0 & 0 & 1 \\
		1 & 0 & 0\\
		0 & 1 & 0
	\end{pmatrix},\label{def:p1p2st}
\end{alignat}
where $\omega=e^{2\pi i/3}$, %
and $s,t$ are constructed to satisfy
\begin{equation}\label{eq:stalgebra}
	s^3=t^3=1,\quad st=\omega ts.
\end{equation}
Then $A_4$ has the following 3-dimensional irreducible representation
\begin{eqnarray}\label{eq:A4_3dimrep}
	\psi(z_1)=p_1,~\psi(z_2)=p_2,~\psi(x)=t.
			\end{eqnarray}
			$A_4$ also has a CTFG subgroup $H=Z_2\times Z_2$ generated by $z_1$ and $z_2$, and we consider the same irreducible projective representation $\sigma$ defined by  
			Eq.~\eqref{eq:Z2Z2projRep}.
			In this case, we have $\sigma\otimes\psi=3\sigma$, and the solution to Eq.~\eqref{eq:RepInterwinerTN} is 
			\begin{equation}
				\begin{tikzpicture}[baseline={([yshift=.4ex]current bounding box.center)}, scale=.65]
					\umatrix{0}{0}{V}{}
					\quantumindices{0}{0}{1}{1}
					\paraindices{0}{0}{i}{j}
				\end{tikzpicture}=[\sigma^x]_{ij},~
				\begin{tikzpicture}[baseline={([yshift=.4ex]current bounding box.center)}, scale=.65]
					\umatrix{0}{0}{V}{}
					\quantumindices{0}{0}{2}{2}
					\paraindices{0}{0}{i}{j}
				\end{tikzpicture}=[\sigma^y]_{ij},~
				\begin{tikzpicture}[baseline={([yshift=.4ex]current bounding box.center)}, scale=.65]
					\umatrix{0}{0}{V}{}
					\quantumindices{0}{0}{3}{3}
					\paraindices{0}{0}{i}{j}
				\end{tikzpicture}=[\sigma^z]_{ij}.
			\end{equation}
			The $R$ matrix computed from Eq.~\eqref{eq:RXweakequiv} is $R^{b'a'}_{ab}=(-1)^{\delta_{ab}+1}\delta_{aa'}\delta_{bb'}$ with quantum dimension $m=3$.

						\subsection{Example: the CTFG $G=A_4\times Z_3$~(level 5)}\label{sec:A4Z3}
					We now give an example of a non-Abelian CTFG. Consider the group $G=A_4\times Z_3$, the direct product of $A_4$ and $Z_3$, with order $|G|=36$. We use $z$ to denote the generator of $Z_3$. Then $G$ is the group generated by $z_1,z_2,x,z$ where $z_1,z_2,x$ satisfies all the relations in Eq.~\eqref{def:A4Grp}, and $z$ satisfies $z^3=1$ and commutes with $z_1,z_2,x$. 
					
					Let $\psi$ be the 3-dimensional irreducible representation of $G$ where $\psi(z_1),\psi(z_2),\psi(x)$ are given in Eq.~\eqref{eq:A4_3dimrep} and $\psi(z)=I_3$. Let $\sigma$ be the following projective representation of $G$:
					\begin{alignat}{3}
						\sigma(z_1)&=i\sigma^x\otimes I_3,&\quad \sigma(x)&=q\otimes t,\nonumber\\
						\sigma(z_2)&=i\sigma^y\otimes I_3,& \sigma(z)&=I_2\otimes s,
					\end{alignat}
					where $$q=\exp\left(\frac{2\pi i}{3}\frac{\sigma^x+\sigma^y+\sigma^z}{\sqrt{3}}\right),$$
					and $s,t$ are defined in Eq.~\eqref{def:p1p2st}. 
					Notice that $\sigma$ is irreducible and has dimension $d_\sigma=6=\sqrt{|G|}$. Therefore $G$ is a central-type factor group we are looking for. 
					
					To construct the $R$-matrix, we find the matrix $V$ satisfying Eq.~\eqref{eq:RepInterwiner}, 
					and insert it into  Eq.~\eqref{eq:RmatfromGrp} to obtain
					\begin{equation}\label{eq:RA4Z3}
						R\ket{a,b}=\varphi(a,b)\ket{b+1,a-1},
					\end{equation}
					where $a,b=1,2,3$ are understood modulo 3, and $\varphi(a,b)=-(-1)^{\delta_{a,b+1}}$. 
					It is straightforward to verify that this $R$-matrix can win the game in a way robust against noise and eavesdropping, and it can also win the who-entered-first challenge. Therefore, this $R$-matrix describes a ``full-fledged paraparticle''~(level 5).
					\subsection{Example: the CTFG $G=D_8\ltimes Z_2^{\times 3}$~(level 5)}\label{app:G64}
					We now define a non-Abelian CTFG that leads to the $R$-matrix in Eq.~\eqref{eqApp:seth-R}. Here we 
					present the Abelian group $Z_2^{\times 3}$ by four mutually commuting elements $z_1,z_2,z_3,z_4$, satisfying $z_j^2=z_1z_2z_3z_4=1$. The action of $D_8$ on $Z_2^{\times 3}$ is defined as follow
					\begin{equation}\label{eq:D8actiononA8}
						r z_j r^{-1}=z_{j+1},\quad s z_j s^{-1}=z_{\bar{j}}, %
					\end{equation}
					where $(\bar{1},\bar{2},\bar{3},\bar{4})=(2,1,4,3)$, and the subscript $j$ is understood modulo 4. This action can be depicted in the following graph:
					\begin{equation}
						\begin{tikzpicture}[baseline={([yshift=.4ex]current bounding box.center)}, scale=.8]
							\node (r) at (1,1) {$r?r^{-1}$};
							\node (z4) at (0,0) {$z_4$};
							\node (z1) at (0,2) {$z_1$};
							\node (z2) at (2,2) {$z_2$};
							\node (z3) at (2,0) {$z_3$};
							\draw[very thick,->-=.7]  (z1) to (z2);
							\draw[very thick,->-=.7]  (z2) to (z3);
							\draw[very thick,->-=.7]  (z3) to (z4);
							\draw[very thick,->-=.7]  (z4) to (z1);
						\end{tikzpicture},\quad
						\begin{tikzpicture}[baseline={([yshift=.4ex]current bounding box.center)}, scale=.8]
							\node (r) at (1,1) {$s?s^{-1}$};
							\node (z4) at (0,0) {$z_4$};
							\node (z1) at (0,2) {$z_1$};
							\node (z2) at (2,2) {$z_2$};
							\node (z3) at (2,0) {$z_3$};
							\draw[very thick,->-=1.]  (z1) to (z2);
							\draw[very thick,->-=1.]  (z2) to (z1);
							\draw[very thick,->-=1.]  (z3) to (z4);
							\draw[very thick,->-=1.]  (z4) to (z3);
						\end{tikzpicture}.
					\end{equation}
					Therefore, $D_8$ acts on the  generators $z_1,z_2,z_3,z_4$ by its defining action on a square. Using this action, we construct a semidirect product group  $G=D_8\ltimes Z_2^{\times 3}$, which has order $|G|=64$. Explicitly, $G$ is generated by $\{z_1,z_2,z_3,z_4,s,t\}$ subject to the relations in Eqs.~(\ref{def:DihedralGrp},\ref{eq:D8actiononA8}) along with $z_j^2=z_1z_2z_3z_4=1$. $G$ has the following two irreducible 4-dimensional representations $\psi_{\pm}$:
					\begin{eqnarray}\label{eq:defpsipmG64}
						\psi_{\pm}(z_1)&=&\sigma_1^z,~ \psi_{\pm}(z_2)=\sigma_2^z,~\psi_{\pm}(z_3)=-\sigma_1^z,%
						\nonumber\\
						\psi_{\pm}(s)&=&\pm X_{12},\quad \psi_{\pm}(r)=X_{12}\sigma_2^x.
					\end{eqnarray}
					We now construct an 8-dimensional irreducible projective representation of $G$. Let $\sigma_2$ be the 2-dimensional projective representation defined as
					\begin{equation}\label{rep:G64sigma2}
						\sigma_2(r)=\begin{pmatrix}
							1 & 0 \\
							0 & i
						\end{pmatrix},\quad \sigma_2(s)=\begin{pmatrix}
							0 & 1 \\
							1 & 0
						\end{pmatrix},
					\end{equation}
					and $\sigma_2(z_j)=\mathds{1}_2$. 
					Now we construct a 4-dimensional irreducible projective representation $\sigma_4$ of $G$ as follow
					\begin{eqnarray}
						\sigma_4(z_j)&=&i\gamma_j\gamma_{j+1},\nonumber\\
						\sigma_4(r)&=&\gamma_2 \gamma_3 \gamma_4 B_{1} B_{2} B_{3},\nonumber\\
						\sigma_4(s)&=&\frac{\gamma_1+\gamma_3}{\sqrt{2}} i \gamma_2 \gamma_4,
					\end{eqnarray}
					where $\{\gamma_j\}_{j=1}^4$ are Dirac matrices~(i.e. generators of Clifford algebra, a.k.a. Majorana fermion operators) satisfying
					\begin{equation}
						\{\gamma_i,\gamma_j\}=2\delta_{i,j},\quad 1\leq i,j\leq 4,
					\end{equation}
					and 
					\begin{equation}
						B_{j}=\frac{1-\gamma_{j} \gamma_{j+1}}{\sqrt{2}}, \quad j=1,2,3,
					\end{equation}
					are braid matrices. It is straightforwardly verified that
					\begin{eqnarray}
						r \gamma_j r^{-1}&=&\gamma_{j+1},\quad s \gamma_1 s^{-1}=\gamma_{3},\nonumber\\
						s \gamma_2 s^{-1}&=&-\gamma_{2},\quad s \gamma_4 s^{-1}=-\gamma_{4}, %
					\end{eqnarray}
					and $\sigma_4$ is an irreducible projective representation of $G$. 
					
					We now define $\sigma=\sigma_4\otimes\sigma_2$. Then it is straightforward to verify that $\sigma$ is also irreducible, and has dimension $\sigma=8=\sqrt{|G|}$.  Therefore $G$ is a central-type factor group. Note that $z=z_1z_3$ is a central element of order 2, and we have $\psi_\pm(z)=-1$. Below we consider the SFC $\calC=\mathrm{sRep}(G,z)$,
					and use Eq.~(\ref{eq:RmatfromGrp}) to compute the $R$-matrices of $\psi_{\pm}$ with respect to the black defect $\sigma$. The $R$-matrix for $\psi_-$ is given in Eq.~\eqref{eqApp:seth-R}. The $R$-matrix for $\psi_+$ is similar, but with 
					\begin{eqnarray}\label{eq:seth-R-G64-p}
						(b',a')%
						=\left(
						\begin{array}{cccc}
							11 & 44 & 32 & 23 \\
							33 & 22 & 14 & 41 \\
							42 & 13 & 21 & 34 \\
							24 & 31 & 43 & 12 \\
						\end{array}
						\right)_{ab}.
					\end{eqnarray}
					The mutual $R$-matrix between $\psi_+$ and $\psi_-$ is computed using Eq.~(\ref{eq:RmatfromGrp}),
					where $\psi=\psi_+$ and $\varphi=\psi_-$, and the result is 
					\begin{equation}\label{eqApp:seth-R-G64mutual}
						[R^{+-}]^{b'a'}_{ab}=1, \text{ if }	(b',a')=
						\left(
						\begin{array}{cccc}
							12 & 34 & 23 & 41 \\
							43 & 21 & 32 & 14 \\
							24 & 42 & 11 & 33 \\
							31 & 13 & 44 & 22 \\
						\end{array}
						\right)_{ab}.%
					\end{equation}

\subsection{Groups of central type: the case where $\sigma\in \Rep(\GCT)$ is a quasiparticle}\label{app:GCT}
In all examples above, $\sigma$ is an object of a $\calC$-module category $\calM=\Rep^\omega(H)$ describing a point-like defect, rather than a quasiparticle in $\calC$ itself. In this section, we present a special family of finite groups $\GCT$ such that $\calC=\Rep(\GCT)$ contain paraparticles and in addition,  $\sigma$ is itself a quasiparticle in $\calC$, i.e., $\calC$ itself has a fusion rule of the form in Eq.~\eqref{eq:sigmapsifusion-0}. 
This special family of groups are called groups of central type~\cite{iwahori1964several,Liebler1979,Howlett1982}, and we recall their definition here
\begin{definition}
	Let $\GCT$ be a finite group, and let $Z(\GCT)=\{c\in \GCT|cg=gc,\forall g\in \GCT\}$ be the center of $\GCT$. %
	We say that $\GCT$ is a  group of central type~\cite{iwahori1964several,Liebler1979,Howlett1982} if $\GCT$ has an irreducible representation $\sigma$ with dimension $d_\sigma$ satisfying
	\begin{equation}\label{def:centraltype}
		d_\sigma^2=|\GCT|/|Z(\GCT)|. 
	\end{equation}
\end{definition}
Now let $\GCT$ be a group of central type and let $\sigma$ be a representation satisfying Eq.~\eqref{def:centraltype}. Since $\sigma$ is irreducible, Schur's lemma implies that $\sigma(z)\propto\mathds{1}$ for any $z\in Z(\GCT)$. Therefore $\sigma$ is an irreducible projective representation of the quotient group $\CTFG=\GCT/Z(\GCT)$. 
Since $d_\sigma^2=|\CTFG|$~[which follows from Eq.~\eqref{def:centraltype}], the density theorem in representation theory~\cite{etingof2011introduction} implies that $\sigma: \C_\omega[\CTFG]\to M_{d_\sigma}(\C)$ defines an isomorphism of algebras, and the representation theory of the $d_\sigma\times d_\sigma$ matrix algebra $M_{d_\sigma}(\C)$ implies that $\sigma$ is the only irreducible representation of $\C_\omega[\CTFG]$~(up to isomorphism). 
Here $\C_\omega[\CTFG]$ is the group algebra of $\CTFG$ twisted by the cocycle $\omega$, which is a $|\CTFG|$-dimensional algebra spanned by basis elements $\{e_g|g\in \CTFG\}$ with multiplication $e_g\cdot e_h=\omega(g,h)e_{gh}$. 
Therefore, $\CTFG$ is a CTFG. 
Furthermore, any irrep $\rho$ of $\GCT$ that satisfies 
$\rho(z)\sim \sigma(z)~\forall z\in Z(\GCT)$ must be isomorphic to $\sigma$, where $\rho(z)\sim \sigma(z)$ means that $\rho(z)=c(z)\mathds{1}_{d_\rho}, \sigma(z)=c(z)\mathds{1}_{d_\sigma}$ for some $c(z)\in \C$.  %
Now let $\psi\in \Rep(\GCT)$ be an irrep with dimension $d_\psi=m>1$, such that $\psi(z)=\mathds{1},~\forall z\in Z(\GCT)$, so that $\psi$ defines a linear irrep of the quotient group $\CTFG$~(note that the existence of such $\psi$ requires the CTFG $\CTFG$ to be non-Abelian).  
The tensor product representation $\rho=\sigma\otimes \psi\in\Rep(\GCT)$ satisfies $\rho(z)\sim \sigma(z),~\forall z\in Z(\GCT)$, therefore $\sigma\otimes \psi$ must be isomorphic to a direct sum of $m$ copies of $\sigma$, leading to the fusion rule in Eq.~\eqref{eq:sigmapsifusion-0}. 

Therefore, if $\GCT$ is a group of central type such that the quotient group $\CTFG=G/Z(\GCT)$ is non-Abelian, then $\Rep(\GCT)$ has a fusion rule of the form in Eq.~\eqref{eq:sigmapsifusion-0} with $m>1$. 
The $R$-matrix is computed using the same Eq.~\eqref{eq:RmatfromGrp}, where $V$ is the isomorphism between the two projective representations of $\CTFG$, $\sigma\otimes\psi$ and $m\sigma$.
It is useful to note the close relationship between CTFGs and groups of central type: if $\GCT$ is a group of central type, then $G/Z(\GCT)$ is a CTFG~\footnote{A warning about terminology: in some math literature people use ``group of central type'' to refer to what we call a CTFG here. }; conversely, as we mentioned at the end of Sec.~\ref{sec:blackdefectRep(G)}, given any CTFG $\CTFG$, one can always construct a group of central type $\GCT$ as a central extension of $\CTFG$ by lifting the projective representation $\sigma$ to a linear representation, such that $G/Z(\GCT)=\CTFG$.

For example, if we take $\CTFG=A_4\times Z_3$ as defined in App.~\ref{sec:A4Z3}, the corresponding central extension $\GCT$ has order $|\GCT|=216$, with the following presentation
\begin{eqnarray}\label{def:A4Z3CentExt}
	G=\langle x,z,a,b,c \mid&& x^3=z^3=c^6=1,~a^2=b^2=c^3,\nonumber\\
	&& ab=c^3ba,~xa=bx,~xb=ba x,\nonumber\\
	&& zx=c^2 xz,~za=az,~zb=bz,\nonumber\\
	&& c \text{ commutes with }x,z,a,b\rangle.
\end{eqnarray}
One can check that the group center $Z(\GCT)$ is the cyclic group generated by $c$, and $G/Z(\GCT)$ is exactly $\CTFG=A_4\times Z_3$. 
The irreducible projective representation $\sigma\in\Rep^\omega(\CTFG)$ is lifted to a linear representation of $\GCT$ with $\sigma(c)=e^{2\pi i/6}\mathds{1}$. The representation $\psi\in\Rep(\CTFG)$ defined in App.~\ref{sec:A4Z3} is lifted to a representation of $\GCT$ with $\psi(c)=\mathds{1}$. The $R$-matrix of the paraparticle $\psi$ with respect to the quasiparticle $\sigma$ %
is the same as that given in Eq.~\eqref{eq:RA4Z3}. 

In a similar way, for the CTFG $\CTFG=D_8\ltimes Z_2^{\times 3}$ defined in App.~\ref{app:G64}, the corresponding central extension $\GCT$ has order $|\GCT|=128$, which is generated by 
$r, s, z_1,z_2,z_3,z_4$ along with a central element $c$ subject to the relations
\begin{eqnarray}\label{def:G128}
	&& r^4=s^2=c^2=1,~ s r s^{-1}=r^{-1},\nonumber\\
	&& z_j^2=1,~	r z_j r^{-1}=z_{j+1},~ s z_j s^{-1}=c z_{\bar{j}}, \nonumber\\
	&& z_i z_j=c^{i-j} z_j z_i,~\text{ for }1\leq i,j\leq 4,
\end{eqnarray}
where $(\bar{1},\bar{2},\bar{3},\bar{4})=(2,1,4,3)$, and the subscript $j$ is understood modulo 4. The group center $Z(\GCT)$ is generated by $c$, and $G/Z(\GCT)$ is exactly $\CTFG=D_8\ltimes Z_2^{\times 3}$. 
The projective representation $\sigma\in\Rep^\omega(\CTFG)$ is lifted to a linear representation of $\GCT$ with $\sigma(c)=-\mathds{1}$, and the representations $\psi_\pm\in\Rep(\CTFG)$ defined in Eq.~\eqref{eq:defpsipmG64} are lifted to linear representations of $\GCT$ with $\psi_\pm(c)=\mathds{1}$. The $R$-matrices of $\psi_\pm$ with respect to the quasiparticle $\sigma$ are the same as that given in  Eqs.~(\ref{eqApp:seth-R},\ref{eq:seth-R-G64-p},\ref{eqApp:seth-R-G64mutual}). %

\section{Black defects in quantum double phases}\label{sec:blackdefectQD}
In this section we give examples of black defects in 2D topological phases, which is crucial for winning the 2D version of the challenge game. 
For simplicity, we restrict to quantum double topological orders~\cite{kitaev2003fault}, described by modular tensor categories of the form $\calC=\Rep[D(H)]$, where $H$ is a finite group or more generally a finite dimensional $\C^*$-Hopf algebra~\cite{Buerschaper2013HATC}, and $D(H)$ is its Drinfeld double~\cite{drinfeld1986quantum}. Point-like defects in such 2D topological phases can be realized at the end points of 1D boundaries and domain walls, as shown in 
Fig.~\ref{fig:DefectRealization}. %

Let us first briefly review the categorical description of boundaries and domain walls in Levin-Wen string-net models~\cite{levin2005string}~(which include Kitaev's quantum double models~\cite{kitaev2003fault} as special cases) given in Ref.~\onlinecite{Kitaev2012gappedboundary}. 
Given an arbitrary input unitary fusion category $\calC_0$, the string-net construction produces a 2D topological order described by the Drinfeld center $\calC=Z(\calC_0)$. Gapped boundaries of this model are in one-to-one correspondence to indecomposable module categories over $\calC_0$. Given a gapped boundary defined by the module category $\calM_0$~(also called a type-$\calM_0$ boundary), its boundary excitations form a fusion category $\B=\mathrm{Fun}_{\calC_0}(\calM_0,\calM_0)$, in which objects are $\calC_0$-module functors from $\calM_0$ to $\calM_0$, and fusion of objects corresponds to functor composition. 
A quasiparticle in the bulk $\psi\in Z(\calC_0)$ can be moved to the boundary and becomes a boundary excitation $\mathfrak{F}(\psi)\in \B$, %
where $\mathfrak{F}: Z(\calC_0) \to \B$ is a central functor  called the bulk-to-boundary map~\cite{kong2015boundary,kongBoundarybulkRelationTopological2017}. 

\subsection{Black defects on gapped boundaries}\label{sec:blackdefectgappedboundary}
Consider the configuration shown in Fig.~\ref{fig:BoundaryDefect}, where the string-net model is surrounded by two different types of gapped boundaries labeled by $\calC_0$-module categories $\calM_0$ and $\calN_0$, respectively. We have
\begin{eqnarray}\label{eq:MoritaContext}
	\B&=&\mathrm{Fun}_{\calC_0}(\calM_0,\calM_0),\nonumber\\
	\calD&=&\Fun_{\calC_0}(\calN_0,\calN_0),\nonumber\\
	\calM&=&\Fun_{\calC_0}(\calM_0,\calN_0),\nonumber\\
	\calM^{\op}&=&\Fun_{\calC_0}(\calN_0,\calM_0),
\end{eqnarray}
where $\calM$ describes the point-like defect at the upper-left corner, and a state of this defect corresponds to a $\calC_0$-module functor from $\calM_0$ to $\calN_0$, and similarly for $\calM^{\op}$. $\calM$ is naturally a module category over $\calC$, where the fusion of a bulk excitation $\psi\in \calC$ with $\sigma\in \calM$ is given by functor composition $\mathfrak{F}(\psi)\circ\sigma\in \calM$. 
In the final step shown in Fig.~\ref{fig:BoundaryDefect}, the fusion between $\sigma$ and $\bar{\sigma}$ is also defined by functor composition
$\sigma\otimes \bar{\sigma}:=\sigma\circ\bar{\sigma}\in  \Fun_{\calC_0}(\calN_0,\calN_0)$, therefore, the fusion produces a boundary excitation in $q\in\calD$. %

We now specialize to concrete examples where $\calM$ in Eq.~\eqref{eq:MoritaContext} describes a black defect. To explicitly compute $\calM$ and $\calM^\op$,  the following general result will be useful: for any module category $\calM_0$ over a fusion category $\calC_0$, we have
\begin{equation}\label{eq:ModFunGeneralResult}
	\Fun_{\calC_0}(\calC_0,\calM_0)\cong\calM_0,
\end{equation}
where $\cong$  means equivalent as right-$\calC_0$ module categories. 
The equivalence is established as follows. For any $\sigma\in \calM_0$, we define a right $\calC_0$-module functor $\mathfrak{F}_\sigma: \calC_0\to\calM_0$ as $\mathfrak{F}_\sigma(\psi):=\sigma\otimes \psi$ for any $\psi\in \calC_0$. It is straightforward to show that all elements of  $\Fun_{\calC_0}(\calC_0,\calM_0)$ must be of this form, and the map $\sigma\to \mathfrak{F}_\sigma$ defines a categorical equivalence. 

Our first example realizes $R$-paraparticles in $\Rep(G)$ and $\mathrm{sRep}(G,z)$ in 2D.  
Let $G$ be a central type factor group, and let $\omega$ be a non-degenerate 2-cocycle of $G$. 
In Eq.~\eqref{eq:MoritaContext}, take
\begin{equation}\label{eq:hybridboundaryDG}
	\calC_0=\Rep(G), \quad \calM_0=\Rep^\omega(G), \quad \calN_0=\calC_0.
\end{equation}
We have seen before that $\calM_0$ has only one simple object given that $\omega$ is non-degenerate. Applying Eq.~\eqref{eq:ModFunGeneralResult} to  Eq.~\eqref{eq:MoritaContext}, we get $\calM^\op=\calM_0$, which describes a black defect at $\oB$. Its opposite category  $\calM=\calM_0^\op$  describes a black defect at the upper-right corner. Given any quasiparticle in the bulk $\psi\in \calC=\Rep[D(G)]$, we can use Eq.~\eqref{def:RfromSFC} 
to compute its $R$-matrix. In particular, the $R$-paraparticle in Eq.~\eqref{eqApp:seth-R} can be realized in this way with %
$G_{64}=D_8\ltimes Z_2^{\times 3}$, a central type factor group of order 64, whose definition is given in App.~\ref{sec:RfromCentralType}.

As a second example, let $H$ be a finite dimensional $\C^*$-Hopf algebra, and consider Kitaev's quantum double model based on $H$, which realizes the same topological phase as a string-net model with $\calC_0=\Rep(H)$. Let $\calM_0=\mathrm{Vec}$ and $\calN_0=\Rep(H)$ corresponding to the rough and smooth boundaries of this model, respectively. Here $\mathrm{Vec}$ is the category of finite dimensional vector spaces, which becomes a module category over $\Rep(H)$ via the forgetful functor $\mathrm{Forg}: \Rep(H)\to \mathrm{Vec}$~(which forgets about the $H$-action). In this case, we have
\begin{eqnarray}\label{eq:HAQDMoritaContext}
	\calC&=&Z[\Rep(H)]\cong\Rep[D(H)],\nonumber\\
	\B&=&\mathrm{Fun}_{\Rep(H)}(\mathrm{Vec},\mathrm{Vec})\cong\Rep(H^*),\nonumber\\
	\calD&=&\Fun_{\calC_0}(\calC_0,\calC_0)\cong\calC_0=\Rep(H),\nonumber\\
	\calM&=&\Fun_{\calC_0}(\calM_0,\calC_0)\cong\calM_0^\op\cong\Vect,\nonumber\\
	\calM^{\op}&=&\Fun_{\calC_0}(\calC_0,\calM_0)\cong\calM_0=\Vect,
\end{eqnarray}
where $H^*$ is the dual Hopf algebra of $H$. Again, $\calM$ has only one simple object $\sigma$. Therefore, the upper right and lower left corners in this configuration are black defects, described by $\calM$ and $\calM^\op$. 

For any simple type of quasiparticle~(which can be a non-Abelian anyon in general) in the bulk $\psi\in \Rep[D(H)]$, Eq.~\eqref{def:RfromSFC} defines the $R$-matrix for $\psi$ with respect to the point-like defect $\sigma$. One can certainly use Eq.~\eqref{eq:RmatfromFRmove} to compute $R$, but there is actually a neat way to compute $R$ directly from Hopf algebra data~\cite{Majid1990}:
\begin{equation}\label{eq:RfromUniversalR}
	R=X[(\psi\otimes\psi)\mathcal{R}],
\end{equation}
where $\mathcal{R}\in D(H)\otimes D(H)$ is the universal $\mathcal{R}$-matrix of $D(H)$. 

The above analysis is valid for any finite dimensional $\C^*$-Hopf algebra $H$. If $H$ is a minimal triangular Hopf algebra~\cite{Radford1993MQHA} obtained by twisting a central type factor group $G$ using the method in Ref.~\onlinecite{etingof1998THAconstruction}, then the quantum double model $D(H)$ with the aforementioned rough-smooth boundary condition $\calM_0=\mathrm{Vec},~\calN_0=\Rep(H)$ is actually equivalent to the quantum double model $D(G)$ with the hybrid boundary condition in  Eq.~\eqref{eq:hybridboundaryDG}. 
For example, in Ref.~\onlinecite{wang2024parastatistics} we demonstrated the winning strategy through a specific 2D exactly solvable lattice model that host emergent paraparticles.  %
The non-chiral phase of this model~(e.g. when the paraparticle tunneling constants are all zero) is a quantum double phase described by $\calC=\Rep[D(H_{64})]$, where $H_{64}$ is a 64-dimensional minimal triangular Hopf algebra that were obtained as a Drinfeld twist of $G_{64}$. 
According to our discussion above, this Hopf algebra quantum double model provides an alternative way to realize the $R$-matrix in Eq.~\eqref{eqApp:seth-R}. 
\subsection{Black defects on domain walls}
We can directly generalize the above to construct black defects on domain walls in string-net models, which are classified by bimodule categories~\cite{Kitaev2012gappedboundary}.
Instead of presenting a similar derivation, here we mention a more general result claiming that any indecomposable module category $\calM$ over a modular tensor category $\calC$~(describing some 2D topological phase) can always be realized at an end point of some gapped domain wall between $\calC$ and itself, as shown in Fig.~\ref{fig:LineDefect},
which follows from the condensation completion principle~\cite{carqueville2016orbifold,douglas2018fusion,gaiotto2019condensations,Kong2020Classification,Johnson-Freyd2022}. Therefore, all black defects in $\calC$ can be physically realized. 

\subsection{Example: a particle $\psi$ can have different $R$-matrices with respect to different defect $\sigma$}\label{app:ex-diffR}
Here we give an explicit example that the same type of particle $\psi$ can have different $R$-matrices for different choices of  $\sigma$.  Consider again the quantum double model $\calC=D(G_{64})$ based on the CTFG $G_{64}=D_8\ltimes Z_2^{\times 3}$~(this example actually works for an arbitrary CTFG), with the hybrid boundary condition shown in Fig.~\ref{fig:BoundaryDefect}. This time, instead of using Eq.~\eqref{eq:hybridboundaryDG}, we choose $\calM_0=\Vect$ as a module category over $\Rep(H)$ via the trivial forgetful functor $\mathrm{Forg}: \Rep(G_{64})\to \Vect$, and choose $\calN_0=\calC_0=\Rep(G_{64})$ as before. [This is equivalent to simply taking the special case $H=G_{64}$ in Eq.~\eqref{eq:HAQDMoritaContext}.] With this choice, the upper-left and lower right corners in Fig.~\ref{fig:BoundaryDefect} are still black defects. However, %
due to a different choice of $\calM_0$,  the black defect $\calM$ will have a different module category structure over $\calC$, and in this case Eq.~\eqref{def:RfromSFC} gives a trivial $R$-matrix $R=-X$ instead of Eq.~\eqref{eqApp:seth-R} for exactly the same type of paraparticle $\psi\in\Rep[D(G_{64})]$ with $m=4$. This means that the same type of paraparticle can have different $R$-matrices depending on the choice of $\sigma$. %

\bibliography{/home/lagrenge/Dropbox/zoteroLibrary/library1}
\end{document}